\documentclass[aps,onecolumn,notitlepage,superscriptaddress,showpacs,nofootinbib]{revtex4-2}

\usepackage{enumerate,appendix}
\usepackage{amsmath, amsthm, amssymb,commath}
\usepackage{color,calc,graphicx}
\usepackage[dvipsnames,svgnames,table,cmyk]{xcolor}
\usepackage[colorlinks]{hyperref}
\usepackage{optidef}
\hypersetup{
	colorlinks = true,
	urlcolor = {blue},
	citecolor = {blue},
	linkcolor= {blue}
}

\usepackage{latexsym}
\usepackage{bbm}

\usepackage[charter,cal=cmcal,sfscaled=false]{mathdesign}
\usepackage{booktabs}
\usepackage{multirow}

\usepackage{dcolumn}
\usepackage{mathrsfs}

\usepackage{tikz}
\usetikzlibrary{arrows.meta,positioning,calc,fit}
\usetikzlibrary{shapes.geometric}
\tikzset{
  >=Latex,
  party/.style={
    draw, rounded corners, thick,
    minimum width=18mm, minimum height=8mm,
    align=center, fill=gray!8
  },
  charlie/.style={
    draw, rounded corners, thick,
    minimum width=24mm, minimum height=8mm,
    align=center, fill=gray!15
  },
  opbox/.style={
    draw, rounded corners, thick,
    minimum width=18mm, minimum height=8mm,
    align=center, fill=blue!6
  },
  state/.style={
    draw, circle, thick,
    minimum size=7mm,
    inner sep=1pt,
    fill=green!8
  },
  mem/.style={
    draw, ellipse, thick,
    minimum width=16mm, minimum height=8mm,
    align=center, fill=orange!10
  },
  flow/.style={->, thick},
  lab/.style={font=\small},
  smalllab/.style={font=\scriptsize}
}

\def \be {\begin{equation}}
\def \ee {\end{equation}}

\newcommand{\Tr}{\mathrm{Tr}}

\newcommand{\ket}[1]{|#1\rangle}
\newcommand{\bra}[1]{\langle#1|}

\def \cX{{\cal X}}

\def \sofc2{{\cal S}({\mathbb C}^2)}

\def\>{\rangle}
\def\<{\langle}

\newtheorem{definition}{Definition}
\newtheorem{theorem}{Theorem}
\newtheorem{lemma}[theorem]{Lemma}
\newtheorem{proposition}[theorem]{Proposition}
\newtheorem{corollary}[theorem]{Corollary}
\newtheorem{remark}[theorem]{Remark}
\newtheorem{example}{Example}

\def\Label#1{\label{#1}\ [\ \text{#1}\ ]\ }
\def\Label{\label}

\begin{document}

\title{Causal-Order Identification of Memoryless Sequential Quantum Processes from Restricted Projective Data}

\author{Masahito Hayashi}\email{hmasahito@cuhk.edu.cn}
\affiliation{School of Data Science, The Chinese University of Hong Kong, Shenzhen, Longgang District, Shenzhen, 518172, China}
\affiliation{International Quantum Academy, Futian District, Shenzhen 518048, China}
\affiliation{Graduate School of Mathematics, Nagoya University, Nagoya, 464-8602, Japan}

\begin{abstract}
Identifying causal order from restricted projective data is generally
nontrivial.
When two quantum players interact only through an unobserved environment, the
available local measurement statistics are typically not tomographically
complete, so the underlying process cannot in general be reconstructed exactly
from the observed distribution.
As a result, causal direction can be statistically identifiable in some cases
but fundamentally indistinguishable in others.

In this work, we determine necessary and sufficient conditions for deciding
when an observed distribution is compatible with a memoryless sequential
quantum process in a fixed direction.
We show that directional conditional-independence structure and the positivity
criterion based on the pseudo-density matrix, as developed in recent work by
Liu, Qiu, Dahlsten, and Vedral, are not sufficient by themselves.
The missing ingredient is an additional algebraic consistency requirement, and
together these conditions yield a complete criterion for membership in the
memoryless sequential class.

We then specialize to the two-qubit Pauli setting, where the problem remains
non-tomographic but becomes explicitly tractable.
In this regime, we characterize when the two sequential directions are
statistically indistinguishable, and we show by example that positivity alone
does not exclude more general memoryful strategies, whereas the additional
algebraic consistency requirement does.
\end{abstract}

\maketitle

\begin{center}
\textbf{Keywords:} Causal order identification; quantum processes; restricted projective measurements; pseudo-density matrix; algebraic consistency criterion
\end{center}

\section{Introduction}
Causality has become a central topic across a wide range of disciplines, from
statistics and machine learning to quantum information science
\cite{Pearl2009,Peters2017,Chiribella2009,OCB,Araujo2015}.
In the quantum setting, recent work has shown that causal structure is not only
a conceptual issue but also an operational resource, with applications to
channel discrimination, communication, metrology, and thermodynamic tasks
\cite{BMQ,ebler2018enhanced,bavaresco2022unitary,zhao2020quantum,felce2020quantum}.
Among the basic questions in this area, one of the most fundamental is the
problem of determining temporal order: from observed data alone, can one decide
which subsystem influences which, and when is such a decision impossible?

This problem is relatively transparent when one can prepare a sufficiently rich
set of observations to reconstruct the underlying process matrix
\cite{OCB,Chiribella2009,Araujo2015,CharFreeIdentification}.
In that case, the task reduces to the characterization of process matrices
compatible with a given causal order.
In the earlier work \cite{CharFreeIdentification}, we treated in detail the
complementary regime in which the available observations are tomographically
complete and hence identify the process matrix itself.
In that full-tomography regime, the order-identification problem becomes
essentially a process-level one, and for memoryless sequential structures it is
shown that the problem is effectively reduced to the corresponding Markovian
property.
By contrast, when the available observations are not tomographically complete,
the process matrix cannot be reconstructed exactly, and the temporal-order
problem becomes substantially more subtle.
Instead of process-level reconstruction, one must ask a distribution-level
question: given only the observed statistics, can the underlying process belong
to a given causal class, and if not, why not?

A natural approach to this restricted-observation problem uses state-like
representations of temporal correlations.
In particular, the pseudo-density matrix (PDM) framework and related
spatiotemporal formalisms provide a unified way to represent temporal and
spatial correlations by Hermitian operators
\cite{FJV,Horsman2017,Cotler2018,SGS,Liu}.
Recent work by Liu, Qiu, Dahlsten, and Vedral used the PDM formalism to
extract causal information from restricted measurement data.  
More precisely, the observed correlations determine, through the PDM construction, a
channel-like operator associated with a candidate temporal evolution; positivity
of this TP-CP-map-like operator is then a necessary compatibility condition for
the corresponding causal structure, and its failure provides a witness of
causal incompatibility.  In addition, time asymmetry in the PDM-based
description can provide information about temporal order
\cite{Liu3,Liu2,PhD}.
Related works have developed state-over-time, quantum Bayesian, and
operator-representation frameworks for spatiotemporal quantum correlations
\cite{FullwoodParzygnat2022StatesOverTime,
ParzygnatFullwood2023QuantumBayes,
FullwoodParzygnat2024OperatorRepresentation}.
These works provide useful formal languages for representing temporal
correlations and relating forward and reverse descriptions, but they do not by
themselves give the memoryless-sequential membership criterion.
Further compatibility-based approaches have studied whether observed
measurement statistics or bipartite states admit spatial, temporal, or causal
explanations
\cite{SongNarasimhacharRegulaElliottGu2024CausalClassification,
SongParzygnat2025CausalExplanation}.
These developments strongly suggest that restricted projective observations may
still carry substantial information about causal order.
At the same time, they also indicate a basic limitation: positivity of
reconstructed objects alone does not provide a complete
characterization of memoryless sequential dynamics.

The purpose of the present paper is to characterize memoryless sequential quantum processes under restricted projective observations. Since these observations do not determine the process matrix, our analysis is carried out at the level of the observed distribution. We first identify the directional Markovian constraints and encode the accessible temporal correlations by compatible pseudo-density matrices. 
Under projective state reduction, the observed distribution determines the relevant marginal state and a candidate Choi matrix. 
However, constructing these two objects from the observed distribution does not by itself establish memoryless sequentiality. 
Even when the candidate Choi matrix is positive, it remains necessary to verify that the channel represented by this matrix reproduces the observed conditional probabilities for every measurement setting.
An additional algebraic consistency requirement on the observed distribution is therefore needed to guarantee such a common-channel representation.

The contributions of the present paper are twofold.  
First, we formulate this algebraic consistency requirement as condition~(C1). We prove that, within the class of distributions satisfying directional Markovianity, condition~(C1) is equivalent to exact reproduction of the full observed distribution by the reconstructed state and candidate Choi matrix. Consequently, any other condition sufficient for such reproduction must imply condition~(C1), so that condition~(C1) is minimal with respect to logical implication. Adding positivity of the reconstructed Choi matrix then yields a necessary-and-sufficient distribution-level criterion for compatibility with a memoryless sequential quantum process in a prescribed direction.
Second, for two-qubit Pauli measurements, we characterize when a
distribution generated in one memoryless sequential direction also admits the
opposite direction, with explicit conditions for representative channel
families.  

The remainder of the paper is organized as follows.
Section~\ref{sec:problem-setting} introduces the process-matrix framework and
the strategy classes.  Section~\ref{S5} formulates the restricted-observation
problem in terms of distributions, directional Markovian constraints, and
compatible pseudo-density matrices.  Section~\ref{S5-2} specializes this
framework to projective state reduction and develops the reconstruction tools
used in the main result.  Section~\ref{SSAL} then proves the fixed-direction
membership criterion, Theorem~\ref{TH10}.

The next two sections apply this criterion to different versions of the
reverse-direction problem in the two-qubit Pauli setting.
Section~\ref{S7} assumes that a forward memoryless representation is already
known and characterizes order indistinguishability.  This assumption makes the
reverse conditions explicitly tractable for general qubit channels and for
representative channel families.  Section~\ref{S7-M} removes that assumption
and gives the reverse-direction membership criterion for an arbitrary observed
distribution.  Finally, Section~\ref{S7-C} compares the memoryless class with
broader memoryful classes and shows by separate examples that reconstructed
Choi positivity and condition~(C1) detect distinct obstructions.

\section{Problem setting and notation}\label{sec:problem-setting}
\subsection{General process-matrix framework and statistical task}
We consider two isolated quantum players, Alice (player~1) and Bob (player~2), who interact only through an external environment controlled by a third party, Charlie. 
The aim of this subsection is to formalize Charlie's strategy by introducing a process matrix $W_S$. 
In each experimental run, Alice receives a quantum input system $\mathcal H_{I,1}$ from Charlie and returns a quantum output system $\mathcal H_{O,1}$ to Charlie, while Bob analogously receives $\mathcal H_{I,2}$ and returns $\mathcal H_{O,2}$. 
We denote the corresponding dimensions by $d_{I,1}, d_{O,1}, d_{I,2}, d_{O,2}$. 
During the experiment Alice and Bob are not allowed to communicate; after all runs are completed, they may share their classical records to form empirical statistics. 
Our goal is to infer structural information about Charlie's ``strategy'' purely from the joint input--output statistics observed by Alice and Bob.

Alice's most general admissible operation in one run is a quantum instrument
$\{\Gamma_{1,z_1}\}_{z_1\in\mathcal Z_1}$,
where each $\Gamma_{1,z_1}$ is completely positive (CP) and trace-nonincreasing, and $\sum_{z_1}\Gamma_{1,z_1}$ is trace preserving (TP). 
The classical outcome $z_1$ is recorded by Alice. 
Bob's operation is similarly described by an instrument
$\{\Gamma_{2,z_2}\}_{z_2\in\mathcal Z_2}$ with classical outcome $z_2$. 
We write $C[\Gamma]$ for the unnormalized Choi matrix of a CP map $\Gamma$,
using Choi's convention \cite{Choi1975}:
\begin{equation}
\operatorname{Tr}_{I}\bigl[(\rho^{\mathsf T}\otimes I_{O})\,C[\Gamma]\bigr]
=
\Gamma(\rho),
\label{eq:choi-convention}
\end{equation}
where ${\mathsf T}$ denotes transpose in a fixed input basis, and the partial
trace is over the input space.  Equivalently,
\[
C[\Gamma]=\sum_{a,b}|a\rangle\langle b|\otimes
\Gamma(|a\rangle\langle b|).
\]
This is the convention used throughout the paper.  It differs from
Jamiołkowski's original convention \cite{Jamiolkowski1972} by a partial transpose on the input system;
accordingly, whenever a partial transpose is needed below, it is written
explicitly as $T_1$ or $T_2$.
For a linear map $\Gamma:{\cal B}({\cal H}_{I,i})\to{\cal B}({\cal H}_{O,i})$,
we denote by $\Gamma^\dagger:{\cal B}({\cal H}_{O,i})\to{\cal B}({\cal H}_{I,i})$
its adjoint supermap with respect to the Hilbert--Schmidt inner product, i.e.,
\begin{equation}
\Tr\!\bigl[A\,\Gamma(B)\bigr]
=
\Tr\!\bigl[\Gamma^\dagger(A)\,B\bigr]
\label{eq:adjoint-supermap}
\end{equation}
for all Hermitian operators $A\in{\cal B}({\cal H}_{O,i})$ and
$B\in{\cal B}({\cal H}_{I,i})$. 
In particular, $\Gamma^\dagger(I)$ represents the corresponding effect operator on the input space.

We now formalize Charlie's strategy by introducing a process matrix $W_S$. 
Charlie does not access Alice's and Bob's laboratories directly; instead, he mediates their interaction by preparing inputs and processing outputs across the global interface
$\mathcal H_{I,1}\otimes\mathcal H_{O,1}\otimes\mathcal H_{I,2}\otimes\mathcal H_{O,2}$~\cite{OCB,Araujo2015,Chiribella2009}. 
We model Charlie's strategy $S$ by a bipartite process matrix $W_S\ge 0$ acting on this interface. 
When Alice and Bob perform their local instruments
$\{\Gamma_{1,z_1}\}_{z_1\in\mathcal Z_1}$
and $\{\Gamma_{2,z_2}\}_{z_2\in\mathcal Z_2}$, respectively, the joint distribution of their outcomes $z_1$ and $z_2$ is given by the generalized Born rule
\begin{equation}
P_{Z_1,Z_2}(z_1,z_2)
=
\mathrm{Tr}\Bigl[
W_S\bigl(C[\Gamma_{1,z_1}]\otimes C[\Gamma_{2,z_2}]\bigr)
\Bigr].
\label{eq:GBR}
\end{equation}
Thus the observed joint distribution is determined by Charlie's process matrix $W_S$. 
Valid process matrices satisfy the standard linear ``no-signalling-in-time'' constraints ensuring that \eqref{eq:GBR} defines a normalized probability distribution for all TP choices $\sum_{z_i}\Gamma_{i,z_i}$~\cite{OCB,Araujo2015}. 
In what follows, however, we only use \eqref{eq:GBR} together with basic positivity and normalization properties, and we do not need the full structural characterization of admissible process matrices.

Fixing the local operational capabilities, namely the families of instruments accessible to Alice and Bob, Charlie induces a statistical model
\begin{equation}
\mathcal M:\quad
S\longmapsto
\mathcal M[S]
:=
\bigl\{
P_{Z_1,Z_2}(z_1,z_2)
\bigr\}_{(z_1,z_2)\in\mathcal Z_1\times\mathcal Z_2}.
\label{eq:model-map}
\end{equation}

Our primary question is:\smallskip
\begin{quote}
\textbf{(Q)}\;\;\emph{Given only the observed distribution $\mathcal M[S]$, what aspects of Charlie's strategy---in particular, its causal direction and the presence/absence of memory---can Alice and Bob determine, and when is the causal direction fundamentally \emph{indistinguishable}?}
\end{quote}

When Alice and Bob have sufficiently rich local operations, for example tomographically complete sets of instruments as in \cite{CharFreeIdentification}, the map $\mathcal M$ can become injective on broad strategy classes, so that reconstruction-based classification is possible. 
The present paper addresses the complementary regime in which the available operations are genuinely restricted. 
We focus on observation schemes based on state reduction, with particular emphasis on projective state reduction. 
In such settings, exact process reconstruction is generally impossible, and the problem becomes one of nontrivial set-membership at the distribution level. 
Our main results therefore provide complete characterizations for memoryless sequential strategies and for the order-indistinguishable region in the minimal nontrivial two-qubit setting.

\subsection{Definite-order strategy classes (process-matrix definitions)}\label{sec:strategy-classes}
In addition to the general process-matrix description in Section~\ref{sec:problem-setting},
we will use several \emph{definite-order} strategy classes that capture parallel vs.~sequential wiring
and the presence/absence of memory.
We adopt the definitions in \cite{CharFreeIdentification} and record them here for completeness.
Throughout this subsection, $W_S$ denotes a valid bipartite process matrix on
${\cal H}_{I,1}\otimes{\cal H}_{O,1}\otimes{\cal H}_{I,2}\otimes{\cal H}_{O,2}$.

\medskip\noindent\textbf{(i) General class.}
${\cal S}_G$ denotes the set of all physically valid bipartite process matrices satisfying the standard no-signalling-in-time constraints
as in \cite{CharFreeIdentification}.
In general, this class may also contain processes that are not of definite order.

\medskip\noindent\textbf{(ii) Individual (product) strategy.}
$S\in{\cal S}_I$ if there exist states $\rho_1$ on ${\cal H}_{I,1}$ and $\rho_2$ on ${\cal H}_{I,2}$ such that
\begin{equation}
W_S=\rho_1\otimes\rho_2\otimes I_{O,1}\otimes I_{O,2}.
\end{equation}
Operationally, Charlie's inputs are independent and do not depend on either output (See Fig. \ref{fig:individual-strategy}).

\medskip\noindent\textbf{(iii) Parallel strategy (quantum memory allowed).}
$S\in{\cal S}_P$ if there exists a (possibly entangled) state $\rho_{12}$ on ${\cal H}_{I,1}\otimes{\cal H}_{I,2}$ such that
\begin{equation}
W_S=\rho_{12}\otimes I_{O,1}\otimes I_{O,2}.
\end{equation}
This class is called \emph{quantum parallel} in \cite{CharFreeIdentification}. We keep the legacy notation ${\cal S}_P$ used in the remainder of this manuscript (See Fig. \ref{fig:parallel-strategy}).

\medskip\noindent\textbf{(iv) Sequential (memoryless) strategy.}
$S\in{\cal S}_{N,1\to 2}$ if there exist an input state $\rho_1$ on ${\cal H}_{I,1}$ and a CPTP map
$\Lambda_{1\to2}:{\cal B}({\cal H}_{O,1})\to{\cal B}({\cal H}_{I,2})$ such that
\begin{equation}
W_S=\rho_1\otimes C[\Lambda_{1\to2}]\otimes I_{O,2},
\end{equation}
where $C[\Lambda]$ is the (unnormalized) Choi matrix in the convention of \eqref{eq:choi-convention}(See Fig. \ref{fig:sequential-memoryless}).
The opposite direction class ${\cal S}_{N,2\to 1}$ is defined analogously. 

\medskip\noindent\textbf{(v) Sequential strategy with quantum memory.}
$S\in{\cal S}_{Q,1\to 2}$ if Charlie may keep a quantum memory that mediates the map from Alice's output to Bob's input;
as presented in \cite[Eq.~(9)]{CharFreeIdentification},
an equivalent process-matrix characterization satisfies 
\begin{align}
(\Tr_{(O,2)}W_{S})\otimes \rho_{mix,(O,2)}
&= W_{S}\Label{BB1},\\
(\Tr_{(O,1),(I,2),(O,2)}W_{S})\otimes I_{(O,1)}
&= \Tr_{(I,2),(O,2)}W_{S}.  \Label{BB2}
\end{align}
See Fig. \ref{fig:sequential-quantum-memory}.
The opposite direction class ${\cal S}_{Q,2\to 1}$ is defined analogously. 
\begin{figure}[t]
\centering
\begin{tikzpicture}[node distance=12mm and 18mm]
\node[charlie] (C) {Charlie};
\node[state, below left=12mm and 10mm of C] (r1) {$\rho_1$};
\node[state, below right=12mm and 10mm of C] (r2) {$\rho_2$};
\node[party, below=18mm of r1] (A) {Alice};
\node[party, below=18mm of r2] (B) {Bob};
\node[opbox, below=14mm of A] (outA) {return};
\node[opbox, below=14mm of B] (outB) {return};
\draw[flow] (C) -- ($(C)!0.9!(r1)$);
\draw[flow] (C) -- ($(C)!0.9!(r2)$);
\draw[flow] (r1) -- node[left,smalllab] {$\mathcal H_{I,1}$} (A);
\draw[flow] (r2) -- node[right,smalllab] {$\mathcal H_{I,2}$} (B);
\draw[flow] (A) -- node[left,smalllab] {$\mathcal H_{O,1}$} (outA);
\draw[flow] (B) -- node[right,smalllab] {$\mathcal H_{O,2}$} (outB);
\node[lab, below=2mm of outA] {independent input};
\node[lab, below=2mm of outB] {independent input};

\end{tikzpicture}
\caption{
Individual (product) strategy:
Charlie prepares two independent input states $\rho_1$ and $\rho_2$ and sends them separately to Alice and Bob.
No correlation or sequential mediation is introduced between the two players.
}
\label{fig:individual-strategy}
\end{figure}

\begin{figure}[t]
\centering
\begin{tikzpicture}[node distance=12mm and 18mm]

\node[charlie] (C) {Charlie};

\node[state, below=12mm of C] (r12) {$\rho_{12}$};

\node[party, below left=18mm and 16mm of r12] (A) {Alice};
\node[party, below right=18mm and 16mm of r12] (B) {Bob};

\node[opbox, below=14mm of A] (outA) {return};
\node[opbox, below=14mm of B] (outB) {return};

\draw[flow] (C) -- (r12);
\draw[flow] (r12) -- node[left,smalllab] {$\mathcal H_{I,1}$} (A);
\draw[flow] (r12) -- node[right,smalllab] {$\mathcal H_{I,2}$} (B);
\draw[flow] (A) -- node[left,smalllab] {$\mathcal H_{O,1}$} (outA);
\draw[flow] (B) -- node[right,smalllab] {$\mathcal H_{O,2}$} (outB);

\node[lab, right=2mm of r12] {shared / correlated input};

\end{tikzpicture}
\caption{
Parallel strategy:
Charlie prepares a bipartite input state $\rho_{12}$ (possibly entangled) and distributes its two subsystems to Alice and Bob in parallel.
The correlation comes only from the shared input state, not from a sequential influence between the parties.
}
\label{fig:parallel-strategy}
\end{figure}

\begin{figure}[t]
\centering
\begin{tikzpicture}[node distance=12mm and 16mm]

\node[charlie] (C) {Charlie};

\node[state, below left=12mm and 12mm of C] (r1) {$\rho_1$};

\node[party, below=18mm of r1] (A) {Alice};

\node[opbox, right=22mm of A] (Lam) {$\Lambda_{1\to2}$};

\node[party, right=22mm of Lam] (B) {Bob};

\node[opbox, below=14mm of B] (outB) {return};

\draw[flow] (C) -- ($(C)!0.9!(r1)$);
\draw[flow] (r1) -- node[left,smalllab] {$\mathcal H_{I,1}$} (A);
\draw[flow] (A.east) -- node[above,smalllab] {$\mathcal H_{O,1}$} (Lam.west);
\draw[flow] (Lam.east) -- node[above,smalllab] {$\mathcal H_{I,2}$} (B.west);
\draw[flow] (B) -- node[right,smalllab] {$\mathcal H_{O,2}$} (outB);

\node[lab, above=2mm of Lam] {single CPTP map};
\node[lab, below=2mm of Lam] {Under Charlie's control};

\end{tikzpicture}
\caption{
Sequential memoryless strategy ($1\to2$):
Charlie first sends an input state $\rho_1$ to Alice, then transforms Alice's output into Bob's input through a single CPTP map $\Lambda_{1\to2}$.
The defining feature is that the mediation from Alice to Bob is described by one fixed channel.
}
\label{fig:sequential-memoryless}
\end{figure}

\begin{figure}[t]
\centering
\begin{tikzpicture}[node distance=18mm and 35mm]

\coordinate (Mid) at (0,0);

\node[state] (r12) at ([yshift=30mm]Mid) {$\rho_{12}$};

\node[charlie, above=15mm of r12] (C) {Charlie};

\node[party] (A) at ([xshift=-25mm]Mid) {Alice};
\node[opbox] (Proc) at ([xshift=25mm]Mid) {joint\\processor};

\node[party, right=35mm of Proc] (B) {Bob};

\node[opbox, below=20mm of B] (outB) {return};

\draw[flow] (C) -- (r12);

\draw[flow] (r12) -- node[left, smalllab, xshift=-4mm, yshift=2mm] {$\mathcal H_{I,1}$} (A.north);
\draw[flow] (r12) -- node[right, smalllab, xshift=4mm, yshift=2mm] {$\mathcal H_{M,1}$} (Proc.north);

\draw[flow] (A.east) -- node[above, smalllab] {$\mathcal H_{O,1}$} (Proc.west);

\draw[flow] (Proc.east) -- node[above, smalllab] {$\mathcal H_{I,2}$} (B.west);

\draw[flow] (B) -- node[right, smalllab] {$\mathcal H_{O,2}$} (outB);

\node[lab, above=3mm of Proc] {uses quantum memory};
\node[lab, below=2mm of Proc] {Under Charlie's control};

\end{tikzpicture}
\caption{
Sequential strategy with quantum memory in the direction $1\to2$.
Charlie prepares a joint state of Alice's input and an internal memory system,
then processes Alice's output together with the retained memory to generate
Bob's input.}
\label{fig:sequential-quantum-memory}
\end{figure}

The hierarchy of the above classes with 
the definite-order is summarized as
\begin{align}
\begin{array}{ccccc}
{\cal S}_{N,1\to 2} & \supset & {\cal S}_{I} & \subset & {\cal S}_{N,2\to 1} \\
\cap &  & \cap &  & \cap \\
{\cal S}_{Q,1\to 2} & \supset & {\cal S}_{P} & \subset & {\cal S}_{Q,2\to 1}
\end{array}
\Label{HI1}
\end{align}
Here, ${\cal S}_{I}$ is the smallest class in this hierarchy, ${\cal S}_{P}$ is the parallel class with arbitrary bipartite input state, and ${\cal S}_{N,1\to 2}$ and ${\cal S}_{Q,1\to 2}$ (and their reversed counterparts) describe sequential strategies without and with quantum memory, respectively. See \cite{CharFreeIdentification} for the corresponding process-matrix characterizations.

In addition to these classes, one may also consider subclasses of
${\cal S}_{Q,1\to 2}$, ${\cal S}_{Q,2\to 1}$, or ${\cal S}_{Q}$
in which Charlie's memory is restricted to be classical rather than quantum.
While such classes are well defined at the process level,
they are not well suited to the distribution-level approach developed in the present paper,
and we therefore do not consider them here.

For example, in the class ${\cal S}_{Q}$,
the subclass in which Charlie's memory is classical corresponds simply to
separable states on ${\cal H}_{I,1} \otimes {\cal H}_{I,2}$.
The analysis of such classical-memory structures requires techniques
that are conceptually different from those employed in this work.

\section{General structure with state reduction}\Label{S5}
The final goal of this paper is a sharp characterization of memoryless sequential dynamics
under restricted \emph{projective} observations.
Before imposing that projective-state-reduction assumption, however,
it is useful to formulate the problem at the more general level of state reduction associated with randomized measurements.
This section develops that general framework.
Its role is to identify what can already be said at the distribution level
without yet restricting the post-measurement dynamics to the projective case,
and to prepare the structures that will later be specialized to projective state reduction.
In this way, the subsequent projective analysis can be understood as a refinement of the general state-reduction framework introduced here.

\subsection{Formulation with state reduction}
We consider the case when
the operations by Alice and Bob
are given as a randomized choice of
measurements.
In this case,
their information can be described by two random variables,
the random variable $Y\in {\cal Y}$ describing the choice of measurement,
and the random variable $X\in {\cal X}$ describing
the measurement outcome.
Hence,
two players, Alice (1st player) and Bob (2nd player),
have their respective random variables
$X_1,Y_1$ and $X_2,Y_2$, respectively.

In this scenario,
Alice and Bob independently decide their choice $Y_1$ and $Y_2$
according to the uniform distribution independently, i.e.,
$Y_1\perp Y_2$.
We denote the state reduction
with the measurement choice $Y_1=y_1\in \overline{\cal Y}_1$ in Alice's side
by $\{\Gamma_{x_1|y_1}^1\}_{x_1 \in {\cal X}_{y_1}}$, where
the state $\rho$ is changed to
$\Gamma_{x_1|y_1}^1(\rho)/\Tr \Gamma_{x_1|y_1}(\rho)$
with the measurement outcome $x_1$.
Here, $\Gamma_{x_1|y_1}^1$ is a CP map from
the input system ${\cal H}_{(I,1)}$ to the output system ${\cal H}_{(O,1)}$,
and $\sum_{x_1 \in {\cal X}_{y_1}}\Gamma_{x_1|y_1}^1$ is a TP-CP map.
Similarly, we define
$\{\Gamma_{x_2|y_2}^2\}_{x_2 \in {\cal X}_{y_2}}$ for
$y_2\in \overline{\cal Y}_2$.

Hence, Alice's and Bob's operations are described as
$\{\Gamma_{1,(x_1,y_1)}\}$
and $\{\Gamma_{2,(x_2,y_2)}\}$
under the formalism given in Section \ref{sec:problem-setting},
respectively, where
$\Gamma_{i,(x_i,y_i)}:= \frac{1}{|\overline{\cal Y}_i|}\Gamma_{x_i|y_i}^i$ for $i=1,2$.
Using \eqref{eq:GBR}, we define the joint distribution
$P_{X_1,Y_1,X_2,Y_2}$, which will be obtained by repeating several experiments.
This joint distribution gives the concrete form of the map ${\cal M}$ defined in \eqref{eq:model-map}.
Due to the condition
$\Gamma_{i,(x_i,y_i)}= \frac{1}{|\overline{\cal Y}_i|}\Gamma_{x_i|y_i}^i$,
the obtained distribution belongs to the following set
\begin{align}
{\cal M}[{\cal S}_G] \subset
{\cal P}_{{\cal U}}:=
\big\{P_{X_1,Y_1,X_2,Y_2}\big|
P_{Y_1,Y_2}(y_1,y_2)=\frac{1}{|\overline{\cal Y}_1||\overline{\cal Y}_2|}
,\quad
\forall y_1\in \overline{\cal Y}_1,y_2 \in \overline{\cal Y}_2\big\}.
\end{align}

Generally,
the set $\{
C[\Gamma_{1,(x_1,y_1)}]\otimes
C[\Gamma_{2,(x_2,y_2)}]
\}$ is not tomographically complete on
${\cal H}_{(I,1)} \otimes
{\cal H}_{(O,1)} \otimes{\cal H}_{(I,2)}\otimes{\cal H}_{(O,2)}$.
In this case,
we cannot recover the process matrix
from the obtained distribution $P_{X_1,Y_1,X_2,Y_2}$.
Therefore, there is a process matrix $W_S$ that
cannot be classified in the sense of \eqref{HI1}
from the obtained distribution $P_{X_1,Y_1,X_2,Y_2}$, i.e.,
the map ${\cal M}$ defined in \eqref{eq:model-map}
is not one-to-one.

Checking whether
$S$ belongs to ${\cal S}_{C|1\to 2}$ or ${\cal S}_{C}$
is difficult
even when the process matrix can be reconstructed.
Hence, we focus on the hierarchies given in \eqref{HI1},
i.e., we do not discuss the classes
${\cal S}_{C|1\to 2}$,
${\cal S}_{C|2\to 1}$, or ${\cal S}_{C}$.
Although the hierarchies given in \eqref{HI1}
yield the hierarchies:
\begin{align}
\begin{array}{ccccc}
{\cal M}[{\cal S}_{N,1\to 2}] & \supset &
{\cal M}[{\cal S}_{I}] &\subset & {\cal M}[{\cal S}_{N,2\to 1}] \\
\cap &  &\cap & & \cap  \\
{\cal M}[{\cal S}_{Q,1\to 2}] & \supset &
{\cal M}[{\cal S}_{P}] &\subset & {\cal M}[{\cal S}_{Q,2\to 1}] ,
\end{array}\Label{HI8}
\end{align}
there is a possibility that
the intersection
${\cal M}[{\cal S}_{N,1\to 2}] \cap {\cal M}[{\cal S}_{N,2\to 1}]
$ is strictly larger than ${\cal M}[{\cal S}_{I}]$.
In the following, we discuss how to characterize the sets
in the hierarchies given in \eqref{HI8}.

For the characterization, we have the following theorem.

\begin{theorem}\Label{TH4I}
When Charlie's strategy $S$ belongs to ${\cal S}_{Q,1\to2}$ or ${\cal S}_{N,1\to2}$,
the distribution ${\cal M}[S]$ satisfies the Markovian chain $X_1-Y_1-Y_2$.
When $S \in {\cal S}_{P}$, it satisfies both $X_1-Y_1-Y_2$ and $Y_1-Y_2-X_2$.
When $S \in {\cal S}_{I}$, it satisfies the Markovian chain $X_1-Y_1-Y_2-X_2$.
\end{theorem}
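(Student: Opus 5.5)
We verify each Markov chain by computing the relevant marginal of ${\cal M}[S]$ directly from the generalized Born rule \eqref{eq:GBR}, with $\Gamma_{i,(x_i,y_i)}=\frac{1}{|\overline{\cal Y}_i|}\Gamma^i_{x_i|y_i}$. The recurring tool is that summing Bob's outcomes gives a trace-preserving map, whose Choi matrix has a fixed partial trace. Writing $\Gamma^2_{y_2}:=\sum_{x_2}\Gamma^2_{x_2|y_2}$, the TP property and \eqref{eq:choi-convention} give
\[
\Tr_{(O,2)}C[\Gamma^2_{y_2}]=I_{(I,2)}
\]
for every $y_2$. The same identity holds for Alice with $\Gamma^1_{y_1}:=\sum_{x_1}\Gamma^1_{x_1|y_1}$.

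\emph{First claim ($X_1-Y_1-Y_2$ for $S\in{\cal S}_{Q,1\to2}$).} Since ${\cal S}_{N,1\to2}\subset{\cal S}_{Q,1\to2}$, it suffices to treat ${\cal S}_{Q,1\to2}$. Summing over $x_2$ replaces Bob's operator by $C[\Gamma^2_{y_2}]/|\overline{\cal Y}_2|$. By \eqref{BB1}, $W_S=(\Tr_{(O,2)}W_S)\otimes\rho_{mix,(O,2)}$, so the trace over $(O,2)$ can be taken first, which produces $\frac{1}{d_{O,2}}I_{(I,2)}$ on Bob's side. The remaining trace over $(I,2)$ then yields $\Tr_{(I,2),(O,2)}W_S$. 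By \eqref{BB2}, this equals $\sigma_1\otimes I_{(O,1)}$ with $\sigma_1:=\Tr_{(O,1),(I,2),(O,2)}W_S$. We therefore obtain
\[
P_{X_1,Y_1,Y_2}(x_1,y_1,y_2)=\frac{c}{|\overline{\cal Y}_1||\overline{\cal Y}_2|}\Tr\bigl[(\sigma_1\otimes I_{(O,1)})C[\Gamma^1_{x_1|y_1}]\bigr]
\]
for a constant $c$. The constant is fixed by normalization of the valid process matrix, or equivalently by taking uniform $Y$'s. The right-hand side factorizes as $P_{Y_2}(y_2)\,P_{X_1,Y_1}(x_1,y_1)$, which together with $Y_1\perp Y_2$ gives $P_{X_1|Y_1,Y_2}=P_{X_1|Y_1}$.

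\emph{Second claim (${\cal S}_P$).} For $W_S=\rho_{12}\otimes I_{(O,1)}\otimes I_{(O,2)}$, summing over $x_2$ and tracing Bob's Choi matrix against $I_{(O,2)}$ gives $I_{(I,2)}$. The marginal becomes $\Tr[(\Tr_{(I,2)}\rho_{12}\otimes I)C[\Gamma^1_{x_1|y_1}]]$ up to the uniform weights, which is independent of $y_2$. This proves $X_1-Y_1-Y_2$. The chain $Y_1-Y_2-X_2$ follows from the symmetric computation with the roles of Alice and Bob exchanged.

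\emph{Third claim (${\cal S}_I$).} The process matrix is a product, so $P_{X_1,X_2|Y_1,Y_2}=p_1(x_1|y_1)\,p_2(x_2|y_2)$ with $p_i(x_i|y_i)=\Tr[(\rho_i\otimes I)C[\Gamma^i_{x_i|y_i}]]$. Combined with the uniform independent $Y_1,Y_2$, the joint distribution factorizes as $P(y_1)p_1(x_1|y_1)P(y_2)p_2(x_2|y_2)$. This is exactly the Markov chain $X_1-Y_1-Y_2-X_2$: conditioning on $(Y_1,Y_2)$ makes $X_1$ depend only on $Y_1$ and $X_2$ only on $Y_2$, independently of each other.

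The main point requiring care is the first claim, specifically the order of partial traces. One must show that \eqref{BB1} lets us replace Bob's Choi matrix by its $(O,2)$-marginal, and that \eqref{BB2} then removes every dependence on $y_2$. Apart from that, the argument is just the TP identity for Choi matrices in the paper's convention.
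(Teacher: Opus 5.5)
Your proposal is correct and takes the same route as the paper. The paper simply attributes the chains for ${\cal S}_{Q,1\to2}$, ${\cal S}_{N,1\to2}$ and ${\cal S}_P$ to the non-signaling constraints, and the extra chain for ${\cal S}_I$ to the product form; your Born-rule computation with $\Tr_{(O,2)}C[\Gamma^2_{y_2}]=I_{(I,2)}$ makes this explicit. One minor correction: in the first claim the $y_2$-independence is already complete after \eqref{BB1} and that TP identity, since you are left with $\frac{1}{d_{O,2}}\Tr[(\Tr_{(I,2),(O,2)}W_S)\,C[\Gamma^1_{x_1|y_1}]]$, so \eqref{BB2} is not what removes the $y_2$ dependence and is not needed there (its printed $I_{(O,1)}$ should read $\rho_{mix,(O,1)}$, as in \eqref{APP-AB2}).
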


Except for ${\cal S}_{I}$, the above conditions come from the non-signaling condition.
Since the classes ${\cal S}_{N|1\to 2}$ and ${\cal S}_{Q|1\to 2}$
are characterized by the same Markovian chains $ X_1-Y_1-Y_2$,
the hierarchy ${\cal S}_{N|1\to 2}\subset {\cal S}_{Q|1\to 2}$
cannot be distinguished by using the Markov chain.
In the next section, we study
whether there exist other conditions to
detect the above hierarchies
under a special example.

\subsection{Pseudo-density matrix}
We now introduce the formalism of pseudo-density matrix (PDM)\cite{FJV,Liu3,Liu}
as a state-like representation of the observed distribution
$P_{X_1,Y_1,X_2,Y_2}$.
Its role in the present paper is to describe the compatible bipartite and marginal states
associated with the observed data.
To formulate this representation, we first fix a family of auxiliary Hermitian operators
built from the state-reduction maps introduced in the previous section.
These operators will serve to extract the relevant moments of
$P_{X_1,Y_1,X_2,Y_2}$ and hence to define compatibility with a PDM.

To encode the observed statistics in a convenient linear form, we introduce,
for each measurement setting, a family of independent centered observables.
These observables will later be used to define the moments appearing in the
pseudo-density-matrix description.

We assume that ${\cal H}_{(I,i)}={\cal H}_{(O,i)}$ and denote their common
dimension by $d_i$. We also enlarge the set of measurement choices to
$\overline{\cal Y}_i={\cal Y}_i\cup\{0\}$ for $i=1,2$, where the additional
label $0$ represents the trivial choice of doing nothing. For this choice,
we define $\Gamma_{0|0}^i$ to be the identity map on the density matrices on
${\cal H}_{(I,i)}$.

Fix $y_i\in{\cal Y}_i$. Since the outcome probabilities for a fixed
measurement $y_i$ satisfy one normalization constraint, only
$|{\cal X}_{y_i}|-1$ independent linear combinations are needed. We therefore
choose an auxiliary index set ${\cal Z}_{y_i}$ with
$|{\cal Z}_{y_i}|=|{\cal X}_{y_i}|-1$.
For each $z_i\in{\cal Z}_{y_i}$, we choose real coefficients
$\{g_{x_i|y_i,z_i}\}_{x_i\in{\cal X}_{y_i}}$ and define a Hermitian operator
$G_{y_i,z_i}$ by
\[
G_{y_i,z_i}
=
\sum_{x_i\in{\cal X}_{y_i}}
g_{x_i|y_i,z_i}\,\Gamma_{x_i|y_i}^{i,\dagger}(I).
\]
We require that the family
$\{G_{y_i,z_i}\}_{z_i\in{\cal Z}_{y_i}}\cup\{I\}$
be linearly independent and that each $G_{y_i,z_i}$ be traceless:
$\Tr G_{y_i,z_i}=0$.
Thus, the identity operator $I$ accounts for normalization, while the
operators $G_{y_i,z_i}$ describe the nontrivial independent components of
the measurement statistics.

For later convenience, we collect these operators into a single indexed family
\[
\{G_{t_i}\}_{t_i\in{\cal T}_i}
:=
\{I\}\cup \{G_{y_i,z_i}\}_{y_i\in{\cal Y}_i,\ z_i\in{\cal Z}_{y_i}},
\]
where the index $0\in{\cal T}_i$ is reserved for the identity operator, so that $G_0=I$.
For $t_i\neq0$, we identify $t_i$ with the pair
$(y(t_i),z(t_i))$, and we abbreviate
$g_{x_i|y(t_i),z(t_i)}$ to $g_{x_i|t_i}$.

Intuitively, for each measurement setting $y_i$, we do not work directly with
the raw outcome labels $x_i$. Instead, we replace them by a set of
$|{\cal X}_{y_i}|-1$ independent contrast variables. 
The corresponding
operators $G_{y_i,z_i}$ play the role of a basis for these contrasts and will
allow us to express the observed distribution through linear moments.

Although the joint distribution $P_{X_1,Y_1,X_2,Y_2}$ is defined in \eqref{eq:GBR},
the conditional distribution $P_{X_1,X_2|Y_1,Y_2}$ is given as
\begin{align}
P_{X_1,X_2|Y_1,Y_2}(x_1,x_2|y_1,y_2):=
|\overline{\cal Y}|^2 P_{X_1,Y_1,X_2,Y_2}(x_1,y_1,x_2,y_2).
\end{align}
We call a Hermitian matrix $R_{1,2}$ with trace $1$ on
${\cal H}_{(I,1)} \otimes {\cal H}_{(I,2)}$ a pseudo density matrix (PDM).

\begin{definition}
We say that a PDM $R_{1,2}$ is compatible with the
distribution $P_{X_1,Y_1,X_2,Y_2}$
when the relation
\begin{align}
\Tr (G_{t_1}\otimes G_{t_2})R_{1,2}=
\langle G_{t_1}\otimes G_{t_2}\rangle_{P_{X_1,Y_1,X_2,Y_2}}\Label{VBSA}
\end{align}
holds for
$t_1 \in {\cal T}_1$,
$t_2 \in {\cal T}_2$, where
\begin{align}
\langle G_{t_1}\otimes G_{t_2}\rangle_{P_{X_1,Y_1,X_2,Y_2}}
:= \sum_{x_1,x_2}
g_{x_1|t_1} g_{x_2|t_2}
P_{X_1,X_2|Y_1,Y_2}(x_1,x_2|y(t_1),y(t_2)).\Label{BAS}
\end{align}
We say that a density matrix $\rho_1$ on ${\cal H}_{(I,1)}$
is compatible with the distribution $P_{X_1,Y_1,X_2,Y_2}$
when the relation
\begin{align}
\Tr G_{t_1} \rho_1=
\langle G_{t_1}\otimes I \rangle_{P_{X_1,Y_1,X_2,Y_2}}\Label{VBSA1}
\end{align}
holds for $t_1 \in {\cal T}_1$.
The compatibility of $\rho_2$ on ${\cal H}_{(I,2)}$
is defined in the same way.
\end{definition}

\begin{lemma}\Label{LV7}
For $P_{X_1,Y_1,X_2,Y_2} \in {\cal M}[{\cal S}_G]$,
there exist
a density matrix $\rho_1$ on ${\cal H}_{(I,1)}$
and
a density matrix $\rho_2$ on ${\cal H}_{(I,2)}$
that are  compatible with the distribution $P_{X_1,Y_1,X_2,Y_2}$.
\end{lemma}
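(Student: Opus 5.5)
The plan is to read $\rho_1$ directly off the process matrix $W_S$, using the standard fact that a valid bipartite process matrix, once one party's TP operation is contracted in, reduces to a valid single-party process matrix, that is, a state on the input tensored with the identity on the output. Compatibility then follows by linearity of the generalized Born rule \eqref{eq:GBR}. The construction of $\rho_2$ is identical with the roles of the two players exchanged.

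\textbf{Step 1: identify what the right-hand side of \eqref{VBSA1} is.} By \eqref{BAS} with $t_2=0$ we have $g_{x_2|0}=1$, and Bob's setting $0$ is the identity channel $\Gamma^2_{0|0}=\mathrm{id}$. Hence
\[
\langle G_{t_1}\otimes I\rangle_P=\sum_{x_1} g_{x_1|t_1}\,P_{X_1|Y_1,Y_2}(x_1|y(t_1),0),
\qquad
P_{X_1|Y_1,Y_2}(x_1|y_1,0)=\Tr\bigl[W_S\,(C[\Gamma^1_{x_1|y_1}]\otimes C[\mathrm{id}])\bigr].
\]
For $t_1=0$ both sides of \eqref{VBSA1} equal $1$ for any density matrix, so only $t_1\neq 0$ needs checking.

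\textbf{Step 2: reduce to a single-party process matrix.} Define the positive operator
\[
W_1:=\Tr_{(I,2),(O,2)}\bigl[W_S\,(I_{(I,1),(O,1)}\otimes C[\mathrm{id}])\bigr]
\]
on ${\cal H}_{(I,1)}\otimes{\cal H}_{(O,1)}$. Because $W_S\in{\cal S}_G$ gives normalized probabilities for every pair of TP operations, $\Tr[W_1 C[\Gamma]]=1$ holds for every CPTP map $\Gamma$ on Alice's side. This is exactly the single-party validity condition, and I will invoke the known characterization that it forces $W_1=\sigma\otimes I_{(O,1)}$ for a density matrix $\sigma$. To justify this briefly, note that the Choi matrices of CPTP maps affinely span the operators $X$ with $\Tr_{(O,1)}X=I$. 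Normalization on this affine set forces the traceless-on-output part of $W_1$ to vanish. Positivity and normalization then give $\sigma\ge 0$ and $\Tr\sigma=1$.

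I expect this to be the main obstacle, mainly because of how much of the process-matrix literature one may cite rather than reprove. In particular, Bob may signal to Alice in a general process, so $\sigma$ may depend on Bob's operation. This dependence is harmless here: \eqref{VBSA1} involves only $t_2=0$, so it suffices to fix Bob's identity channel as above.

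\textbf{Step 3: conclude by linearity.} Using the convention \eqref{eq:choi-convention} and the adjoint \eqref{eq:adjoint-supermap},
\[
\Tr\bigl[(\sigma\otimes I)C[\Gamma^1_{x_1|y_1}]\bigr]=\Tr\,\Gamma^1_{x_1|y_1}(\sigma^{\mathsf T})=\Tr\bigl[\Gamma^{1,\dagger}_{x_1|y_1}(I)\,\sigma^{\mathsf T}\bigr].
\]
Set $\rho_1:=\sigma^{\mathsf T}$, which is again a density matrix. Multiplying by $g_{x_1|t_1}$ and summing over $x_1$ gives $\Tr G_{t_1}\rho_1=\langle G_{t_1}\otimes I\rangle_P$ for all $t_1\in{\cal T}_1$, which is \eqref{VBSA1}.

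Repeating Steps 1--3 with Alice performing the identity channel produces $\rho_2$ on ${\cal H}_{(I,2)}$, and this completes the proof plan for Lemma~\ref{LV7}.
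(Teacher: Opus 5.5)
Your proposal is correct and uses the same idea as the paper: the paper takes $\rho_1$ to be Alice's input state when Bob chooses the trivial setting $0$ (the identity channel), observes that it reproduces Alice's marginal statistics, and handles $\rho_2$ symmetrically. Your Step~2 (reducing $W_S$, with Bob's identity channel inserted, to a single-party process $\sigma\otimes I_{(O,1)}$) and the transpose bookkeeping $\rho_1=\sigma^{\mathsf T}$ in Step~3 make rigorous what the paper takes for granted, namely that this input state is a well-defined density matrix even when Bob can signal to Alice.
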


\begin{proof}
Let
$\rho_1$ be Alice's input state when Bob makes nothing,
and
$\rho_2$ be Bob's input state when Alice makes nothing.
Then, $\rho_1$ satisfies the condition \eqref{VBSA1}.
Hence, the density matrix $\rho_1$ on ${\cal H}_{(I,1)}$
is compatible with the distribution $P_{X_1,Y_1,X_2,Y_2}$.
Similarly,
the density matrix $\rho_2$ on ${\cal H}_{(I,2)}$
is compatible with the distribution $P_{X_1,Y_1,X_2,Y_2}$.
\end{proof}

Any parallel strategy is completely specified by a bipartite initial state
$\rho$ on ${\cal H}_{(I,1)}\otimes {\cal H}_{(I,2)}$,
since Charlie introduces no sequential dependence between Alice and Bob.
We denote the corresponding parallel strategy by $S_P(\rho)$.
For such a strategy, the state $\rho$ itself is compatible with the observed distribution
${\cal M}[S_P(\rho)]$.
The following relation shows that the converse also holds:
a distribution belongs to ${\cal M}[{\cal S}_P]$ if and only if it satisfies
the two non-signalling conditions characteristic of parallel strategies
and admits a compatible bipartite density matrix;
\begin{align}
{\cal M}[{\cal S}_{P}]
=
\left\{
P_{X_1,Y_1,X_2,Y_2} \in {\cal P}_{\cal U}
\left|
\begin{array}{l}
P_{X_1,Y_1,X_2,Y_2}\ \hbox{satisfies } X_1-Y_1-Y_2
\ \hbox{and}\ Y_1-Y_2-X_2,\\
\hbox{there exists a bipartite density matrix }
\rho \hbox{ on } {\cal H}_{(I,1)}\otimes{\cal H}_{(I,2)}\\
\hbox{that is compatible with }
P_{X_1,Y_1,X_2,Y_2}.
\end{array}
\right.
\right\}.
\Label{ZL1}
\end{align}
Hence, the existence of a compatible density matrix is essential.

An individual strategy is precisely a parallel strategy whose initial bipartite state is a product state.
Accordingly, within the characterization of parallel strategies in \eqref{ZL1},
the additional Markovian condition $X_1-Y_1-Y_2-X_2$
singles out the individual class, since for fixed measurement choices
it expresses the independence of the two observed outcomes $X_1$ and $X_2$.
Moreover, for any distribution in ${\cal M}[{\cal S}_G]$,
Lemma~\ref{LV7} provides compatible local density matrices
$\rho_1$ on ${\cal H}_{(I,1)}$ and $\rho_2$ on ${\cal H}_{(I,2)}$;
under the condition $X_1-Y_1-Y_2-X_2$,
the product state $\rho_1\otimes\rho_2$ is itself compatible with the distribution.
Therefore,
\begin{align}
{\cal M}[{\cal S}_{I}]
&=
\left\{P_{X_1,Y_1,X_2,Y_2} \in{\cal P}_{\cal U}
\left|
\begin{array}{l}
P_{X_1,Y_1,X_2,Y_2} \hbox{ satisfies }
 X_1-Y_1-Y_2-X_2,\\
\hbox{there exists a compatible density matrix }\rho
\hbox{ with }P_{X_1,Y_1,X_2,Y_2}
\end{array}
\right.
\right\}
 \notag\\
&=
\{P_{X_1,Y_1,X_2,Y_2} \in {\cal M}[{\cal S}_G]
\mid
P_{X_1,Y_1,X_2,Y_2} \hbox{ satisfies }
 X_1-Y_1-Y_2-X_2
\}.
\Label{CWSI}
\end{align}

Equation~\eqref{CWSI} should be compared with the parallel characterization \eqref{ZL1}.
There, the class ${\cal M}[{\cal S}_P]$ is described by the two directional non-signalling conditions
$X_1-Y_1-Y_2$ and $Y_1-Y_2-X_2$, together with the existence of a compatible bipartite density matrix.
The present condition \eqref{CWSI} strengthens this by requiring that the compatible bipartite state be effectively a product state.
At the distribution level, this product structure is reflected exactly by the additional Markovian condition
$X_1-Y_1-Y_2-X_2$, which expresses the conditional independence of the two local outcomes once the measurement choices are fixed.
In this sense, the individual class is obtained from the parallel class by imposing the extra independence structure that removes all nontrivial input correlations between Alice and Bob.

\section{Method under projective state reduction via pseudo-density matrix}\Label{S5-2}
This section specializes the distribution-level framework of
Section~\ref{S5} to projective state reduction.  We first establish compatibility
relations for a memoryless sequential strategy and then discuss memoryful
extensions, uniqueness of the reconstructed objects, and the parallel class.
These results supply the reconstruction ingredients used in
Section~\ref{SSAL}; they do not yet provide the final direct membership test.

\subsection{Adaptive strategy without memory}\Label{SSPDM}
To carry out a more detailed analysis, we focus on
the case with projective state reductions.
For an element $y_i \in {\cal Y}_i$,
we choose a projection-valued measure
$\{E_{x_i|y_i}^i\}_{x_i \in {\cal X}_{y_i}}$,
where
$E_{x_i|y_i}^i$ is a projection for $x_i\in {\cal X}_{y_i}$.
We take the corresponding measurement process to be
$\{\Gamma_{x_i|y_i}^i\}_{x_i \in {\cal X}_{y_i}}$
in the formalism of Section \ref{sec:problem-setting} as
$\Gamma_{x_i|y_i}^i(\rho):= E_{x_i|y_i}^i\rho E_{x_i|y_i}^i$.
Hence, we have
$\Tr_{(O,i)} C[\Gamma_{x_i|y_i}^i] =\Gamma_{x_i|y_i}^{i,\dagger}(I)
=E_{x_i|y_i}^i$.

We study the class ${\cal S}_{N|1\to 2}$.
When Charlie's strategy $S$ belongs to the classes
${\cal S}_{N|1\to 2}$,
the strategy is written as the pair of
Alice's initial state $\rho_1$ before the measurement by Alice
and the Choi matrix $C_{1\to 2}$
of the TP-CP map that Charlie applies to Alice's output state.
In the following, we denote this strategy
by $S_{1\to 2}(\rho_1,C_{1\to 2})$.
The aim of this section is the development of a method to identify
whether the distribution
${\cal M}[S_{1\to 2}(\rho_1,C_{1\to 2})] $ belongs to
${\cal M}[{\cal S}_{N,1\to 2}]\cap {\cal M}[{\cal S}_{N,2\to 1}]$.

\begin{definition}
We say that
a pair of a Hermitian matrix $C_{1\to 2}$
on ${\cal H}_{(I,1)}\otimes {\cal H}_{(I,2)}$
and a density matrix $\rho_1$ on ${\cal H}_{(I,1)}$ is compatible
with $P_{X_1,Y_1,X_2,Y_2}$ via
a pseudo density matrix $R_{1,2}$
when $R_{1,2}$ is compatible with $P_{X_1,Y_1,X_2,Y_2}$
and
\begin{align}
\rho_1=\Tr_2 R_{1,2}, \quad
\Tr_2  {C}_{1\to 2}=I, \quad
R_{1,2}= C_{1\to 2}^{T_1} \circ (\rho_{1} \otimes I)\Label{XS} ,
\end{align}
where $T_i$ expresses the partial transpose of the $i$-th system.
\end{definition}
When $\rho_1$ is invertible, the second condition in \eqref{XS} is implied by the other two conditions, because they yield
$\rho_1 = (\Tr_2 C_{1\to2}) \circ \rho_1$.

Similarly, we say that
a pair of a Hermitian matrix $C_{2\to 1}$
on ${\cal H}_{(I,1)}\otimes {\cal H}_{(I,2)}$
and a density matrix $\rho_2$ on ${\cal H}_{(I,2)}$ is compatible
with $P_{X_1,Y_1,X_2,Y_2}$ via a pseudo density matrix $R_{1,2}$
when $R_{1,2}$ is compatible with $P_{X_1,Y_1,X_2,Y_2}$
and
\begin{align}
\rho_2=\Tr_1 R_{1,2}, \quad
\Tr_1  {C}_{2\to 1}=I, \quad
R_{1,2}= C_{2\to 1}^{T_2} \circ (I \otimes \rho_{2} )\Label{XS2} .
\end{align}
Hereafter, we omit the phrase ``via a pseudo density matrix $R_{1,2}$''
in the above definitions.

\begin{lemma}
A pair of a Hermitian matrix $C_{1\to 2}$
on ${\cal H}_{(I,1)}\otimes {\cal H}_{(I,2)}$
and a density matrix $\rho_1$ on ${\cal H}_{(I,1)}$ is compatible
with $P_{X_1,Y_1,X_2,Y_2}$ if and only if the relations
\begin{align}
\Tr_2  {C}_{1\to 2}&=I \Label{MGH} \\
\langle G_{t_1}\otimes G_{t_2} \rangle_{P_{X_1,Y_1,X_2,Y_2}}
&=\Tr
\big({C}_{1\to 2}^{T_1}
\circ\big( \rho_1  \otimes I \big)\big)
 \big( G_{t_1} \otimes  G_{t_2}\big) \Label{FN3Y}
\end{align}
hold for $t_1,t_2 \in {\cal T}$.
\end{lemma}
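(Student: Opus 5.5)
The plan is to set $R_{1,2}:=C_{1\to 2}^{T_1}\circ(\rho_1\otimes I)$ and prove both implications directly from the definition of compatibility \eqref{XS}, together with the compatibility condition \eqref{VBSA} for $R_{1,2}$ and \eqref{VBSA1} for $\rho_1$. Here $\circ$ is the product (symmetrized where needed) used in \eqref{XS}, so that $R_{1,2}$ is Hermitian whenever $C_{1\to2}$ and $\rho_1$ are.

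\emph{Necessity.} Assume the pair $(C_{1\to2},\rho_1)$ is compatible via some $R_{1,2}$. Then \eqref{MGH} is literally the second relation in \eqref{XS}. Substituting the third relation of \eqref{XS} into \eqref{VBSA} yields \eqref{FN3Y} for all $t_1,t_2$. Nothing more is needed in this direction.

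\emph{Sufficiency.} Assume \eqref{MGH} and \eqref{FN3Y}, and define $R_{1,2}$ as above. By \eqref{FN3Y}, $R_{1,2}$ satisfies \eqref{VBSA}. Taking $t_1=t_2=0$ gives $\Tr R_{1,2}=\langle I\otimes I\rangle_P=1$, so $R_{1,2}$ is a PDM compatible with $P$. The remaining relation to check is $\rho_1=\Tr_2 R_{1,2}$. Since the factor $\rho_1\otimes I$ acts trivially on system 2, the partial trace over system 2 commutes with the product. This gives
\[
\Tr_2 R_{1,2}=(\Tr_2 C_{1\to2})^{T_1}\circ\rho_1=I\circ\rho_1=\rho_1
\]
by \eqref{MGH}, using $I^{T}=I$.

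The main obstacle is making the partial-trace step rigorous for the particular product $\circ$. If $\circ$ is the Jordan product $\frac12(AB+BA)$, the step still goes through: $\Tr_2[A(\rho_1\otimes I)]=(\Tr_2 A)\rho_1$, and the same holds with the factors in the other order. I would verify this identity on product operators $A=A_1\otimes A_2$ and extend by linearity. I would also confirm that the index set ${\cal T}$ in \eqref{FN3Y} includes $t=0$, so that normalization is captured. The statement makes no requirement that $\rho_1$ itself be compatible with $P$, so no separate marginal check is necessary. That marginal compatibility would in any case follow from \eqref{FN3Y} with $t_2=0$, combined with $\Tr_2 R_{1,2}=\rho_1$.
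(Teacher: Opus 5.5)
Your proof is correct and follows the paper's argument step for step. Necessity comes from substituting the third relation of \eqref{XS} into \eqref{VBSA}; sufficiency comes from setting $R_{1,2}:=C_{1\to2}^{T_1}\circ(\rho_1\otimes I)$, reading off its compatibility from \eqref{FN3Y}, and recovering $\Tr_2 R_{1,2}=\rho_1$ from \eqref{MGH}. Your explicit checks of the unit trace of $R_{1,2}$ and of the partial-trace identity for the Jordan product fill in details the paper leaves implicit.
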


\begin{proof}
Assume that a pair of a Hermitian matrix $C_{1\to 2}$
on ${\cal H}_{(I,1)}\otimes {\cal H}_{(I,2)}$
and a density matrix $\rho_1$ on ${\cal H}_{(I,1)}$ is compatible
with $P_{X_1,Y_1,X_2,Y_2}$ via $R_{1,2}$.
The combination of \eqref{VBSA} and the third equation of \eqref{XS} yields \eqref{FN3Y}.
Assume the relations \eqref{MGH} and \eqref{FN3Y}.
We set $R_{1,2}$ to be $\big({C}_{1\to 2}^{T_1}
\circ\big( \rho_1  \otimes I \big)\big)$.
The relation \eqref{FN3Y} guarantees that
$R_{1,2}$ is compatible with $P_{X_1,Y_1,X_2,Y_2}$, which implies the third equation of \eqref{XS}.
The combination of
the \eqref{MGH} and the third equation of \eqref{XS}
implies the first equation of \eqref{XS}.
\end{proof}

\begin{theorem}\Label{TH6}

Alice's initial state $\rho_1$ is compatible with
${\cal M}[S_{1\to 2}(\rho_1,C_{1\to 2})]$.
\end{theorem}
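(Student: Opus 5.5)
We verify \eqref{VBSA1} directly, i.e. for every $t_1\in{\cal T}_1$,
\begin{align*}
\Tr G_{t_1}\rho_1=\langle G_{t_1}\otimes I\rangle_{P_{X_1,Y_1,X_2,Y_2}},
\qquad P_{X_1,Y_1,X_2,Y_2}={\cal M}[S_{1\to 2}(\rho_1,C_{1\to 2})].
\end{align*}
The case $t_1=0$ is just normalization. Both sides equal $1$, since $\Tr\rho_1=1$ and the conditional distribution sums to one. So the substantive case is $t_1=(y_1,z_1)$ with $y_1\in{\cal Y}_1$.

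The first step is to read the right-hand side of \eqref{BAS} for $t_2=0$ as the marginal of $X_1$ under the setting $(y_1,0)$. Here Bob's choice $0$ is the identity map $\Gamma^2_{0|0}$, whose single outcome carries coefficient $g=1$. Hence
\begin{align*}
\langle G_{t_1}\otimes I\rangle=\sum_{x_1} g_{x_1|t_1}\,P_{X_1|Y_1,Y_2}(x_1|y_1,0).
\end{align*}

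The second step computes this marginal from the generalized Born rule \eqref{eq:GBR} with $W_S=\rho_1\otimes C_{1\to2}\otimes I_{O,2}$. Summing over Bob's outcomes, or taking his trivial choice, replaces $C[\Gamma_2]$ by the Choi matrix of a TP map. Its contraction against $I_{O,2}$ gives $I_{I,2}$. Then $\Tr_{I,2}C_{1\to2}=I_{O,1}$, because $\Lambda_{1\to 2}$ is trace preserving. Contracting with $C[\Gamma^1_{x_1|y_1}]$ and tracing out $O,1$ leaves $\Tr_{O,1}C[\Gamma^1_{x_1|y_1}]=E^1_{x_1|y_1}$, up to the transpose convention. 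This yields
\begin{align*}
P_{X_1|Y_1,Y_2}(x_1|y_1,0)=\Tr\rho_1 E^1_{x_1|y_1}.
\end{align*}
Equivalently, operationally: Alice measures $\rho_1$ before anything Charlie or Bob does.

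The third step uses linearity. By the definition of $G_{y_1,z_1}$ and $\Gamma^{1,\dagger}_{x_1|y_1}(I)=E^1_{x_1|y_1}$,
\begin{align*}
\sum_{x_1} g_{x_1|t_1}\Tr\rho_1E^1_{x_1|y_1}=\Tr G_{t_1}\rho_1,
\end{align*}
which establishes the claim.

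The main obstacle is the bookkeeping of the transpose in the Choi convention \eqref{eq:choi-convention}. Contracting $\rho_1$ against $C[\Gamma_1]$ produces $\Tr\rho_1^{\mathsf T}$-type expressions unless one is careful about whether $W_S$ or the Choi operators carry the transpose. I would handle this by checking that the Born rule with this convention reproduces the ordinary expression $\Tr\,\Gamma^1_{x_1|y_1}(\rho_1)=\Tr\rho_1E^1_{x_1|y_1}$, which is the intended operational meaning. If it does not, I would interpret $\rho_1$ in $W_S$ as entering transposed. Alternatively, I would simply invoke the reasoning of Lemma~\ref{LV7}: the state Alice receives when Bob does nothing is exactly $\rho_1$ in a $1\to2$ sequential strategy.
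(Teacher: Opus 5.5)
Your proposal is correct and follows the paper's route: the paper likewise writes $P_{X_1|Y_1,Y_2}(x_1|y_1,0)=\Tr\rho_1E^1_{x_1|y_1}$ from the operational meaning of $\rho_1$, which sidesteps the transpose bookkeeping you flag, and then concludes by linearity in the coefficients $g_{x_1|t_1}$. The paper also cites the Markov chain $X_1-Y_1-Y_2$ so that $\langle G_{t_1}\otimes I\rangle$ is well defined when evaluated at a nontrivial Bob setting; your remark that summing over Bob's outcomes gives the Choi matrix of a TP map already covers this point.
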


\begin{proof}
The initial state $\rho_1$ on Alice's system satisfies
\begin{align}
P_{X_1|Y_1,Y_2}(x_1|y_1,0)
=
\Tr \rho_1 \Gamma_{x_1|y_1}^{1,\dagger}(I)
=
\Tr \rho_1 E_{x_1|y_1}^1
\end{align}
for $x_1,y_1$.
Since the distribution
${\cal M}[S_{1\to 2}(\rho_1,C_{1\to 2})]$ satisfies the Markovian chain $X_1-Y_1-Y_2$,
$\rho_1$ is compatible with ${\cal M}[S_{1\to 2}(\rho_1,C_{1\to 2})]$.
\end{proof}
To generalize
\cite[Theorem 1]{Liu3}, \cite[Theorem 4.1]{PhD}, we
introduce the following condition for
$\{G_t\}_{t \in {\cal T}}$ and $\{E_y\}_{y\in {\cal Y}}$:
\begin{description}
\item[(A2)]
For any $t_i \in {\cal T}_i\setminus \{0\}$,
$G_{t_i}^2$ is a scalar multiple of the identity operator.
Also, $E_{y(t_i)}$ is the spectral decomposition of $G_{t_i}$,
i.e.,
$|{\cal X}_{y(t_i)}|=2$ and any two distinct elements $x_i,x_i'$ in ${\cal X}_{y(t_i)}$ satisfy the condition
$g_{x_i|y(t_i),z(t_i)}\neq g_{x_i'|y(t_i),z(t_i)} \neq 0$.
\end{description}
Pauli matrices on an $n$-qubit system satisfy the above
condition.

\begin{lemma}\Label{LJG1}
Assume the condition (A2).

We have
\begin{align}
\langle G_{t_1}\otimes G_{t_2} \rangle_{{\cal M}[S_{1\to 2}(\rho_1,C_{1\to 2})]}
=\Tr
\big({C}_{1\to 2}^{T_1}
\circ\big( \rho_1  \otimes I \big)\big)
 \big( G_{t_1} \otimes  G_{t_2}\big) \Label{FN3}
\end{align}
for $t_1,t_2 \in {\cal T}$.
\end{lemma}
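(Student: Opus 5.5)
The plan is to compute both sides of \eqref{FN3} directly from the definition of $S_{1\to2}(\rho_1,C_{1\to2})$, reading $\circ$ as the Jordan product $A\circ B=\frac12(AB+BA)$, as in the PDM literature. Write $\Lambda=\Lambda_{1\to2}$ for the CPTP map whose Choi matrix is $C_{1\to2}$.

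\textbf{Case $t_1,t_2\neq0$.} Fix $y_1=y(t_1)$ and $y_2=y(t_2)$. In the strategy, Alice's post-measurement state for outcome $x_1$ is $E^1_{x_1|y_1}\rho_1E^1_{x_1|y_1}$, and Bob's input is $\Lambda(E^1_{x_1|y_1}\rho_1E^1_{x_1|y_1})$. Summing over Bob's outcomes with weights $g_{x_2|t_2}$ turns his effects into $G_{t_2}$, so
\[
\langle G_{t_1}\otimes G_{t_2}\rangle
=\Tr\Big[G_{t_2}\,\Lambda\Big(\sum_{x_1}g_{x_1|t_1}E^1_{x_1|y_1}\rho_1E^1_{x_1|y_1}\Big)\Big].
\]

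\textbf{Key identity from (A2).} The outcome set is binary, $\{+,-\}$, with $G_{t_1}=g_+E_++g_-E_-$ and $E_++E_-=I$. Tracelessness of $G_{t_1}$ together with $G_{t_1}^2\propto I$ forces $g_-=-g_+=:-g$, so $G_{t_1}=g(E_+-E_-)$. Expanding $(E_+-E_-)\rho+\rho(E_+-E_-)$ and inserting $\rho=(E_++E_-)\rho(E_++E_-)$, the cross terms $E_\pm\rho E_\mp$ cancel. This gives the Lüders-to-Jordan identity
\[
\sum_{x_1}g_{x_1|t_1}E_{x_1}\rho_1E_{x_1}=G_{t_1}\circ\rho_1 .
\]
I expect this to be the main point where (A2) is genuinely needed: with more outcomes or unequal $|g_\pm|$ the cross terms survive and the identity fails.

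\textbf{Rewriting via the Choi matrix.} By the Choi convention \eqref{eq:choi-convention},
\[
\Tr[A\,\Lambda(X)]=\Tr\big[C_{1\to2}(X^{\mathsf T}\otimes A)\big]=\Tr\big[C_{1\to2}^{T_1}(X\otimes A)\big].
\]
Setting $X=\frac12(G_{t_1}\rho_1+\rho_1G_{t_1})$ and $A=G_{t_2}$, cyclicity of the trace moves $\rho_1\otimes I$ onto $C_{1\to2}^{T_1}$ from the left in one term and from the right in the other. This yields
\[
\Tr\big[(C_{1\to2}^{T_1}\circ(\rho_1\otimes I))(G_{t_1}\otimes G_{t_2})\big],
\]
which is \eqref{FN3}.

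\textbf{Boundary cases.} If $t_1=0$, Alice does nothing, so Bob receives $\Lambda(\rho_1)$. The same Choi computation with $G_{t_1}=I$ gives the claim, since $I\circ\rho_1=\rho_1$. If $t_2=0$, the left side equals $\Tr\rho_1G_{t_1}$ by the Markov chain $X_1-Y_1-Y_2$ (Theorem~\ref{TH4I}) and Theorem~\ref{TH6}. On the right side, $\Tr_2C_{1\to2}=I$ (trace preservation) reduces the expression to $\Tr[(I\circ\rho_1)G_{t_1}]=\Tr\rho_1G_{t_1}$, so the two sides agree.
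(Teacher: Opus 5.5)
Your proposal is correct and follows essentially the same route as the paper. The paper likewise writes $G_{t_1}=\kappa_{t_1}(E_{0|y(t_1)}-E_{1|y(t_1)})$, proves the L\"uders-to-Jordan identity $\sum_{x_1}g_{x_1|t_1}E_{x_1|y(t_1)}\rho_1E_{x_1|y(t_1)}=\rho_1\circ G_{t_1}$ by adding and subtracting the cross terms, inserts this into the Choi trace formula, and then moves $\rho_1\otimes I$ onto $C_{1\to2}^{T_1}$ inside the trace. Your derivation of $g_-=-g_+$ from (A2) and your separate treatment of the cases $t_1=0$ and $t_2=0$ simply make explicit what the paper assumes without loss of generality or leaves implicit.
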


\begin{proof}

In this proof, we denote the joint distribution
${\cal M}[S_{1\to 2}(\rho_1,C_{1\to 2})]$
by $P_{X_1,Y_1,X_2,Y_2}(x_1,y_1,x_2,y_2)$.
Due to the condition (A2), for $t (\neq 0)\in {\cal T}$,
the set ${\cal X}_{y(t)}$ consists of two elements.
Without loss of notation, suppose that ${\cal X}_{y(t)}
=\{0,1\}$
and $g_{x|t}=(-1)^x \kappa_{t}$, where $G_{t}^2=\kappa_{t}^2 I$.
Hence, we have $G_{t}= \kappa_t E_{0|y(t)}-\kappa_t E_{1|y(t)}$.
Since
\begin{align}
&\sum_{x_1} g_{x_1|t} E_{x_1|y(t)} \rho_{1} E_{x_1|y(t)} \notag \\
=&
\kappa_{t} E_{0|y(t)} \rho_{1} E_{0|y(t)}
-\kappa_{t} E_{1|y(t)} \rho_{1} E_{1|y(t)} \notag \\
=&
\frac{1}{2}
\Big(\kappa_t E_{0|y(t)} \rho_{1} E_{0|y(t)}
-\kappa_t E_{0|y(t)} \rho_{1} E_{1|y(t)}
+\kappa_t E_{1|y(t)} \rho_{1} E_{0|y(t)}
-\kappa_t E_{1|y(t)} \rho_{1} E_{1|y(t)} \Big)\notag \\
&+\frac{1}{2}
\Big(\kappa_t E_{0|y(t)} \rho_{1} E_{0|y(t)}
+\kappa_t E_{0|y(t)} \rho_{1} E_{1|y(t)}
-\kappa_t E_{1|y(t)} \rho_{1} E_{0|y(t)}
-\kappa_t E_{1|y(t)} \rho_{1} E_{1|y(t)} \Big)\notag \\
=&
\frac{1}{2}
\Big(E_{0|y(t)} \rho_{1} G_{t}
+E_{1|y(t)} \rho_{1} G_{t}
+G_{t} \rho_{1} E_{0|y(t)}
+G_{t} \rho_{1} E_{0|y(t)} \Big)
=
\rho_1 \circ G_{t},
\end{align}
we have
\begin{align}
&\langle G_{t_1}\otimes G_{t_2} \rangle_{{\cal M}[S_{1\to 2}(\rho_1,C_{1\to 2})]}
\notag \\
=&
\sum_{x_1,x_2\in {\cal X}}
g_{x_1|t_1} g_{x_2|t_2}
P_{X_1,X_2|Y_1,Y_2}(x_1,x_2|y(t_1),y(t_2)) \notag \\
=&
\sum_{x_1,x_2}g_{x_1|t_1}g_{x_2|t_2}
\Tr_2 \Big(\Tr_1 \Big({C}_{1\to 2}^{T_1}
(E_{x_1|y(t_1)}\rho_{1}E_{x_1|y(t_1)} \otimes I)
 E_{x_2|y(t_2)}\Big) \notag \\
=&\Tr
{C}_{1\to 2}^{T_1}
\Big( \Big(\sum_{x_1} g_{x_1|t_1} E_{x_1|y(t_1)} \rho_{1}
E_{x_1|y(t_1)}\Big)
 \otimes  G_{y(t_2)}\Big) \notag \\
=&\Tr
{C}_{1\to 2}^{T_1}
\big( (\rho_1 \circ G_{t_1})
 \otimes  G_{t_2}\big) 
=\Tr
{C}_{1\to 2}^{T_1}
\big( \rho_1  \otimes I \big)
\circ \big( G_{t_1} \otimes  G_{t_2}\big) 
=\Tr
{C}_{1\to 2}^{T_1}
\circ\big( \rho_1  \otimes I \big)
 \big( G_{t_1} \otimes  G_{t_2}\big) .\Label{FN3B}
\end{align}
\end{proof}

Lemma \ref{LJG1} implies that
the PDM ${C}_{1\to 2}^{T_1}
\circ\big( \rho_1  \otimes I \big)$ is compatible
with ${\cal M}[S_{1\to 2}(\rho_1,C_{1\to 2})]$.
Hence, we obtain the following theorem.

\begin{theorem}\Label{TH7}
Assume the condition (A2).

Then, the pair of a Hermitian matrix $C_{1\to 2}$
on ${\cal H}_{(I,1)}\otimes {\cal H}_{(I,2)}$
and a density matrix $\rho_1$ on ${\cal H}_{(I,1)}$ is compatible
with the distribution ${\cal M}[S_{1\to 2}(\rho_1,C_{1\to 2})]$.
\end{theorem}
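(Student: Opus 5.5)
Theorem~\ref{TH7} follows from the characterization of compatibility given by the preceding lemma, which says that the pair $(C_{1\to2},\rho_1)$ is compatible with a distribution if and only if \eqref{MGH} and \eqref{FN3Y} hold for all $t_1,t_2\in{\cal T}$. We take $P_{X_1,Y_1,X_2,Y_2}:={\cal M}[S_{1\to 2}(\rho_1,C_{1\to 2})]$ and check these two relations in turn.

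First, condition \eqref{MGH}, namely $\Tr_2 C_{1\to2}=I$, holds because $C_{1\to2}$ is the Choi matrix of a TP-CP map in the convention \eqref{eq:choi-convention}. Indeed, $\Tr_2 C[\Lambda]=\sum_{a,b}|a\rangle\langle b|\,\Tr\Lambda(|a\rangle\langle b|)=\sum_{a,b}|a\rangle\langle b|\,\delta_{ab}=I$ by trace preservation. The Hermiticity of $C_{1\to2}$ is automatic from complete positivity, and $\rho_1$ is a density matrix by assumption.

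Second, condition \eqref{FN3Y} is exactly the identity \eqref{FN3} of Lemma~\ref{LJG1}, which holds under (A2) for all $t_1,t_2\in{\cal T}$. Some care is needed for the boundary indices $t_1=0$ or $t_2=0$, where $G_0=I$ corresponds to the trivial measurement choice $0$. Lemma~\ref{LJG1} is stated for all of ${\cal T}$, but I would double-check these cases directly.

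\begin{itemize}
\item For $t_1=0$ the left side is Bob's statistics when Alice does nothing. Alice's identity map sends $\rho_1$ through $C_{1\to2}$, so the left side equals $\Tr\,\Lambda_{1\to2}(\rho_1)G_{t_2}=\Tr C_{1\to2}^{T_1}(\rho_1\otimes G_{t_2})$. Since $\rho_1\circ I=\rho_1$, this matches the right side.
\item For $t_2=0$ the left side reduces to $\Tr\rho_1 G_{t_1}$, using the Markov chain $X_1-Y_1-Y_2$ from Theorem~\ref{TH4I}. The right side gives $\Tr(\Tr_2 C_{1\to2}^{T_1})(\rho_1\circ G_{t_1})=\Tr\rho_1 G_{t_1}$, by the first step together with cyclicity.
\end{itemize}

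With \eqref{MGH} and \eqref{FN3Y} both established, the lemma yields compatibility via $R_{1,2}=C_{1\to2}^{T_1}\circ(\rho_1\otimes I)$. The only genuine content is Lemma~\ref{LJG1}, which is already available, so I expect no real obstacle; the step needing the most care is the consistency of the $t_i=0$ cases just described.
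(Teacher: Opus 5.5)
Your proposal is correct and follows essentially the paper's own route: the paper obtains Theorem~\ref{TH7} directly from Lemma~\ref{LJG1}, i.e., from compatibility of the PDM $C_{1\to 2}^{T_1}\circ(\rho_1\otimes I)$ with ${\cal M}[S_{1\to 2}(\rho_1,C_{1\to 2})]$, together with $\Tr_2 C_{1\to 2}=I$ coming from trace preservation. Your separate checks of the $t_i=0$ cases are redundant, since the computation in Lemma~\ref{LJG1} already covers them, but they are harmless and correct.
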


Then,
Theorem \ref{TH7} implies the following corollary.

\begin{corollary}\Label{COR6}
Assume the condition (A2).
We have
\begin{align}
&{\cal M}[{\cal S}_{N,1\to 2}]\notag \\
=&
\left
\{P_{X_1,Y_1,X_2,Y_2} \in {\cal P}_{\cal U}
\left|
\begin{array}{l}
\hbox{There exists a compatible
pair of $ \rho_1\ge 0$ and $C_{1\to 2}\ge 0$
with } P_{X_1,Y_1,X_2,Y_2} \\
\hbox{such that }
P_{X_1,Y_1,X_2,Y_2}={\cal M}[S_{1\to 2}(\rho_1, C_{1\to 2})].
\end{array}
\right.\right\}
\Label{FFT1}.
\end{align}
\end{corollary}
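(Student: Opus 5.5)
We prove the set equality by showing the two inclusions separately. The right-hand side is defined with the explicit requirement $P_{X_1,Y_1,X_2,Y_2}={\cal M}[S_{1\to 2}(\rho_1,C_{1\to 2})]$, so both directions are largely bookkeeping once Theorem~\ref{TH7} is available. The nontrivial content sits in the direction ``$\subset$''.

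\emph{Inclusion ``$\subset$''.} Let $P_{X_1,Y_1,X_2,Y_2}\in{\cal M}[{\cal S}_{N,1\to 2}]$.

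1. By the definition of ${\cal S}_{N,1\to2}$ in Section~\ref{sec:strategy-classes}, there is a strategy $S=S_{1\to 2}(\rho_1,C_{1\to2})$ with $P={\cal M}[S]$. Here $\rho_1$ is a density matrix and $C_{1\to2}=C[\Lambda_{1\to2}]$ is the Choi matrix of a CPTP map.

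2. By Choi's theorem, complete positivity of $\Lambda_{1\to2}$ gives $C_{1\to2}\ge0$. In the convention \eqref{eq:choi-convention}, trace preservation gives $\Tr_2 C_{1\to2}=I$, which is \eqref{MGH}. Also $\rho_1\ge 0$.

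3. Under (A2), Theorem~\ref{TH7} states that this pair is compatible with ${\cal M}[S_{1\to 2}(\rho_1,C_{1\to2})]=P$. Concretely, Lemma~\ref{LJG1} gives \eqref{FN3Y} with $R_{1,2}=C_{1\to 2}^{T_1}\circ(\rho_1\otimes I)$, and the preceding lemma converts \eqref{MGH} and \eqref{FN3Y} into compatibility.

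4. Membership of $P$ in ${\cal P}_{\cal U}$ holds because ${\cal M}[{\cal S}_G]\subset{\cal P}_{\cal U}$.

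5. The reproduction condition $P={\cal M}[S_{1\to2}(\rho_1,C_{1\to2})]$ holds by construction.

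So $P$ lies in the right-hand side of \eqref{FFT1}.

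\emph{Inclusion ``$\supset$''.} Let $P$ belong to the right-hand side, with compatible $\rho_1\ge0$ and $C_{1\to2}\ge0$.

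1. Compatibility includes $\Tr_2 C_{1\to2}=I$, and $C_{1\to 2}\ge0$ is assumed.

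2. By the converse of Choi's theorem in the convention \eqref{eq:choi-convention}, $C_{1\to2}$ is the Choi matrix of a CPTP map $\Lambda_{1\to2}:{\cal B}({\cal H}_{(O,1)})\to{\cal B}({\cal H}_{(I,2)})$. Here we use ${\cal H}_{(I,i)}={\cal H}_{(O,i)}$ to regard $C_{1\to2}$ as an operator on ${\cal H}_{(O,1)}\otimes{\cal H}_{(I,2)}$.

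3. The state $\rho_1$ is a density matrix: it is positive by assumption, and it has unit trace because $\rho_1=\Tr_2R_{1,2}$ with $\Tr R_{1,2}=1$.

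4. Hence $W=\rho_1\otimes C_{1\to2}\otimes I_{O,2}$ defines a strategy in ${\cal S}_{N,1\to2}$.

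5. The reproduction assumption gives $P={\cal M}[S_{1\to2}(\rho_1,C_{1\to2})]\in{\cal M}[{\cal S}_{N,1\to2}]$.

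The only step needing care is the first inclusion, specifically checking that the Choi matrix of the generating channel satisfies \eqref{MGH} in the paper's transpose convention and that Lemma~\ref{LJG1} applies to it. This reduces to citing Theorem~\ref{TH7}. In the reverse inclusion, compatibility is not even used beyond $\Tr_2C_{1\to2}=I$; the corollary is essentially a repackaging whose substance is that a generating pair is automatically compatible.
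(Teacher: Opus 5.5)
Your proof is correct and matches the paper's argument: the inclusion $\supset$ is immediate from the definitions, since $C_{1\to 2}\ge 0$ together with $\Tr_2 C_{1\to 2}=I$ makes $S_{1\to 2}(\rho_1,C_{1\to 2})$ a genuine memoryless sequential strategy, and $\subset$ follows by applying Theorem~\ref{TH7} to the generating pair. Your additional checks (Choi's theorem and trace preservation in the paper's convention) only make explicit what the paper treats as trivial.
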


\begin{proof}
The relation $\supset$ is trivial. The relation $\subset$ follows from Theorem \ref{TH7}.
\end{proof}

This is still not a direct distribution-level test, because it is formulated through the existence of a generating compatible pair. Its role is only to prepare the sharper criterion given later.

\begin{lemma}\Label{LL6}
Assume the condition (A2).
Assume that a pair of a Hermitian matrix $C_{1\to 2}$
on ${\cal H}_{(I,1)}\otimes {\cal H}_{(I,2)}$
and a density matrix $\rho_1$ on ${\cal H}_{(I,1)}$ is compatible
with $P_{X_1,Y_1,X_2,Y_2}$.
Then,
the PDM ${C}_{1\to 2}^{T_1} \circ\big( \rho_1  \otimes I \big)$
is compatible with ${\cal M}[{\cal S}_{1\to 2}(\rho_1,C_{1\to 2})]$.
That is, the distribution
$P_{X_1,Y_1,X_2,Y_2}$ satisfies

\begin{align}
\langle G_{t_1}\otimes G_{t_2} \rangle_{P_{X_1,Y_1,X_2,Y_2}}
=\Tr \big({C}_{1\to 2}^{T_1} \circ\big( \rho_1  \otimes I \big)\big) \big( G_{t_1} \otimes  G_{t_2}\big)
=\langle G_{t_1}\otimes G_{t_2} \rangle_{{\cal M}[S_{1\to 2}(\rho_1,C_{1\to 2})]}
 \Label{FN334}
\end{align}
for $t_1,t_2 \in {\cal T}$.

\end{lemma}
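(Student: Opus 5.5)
The proof is a short chaining of two facts already available: the compatibility hypothesis on the given distribution, and Lemma~\ref{LJG1}, which computes the moments of the distribution generated by the same pair. The claim is precisely the two equalities in \eqref{FN334}.

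\textbf{First equality.} The hypothesis is that the pair $(C_{1\to2},\rho_1)$ is compatible with $P_{X_1,Y_1,X_2,Y_2}$. By the lemma characterizing compatibility through \eqref{MGH} and \eqref{FN3Y}, this is equivalent to $\Tr_2 C_{1\to2}=I$ together with
\[
\langle G_{t_1}\otimes G_{t_2}\rangle_{P_{X_1,Y_1,X_2,Y_2}}
=\Tr\bigl(C_{1\to2}^{T_1}\circ(\rho_1\otimes I)\bigr)(G_{t_1}\otimes G_{t_2})
\]
for all $t_1,t_2$. Equivalently, by definition \eqref{XS}, $R_{1,2}=C_{1\to2}^{T_1}\circ(\rho_1\otimes I)$ is a compatible PDM, and \eqref{VBSA} gives the same identity.

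\textbf{Second equality.} Apply Lemma~\ref{LJG1}, which holds under (A2), to the strategy $S_{1\to2}(\rho_1,C_{1\to2})$. It gives
\[
\langle G_{t_1}\otimes G_{t_2}\rangle_{{\cal M}[S_{1\to2}(\rho_1,C_{1\to2})]}
=\Tr\bigl(C_{1\to2}^{T_1}\circ(\rho_1\otimes I)\bigr)(G_{t_1}\otimes G_{t_2}).
\]
Combining this with the first equality yields \eqref{FN334}. Reading the resulting identity through \eqref{VBSA} shows that the PDM is compatible with ${\cal M}[S_{1\to2}(\rho_1,C_{1\to2})]$.

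\textbf{Main obstacle.} Here $C_{1\to2}$ is only assumed Hermitian with $\Tr_2 C_{1\to2}=I$, not positive, so ${\cal M}[S_{1\to2}(\rho_1,C_{1\to2})]$ must be read as the formal quasi-distribution obtained from the generalized Born rule with this Hermitian $C_{1\to2}$. I would check that the proof of Lemma~\ref{LJG1} never uses positivity of $C_{1\to2}$.

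That proof rests on two ingredients. The first is the algebraic identity $\sum_{x}g_{x|t}E_{x|y(t)}\rho_1E_{x|y(t)}=\rho_1\circ G_t$, which uses only (A2). The second is linearity of the trace expression in $C_{1\to2}$. Neither involves positivity, so the lemma extends verbatim to Hermitian $C_{1\to2}$.

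For the index $t_1=0$, where $G_0=I$ and $\Gamma_{0|0}$ is the identity, the identity reduces to $\rho_1\circ I=\rho_1$, so this case is also covered. The condition $\Tr_2C_{1\to2}=I$ ensures normalization, and the $t_2=0$ terms reproduce $\Tr\rho_1G_{t_1}$. Once this extension is confirmed, the statement follows immediately.
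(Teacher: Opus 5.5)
Your proposal is correct and follows the paper's proof exactly: the first equality in \eqref{FN334} is the compatibility identity \eqref{FN3Y}, and the second is \eqref{FN3} from Lemma~\ref{LJG1}. You also check that Lemma~\ref{LJG1} uses only linearity in $C_{1\to2}$ and the (A2) identity $\sum_x g_{x|t}E_{x|y(t)}\rho_1E_{x|y(t)}=\rho_1\circ G_t$, so that it applies to a merely Hermitian $C_{1\to2}$ with ${\cal M}[S_{1\to2}(\rho_1,C_{1\to2})]$ read as a formal array; the paper leaves this point implicit, and you handle it correctly.
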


\begin{proof}
It is sufficient to show \eqref{FN334}.
The first equation of \eqref{FN334} follows from \eqref{FN3Y}.
The second equation of \eqref{FN334} follows from \eqref{FN3} of Lemma \ref{LJG1}.
\end{proof}

\begin{lemma}\Label{LL7}
Assume the condition (A2).
Assume that a pair of a Hermitian matrix $C_{1\to 2}$
on ${\cal H}_{(I,1)}\otimes {\cal H}_{(I,2)}$
and a density matrix $\rho_1$ on ${\cal H}_{(I,1)}$ is compatible
with $P_{X_1,Y_1,X_2,Y_2}\in {\cal M}[{\cal S}_G]$.
When the distributions
$P_{X_1,Y_1,X_2,Y_2}$ and ${\cal M}[S_{1\to 2}(\rho_1, C_{1\to 2})]$
satisfy the Markovian conditions
$X_1-Y_1-Y_2$ and $Y_1-Y_2-X_2$,
we have
${\cal M}[S_{1\to 2}(\rho_1, C_{1\to 2})]=P_{X_1,Y_1,X_2,Y_2}$.
\end{lemma}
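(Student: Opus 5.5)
The plan is to show that, under (A2) and the two Markov conditions, every conditional distribution $P_{X_1,X_2|Y_1,Y_2}(\cdot,\cdot|y_1,y_2)$ is determined by the moments $\langle G_{t_1}\otimes G_{t_2}\rangle$. Lemma~\ref{LL6} says these moments coincide for $P_{X_1,Y_1,X_2,Y_2}$ and $Q:={\cal M}[S_{1\to 2}(\rho_1,C_{1\to 2})]$. Both distributions lie in ${\cal P}_{\cal U}$, so their $(Y_1,Y_2)$-marginals agree (both uniform). It therefore suffices to show that the conditionals agree for every pair $(y_1,y_2)\in\overline{\cal Y}_1\times\overline{\cal Y}_2$.

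\textbf{Step 1: settings with a trivial choice.} Take $y_1=0$ or $y_2=0$. For $y_1=y_2=0$ there is nothing to prove, since the trivial choice has a single outcome. For $y_1\neq 0$ and $y_2=0$, the conditional is just $P_{X_1|Y_1,Y_2}(\cdot|y_1,0)$ on the two-element set ${\cal X}_{y_1}$. By \eqref{BAS} it is constrained by two equations:
\begin{itemize}
\item normalization, which is the moment $\langle I\otimes I\rangle=1$;
\item the moment $\langle G_{t_1}\otimes I\rangle=\sum_{x_1}g_{x_1|t_1}P(x_1|y_1,0)$ with $y(t_1)=y_1$.
\end{itemize}
By (A2), $g_{0|t_1}\neq g_{1|t_1}$, so the $2\times 2$ coefficient matrix $\begin{pmatrix}1&1\\ g_{0|t_1}& g_{1|t_1}\end{pmatrix}$ is invertible. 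The two equal moments therefore force equal conditionals. The case $y_1=0$, $y_2\neq 0$ is symmetric.

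\textbf{Step 2: reduce nontrivial settings to the Step~1 marginals.} Take $y_1,y_2\neq 0$ and write $P_{12}$ for the conditional on ${\cal X}_{y_1}\times{\cal X}_{y_2}$, which has four entries. Four linear functionals determine it: the expectations of $1$, $g_{x_1|t_1}$, $g_{x_2|t_2}$ and $g_{x_1|t_1}g_{x_2|t_2}$. Their coefficient matrix is the Kronecker product of the two invertible $2\times2$ matrices above, hence invertible.

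The product moment is exactly $\langle G_{t_1}\otimes G_{t_2}\rangle$, which is equal for $P$ and $Q$ by Lemma~\ref{LL6}. The two single moments are computed under setting $(y_1,y_2)$, whereas the compatibility data only provides them at $(y_1,0)$ and $(0,y_2)$. This is where the hypotheses enter:
\begin{itemize}
\item the chain $X_1-Y_1-Y_2$ gives $P_{X_1|Y_1,Y_2}(x_1|y_1,y_2)=P_{X_1|Y_1,Y_2}(x_1|y_1,0)$;
\item the chain $Y_1-Y_2-X_2$ gives $P_{X_2|Y_1,Y_2}(x_2|y_1,y_2)=P_{X_2|Y_1,Y_2}(x_2|0,y_2)$.
\end{itemize}
Both identities hold for $P$ and for $Q$, since the hypothesis imposes the Markov conditions on both. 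By Step~1 the right-hand sides agree for $P$ and $Q$. Hence all four functionals agree and $P_{12}=Q_{12}$.

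\textbf{Main obstacle.} The key point is the bookkeeping of which measurement setting each moment refers to. The moments with $t_2=0$ or $t_1=0$ in \eqref{BAS} are taken at the ``do nothing'' setting, not at the nontrivial setting $(y_1,y_2)$. The Markov conditions are exactly what transport these local marginals to the nontrivial setting, so both chains must be invoked for both distributions. Combining Steps~1 and~2 gives $P_{X_1,X_2|Y_1,Y_2}=Q_{X_1,X_2|Y_1,Y_2}$ for all settings, and hence ${\cal M}[S_{1\to 2}(\rho_1,C_{1\to 2})]=P_{X_1,Y_1,X_2,Y_2}$.
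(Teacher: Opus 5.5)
Your proposal is correct and follows essentially the paper's own argument. Both use Lemma~\ref{LL6} for the product moments, use the two Markov chains to transport the single-site moments from the settings $(y_1,0)$ and $(0,y_2)$ to $(y_1,y_2)$, and conclude from the fact that $1$, $g_{X_1}$, $g_{X_2}$ and $g_{X_1}g_{X_2}$ span the functions on ${\cal X}_{y_1}\times{\cal X}_{y_2}$. Your explicit Step~1 for the settings containing the trivial choice, and the Kronecker-product invertibility check, spell out details the paper leaves implicit.
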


\begin{proof}
In this proof, we denote the distribution
${\cal M}[S_{1\to 2}(\rho_1, {C}_{1\to 2})
$
by $\tilde{P}_{X_1,Y_1,X_2,Y_2}$.
It is sufficient to show the relation
\begin{align}
\tilde{P}_{X_1,X_2|Y_1=y_1,Y_2=y_2}= P_{X_1,X_2|Y_1=y_1,Y_2=y_2}
\end{align}
for $y_1,y_2 \in {\cal Y}$.
To show the above relation,
it is sufficient to show that
these two conditional distributions have the same expectation of
the three random variables
$g_{X_1|y_1,z_1} $,
$g_{X_2|y_2,z_2}$, $g_{X_1|y_1,z_1} g_{X_2|y_2,z_2}$
for $z_1\in {\cal Z}_{y_1}$, $z_2\in {\cal Z}_{y_2}$
because
these random variables span the linear function space on
${\cal X}_{y_1}\times {\cal X}_{y_2}$ with the constant function.

Since their expectations of the variable $g_{X_1|y_1,z_1} g_{X_2|y_2,z_2}$
is given as
$\langle G_{y_1,z_1}\otimes G_{y_1,z_1} \rangle_{\tilde{P}_{X_1,Y_1,X_2,Y_2}}$
and
$\langle G_{y_1,z_1}\otimes G_{y_2,z_2} \rangle_{{P}_{X_1,Y_1,X_2,Y_2}}$,
\eqref{FN334} guarantees that these two expectations take the same value.
The Markovian chain condition $X_1-Y_1-Y_2$ guarantees that
their expectations of the variable $g_{X_1|y_1,z_1} $
is given as
$\langle G_{y_1,z_1}\otimes I \rangle_{\tilde{P}_{X_1,Y_1,X_2,Y_2}}$
and
$\langle G_{y_1,z_1}\otimes I \rangle_{{P}_{X_1,Y_1,X_2,Y_2}}$.
Then,
\eqref{FN334} guarantees that these two expectations take the same value.
Also, in the same way, we can show that
their expectations of
the variable $g_{X_2|y_2,z_2} $ take the same value.
\end{proof}

\begin{lemma}\Label{LL8}
Assume the condition (A2).
Given $S \in {\cal S}_{N|1\to 2}$, we assume
that a pair of a Hermitian matrix $C_{2\to 1}$
on ${\cal H}_{(I,1)}\otimes {\cal H}_{(I,2)}$
and a density matrix $\rho_2$ on ${\cal H}_{(I,2)}$ is compatible
with ${\cal M}[S]$.
The following conditions are equivalent.
\begin{description}
\item[(B1)]
The distributions
${\cal M}[S]$ and
${\cal M}[S_{2\to 1}(\rho_2, C_{2\to 1})]$
satisfy the Markovian conditions
$X_1-Y_1-Y_2$ and $Y_1-Y_2-X_2$.

\item[(B2)]
${\cal M}[S_{2\to 1}(\rho_2, C_{2\to 1})]={\cal M}[S]$.
\end{description}
\end{lemma}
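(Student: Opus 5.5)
The equivalence follows from Lemma~\ref{LL7}, applied in the reversed direction $2\to1$, together with the fact that equal distributions share their Markovian properties.

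\emph{(B1)$\Rightarrow$(B2).} First I would state the mirror images of Lemmas~\ref{LJG1}, \ref{LL6} and \ref{LL7} for the direction $2\to 1$. Their proofs go through verbatim after exchanging the roles of the two players. The identity $\sum_{x_2} g_{x_2|t}E_{x_2|y(t)}\rho_2E_{x_2|y(t)}=\rho_2\circ G_t$ uses only (A2). The Choi formula then gives
\begin{align}
\langle G_{t_1}\otimes G_{t_2}\rangle_{{\cal M}[S_{2\to1}(\rho_2,C_{2\to1})]}
=\Tr\big(C_{2\to1}^{T_2}\circ(I\otimes\rho_2)\big)(G_{t_1}\otimes G_{t_2}).
\end{align}
Since $(C_{2\to1},\rho_2)$ is compatible with ${\cal M}[S]$ via \eqref{XS2}, the right-hand side equals $\langle G_{t_1}\otimes G_{t_2}\rangle_{{\cal M}[S]}$. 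This is the analogue of \eqref{FN334}.

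Next I would repeat the moment-matching argument of Lemma~\ref{LL7}. For fixed $y_1,y_2$, the functions $1$, $g_{X_1|y_1,z_1}$, $g_{X_2|y_2,z_2}$ and $g_{X_1|y_1,z_1}g_{X_2|y_2,z_2}$ span all functions on ${\cal X}_{y_1}\times{\cal X}_{y_2}$. The product moments match by the display above. The single moments, e.g.\ that of $g_{X_1|y_1,z_1}$, are identified with $\langle G_{t_1}\otimes I\rangle$, which is the moment with $y_2=0$. This identification is exactly what the Markov condition $X_1-Y_1-Y_2$ supplies, and (B1) gives it for both distributions. Likewise $Y_1-Y_2-X_2$ identifies the moments of $g_{X_2|y_2,z_2}$ with $\langle I\otimes G_{t_2}\rangle$. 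The moments at the trivial settings are again matched by the display with $t_1=0$ or $t_2=0$. Hence all conditional distributions coincide, and with uniform $P_{Y_1,Y_2}$ we get (B2).

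One subtlety is that Lemma~\ref{LL7} also assumed ${\cal M}[S]\in{\cal M}[{\cal S}_G]$, which holds trivially here. I would also note that the setting $0$ must be handled: the identity map leaves the input unchanged, and I would check that the formula for ${\cal M}[S_{2\to1}]$ at trivial settings is consistent with the PDM expression. This is covered by the case $G_0=I$ in the analogue of Lemma~\ref{LJG1}.

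\emph{(B2)$\Rightarrow$(B1).} By Theorem~\ref{TH4I}, $S\in{\cal S}_{N|1\to2}$ gives $X_1-Y_1-Y_2$ for ${\cal M}[S]$. Since $S_{2\to1}(\rho_2,C_{2\to1})$ is formally a $2\to1$ sequential strategy, the reversed version of Theorem~\ref{TH4I} gives $Y_1-Y_2-X_2$ for ${\cal M}[S_{2\to1}(\rho_2,C_{2\to1})]$. This is a non-signalling statement that holds for any Hermitian $C_{2\to1}$ with $\Tr_1C_{2\to1}=I$, since Bob's marginal does not depend on Alice's choice. Under (B2) the two distributions are equal, so each satisfies both chains.

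The main obstacle is this last point. Theorem~\ref{TH4I} is stated for genuine strategies, whereas $C_{2\to1}$ is only Hermitian. I would verify directly that $\Tr_1 C_{2\to1}=I$ makes Bob's outcome statistics independent of $y_1$: Alice's instrument is a sum over $x_1$ whose total map is trace preserving, and the partial trace over Alice's output against the identity then reduces to $\Tr_1 C_{2\to1}=I$. The rest is bookkeeping.
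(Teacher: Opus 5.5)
Your proposal is correct and follows the paper's proof: (B1)$\Rightarrow$(B2) is Lemma~\ref{LL7} with the two players' roles exchanged, and (B2)$\Rightarrow$(B1) uses the equality of the two distributions to transfer $X_1-Y_1-Y_2$ from ${\cal M}[S]$ and $Y_1-Y_2-X_2$ from ${\cal M}[S_{2\to1}(\rho_2,C_{2\to1})]$ to both. Your direct check that $\Tr_1 C_{2\to1}=I$ alone gives $Y_1-Y_2-X_2$ for the formal, possibly non-positive, $C_{2\to1}$ fills in a step the paper asserts without justification.
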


\begin{proof}
Lemma \ref{LL7} guarantees that the condition (B1) implies the condition (B2).
The opposite direction can be shown as follows.
When the relation ${\cal M}[S_{2\to 1}(\rho_2, C_{2\to 1})]
={\cal M}[S]$ holds,
the distribution
${\cal M}[S_{2\to 1}(\rho_2, C_{2\to 1})]
={\cal M}[S]$ satisfies the Markovian conditions
$X_1-Y_1-Y_2$ and $Y_1-Y_2-X_2$
because
${\cal M}[S]$
satisfies the Markovian condition $X_1-Y_1-Y_2$
and
${\cal M}[S_{2\to 1}(\rho_2, C_{2\to 1})]$
satisfies the Markovian condition $Y_1-Y_2-X_2$.
\end{proof}

The above conditions do not guarantee
the condition ${C}_{2\to 1}\ge 0$.
Adding this condition, we have the following corollary of Lemma \ref{LL8}.

\begin{corollary}\Label{COR9B}
Assume the condition (A2).
We have
\begin{align}
&{\cal M}[{\cal S}_{N,1\to 2}] \cap {\cal M}[{\cal S}_{N,2\to 1}] \\
=&
\left\{{\cal M}[S]\in {\cal M}[{\cal S}_{N,1\to 2}]\left|
 \begin{array}{l}
\hbox{There exists a compatible pair of
a Hermitian matrix $C_{2\to 1}\ge 0$
on ${\cal H}_{(I,1)}\otimes {\cal H}_{(I,2)}$} \\
\hbox{and a density matrix $\rho_2$ on ${\cal H}_{(I,2)}$}
\hbox{ with } {\cal M}[S]
\hbox{ such that }
{\cal M}[S]={\cal M}[S_{2\to 1}(\rho_2, C_{2\to 1})]
\end{array}
\right.\right
\} \notag\\
=&
\left\{{\cal M}[S]\in {\cal M}[{\cal S}_{N,1\to 2}]\left|
 \begin{array}{l}
\hbox{There exists a compatible pair of
a Hermitian matrix $C_{2\to 1}\ge 0$
on ${\cal H}_{(I,1)}\otimes {\cal H}_{(I,2)}$} \\
\hbox{and a density matrix $\rho_2$ on ${\cal H}_{(I,2)}$}
\hbox{ with } {\cal M}[S] \\
\hbox{such that }
{\cal M}[S_{2\to 1}(\rho_2, C_{2\to 1})]
\hbox{ and }{\cal M}[S] \hbox{ satisfy }X_1-Y_1-Y_2,Y_1-Y_2-X_2.
\end{array}
\right.\right\} .
\end{align}
\end{corollary}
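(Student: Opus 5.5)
The plan is to prove the two set equalities by a chain of inclusions. Write $A:={\cal M}[{\cal S}_{N,1\to 2}]\cap {\cal M}[{\cal S}_{N,2\to 1}]$, $B$ for the middle set and $C$ for the last set. I will show $A\subset B$, $B\subset C$ and $C\subset A$. Throughout, the decisive tool is Lemma~\ref{LL8}. Its hypotheses are that $S\in{\cal S}_{N,1\to2}$, that (A2) holds, and that a compatible pair $(\rho_2,C_{2\to1})$ is given. Under these hypotheses it makes the requirement ${\cal M}[S]={\cal M}[S_{2\to1}(\rho_2,C_{2\to1})]$ (B2) equivalent to the joint Markov requirement (B1).

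\textbf{Step 1: $A\subset B$.} Take $P\in A$, and realize $P={\cal M}[S]$ with $S\in{\cal S}_{N,1\to2}$. Since $P\in{\cal M}[{\cal S}_{N,2\to1}]$, apply the $2\to1$ analogue of Corollary~\ref{COR6}. Its proof is Theorem~\ref{TH7} with the roles of the players exchanged, which is legitimate because (A2) is symmetric in the two players and Lemma~\ref{LJG1} has a mirror version. This analogue produces a pair $\rho_2\ge0$, $C_{2\to1}\ge0$ that is compatible with $P$ and satisfies $P={\cal M}[S_{2\to1}(\rho_2,C_{2\to1})]$. Hence $P\in B$. The only point needing care is that this mirror statement is really available. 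I would state it explicitly as ``by symmetry of (A2) and of the definitions \eqref{XS2}'' and check that the Jordan-product identity $\sum_x g_{x|t}E\rho E=\rho\circ G_t$ is the same on the second system.

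\textbf{Step 2: $B\subset C$.} This is immediate from the direction (B2)$\Rightarrow$(B1) of Lemma~\ref{LL8}. Its hypotheses hold because $S\in{\cal S}_{N,1\to2}$ and the pair is compatible with ${\cal M}[S]$.

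\textbf{Step 3: $C\subset A$.} Take ${\cal M}[S]\in C$. Lemma~\ref{LL8}, direction (B1)$\Rightarrow$(B2), gives ${\cal M}[S]={\cal M}[S_{2\to1}(\rho_2,C_{2\to1})]$. It remains to check that $S_{2\to1}(\rho_2,C_{2\to1})$ is a genuine element of ${\cal S}_{N,2\to1}$. First, $\rho_2$ is a density matrix by assumption. Second, $C_{2\to1}\ge0$ together with $\Tr_1 C_{2\to1}=I$ from \eqref{XS2} makes $C_{2\to1}$ the Choi matrix of a CPTP map from system $2$ to system $1$. Hence ${\cal M}[S]\in{\cal M}[{\cal S}_{N,2\to1}]$, and since also ${\cal M}[S]\in{\cal M}[{\cal S}_{N,1\to2}]$, we get ${\cal M}[S]\in A$.

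The main obstacle is Step 3's identification of positivity plus the partial-trace condition with a CPTP map. This is where the added hypothesis $C_{2\to1}\ge0$ is used, and where the Choi convention \eqref{eq:choi-convention} must be matched against the form $C^{T_2}\circ(I\otimes\rho_2)$. Apart from that, all the substance lies in the already-established Lemmas~\ref{LL7} and~\ref{LL8}.
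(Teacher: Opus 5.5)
Your proposal is correct and follows the paper's own two-line argument: the first equality comes from the player-exchanged version of Corollary~\ref{COR6}, and the second from the two directions of Lemma~\ref{LL8}. Your cycle $A\subset B\subset C\subset A$ only reorganizes this; it also makes explicit the trivial reverse inclusion $B\subset A$ through the check that $C_{2\to1}\ge 0$ and $\Tr_1 C_{2\to1}=I$ give a genuine strategy in ${\cal S}_{N,2\to1}$.
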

\begin{proof}
The first equation follows from
Corollary \ref{COR6}.
The second equation follows from  Lemma \ref{LL8}.
\end{proof}

Corollary \ref{COR6} is useful for identifying
whether a strategy $S \in {\cal S}_{N,1\to 2}$
satisfies the condition
${\cal M}[S] \in {\cal M}[{\cal S}_{N,1\to 2}]\cap {\cal M}[{\cal S}_{N,2\to 1}]$.
This characterization of ${\cal M}[{\cal S}_{N,1\to 2}]\cap {\cal M}[{\cal S}_{N,2\to 1}]$ can be used for making its concrete description
in Section \ref{S7}.
When the observed distribution belongs to the set ${\cal M}[{\cal S}_{N,1\to 2}] \cap {\cal M}[{\cal S}_{N,2\to 1}] $,
we cannot distinguish which order, $1\to 2$ or $2 \to 1$,
is correct.
The characterization of this set clarifies the limitation of
the method of projective state reduction.

\subsection{Adaptive strategy with memory}\Label{SSPDM2}
We next explain how the preceding analysis extends beyond the memoryless case.
Once the projective-state-reduction setting has been fixed, strategies with
memory can still be related to the memoryless framework in a structurally
simple way.
For classical memory, the observed distribution is obtained by probabilistically
mixing memoryless sequential strategies.
For quantum memory, the same idea survives after enlarging the input space so as
to include Charlie's memory system. 

In the class ${\cal S}_{C,1\to 2}$, Charlie has a classical memory system $M$.
Conditioned on the hidden memory value $M=m$, Charlie chooses an initial state
$\rho_{1,m}\ge 0$ and a Choi matrix $C_{1\to 2,m}\ge 0$.
Thus, for each fixed value of $m$, the corresponding branch is an ordinary
memoryless sequential strategy.
Since the value of $M$ is not observed by Alice and Bob, the actual observed
distribution is obtained only after averaging over $m$.
Hence ${\cal M}[{\cal S}_{C,1\to 2}]$ is given by the convex hull of
${\cal M}[{\cal S}_{N,1\to 2}]$ characterized in \eqref{FFT1}. 
Every classical-memory sequential strategy is also a quantum-memory sequential strategy, because the classical memory values can be encoded in mutually orthogonal states of a quantum memory system and the corresponding conditional operations can then be applied to these orthogonal states. Therefore,
\begin{align}
{\cal S}_{C,1\to2}&\subset{\cal S}_{Q,1\to2},\notag\\
{\cal M}[{\cal S}_{C,1\to2}]&\subset{\cal M}[{\cal S}_{Q,1\to2}].
\Label{classical-quantum-memory-inclusion}
\end{align}

It is important, however, not to confuse this branchwise positivity with the
positivity of the reconstructed Choi matrix obtained from the coarse-grained
observed distribution.
Even though every branch satisfies $C_{1\to 2,m}\ge 0$, the reconstructed matrix
$\hat C_{1\to 2}[P]$ associated with the averaged distribution $P$ need not be
positive, because the hidden classical memory variable $M$ is not accessible at
the distribution level.
In particular, as shown later in Example~\ref{Ex1} in Section \ref{S7-C}, 
positivity of
$\hat C_{1\to 2}[P]$ can already fail in the classical-memory subclass. 

Next, we study the class ${\cal S}_{Q,1\to 2}$.
In this case, without loss of generality, we may assume that Charlie's quantum
memory system ${\cal H}_{M,1}$ has the same dimension as
${\cal H}_{(I,1)}$.
Then Alice's projective measurement can be regarded as
$\{E_{x_1\mid y_1}^1\otimes I\}_{x_1\in{\cal X}_{y_1}}$
on the enlarged system
${\cal H}_{(I,1)}\otimes{\cal H}_{(M,1)}$.
In this sense, a quantum-memory strategy can be viewed as an ordinary sequential
strategy on a larger input space.
Hence, the analysis of Section~\ref{SSPDM} applies to
${\cal S}_{Q,1\to 2}$ after this enlargement of the input space. 

\subsection{Uniqueness of compatible elements}
In the above analysis,
the compatible pair cannot be uniquely determined.
For the uniqueness, we assume the following condition.
\begin{description}
\item[(A1)]
The set $\{G_{y_i,z_i}\}_{y_i\in {\cal Y}_i, z_i\in {\cal Z}_{y_i}}\cup
\{I\}$
forms a basis of ${\cal B}({\cal H}_{(I,i)})$.
\end{description}
Under the above condition, ${\cal T}_i$ is given as $\{0,1,\ldots, d_i^2-1\}$.

\begin{example}\Label{ex1}
In a typical case,
we set
${\cal Y}_1={\cal Y}_2=\{1, \ldots, d+1\}$,
and
${\cal X}_y=\{1, \ldots, d\}$ and
${\cal Z}_y=\{1, \ldots, d-1\}$ for $y \in {\cal Y}$.
Then, we choose a rank-one projection-valued measure
$\{E_{x|y}\}_{x \in {\cal X}_y}$.
When $d$ is a prime power,
there are $d+1$ mutually unbiased bases.
When $d+1$ rank-one projection-valued measures
$\{E_{x|y}\}_{x \in {\cal X}_y}$
constructed from $d+1$ mutually unbiased bases,
the condition (A1) holds.

\end{example}

To describe the unique compatible elements,
we employ the dual basis
$\{H_t\}_{t \in {\cal T}}$ of $\{ G_t\}_{t \in {\cal T}}$ as
\begin{align}
\Tr G_{t'}H_t=\delta_{t',t}.\Label{dualB}
\end{align}
Thus, $H_0=\frac{1}{d}I$.
Then,
the unique compatible PDM with $P_{X_1,Y_1,X_2,Y_2}$
is given as
\begin{align}
{\cal R}[P_{X_1,Y_1,X_2,Y_2}]:= \sum_{t_1,t_2\in {\cal T}}
\langle G_{t_1}\otimes G_{t_2}\rangle_{P_{X_1,Y_1,X_2,Y_2}}
H_{t_1}\otimes H_{t_2}
\end{align}
because the unique compatiblility with $P_{X_1,Y_1,X_2,Y_2}$ is
guaranteed by the relation
\begin{align}
\Tr (G_{t_1}\otimes G_{t_2}){\cal R}[P_{X_1,Y_1,X_2,Y_2}]=
\langle G_{t_1}\otimes G_{t_2}\rangle_{P_{X_1,Y_1,X_2,Y_2}}\Label{VBSA2}
\end{align}
for $t_1,t_2 \in {\cal T}$.
Also, the matrices
 $\rho_1[P_{X_1,Y_1,X_2,Y_2}]$ and
$\rho_2[P_{X_1,Y_1,X_2,Y_2}]$ defined below are
compatible with $P_{X_1,Y_1,X_2,Y_2}$;
\begin{align}
\rho_1[P_{X_1,Y_1,X_2,Y_2}]:=\Tr_2 {\cal R}[P_{X_1,Y_1,X_2,Y_2}],\quad
\rho_2[P_{X_1,Y_1,X_2,Y_2}]:=\Tr_1 {\cal R}[P_{X_1,Y_1,X_2,Y_2}].
\Label{HJK1}
\end{align}
For $P_{X_1,Y_1,X_2,Y_2} \in {\cal M}[{\cal S}_G]$,
$\rho_1[P_{X_1,Y_1,X_2,Y_2}]$ is the unique compatible matrix with
$P_{X_1,Y_1,X_2,Y_2}$.
Hence, Lemma \ref{LV7} implies the following lemma.
\begin{lemma}\Label{LBM1A}
For $P_{X_1,Y_1,X_2,Y_2} \in {\cal M}[{\cal S}_G]$,
$\rho_1[P_{X_1,Y_1,X_2,Y_2}]$ is the unique compatible density matrix with
$P_{X_1,Y_1,X_2,Y_2}$.
\end{lemma}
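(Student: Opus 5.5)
Two things have to be established for $P=P_{X_1,Y_1,X_2,Y_2}\in{\cal M}[{\cal S}_G]$ under condition (A1): that a compatible density matrix on ${\cal H}_{(I,1)}$ exists, and that any compatible Hermitian matrix on ${\cal H}_{(I,1)}$ coincides with $\rho_1[P]$.

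For existence, Lemma~\ref{LV7} supplies a density matrix $\rho_1$ on ${\cal H}_{(I,1)}$ compatible with $P$. This is Alice's input state when Bob performs the trivial choice $0$.

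For uniqueness, suppose $\sigma$ is any Hermitian matrix satisfying \eqref{VBSA1}. Then $\Tr G_{t_1}\sigma=\langle G_{t_1}\otimes I\rangle_P$ for every $t_1\in{\cal T}_1$. Under (A1) the family $\{G_{t_1}\}_{t_1\in{\cal T}_1}$ is a basis of ${\cal B}({\cal H}_{(I,1)})$, with dual basis $\{H_{t_1}\}$ defined by \eqref{dualB}. Expanding in the dual basis gives
\[
\sigma=\sum_{t_1\in{\cal T}_1}\bigl(\Tr G_{t_1}\sigma\bigr)H_{t_1}
=\sum_{t_1\in{\cal T}_1}\langle G_{t_1}\otimes I\rangle_P\,H_{t_1},
\]
so every compatible matrix is determined uniquely by the data. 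Applied to the density matrix from Lemma~\ref{LV7}, this shows that $P$ has exactly one compatible density matrix.

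It remains to identify this matrix with $\rho_1[P]=\Tr_2{\cal R}[P]$, and this is the step requiring care. Using $G_0=I$ on the second system, $\Tr H_{t_2}=\delta_{t_2,0}$, so
\[
\Tr_2{\cal R}[P]=\sum_{t_1}\langle G_{t_1}\otimes G_0\rangle_P\,H_{t_1}.
\]
This matches the expansion of $\sigma$ above, provided $\langle G_{t_1}\otimes I\rangle_P$ in \eqref{VBSA1} is read with $G_0=I$ as the label-$0$ element of ${\cal T}_2$. That label corresponds to Bob's trivial choice $y_2=0$, where $g_{x_2|0}=1$ and $P_{X_2|\cdot}$ sums to one. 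Thus both expressions equal the same Bob-trivial marginal $\sum_{x_1}g_{x_1|t_1}P_{X_1|Y_1,Y_2}(x_1|y(t_1),0)$.

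Under this consistent reading, $\rho_1[P]$ is the unique compatible matrix, and since a compatible density matrix exists, it is the unique compatible density matrix. In particular $\rho_1[P]\ge0$ follows from existence, not from its formula. I expect the main obstacle to be this bookkeeping of the trivial choice: the argument only needs that \eqref{VBSA1} and ${\cal R}[P]$ use the same $t_2=0$ moment, and no Markovian assumption on $P$ is required.
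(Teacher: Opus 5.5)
Your proof is correct and follows the paper's own argument: Lemma~\ref{LV7} supplies a compatible density matrix, (A1) forces the compatible matrix to be unique, and $\Tr_2{\cal R}[P]$ is that matrix, so the two coincide. Your explicit check that $\Tr H_{t_2}=\delta_{t_2,0}$, so that $\Tr_2{\cal R}[P]=\sum_{t_1}\langle G_{t_1}\otimes G_0\rangle_P H_{t_1}$, fills in the identification step the paper leaves implicit; your reading of the $t_2=0$ moment as Bob's trivial choice matches how the paper uses it in Lemma~\ref{LV7} and Theorem~\ref{TH10}.
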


Further, when $\rho_1[P_{X_1,Y_1,X_2,Y_2}]>0$,
the Hermitian matrix $\tilde{C}_{1\to 2}[P_{X_1,Y_1,X_2,Y_2}]$ on ${\cal H}_{(I,1)}\otimes {\cal H}_{(I,2)}$
satisfying the following condition is uniquely determined;
The pair of the Hermitian matrix $C_{1\to 2}$
and a density matrix $\rho_1[P_{X_1,Y_1,X_2,Y_2}]$
is compatible with $P_{X_1,Y_1,X_2,Y_2}$, i.e.,
\begin{align}
{\cal R}[P_{X_1,Y_1,X_2,Y_2}]
= \tilde{C}_{1\to 2}[P_{X_1,Y_1,X_2,Y_2}]^{T_1}\circ
(\rho_{1}[P_{X_1,Y_1,X_2,Y_2}] \otimes I)\Label{XSA} .
\end{align}
Similarly,
when  $\rho_2[P_{X_1,Y_1,X_2,Y_2}]>0$,
we define $\tilde{C}_{2\to 1}[P_{X_1,Y_1,X_2,Y_2}]$ as
\begin{align}
{\cal R}[P_{X_1,Y_1,X_2,Y_2}]
= \tilde{C}_{2\to 1}[P_{X_1,Y_1,X_2,Y_2}]^{T_2}\circ
(I \otimes \rho_{2}[P_{X_1,Y_1,X_2,Y_2}] ).\Label{XSA2}
\end{align}

When conditions (A1) and (A2) hold,
Theorem \ref{TH6} guarantees
\begin{align}
\rho_1[{\cal M}[S_{1\to 2}(\rho_1,C_{1\to 2})]]=\rho_1.
\end{align}
In addition, when $\rho_1>0$,
Theorem \ref{TH7} guarantees the relation
\begin{align}
\tilde{C}_{1\to 2}[{\cal M}[S_{1\to 2}(\rho_1,C_{1\to 2})]]
=C_{1\to 2}.
\end{align}
That is, the pair $(\rho_1[{\cal M}[S_{1\to 2}(\rho_1,C_{1\to 2})]]$,
$\tilde{C}_{1\to 2}[{\cal M}[S_{1\to 2}(\rho_1,C_{1\to 2})]])$
recovers the joint distribution ${\cal M}[S_{1\to 2}(\rho_1,C_{1\to 2})]$.
This statement is a generalization of
\cite[Theorem 1]{Liu3}, \cite[Theorem 4.1]{PhD}.

\subsection{Parallel strategy}
Next, we characterize parallel strategies.
When the condition (A1) holds,
for a given distribution $P_{X_1,Y_1,X_2,Y_2}$
we have the unique compatible PDM ${\cal R}[P_{X_1,Y_1,X_2,Y_2}]$,
and it equals the true density matrix
$\rho$ on ${\cal H}_{I,1} \otimes {\cal H}_{I,2}$,
which implies the relation ${\cal R}[P_{X_1,Y_1,X_2,Y_2}] \ge 0$.
Since there exists a one-to-one correspondence between
the PDM ${\cal R}[P_{X_1,Y_1,X_2,Y_2}] $
and the joint distribution $P_{X_1,Y_1,X_2,Y_2}$,
we have the following relation under the condition (A1);
\begin{align}
{\cal M}[{\cal S}_{P}]
=
\{P \mid
{\cal R}[P_{X_1,Y_1,X_2,Y_2}] \ge 0
\}.
\Label{FG1VS}
\end{align}
This characterization is closely related to the PDM viewpoint of Liu et al. \cite{Liu}, according to which spatial correlations are represented by ordinary density matrices while temporal correlations are encoded only in a coarse-grained manner. In the present restricted-observation framework, under (A1), this yields the distribution-level criterion \eqref{FG1VS}.
In addition,
since the unique compatible PDM ${\cal R}[P_{X_1,Y_1,X_2,Y_2}]$
uniquely determines 
the joint distribution $P_{X_1,Y_1,X_2,Y_2}$,
the relation ${\cal R}[P_{X_1,Y_1,X_2,Y_2}] \ge 0$
implies the non-signaling Markovian conditions of both directions
$Y_1-Y_2-X_2$ and $X_1-Y_1-Y_2$.

\section{Necessary and Sufficient Conditions for Memoryless Sequential Dynamics}\Label{SSAL}
We now convert the compatibility formalism into a direct criterion for
membership in ${\cal M}[{\cal S}_{N,1\to2}]$.  The first subsection
reconstructs a candidate Choi matrix from the observed conditional
probabilities.  The second identifies the additional distribution-level
consistency relation needed to prove that the reconstructed state and channel
reproduce the full observed distribution.

\subsection{Reconstruction of the Choi matrix under projective state reduction}
The analysis of adaptive strategies without memory in Section~\ref{SSPDM}
characterized the set ${\cal M}[{\cal S}_{N,1\to 2}]$ in Corollary~\ref{COR6}.
However, in order to decide whether a given joint distribution
$P_{X_1,Y_1,X_2,Y_2}$ belongs to ${\cal M}[{\cal S}_{N,1\to 2}]$,
this characterization is still limited in two respects.
First, it assumes the strong condition (A2)
for $\{G_{t_i}\}_{t_i \in {\cal T}_i}$ and $\{E_{y_i}\}_{y_i\in {\cal Y}_i}$.
Second, Corollary~\ref{COR6} does not yet provide a direct membership criterion
in terms of the observed distribution itself.
To obtain a more concrete characterization, we now replace (A2) by the following
rank-one assumption, under which the effective Choi matrix can be reconstructed
directly from $P_{X_1,Y_1,X_2,Y_2}$.
\begin{description}
\item[(A3)]
Each operator $E_{x_i|y_i}^i$ has rank one
for any $y_i \in {\cal Y}_i$ and $x_i \in {\cal X}_{y_i}$.
That is,
$E_{x_i|y_i}^i$ is written as $|x_i,y_i\rangle\langle x_i,y_i|$
by using a vector $|x_i,y_i\rangle$.~\cite{HayashiQIT2017}
\end{description}
We now derive a direct characterization of
the set ${\cal M}[{\cal S}_{N,1\to 2}]$.
For this purpose, define the conditional distribution
$P_{X_2|X_1,Y_1,Y_2}$ as follows
\begin{align}
P_{X_2|X_1,Y_1,Y_2}(x_2|x_1,y_1,y_2)
:=\frac{P_{X_1,X_2|Y_1,Y_2}(x_1,x_2|y_1,y_2)
}{
\sum_{x_2'}P_{X_1,X_2|Y_1,Y_2}(x_1,x_2'|y_1,y_2)}.
\end{align}

For example, Example \ref{ex1} satisfies the condition (A3).
We compare the cases satisfying conditions (A1) and (A2) with those satisfying conditions (A1) and (A3) under the $n$-qubit system.
In the $n$-qubit system,
the size $|{\cal Y}|$ in Example \ref{ex1} with the $n$-qubit system is $2^n-1$.
Next, we discuss the case satisfying the conditions (A1) and (A2) under
the $n$-qubit system.
In the $n$-qubit system,
the Pauli matrices
are given as $\{\sigma_{s_1}\otimes \cdots \otimes \sigma_{s_n}\}_{
s_j \in \{0,1,2,3\}}$.
When the identity matrix is excluded,
the number of Pauli matrices is $4^n-1$.
We choose the set ${\cal Y}$ to be the set of
Pauli matrices by excluding the identity matrix.
Then, the projective measurement is decided to be the spectral decomposition
of the corresponding Pauli matrix.
This case satisfies the conditions (A1) and (A2).
In this case, the size $|{\cal Y}|$ is $4^n-1$.
That is, the latter case has a much larger size $|{\cal Y}|$, i.e.,
the latter case requires so many types of experimental settings.
In fact, the condition (A2) has the requirement $|{\cal X}_{y(t)}|=2$,
which considerably increases the size $|{\cal Y}|$.

Under the condition (A3), we define
$\hat{C}_{1\to 2}[P_{X_1,Y_1,X_2,Y_2}]$ as
\begin{align}
&\hat{C}_{1\to 2}[P_{X_1,Y_1,X_2,Y_2}]\notag\\
:=&
\sum_{t_1\neq 0, t_2}
\sum_{x_1,x_2} g_{x_1|t_1} g_{x_2|t_2}
P_{X_2|X_1,Y_1,Y_2}(x_2|x_1,y(t_1),y(t_2))
H_{t_1}^T \otimes H_{t_2}\notag\\
&+
\sum_{t_2}
\Big(\sum_{x_2} g_{x_2|t_2}
P_{X_2|Y_1,Y_2}(x_2|0,y(t_2)) \notag\\
&-\sum_{t_1,t_1'\neq 0}
\sum_{x_1} g_{x_1|t_1}
P_{X_1|Y_1}(x_1|y(t_1))
 h_{t_1,t_1'}
\sum_{x_1',x_2} g_{x_1'|t_1'} g_{x_2|t_2}
P_{X_2|X_1,Y_1,Y_2}(x_2|x_1',y(t_1'),y(t_2)) \Big)
I \otimes H_{t_2},\Label{VBTR}
\end{align}
where
$ h_{t,t'} $ is defined as
\begin{align}
 h_{t,t'}:=\Tr H_{t} H_{t'}.
\end{align}
For distributions generated by
memoryless sequential strategies, the matrix $\hat{C}_{1\to 2}[P_{X_1,Y_1,X_2,Y_2}]$
recovers the underlying Choi matrix exactly.

\begin{theorem}\Label{TH8}
Assume the conditions (A1) and (A3).
Then,
$\hat{C}_{1\to 2}[{\cal M}[S_{1\to 2}(\rho_1,C_{1\to 2})]]$ equals
the Choi matrix $C_{1\to 2}$.
\end{theorem}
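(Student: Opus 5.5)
The plan is to compute every coefficient of $C_{1\to 2}$ in the product basis $\{H_{t_1}^T\otimes H_{t_2}\}$, which is dual to $\{G_{t_1}^T\otimes G_{t_2}\}$, and to check that each coefficient equals the corresponding coefficient in \eqref{VBTR}. By (A1) and \eqref{dualB}, $\Tr (H_{t_1}^T G_{t_1'}^T)=\delta_{t_1,t_1'}$. Hence any operator $C$ on ${\cal H}_{(I,1)}\otimes{\cal H}_{(I,2)}$ satisfies
\begin{align}
C=\sum_{t_1,t_2}\Tr\big(C(G_{t_1}^T\otimes G_{t_2})\big)\,H_{t_1}^T\otimes H_{t_2}.
\end{align}
The proof therefore reduces to expressing $\Tr C_{1\to2}(G_{t_1}^T\otimes G_{t_2})$ through the distribution $P={\cal M}[S_{1\to 2}(\rho_1,C_{1\to 2})]$. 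I treat the cases $t_1\neq 0$ and $t_1=0$ separately.

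\emph{Case $t_1\neq 0$.} By (A3), when Alice measures $y_1$ and obtains $x_1$, her output state is $|x_1,y_1\rangle\langle x_1,y_1|$, whatever $\rho_1$ is. Bob therefore receives $\Lambda_{1\to2}(E^1_{x_1|y_1})$. By the Choi convention \eqref{eq:choi-convention}, whenever $P_{X_1|Y_1}(x_1|y_1)>0$,
\begin{align}
P_{X_2|X_1,Y_1,Y_2}(x_2|x_1,y_1,y_2)=\Tr C_{1\to2}\big((E^1_{x_1|y_1})^T\otimes E^2_{x_2|y_2}\big).
\end{align}
I multiply by $g_{x_1|t_1}g_{x_2|t_2}$ and sum over $x_1,x_2$. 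Using $G_{t_i}=\sum_{x_i} g_{x_i|t_i}E^i_{x_i|y(t_i)}$ (and $G_0=I=\sum_{x_2}E^2_{x_2|y_2}$ when $t_2=0$), the sum is exactly $\Tr C_{1\to2}(G_{t_1}^T\otimes G_{t_2})$. This matches the first sum in \eqref{VBTR}.

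\emph{Case $t_1=0$.} Here $H_0^T\otimes H_{t_2}=\frac1d I\otimes H_{t_2}$, so I need to show that the bracket in \eqref{VBTR} equals $\frac1d\Tr C_{1\to2}(I\otimes G_{t_2})$. The term $\Tr C_{1\to2}(I\otimes G_{t_2})=\Tr \Lambda_{1\to2}(I)G_{t_2}$ is not observed directly, since Alice's measurement always feeds Bob a post-measurement state. Instead I use the choice $Y_1=0$: Bob then receives $\Lambda_{1\to2}(\rho_1)$, so
\begin{align}
\sum_{x_2}g_{x_2|t_2}P_{X_2|Y_1,Y_2}(x_2|0,y(t_2))=\Tr C_{1\to2}(\rho_1^T\otimes G_{t_2}).
\end{align}
Next I expand $\rho_1=\sum_t\Tr(G_t\rho_1)H_t=\frac1d I+\sum_{t_1\neq0}c_{t_1}H_{t_1}$. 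The coefficients are $c_{t_1}=\sum_{x_1}g_{x_1|t_1}P_{X_1|Y_1}(x_1|y(t_1))$, which is legitimate by Theorem~\ref{TH6} and the Markov chain $X_1-Y_1-Y_2$ of Theorem~\ref{TH4I}.

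For $t_1\neq0$ I rewrite each $H_{t_1}$ in the $G$-basis as $H_{t_1}=\sum_{t_1'}h_{t_1,t_1'}G_{t_1'}$. The $t_1'=0$ term vanishes because $h_{t_1,0}=\Tr H_{t_1}/d=\Tr(G_0H_{t_1})/d=0$. Substituting and using the case $t_1\neq0$ to identify $\Tr C_{1\to2}(G_{t_1'}^T\otimes G_{t_2})$ gives
\begin{align}
\tfrac1d\Tr C_{1\to2}(I\otimes G_{t_2})=\sum_{x_2}g_{x_2|t_2}P_{X_2|Y_1,Y_2}(x_2|0,y(t_2))-\sum_{t_1,t_1'\neq0}c_{t_1}h_{t_1,t_1'}\Tr C_{1\to2}(G_{t_1'}^T\otimes G_{t_2}),
\end{align}
which is precisely the bracket in \eqref{VBTR}. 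Together the two cases show that $\hat C_{1\to2}[P]$ and $C_{1\to2}$ have identical coefficients, which proves the claim.

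The main obstacle is the conditional probability $P_{X_2|X_1,Y_1,Y_2}$ in \eqref{VBTR}: it is undefined when $P_{X_1|Y_1}(x_1|y_1)=0$, and there the identity from the case $t_1\neq0$ cannot be read off the data. I would first treat the case $\rho_1>0$, where all such probabilities are positive. For the general case I would adopt the convention that the conditional is set to $\Tr C_{1\to2}\big((E^1_{x_1|y_1})^T\otimes E^2_{x_2|y_2}\big)$ whenever the conditioning event has probability zero, and state the theorem under that convention.
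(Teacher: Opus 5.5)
Your proposal is correct and takes essentially the same route as the paper: both match $\Tr\bigl(C_{1\to 2}(G_{t_1}^{T}\otimes G_{t_2})\bigr)=\Tr\bigl((G_{t_1}\otimes G_{t_2})C_{1\to2}^{T_1}\bigr)$ against \eqref{VBTR}, handling $t_1\neq0$ through the rank-one post-measurement states $|x_1,y_1\rangle\langle x_1,y_1|$, and $t_1=0$ through the $Y_1=0$ statistics combined with the expansions $\rho_1=\frac1d I+\sum_{t_1\neq0}\langle G_{t_1}\rangle H_{t_1}$ and $H_{t_1}=\sum_{t_1'}h_{t_1,t_1'}G_{t_1'}$. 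Your caveat about undefined conditionals when $P_{X_1|Y_1}(x_1|y_1)=0$ addresses a point the paper's proof passes over silently; full support is only imposed later, as condition (C0) in Theorem~\ref{TH10}. Treating $\rho_1>0$ first, or fixing a convention, is therefore a sensible addition rather than a deviation.
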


The combination of Theorems \ref{TH6} and \ref{TH8} guarantees that
the pair $(\rho_1[{\cal M}[S_{1\to 2}(\rho_1,C_{1\to 2})]]$,
$\hat{C}_{1\to 2}[{\cal M}[S_{1\to 2}(\rho_1,C_{1\to 2})]])$
recovers
the joint distribution ${\cal M}[S_{1\to 2}(\rho_1,C_{1\to 2})]$.

\begin{proof}
In this proof, we denote the distribution
${\cal M}[S_{1\to 2}(\rho_1,C_{1\to 2})$
by $\hat{P}_{X_1,Y_1,X_2,Y_2}$.
Theorem \ref{TH6} guarantees that
$\rho_1[{\cal M}[S_{1\to 2}(\rho_1,C_{1\to 2})]]=\rho_1$.
In order to show $\hat{C}_{1\to 2}[{\cal M}[S_{1\to 2}(\rho_1,C_{1\to 2})]]=C_{1\to 2}$,
it is sufficient to show
\begin{align}
\Tr (G_{t_1} \otimes G_{t_2} ) \hat{C}_{1\to 2}^{T_1}
=
\Tr (G_{t_1} \otimes G_{t_2} ) C_{1\to 2}^{T_1}\Label{VX7}
\end{align}
for $t_1,t_2 \in {\cal T}$.

For this aim, we denote the channel corresponding to $C_{1\to 2}$
by $\Lambda$.
Therefore, for
$y_1,y_2 \in {\cal Y}$,
and $x_1 \in {\cal X}_{y_2}$, $x_2 \in {\cal X}_{y_2}$,
the conditional distribution $\hat{P}_{X_2|X_1,Y_1,Y_2}(x_2|x_1,y_1,y_2)$
is given as
\begin{align}
&\hat{P}_{X_2|X_1,Y_1,Y_2}(x_2|x_1,y_1,y_2)
\notag \\
=&
\langle x_2,y_2|
\Lambda_{1\to 2} (|x_1,y_1\rangle\langle x_1,y_1|)|x_2,y_2\rangle
\Label{VBI1}.
\end{align}
Therefore, for $t_1,t_2 \in {\cal T}\setminus \{0\}$,
we have
\begin{align}
&
\Tr (G_{t_1} \otimes G_{t_2} ) \hat{C}_{1\to 2}^{T_1} \notag\\
=&\sum_{x_1,x_2} g_{x_1|t_1} g_{x_2|t_2}
\hat{P}_{X_2|X_1,Y_1,Y_2}(x_2|x_1,y(t_1),y(t_2)) \notag\\
=&\sum_{x_1}g_{x_1|t_1}
\Tr G_{t_2}\Lambda_{1\to 2}( |x_1,y(t_1)\rangle
\langle x_1,y(t_1)|) \notag\\
=&
\sum_{x_1}g_{x_1|t_1}
\Tr
((|x_1,y(t_1)\rangle \langle x_1,y(t_1)|) \otimes G_{t_2} )
C_{1\to 2}^{T_1} \notag\\
=&
\Tr (G_{t_1} \otimes G_{t_2} ) C_{1\to 2}^{T_1}.\Label{VX1}
\end{align}
When $t_2$ is $0$,
we replace $y(t_2)$
and $g_{x_2|t_2}$
by an arbitrary element $y \in {\cal Y}$ and $1$, respectively,
which guarantees the relation \eqref{VX1}.
Thus, we obtain \eqref{VX7} when $t_1 \neq 0$.

Since
$\rho_1=\frac{1}{d}I
+\sum_{t_1\neq 0}\langle G_{t_1} \rangle H_{t_1}$,
for $t_2 \in {\cal T}\setminus \{0\}$,
we have
\begin{align}
&\sum_{x_2} g_{x_2|t_2}
\hat{P}_{X_2|Y_1,Y_2}(x_2|0,y(t_2)) 
=\Tr (\rho_1 \otimes G_{t_2} ) C_{1\to 2}^{T_1}\notag \\
=&
\Tr \Big(\Big(\frac{1}{d}I +
\sum_{t_1\neq 0}
\langle G_{t_1} \rangle H_{t_1}
\Big) \otimes G_{t_2} \Big) C_{1\to 2}^{T_1}\notag \\
=&
\Tr \Big(
\Big(\frac{1}{d}I +
\sum_{t_1,t_1'\neq 0}
\langle G_{t_1} \rangle h_{t_1,t_1'} G_{t_1'}
\Big) \otimes G_{t_2} \Big)
C_{1\to 2}^{T_1}\notag \\
=&
\Tr \Big(\frac{1}{d}I \otimes G_{t_2} \Big) C_{1\to 2}^{T_1}
+\sum_{t_1,t_1'\neq 0}
\langle G_{t_1} \rangle h_{t_1,t_1'}
\Tr (G_{t_1'} \otimes G_{t_2} ) C_{1\to 2}^{T_1}\notag \\
=&
\Tr \Big(\frac{1}{d}I \otimes G_{t_2} \Big) C_{1\to 2}^{T_1}
+\sum_{t_1,t_1'\neq 0}
\langle G_{t_1} \rangle h_{t_1,t_1'}
\sum_{x_1,x_2} g_{x_1|t_1'} g_{x_2|t_2}
\hat{P}_{X_2|X_1,Y_1,Y_2}(x_2|x_1,y(t_1'),y(t_2)) \notag \\
=&
\Tr \Big(\frac{1}{d}I \otimes G_{t_2} \Big) C_{1\to 2}^{T_1}
+\sum_{t_1,t_1'\neq 0}
\sum_{x_1} g_{x_1|t_1}
\hat{P}_{X_1|Y_1}(x_1|y(t_1))
h_{t_1,t_1'} \sum_{x_1',x_2} g_{x_1'|t_1'} g_{x_2|t_2}
\hat{P}_{X_2|X_1,Y_1,Y_2}(x_2|x_1',y(t_1'),y(t_2)) ,
\end{align}
which implies
\begin{align}
&
\Tr (I \otimes G_{t_2} ) C_{1\to 2}^{T_1} \notag \\
=&
d\sum_{x_2} g_{x_2|t_2}
\hat{P}_{X_2|Y_1,Y_2}(x_2|0,y(t_2))
-d\sum_{t_1,t_1'\neq 0}
\sum_{x_1} g_{x_1|t_1}
\hat{P}_{X_1|Y_1}(x_1|y(t_1))
 \delta_{t_1,t_1'}
\sum_{x_1',x_2} g_{x_1'|t_1'} g_{x_2|t_2}
\hat{P}_{X_2|X_1,Y_1,Y_2}(x_2|x_1',y(t_1'),y(t_2))\notag  \\
=& \Tr (I \otimes G_{t_2} ) \hat{C}_{1\to 2}^{T_1}.
\Label{VH7}
\end{align}
When $t_2$ is $0$,
we replace $y(t_2)$
and $g_{x_2|t_2}$
by an arbitrary element $y \in {\cal Y}$ and $1$, respectively,
which guarantees the relation \eqref{VH7}.
Hence,
we obtain \eqref{VX7} when
$t_1 = 0$.
\end{proof}

Theorem~\ref{TH8} shows that, under (A1) and (A3), the reconstructed
matrix $\hat{C}_{1\to 2}[P_{X_1,Y_1,X_2,Y_2}]$ coincides with the true
Choi matrix for memoryless sequential strategies.
Together with the directional Markovianity established earlier, this
immediately yields the following necessary conditions on
$P_{X_1,Y_1,X_2,Y_2}$.

\begin{corollary}\Label{COR9}
Assume the conditions (A1) and (A3).
For $P_{X_1,Y_1,X_2,Y_2}\in {\cal M}[{\cal S}_{N|1\to 2}] $,
the relations $X_1-Y_1-Y_2$ and
$\hat{C}_{1\to 2}[P_{X_1,Y_1,X_2,Y_2}]\ge 0$
hold.
\end{corollary}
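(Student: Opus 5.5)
The plan is to treat the two assertions of Corollary~\ref{COR9} separately. Each follows directly from a result already established.

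Let $P_{X_1,Y_1,X_2,Y_2}\in{\cal M}[{\cal S}_{N|1\to 2}]$. By the definition of ${\cal M}$, there is a strategy $S\in{\cal S}_{N,1\to 2}$ with $P_{X_1,Y_1,X_2,Y_2}={\cal M}[S]$. As explained at the start of Section~\ref{SSPDM}, such an $S$ is determined by Alice's initial state and the Choi matrix of Charlie's channel. So we may write $S=S_{1\to 2}(\rho_1,C_{1\to 2})$ with $\rho_1\ge 0$ a density matrix and $C_{1\to 2}=C[\Lambda_{1\to 2}]$ the Choi matrix of a CPTP map.

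\emph{Markov chain.} Theorem~\ref{TH4I} states that every distribution produced by a strategy in ${\cal S}_{N,1\to2}$ satisfies $X_1-Y_1-Y_2$. This is the no-signalling-from-Bob-to-Alice property: Alice's marginal $\Tr \rho_1 E^1_{x_1|y_1}$ does not depend on $y_2$. Nothing more is needed for the first claim.

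\emph{Positivity.} Under (A1) and (A3), Theorem~\ref{TH8} gives
\[
\hat{C}_{1\to 2}[{\cal M}[S_{1\to 2}(\rho_1,C_{1\to 2})]]=C_{1\to 2}.
\]
Since $\Lambda_{1\to2}$ is completely positive, Choi's theorem in the convention \eqref{eq:choi-convention} gives $C_{1\to 2}=\sum_{a,b}|a\rangle\langle b|\otimes\Lambda_{1\to 2}(|a\rangle\langle b|)\ge 0$. Hence $\hat{C}_{1\to 2}[P_{X_1,Y_1,X_2,Y_2}]\ge 0$.

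The only point needing care is that $\hat{C}_{1\to 2}[P]$ is well defined. Its definition \eqref{VBTR} involves the conditional $P_{X_2|X_1,Y_1,Y_2}$, which requires $P_{X_1|Y_1,Y_2}(x_1|y_1,y_2)>0$. If some outcome has probability zero, I would note two things. First, Theorem~\ref{TH8} is already stated for all memoryless sequential strategies. Second, the conditional in \eqref{VBI1} can be defined directly as $\langle x_2,y_2|\Lambda_{1\to2}(|x_1,y_1\rangle\langle x_1,y_1|)|x_2,y_2\rangle$, and this identity for $\hat P_{X_2|X_1,Y_1,Y_2}$ is precisely what the proof of Theorem~\ref{TH8} uses. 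So I would invoke Theorem~\ref{TH8} exactly as stated and not treat this as a genuine obstacle. The proof is then complete.
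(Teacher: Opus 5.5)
Your proof is correct and follows the paper's own route: the Markov chain $X_1-Y_1-Y_2$ comes from Theorem~\ref{TH4I}, and positivity follows because Theorem~\ref{TH8} identifies $\hat C_{1\to2}[P]$ with the true Choi matrix $C_{1\to2}$, which is positive semidefinite since $\Lambda_{1\to2}$ is completely positive. Your aside on zero-probability outcomes goes slightly beyond the paper, which does not address well-definedness at this point and only treats the issue later through condition (C0) and the remark following Theorem~\ref{TH10}.
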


Corollary~\ref{COR9} therefore gives basic necessary conditions for
membership in ${\cal M}[{\cal S}_{N,1\to 2}]$: the directional Markovian
condition $X_1-Y_1-Y_2$ and the positivity of the reconstructed Choi matrix
$\hat{C}_{1\to 2}[P]$.

\subsection{Complete characterization of memoryless sequential dynamics}

The previous subsection showed that, under projective state reduction,
memoryless sequential distributions satisfy directional Markovianity together
with positivity of the reconstructed Choi matrix.
However, these conditions are still not sufficient to characterize
${\cal M}[{\cal S}_{N,1\to 2}]$ completely.
What is still missing is a consistency requirement ensuring that the observed
correlations arise from a single underlying channel.
We therefore introduce an additional algebraic condition for the joint
distribution $P_{X_1,Y_1,X_2,Y_2}$.
\begin{description}
\item[(C0)]
For every $y_1\in{\cal Y}$ and every $x_1\in{\cal X}_{y_1}$, the conditional
probability $P_{X_1\mid Y_1}(x_1\mid y_1)$ is strictly positive.
\item[(C1)]
The condition (C0) holds.
The relation
\begin{align}
&\sum_{x_1}
P_{X_2|X_1,Y_1,Y_2}(x_2|x_1,y_1,y_2) \notag\\
=&
d P_{X_2|Y_1,Y_2}(x_2|0,y_2)
-d\sum_{t_1,t_1'\neq 0}
\sum_{x_1} g_{x_1|t_1}
P_{X_1|Y_1}(x_1|y(t_1))
h_{t_1,t_1'}
\sum_{x_1'} g_{x_1'|t_1'}
P_{X_2|X_1,Y_1,Y_2}(x_2|x_1',y(t_1'),y_2) \Label{NM5}
\end{align}
holds for any $y_1,y_2 \in {\cal Y},
x_2 \in {\cal X}_{y_2}$.
\end{description}

Here it is important to distinguish the role of the assumptions
(A1), (A2), and (A3) from that of the new condition introduced above.
The assumptions (A1)--(A3) specify the operator and measurement structure
under which the reconstruction method is carried out.
By contrast, the condition (C1) is not an additional assumption on the
measurement setting, but a consistency requirement imposed directly on the
observed distribution $P_{X_1,Y_1,X_2,Y_2}$.

Condition (C1) should be understood as a compatibility requirement linking single-site and two-site statistics at the distribution level.
Operationally, it enforces that the observed joint probabilities can be generated by a single physical channel acting consistently on the post-measurement states produced by projective state reduction, rather than by different effective maps conditioned on hidden memory.

Importantly, (C1) does not introduce any additional physical assumption beyond memorylessness. Instead, it formalizes the fact that, under restricted projective measurements, Markovian conditional-independence relations and positivity constraints alone do not guarantee that all observed correlations originate from one underlying channel. Condition (C1) precisely excludes those distributions that satisfy all standard non-signaling and positivity requirements but nevertheless fail to admit a channel-level representation compatible with sequential, memoryless dynamics.

As shown below, condition~(C1) is automatically satisfied by all memoryless sequential strategies, whereas it need not hold in the larger quantum-memory class. More fundamentally, Theorem~\ref{TH10} shows that, within the class of distributions in ${\cal M}[{\cal S}_G]$ satisfying the directional Markovian condition $X_1-Y_1-Y_2$, condition~(C1) is equivalent to exact algebraic reproduction of the observed distribution by the reconstructed state and candidate Choi matrix. Consequently, any other condition on the observed distribution that guarantees such reproduction must imply condition~(C1). In this precise sense, condition~(C1) is the minimal additional algebraic consistency condition.

We define condition (C2) by exchanging the
roles of ${\cal H}_{I,1},{\cal H}_{O,1}$ and ${\cal H}_{I,2},{\cal H}_{O,2}$.
We compare the non-signaling condition
$Y_1-Y_2-X_2$ of the opposite direction, which is written as
\begin{align}
\sum_{x_1}P_{X_1|Y_1,Y_2}(x_1|y_1,y_2)
P_{X_2|X_1,Y_1,Y_2}(x_2|x_1,y_1,y_2)
=
P_{X_2|Y_1,Y_2}(x_2|0,y_2) \Label{NVO}
\end{align}
for any $y_1,y_2 \in {\cal Y}, x_2 \in {\cal X}_{y_2}$.
The following proposition shows that
any element $P_{X_1,Y_1,X_2,Y_2}\in{\cal M}[ {\cal S}_{N,1\to 2}]$
satisfies (C1) while it does not satisfy \eqref{NVO}
in general.

\begin{proposition}\Label{P1Auto}
Assume the conditions (A1) and (A3).
For Charlie's strategy $S\in {\cal S}_{N,1\to 2}$,
the distribution ${\cal M}[S]$ satisfies
the condition (C1).
\end{proposition}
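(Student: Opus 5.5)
The plan is to verify (C1) by evaluating both sides of \eqref{NM5} directly in terms of the channel $\Lambda_{1\to2}$ with Choi matrix $C_{1\to2}$. This reuses the computation in the proof of Theorem~\ref{TH8}, with the observable $G_{t_2}$ replaced by the rank-one projection $E^2_{x_2|y_2}=|x_2,y_2\rangle\langle x_2,y_2|$. Write $S=S_{1\to2}(\rho_1,C_{1\to2})$ and $P={\cal M}[S]$.

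First, the requirement (C0). Here $P_{X_1|Y_1}(x_1|y_1)=\langle x_1,y_1|\rho_1|x_1,y_1\rangle$, so (C0) holds whenever $\rho_1>0$. This is the regime in which the conditional $P_{X_2|X_1,Y_1,Y_2}$ is well defined for every $x_1$, and it is the same regime used for $\tilde C_{1\to2}$ earlier. I will state this explicitly: for $S\in{\cal S}_{N,1\to2}$ with invertible $\rho_1$, (C0) is immediate. If $\rho_1$ is singular, I will interpret the conditional probabilities through the channel formula \eqref{VBI1}, and the algebraic identity below still holds. I expect this point, namely that (C0) is not automatic for singular $\rho_1$, to be the main subtlety. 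I would handle it by restricting to $\rho_1>0$, or by taking \eqref{VBI1} as the definition of $P_{X_2|X_1,Y_1,Y_2}$.

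Second, the left-hand side. By \eqref{VBI1}, $P_{X_2|X_1,Y_1,Y_2}(x_2|x_1,y_1,y_2)=\langle x_2,y_2|\Lambda_{1\to2}(|x_1,y_1\rangle\langle x_1,y_1|)|x_2,y_2\rangle$. Rank-one completeness (A3) gives $\sum_{x_1}|x_1,y_1\rangle\langle x_1,y_1|=I$. Summing over $x_1$ therefore yields $\langle x_2,y_2|\Lambda_{1\to2}(I)|x_2,y_2\rangle=\Tr(I\otimes E^2_{x_2|y_2})C_{1\to2}^{T_1}$, which is independent of $y_1$.

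Third, the right-hand side. Using the Choi convention \eqref{eq:choi-convention} and the Markov chain $X_1-Y_1-Y_2$ from Theorem~\ref{TH4I}, we have $P_{X_2|Y_1,Y_2}(x_2|0,y_2)=\Tr(\rho_1\otimes E^2_{x_2|y_2})C_{1\to2}^{T_1}$. Under (A1), expand $\rho_1=\frac1d I+\sum_{t_1\neq0}\langle G_{t_1}\rangle H_{t_1}$, where $\langle G_{t_1}\rangle=\sum_{x_1}g_{x_1|t_1}P_{X_1|Y_1}(x_1|y(t_1))$. Next use $H_{t_1}=\sum_{t_1'}h_{t_1,t_1'}G_{t_1'}$; this holds because the Gram matrix of the dual basis is the inverse of the Gram matrix of $\{G_t\}$. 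The term $t_1'=0$ drops out, since $h_{t_1,0}=\Tr H_{t_1}/d=\Tr(G_0H_{t_1})/d=0$ for $t_1\neq0$. Finally, exactly as in \eqref{VX1},
\[
\Tr(G_{t_1'}\otimes E^2_{x_2|y_2})C_{1\to2}^{T_1}=\sum_{x_1'}g_{x_1'|t_1'}P_{X_2|X_1,Y_1,Y_2}(x_2|x_1',y(t_1'),y_2).
\]
Substituting these pieces, multiplying by $d$ and rearranging gives
\[
\Tr(I\otimes E^2_{x_2|y_2})C_{1\to2}^{T_1}=d\,P_{X_2|Y_1,Y_2}(x_2|0,y_2)-d\sum_{t_1,t_1'\neq0}\langle G_{t_1}\rangle h_{t_1,t_1'}\sum_{x_1'}g_{x_1'|t_1'}P_{X_2|X_1,Y_1,Y_2}(x_2|x_1',y(t_1'),y_2).
\]
This is the right-hand side of \eqref{NM5}. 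Combined with the second step, it yields \eqref{NM5} for all $y_1,y_2,x_2$. The computation is the projector analogue of \eqref{VH7}, so beyond the (C0) issue the only care needed is the dual-basis bookkeeping.
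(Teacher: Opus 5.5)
Your proof is correct and is essentially the paper's argument. The paper evaluates $\Tr(I\otimes G_{y_2,z_2})C_{1\to2}^{T_1}=\Tr G_{y_2,z_2}\Lambda_{1\to2}(I)$ (and its analogue with $I$), equates this with the reconstruction identity \eqref{VH7}, and then converts to projectors via the coefficients $f_{x,y,z}$; you instead insert $E^2_{x_2|y_2}$ into the second slot directly and so skip that inversion step.

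Your caveat about (C0) is well taken and goes beyond the paper, whose proof never checks (C0). It does fail for some $S\in{\cal S}_{N,1\to2}$, e.g.\ when $\rho_1$ is an eigenstate of one of the measured bases. So the statement is only correct under (C0), as assumed in Theorem~\ref{TH10}, for instance when $\rho_1>0$.
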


\begin{proof}
We employ the same notation as the proof of Theorem \ref{TH8}.
For $y_1\in {\cal Y}\setminus \{0\}$,

we have
\begin{align}
&
\Tr (I \otimes G_{y_2,z_2} ) C_{1\to 2}^{T_1}
=\Tr G_{y_2,z_2}
\Lambda_{1\to 2}( I)
=
\sum_{x_1} \Tr G_{y_2,z_2}\Lambda_{1\to 2}(  |x_1,y_1\rangle \langle x_1,y_1|)\notag\\
\stackrel{(a)}{=} &
\sum_{x_1} \sum_{x_2} g_{x_2|y_2,z_2}
P_{X_2|X_1,Y_1,Y_2}(x_2|x_1,y_1,y_2),
\Label{BN1}
\\
&
\Tr (I \otimes I ) C_{1\to 2}^{T_1}
=\Tr
\Lambda_{1\to 2}( I)
=
\sum_{x_1} \Tr \Lambda_{1\to 2}(  |x_1,y_1\rangle \langle x_1,y_1|)\notag\\
\stackrel{(b)}{=} &
\sum_{x_1} \sum_{x_2}
P_{X_2|X_1,Y_1,Y_2}(x_2|x_1,y_1,y_2)
\Label{BN1T}
\end{align}
for any $y_2 \in {\cal Y}$ and $z_2 \in {\cal Z}_{y_2}$,
where $(a)$ and $(b)$ follow from \eqref{VBI1}.
The combination of
\eqref{VH7} with $t_2=(y_2,z_2)$
and \eqref{BN1} implies
\begin{align}
&\sum_{x_1} \sum_{x_2} g_{x_2|y_2,z_2}
P_{X_2|X_1,Y_1,Y_2}(x_2|x_1,y_1,y_2)
\notag \\
=&d\sum_{x_2} g_{x_2|y_2,z_2}
P_{X_2|Y_1,Y_2}(x_2|0,y_2)
-d\sum_{t_1,t_1'\neq 0}
\sum_{x_1} g_{x_1|t_1}
P_{X_1|Y_1}(x_1|y(t_1))
h_{t_1,t_1'}
\sum_{x_1',x_2} g_{x_1'|t_1'} g_{x_2|y_2,z_2}
P_{X_2|X_1,Y_1,Y_2}(x_2|x_1',y(t_1'),y_2).\Label{MV1}
\end{align}
Also, the combination of
\eqref{VH7} with $t_2=0$
and \eqref{BN1T} implies
\begin{align}
&\sum_{x_1} \sum_{x_2} P_{X_2|X_1,Y_1,Y_2}(x_2|x_1,y_1,y_2)
\notag \\
=&d\sum_{x_2}
P_{X_2|Y_1,Y_2}(x_2|0,y_2)
-d\sum_{t_1,t_1'\neq 0}
\sum_{x_1} g_{x_1|t_1}
P_{X_1|Y_1}(x_1|y(t_1))
h_{t_1,t_1'}
\sum_{x_1',x_2} g_{x_1'|t_1'}
P_{X_2|X_1,Y_1,Y_2}(x_2|x_1',y(t_1'),y_2).\Label{MV2}
\end{align}

We choose real numbers
$f_{x,y,z}$ such that
$\sum_{z \in {\cal Z}_{y}'}f_{z,y,x}g_{x'|y,z}=\delta_{x,x'}$.
Then, for $x_2$, we have
\begin{align}
&\sum_{x_1}
P_{X_2|X_1,Y_1,Y_2}(x_2|x_1,y_1,y_2) \notag\\
\stackrel{(a)}{=} & \sum_{z_2\in {\cal Z}_{y_2}'}f_{x_2,y_2,z_2}
\sum_{x_1} \sum_{x_2'} g_{x_2'|y_2,z_2}
P_{X_2|X_1,Y_1,Y_2}(x_2'|x_1,y_1,y_2)
\notag\\
\stackrel{(b)}{=} & \sum_{z_2\in {\cal Z}_{y_2}'}
f_{x_2,y_2,z_2}
\Big(
d\sum_{x_2'} g_{x_2'|y_2,z_2}
P_{X_2|Y_1,Y_2}(x_2'|0,y_2) \notag\\
&-d\sum_{t_1,t_1'\neq 0}
\sum_{x_1} g_{x_1|t_1}
P_{X_1|Y_1}(x_1|y(t_1))
h_{t_1,t_1'}
\sum_{x_1',x_2'} g_{x_1'|t_1'} g_{x_2'|y_2,z_2}
P_{X_2|X_1,Y_1,Y_2}(x_2'|x_1',y(t_1'),y_2)\Big)\notag \\
\stackrel{(c)}{=} &
d P_{X_2|Y_1,Y_2}(x_2|0,y_2)
-d\sum_{t_1,t_1'\neq 0}
\sum_{x_1} g_{x_1|t_1}
P_{X_1|Y_1}(x_1|y(t_1))
h_{t_1,t_1'}
\sum_{x_1'} g_{x_1'|t_1'}
P_{X_2|X_1,Y_1,Y_2}(x_2|x_1',y(t_1'),y_2) ,\Label{SK1}
\end{align}
where
$(a)$ and $(c)$ follow from the relation
$\sum_{z \in {\cal Z}_{y}'}f_{z,y,x}g_{x'|y,z}=\delta_{x,x'}$,
and $(b)$ follows from
the combination of \eqref{MV1} and \eqref{MV2}.
The relation \eqref{SK1} implies the condition (C1).
\end{proof}

We can now state the main result.
The previous subsection showed that any distribution generated by a
memoryless sequential strategy necessarily satisfies the directional
Markovian condition $X_1-Y_1-Y_2$ together with the positivity of the
reconstructed Choi matrix $\hat{C}_{1\to 2}[P]$.
Proposition~\ref{P1Auto} further showed that such distributions also satisfy
the additional algebraic condition (C1).
The following theorem shows that, under (A1) and (A3), these three conditions
are not only necessary but also sufficient, and hence give a complete
characterization of ${\cal M}[{\cal S}_{N,1\to 2}]$.

\begin{theorem}\Label{TH10}
Assume the conditions (A1), (A3), and (C0).
For a distribution $P \in {\cal M}[{\cal S}_G]$, the following conditions are equivalent:
\begin{description}
\item[(i)] $P \in {\cal M}[{\cal S}_{N,1\to 2}]$.
\item[(ii)] $P$ satisfies the Markovian condition $X_1-Y_1-Y_2$, the condition (C1),
and the positivity condition
\[
\hat{C}_{1\to 2}[P]\ge 0.
\]
\end{description}
Equivalently,
\begin{align}
{\cal M}[{\cal S}_{N,1\to 2}]
=
\{P \in {\cal M}[{\cal S}_G]\mid
P \text{ satisfies } X_1-Y_1-Y_2,\ (C1),\
\hat{C}_{1\to 2}[P]\ge 0
\}.
\Label{FG1}
\end{align}

Moreover, suppose that $P\in{\cal M}[{\cal S}_G]$ satisfies the Markovian condition $X_1-Y_1-Y_2$. Then condition~\emph{(C1)} holds if and only if the reconstructed pair $\bigl(\rho_1[P],\hat{C}_{1\to2}[P]\bigr)$ reproduces the full observed distribution in the sense that
\begin{align}
P_{X_1,Y_1,X_2,Y_2}
=
{\cal M}\!\left[
S_{1\to 2}\!\left(
\rho_1[P_{X_1,Y_1,X_2,Y_2}],
\hat{C}_{1\to 2}[P_{X_1,Y_1,X_2,Y_2}]
\right)
\right].
\Label{EE54}
\end{align}
Here, before positivity is imposed, the right-hand side of \eqref{EE54} denotes the array obtained by substituting the reconstructed state and candidate Choi matrix into the sequential Born formula; it is not asserted at this stage that $S_{1\to2}$ is a physical strategy. If, in addition, $\hat{C}_{1\to2}[P]\ge0$, then $S_{1\to2}\bigl(\rho_1[P],\hat{C}_{1\to2}[P]\bigr)$ is a physical memoryless sequential strategy that generates $P$.

Consequently, within the class of distributions in ${\cal M}[{\cal S}_G]$ satisfying $X_1-Y_1-Y_2$, condition~\emph{(C1)} is the weakest additional consistency condition, with respect to logical implication, that guarantees the reproduction property \eqref{EE54}.
\end{theorem}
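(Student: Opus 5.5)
The plan is to prove the ``moreover'' statement first (C1 $\Leftrightarrow$ reproduction \eqref{EE54} under $X_1-Y_1-Y_2$), and then derive the equivalence (i)$\Leftrightarrow$(ii) and the minimality claim from it.

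\emph{Necessity, (i)$\Rightarrow$(ii).} This direction is already available. If $P={\cal M}[S]$ with $S=S_{1\to2}(\rho_1,C_{1\to2})\in{\cal S}_{N,1\to2}$, then:
\begin{itemize}
\item Theorem~\ref{TH4I} gives $X_1-Y_1-Y_2$;
\item Proposition~\ref{P1Auto} gives (C1);
\item Theorem~\ref{TH8} gives $\hat C_{1\to2}[P]=C_{1\to2}\ge0$, which is also Corollary~\ref{COR9}.
\end{itemize}

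\emph{The moreover part, reproduction $\Rightarrow$ (C1).} Set $\rho_1:=\rho_1[P]$, $\hat C:=\hat C_{1\to2}[P]$, and let $\tilde P$ be the array obtained by substituting $(\rho_1,\hat C)$ into the sequential Born formula, i.e.
\[
\tilde P_{X_1,X_2|Y_1,Y_2}(x_1,x_2|y_1,y_2)=\langle x_1,y_1|\rho_1|x_1,y_1\rangle\,\Tr\bigl[(|x_1,y_1\rangle\langle x_1,y_1|\otimes |x_2,y_2\rangle\langle x_2,y_2|)\hat C^{T_1}\bigr],
\]
with the obvious modifications when $y_1=0$ or $y_2=0$. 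Suppose $\tilde P=P$. The computations in the proofs of Theorem~\ref{TH8} and Proposition~\ref{P1Auto} are purely linear-algebraic: they use only the Born form of $\tilde P$, (A1) and (A3), and never positivity. So rerunning them with $\Lambda$ replaced by the Hermitian-preserving map of $\hat C$ shows that $\tilde P$ satisfies (C1). Hence $P$ satisfies (C1). Condition (C0) guarantees that the conditional probabilities $P_{X_2|X_1,Y_1,Y_2}$ appearing in (C1) are well defined.

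\emph{The moreover part, (C1) $\Rightarrow$ reproduction.} This is the main obstacle. Assume $X_1-Y_1-Y_2$ and (C1). I would show directly that for all $x_1,y_1,x_2,y_2$,
\begin{equation}
\langle x_2,y_2|\hat\Lambda(|x_1,y_1\rangle\langle x_1,y_1|)|x_2,y_2\rangle=P_{X_2|X_1,Y_1,Y_2}(x_2|x_1,y_1,y_2),\label{plan-goal}
\end{equation}
where $\hat\Lambda$ is the map with Choi matrix $\hat C$. The steps are:
\begin{enumerate}
\item By (A1), $|x_1,y_1\rangle\langle x_1,y_1|$ expands in the basis $\{I\}\cup\{G_{t_1}\}$ with coefficients given by the dual basis.
\item By the definition \eqref{VBTR}, $\Tr[(G_{t_1}\otimes G_{t_2})\hat C^{T_1}]$ equals the observed contrast $\sum g_{x_1|t_1}g_{x_2|t_2}P_{X_2|X_1,Y_1,Y_2}$ for $t_1\neq0$, and $\Tr[(I\otimes G_{t_2})\hat C^{T_1}]$ equals $d$ times the bracket in the second line of \eqref{VBTR}.
\item For $t_1\neq0$, the contrast identity gives, after inverting the contrasts via the coefficients $f$ exactly as in \eqref{SK1}, the traceless part of the conditional distribution for fixed $(x_1,y_1)$.
\item The remaining $I$-component of $|x_1,y_1\rangle\langle x_1,y_1|$ contributes the bracket term. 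Condition (C1) is exactly the statement that this bracket equals $\sum_{x_1}P_{X_2|X_1,Y_1,Y_2}(x_2|x_1,y_1,y_2)$, the $I$-part of the observed data. This identifies the trace component and so gives \eqref{plan-goal}.
\end{enumerate}
The point to verify carefully in step~4 is how the trace-zero and trace parts of the rank-one projector pair up with the $g$-contrasts. Under (A3) the span of $\{E_{x_1|y_1}\}_{x_1}$ is generated by $I$ together with the $G_{y_1,z_1}$ for that same $y_1$. So only same-setting contrasts enter, and (C1) supplies the single missing linear relation.

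\emph{Remaining marginals and conclusion.} Next I would match the other parts of the distribution. The $X_1$-marginal agrees because $\rho_1[P]$ is compatible (Lemma~\ref{LBM1A}) and $P$ satisfies $X_1-Y_1-Y_2$. Combining this with \eqref{plan-goal} gives $\tilde P=P$ for all $y_1\neq0$. For $y_1=0$, the bracket in \eqref{VBTR} is built so that $\Tr[(\rho_1\otimes G_{t_2})\hat C^{T_1}]$ returns $P_{X_2|Y_1,Y_2}(\cdot|0,y_2)$, by the same manipulation as in \eqref{VH7} read backwards. This proves \eqref{EE54}.

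Finally, under (ii) we have in addition $\hat C\ge0$, and \eqref{VBTR} together with (C1) should give $\Tr_2\hat C=I$; one checks this from the $t_2=0$ component. Then $S_{1\to2}(\rho_1[P],\hat C)$ is a physical strategy generating $P$, which is (i). The minimality statement is then a restatement of the equivalence just proved: any condition that yields \eqref{EE54} therefore yields (C1).
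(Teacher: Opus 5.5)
Your proposal is correct and follows essentially the paper's route: necessity comes from Theorems~\ref{TH4I}, \ref{TH8} and Proposition~\ref{P1Auto}; sufficiency matches the $X_1$-marginal through compatibility of $\rho_1[P]$ (Lemma~\ref{LBM1A}, which also gives $\rho_1[P]\ge 0$) and matches $P_{X_2|X_1,Y_1,Y_2}$ through \eqref{XK1}, \eqref{XK2}, (C1) and the $f$-inversion; and the converse sums the reproduced conditional over $x_1$, which is what your ``rerun Proposition~\ref{P1Auto} on the Born array'' amounts to. You also check the $y_1=0$ entries and $\Tr_2\hat C_{1\to2}[P]=I$ explicitly, which the paper leaves implicit; the latter follows from the $t_2=0$ terms of \eqref{VBTR} alone, since $\sum_{x_1}g_{x_1|t_1}=\Tr G_{t_1}=0$ under (A3), so (C1) is not needed for it.
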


\begin{proof}
Since Lemma \ref{LBM1A} guarantees 
\begin{align}
&
\{P \in {\cal M}[{\cal S}_G]|
P \hbox{ satisfies }
 X_1-Y_1-Y_2,
(C1),
\rho_1[P]
\ge 0,
\hat{C}_{1\to 2}[P]
\ge 0
\}\notag\\
=&
\{P \in {\cal M}[{\cal S}_G]|
P \hbox{ satisfies }
 X_1-Y_1-Y_2,
(C1),
\hat{C}_{1\to 2}[P]
\ge 0
\},
\end{align}
it is sufficient to show the relation
\begin{align}
{\cal M}[{\cal S}_{N,1\to 2}]=
\{P \in {\cal M}[{\cal S}_G]|
P \hbox{ satisfies }
 X_1-Y_1-Y_2,
(C1),
\rho_1[P]
\ge 0,
\hat{C}_{1\to 2}[P]
\ge 0
\}\Label{FG1B}.
\end{align}
Since the relation $\subset$ in
\eqref{FG1B} follows from Theorems
\ref{TH6}, \ref{TH8}, and Proposition~\ref{P1Auto},
it is sufficient to show the relation $\supset$ in
\eqref{FG1B}.
For this aim, it is sufficient to show the relation
\begin{align}
P_{X_1,Y_1,X_2,Y_2}=
{\cal M}[S_{1\to 2}(\rho_1[P_{X_1,Y_1,X_2,Y_2}],
C_{1\to 2}[P_{X_1,Y_1,X_2,Y_2}])]
\Label{DJ2}
\end{align}
for an element $P_{X_1,Y_1,X_2,Y_2}
\in
\{P \in {\cal M}[{\cal S}_G]|
P \hbox{ satisfies }
 X_1-Y_1-Y_2,
(C1),
\rho_1[P]
\ge 0,
\hat{C}_{1\to 2}[P]
\ge 0
\}$.

Since
$P_{X_1,Y_1,X_2,Y_2} \in {\cal M}[{\cal S}_G]$
satisfies the Markovian condition $ X_1-Y_1-Y_2$,
we have
\begin{align}
P_{X_1,X_2|Y_1,Y_2}(x_1,x_2|y_1,y_2)
=
P_{X_1|Y_1}(x_1|y_1)
P_{X_2|X_1,Y_1,Y_2}(x_1,x_2|y_1,y_2).
\end{align}
Hence, it is sufficient for \eqref{DJ2} to show that
\begin{align}
P_{X_1|Y_1}(x_1|y_1) &=\Tr |x_1,y_1\rangle \langle x_1,y_1|\rho_1[P_{X_1,Y_1,X_2,Y_2}]
\Label{SJ1}
\\
P_{X_2|X_1,Y_1,Y_2}(x_2|x_1,y_1,y_2)
&=\Tr
(|x_1,y_1\rangle \langle x_1,y_1|\otimes
|x_2,y_2\rangle \langle x_2,y_2|)
\hat{C}_{1\to 2}[P_{X_1,Y_1,X_2,Y_2}]^{T_1}
\Label{SJ2}
\end{align}
for $y_1,y_2\in{\cal Y}$, $x_2\in{\cal X}_{y_2}$, and
$x_1\in{\cal X}_{y_1}$.
Since $E_{x|y}$ is written as a linear sum of
$\{G_{y,z}\}_{z \in {\cal Z}_{y}'}$,
there are real numbers
$f_{x,y,z}$ such that
$\sum_{z \in {\cal Z}_{y}'}f_{z,y,x}g_{x'|y,z}=\delta_{x,x'}$.
Hence,
$|x_1,y_1\rangle \langle x_1,y_1|
=\sum_{z_1 \in {\cal Z}_{y_1}}f_{z_1,y_1,x_1} G_{y_1,z_1}$.
Thus, we have
\begin{align}
P_{X_1|Y_1}(x_1|y_1) \stackrel{(a)}{=} &
P_{X_1|Y_1,Y_2}(x_1|y_1,0)
=\sum_{z_1 \in {\cal Z}_{y_1}}f_{z_1,y_1,x_1}
\langle G_{y_1,z_1}\otimes I\rangle_{P_{X_1,Y_1,X_2,Y_2}} \notag\\
=&\sum_{z_1 \in {\cal Z}_{y_1}}f_{z_1,y_1,x_1}
\Tr G_{y_1,z_1}\rho_1[P_{X_1,Y_1,X_2,Y_2}]
=\Tr |x_1,y_1\rangle \langle x_1,y_1|\rho_1[P_{X_1,Y_1,X_2,Y_2}],
\end{align}
which implies \eqref{SJ1}.
Here, Step $(a)$ follows from
the Markovian condition $ X_1-Y_1-Y_2$
of $P_{X_1,Y_1,X_2,Y_2} \in {\cal M}[{\cal S}_G]$.

The definition of $\hat{C}_{1\to 2}[P_{X_1,Y_1,X_2,Y_2}]$
implies
\begin{align}
&\Tr (G_{t_1}\otimes G_{t_2})
\hat{C}_{1\to 2}[P_{X_1,Y_1,X_2,Y_2}]^{T_1}\notag\\
=&
\sum_{x_1,x_2} g_{x_1|t_1} g_{x_2|t_2}
P_{X_2|X_1,Y_1,Y_2}(x_2|x_1,y(t_1),y(t_2)) , \Label{XK1}\\
&\Tr (I \otimes G_{t_2})
\hat{C}_{1\to 2}[P_{X_1,Y_1,X_2,Y_2}]^{T_1}\notag\\
=&
d
\Big(\sum_{x_2} g_{x_2|t_2}
P_{X_2|Y_1,Y_2}(x_2|0,y(t_2)) \notag\\
&-\sum_{t_1,t_1'\neq 0}
\sum_{x_1} g_{x_1|t_1}
P_{X_1|Y_1}(x_1|y(t_1))
 h_{t_1,t_1'}
\sum_{x_1',x_2} g_{x_1'|t_1'} g_{x_2|t_2}
P_{X_2|X_1,Y_1,Y_2}(x_2|x_1',y(t_1'),y(t_2)) \Big)\Label{XK2}
\end{align}
for $t_1 \in {\cal T}\setminus \{0\}$ and $t_2 \in {\cal T}$.
The combination of the condition (C1) and \eqref{XK2}
implies
\begin{align}
\Tr (I \otimes G_{t_2})
\hat{C}_{1\to 2}[P_{X_1,Y_1,X_2,Y_2}]^{T_1}
=\sum_{x_2} g_{x_2|t_2}
\sum_{x_1}
P_{X_2|X_1,Y_1,Y_2}(x_2|x_1,y_1,y_2).
\Label{XK3}
\end{align}
For $z_1\in {\cal Z}_{y_1}'$
and $z_2\in {\cal Z}_{y_2}'$,
the combination of \eqref{XK1} and \eqref{XK3}
with $t_1=(y_1,z_1)$ and $t_2=(y_2,z_2)$
implies
\begin{align}
&
\sum_{x_1,x_2}
g_{x_1|y_1,z_1} g_{x_2|y_2,z_2}
\Tr
(|x_1,y_1\rangle \langle x_1,y_1|\otimes
|x_2,y_2\rangle \langle x_2,y_2|)
\hat{C}_{1\to 2}[P_{X_1,Y_1,X_2,Y_2}]^{T_1}\notag\\
=&
\sum_{x_1,x_2}
g_{x_1|y_1,z_1} g_{x_2|y_1,z_1}
P_{X_2|X_1,Y_1,Y_2}(x_2|x_1,y_1,y_2) \Label{XK4}.
\end{align}
Thus,
\begin{align}
&
\Tr
(|x_1,y_1\rangle \langle x_1,y_1|\otimes
|x_2,y_2\rangle \langle x_2,y_2|)
\hat{C}_{1\to 2}[P_{X_1,Y_1,X_2,Y_2}]^{T_1}\notag\\
=&
\sum_{z_1,z_2}
f_{x_1,y_1,z_1}
f_{x_2,y_2,z_2}
\sum_{x_1',x_2'}
g_{x_1'|y_1,z_1} g_{x_2'|y_2,z_2}
\Tr
(|x_1',y_1\rangle \langle x_1',y_1|\otimes
|x_2',y_2\rangle \langle x_2',y_2|)
\hat{C}_{1\to 2}[P_{X_1,Y_1,X_2,Y_2}]^{T_1}\notag\\
=&
\sum_{z_1,z_2}
f_{x_1,y_1,z_1}
f_{x_2,y_2,z_2}
\sum_{x_1',x_2'}
g_{x_1'|y_1,z_1} g_{x_2'|y_1,z_1}
P_{X_2|X_1,Y_1,Y_2}(x_2'|x_1',y_1,y_2) \notag\\
=&
P_{X_2|X_1,Y_1,Y_2}(x_2|x_1,y_1,y_2) \Label{XK5}.
\end{align}
Hence, we obtain \eqref{SJ2}.

We finally prove the converse implication in the moreover statement. Suppose that $P\in{\cal M}[{\cal S}_G]$ satisfies the Markovian condition $X_1-Y_1-Y_2$ and that the reconstructed pair reproduces $P$ in the sense of \eqref{EE54}. Then
\begin{align}
P_{X_2\mid X_1,Y_1,Y_2}(x_2\mid x_1,y_1,y_2)
&=\Tr\!\Bigl[\bigl(|x_1,y_1\rangle\langle x_1,y_1|\otimes|x_2,y_2\rangle\langle x_2,y_2|\bigr)\hat{C}_{1\to2}[P]^{T_1}\Bigr].
\Label{SJ2C}
\end{align}
Summing \eqref{SJ2C} over $x_1$ and using
\begin{align}
\sum_{x_1\in{\cal X}_{y_1}}|x_1,y_1\rangle\langle x_1,y_1|=I,
\end{align}
we obtain
\begin{align}
\sum_{x_1}P_{X_2\mid X_1,Y_1,Y_2}(x_2\mid x_1,y_1,y_2)
&=\Tr\!\Bigl[\bigl(I\otimes|x_2,y_2\rangle\langle x_2,y_2|\bigr)\hat{C}_{1\to2}[P]^{T_1}\Bigr].
\Label{XK5C}
\end{align}
On the other hand, the definition \eqref{VBTR} of $\hat{C}_{1\to2}[P]$, equivalently \eqref{XK2}, together with the expansion of $|x_2,y_2\rangle\langle x_2,y_2|$ in the basis $\{G_{y_2,z_2}\}_{z_2\in{\cal Z}_{y_2}'}$, shows that the right-hand side of \eqref{XK5C} is equal to
\begin{align}
&d P_{X_2\mid Y_1,Y_2}(x_2\mid0,y_2)\notag\\
&\quad-d\sum_{t_1,t_1'\neq0}\sum_{x_1}g_{x_1\mid t_1}P_{X_1\mid Y_1}(x_1\mid y(t_1))h_{t_1,t_1'}\sum_{x_1'}g_{x_1'\mid t_1'}P_{X_2\mid X_1,Y_1,Y_2}(x_2\mid x_1',y(t_1'),y_2).
\Label{XK6C}
\end{align}
Combining \eqref{XK5C} and \eqref{XK6C} yields \eqref{NM5}. Since condition~\emph{(C0)} is assumed, condition~\emph{(C1)} follows. This proves the converse implication and completes the proof of the equivalence in the moreover statement.
\end{proof}

Theorem~\ref{TH10} separates algebraic consistency from physical realizability. The directional Markovian condition fixes the relevant one-way non-signaling structure. Within the class of distributions satisfying this condition, condition~(C1) is equivalent to exact reproduction of the observed distribution by the reconstructed state and candidate Choi matrix. It follows that any alternative condition sufficient for such reproduction must imply condition~(C1). Thus condition~(C1) is minimal with respect to logical implication. Positivity of the reconstructed Choi matrix is the additional requirement that promotes the candidate map to a physical channel. The examples in Subsection~\ref{S9B} separately show that positivity and condition~(C1) cannot replace each other.
By summarizing \eqref{ZL1}, 
\eqref{CWSI},
Theorems \ref{TH4I}, and
\ref{TH8},
the conditions for the classes in the hierarchies \eqref{HI1}
are characterized as follows.
\begin{align}
\begin{array}{ccccc}
X_1-Y_1-Y_2~\&~ \hat{C}_{1\to2}\ge0~\&~(C1) & \supset &X_1-Y_1-Y_2-X_2 &\subset &
Y_1-Y_2-X_2~\&~ \hat{C}_{2\to1}\ge0 ~\&~(C2) \\
\cap &  &\cap & & \cap  \\
X_1-Y_1-Y_2 & \supset &X_1-Y_1-Y_2 ~\& ~Y_1-Y_2-X_2
~\&~ R_{1,2}\ge 0
&\subset & Y_1-Y_2-X_2
\end{array}\Label{HI16}
\end{align}

However, the conditions given in \eqref{HI16}
show only sufficient conditions for the conditions in \eqref{HI1}.

\begin{remark}
The full-support assumption \emph{(C0)} is imposed only to ensure that the
conditional probabilities appearing in \emph{(C1)} are well defined.
When \emph{(C0)} fails, the observed distribution necessarily assigns zero
probability to some rank-one projective outcome.
Under \emph{(A3)}, this means that the relevant state is supported on the
orthogonal complement of the corresponding projector, so that the analysis may
be reformulated from the outset on the smaller support subspace.
After this compression of the input space and removal of the impossible
outcomes, the same reconstruction argument applies to the reduced model.
Thus Theorem~\ref{TH10} should be understood as the nondegenerate
(full-support) version of the general support-restricted statement.
\end{remark}

\section{Distinguishability between ${\cal M}[{\cal S}_{N,1\to 2}]$ 
and ${\cal M}[{\cal S}_{N,2\to 1}]$ under qubit-Pauli matrices}\Label{S7}
We next specialize to qubit systems with Pauli projective measurements.  In
this section the observed distribution is assumed to arise from a forward
memoryless sequential strategy, and the question is whether the same
distribution also admits a reverse memoryless sequential representation.  This
restricted reverse problem is the order-indistinguishability problem.  The
assumed forward representation makes it possible to express the reverse
conditions directly in terms of the input state and channel parameters.

\subsection{Formulation for two qubits}\Label{S7-1}
We now specialize the general framework to the minimal nontrivial setting in which the order-indistinguishability problem can be analyzed explicitly. 
The aim of this section is not to solve the full reverse-direction membership problem for arbitrary observed distributions. 
Instead, we restrict attention to distributions already known to arise from a forward memoryless sequential strategy. 
Starting from a forward memoryless sequential strategy 
$S=S_{1\to 2}(\rho_1,C_{1\to 2}) \in {\cal S}_{N,1\to 2}$, 
we ask whether the resulting observed distribution ${\cal M}[S]$ can also be explained by a memoryless sequential strategy of the opposite direction. 
Equivalently, we ask when ${\cal M}[S]$ belongs to the intersection 
${\cal M}[{\cal S}_{N,1\to 2}] \cap {\cal M}[{\cal S}_{N,2\to 1}]$. 
This is the sense in which we use the term order-indistinguishability in the present section.

Similar to \cite{Liu2}, as a special case of Section~\ref{S5-2}, we consider
the case in which
${\cal H}_{I,1}$, ${\cal H}_{O,1}$, ${\cal H}_{I,2}$, and ${\cal H}_{O,2}$
are qubit systems, and the measurements are given by projective measurements
of the Pauli matrices
\begin{align}
\sigma_0&:=
\left(
\begin{array}{cc}
1 & 0 \\
0 & 1
\end{array}
\right),~
\sigma_1:=
\left(
\begin{array}{cc}
0 & 1 \\
1 & 0
\end{array}
\right), \\
\sigma_2&:=
\left(
\begin{array}{cc}
0 & -\sqrt{-1} \\
\sqrt{-1} & 0
\end{array}
\right),~
\sigma_3:=
\left(
\begin{array}{cc}
1 & 0 \\
0 & -1
\end{array}
\right).
\end{align}
That is, the state reduction on ${\cal H}_{(I,i)}$ to ${\cal H}_{(O,i)}$ is
given by the projection hypothesis for Pauli measurements.

Both Alice and Bob independently choose one of these Pauli measurements at
random.
Hence, $Y_1$ and $Y_2$ take values in $\{0,1,2,3\}$ and are uniformly
distributed, while $X_1$ and $X_2$ record the corresponding outcomes.
When $\sigma_0$ is chosen, there is no nontrivial measurement outcome.
Thus, the pair $(X_i,Y_i)$ takes one of seven possible values for each
$i=1,2$.

This measurement scheme is genuinely restricted and is not tomographically
complete.
In the present qubit-Pauli setting, the family
$\{C[\Gamma_{1,z_1}]\otimes C[\Gamma_{2,z_2}]\}_{z_1,z_2}$
contains at most $49$ linearly independent elements.
On the other hand, Appendix~\ref{app:qubit-counting} shows that the relevant
quotient space has dimension $88$.
Hence these tensors cannot span the quotient space, and exact process-matrix
reconstruction is impossible in this setting.
Accordingly, the underlying process matrix cannot in general be reconstructed
from the observed distribution.
The role of the present two-qubit analysis is therefore not process-level
reconstruction, but the explicit study of order distinguishability directly at
the level of the observed statistics.

At the same time, this Pauli setting satisfies the assumptions (A1), (A2),
and (A3).
In particular, the reconstructed quantities
$\hat{C}_{1\to 2}$ and $\tilde{C}_{1\to 2}$ coincide, so that they can be
used as concrete distribution-level tools in the analysis below.

We now turn to the central question of this section.
Starting from a forward memoryless sequential strategy
$S=S_{1\to 2}(\rho_1,C_{1\to 2}) \in {\cal S}_{N,1\to 2}$,
we ask whether the resulting observed distribution ${\cal M}[S]$ can also be
explained by a memoryless sequential strategy of the opposite direction.
Equivalently, we ask when ${\cal M}[S]$ belongs to the intersection
${\cal M}[{\cal S}_{N,1\to 2}] \cap {\cal M}[{\cal S}_{N,2\to 1}]$.

By Corollary~\ref{COR9B}, an element ${\cal M}[S]\in {\cal M}[{\cal S}_{N,1\to 2}]$
belongs to this intersection if and only if the following three conditions hold:
\begin{description}
\item[(D1)] ${\cal M}[S]$ satisfies $Y_1-Y_2-X_2$.
\item[(D2)] ${\cal M}[S_{2\to 1}(\rho_2[{\cal M}[S]], \tilde{C}_{2\to 1}[{\cal M}[S]])]$
satisfies $X_1-Y_1-Y_2$.
\item[(D3)] $\tilde{C}_{2\to 1}[{\cal M}[S]]\ge 0$.
\end{description}
The first two conditions express directional non-signaling/Markov structure,
whereas the third is the complete-positivity requirement for the reconstructed
reverse-direction channel.
Among them, (D3) is the hardest condition to check explicitly.

For this reason, we proceed in two stages.
We first analyze a \emph{general qubit channel} in a family-independent
manner, which reveals the structural form of (D1)--(D3) as a function of the
rank parameter $t$.
We then specialize to representative channel families---Pauli,
phase-damping, and depolarizing channels---for which the indistinguishable
region can be described by explicit inequalities.

\subsection{General qubit channel}\Label{S7-3}
We first study a general qubit channel before specializing to particular
families.
The point of this subsection is to translate the abstract reverse-compatibility
conditions (D1)--(D3) into explicit constraints on the parameters of a forward
memoryless sequential strategy
$S=S_{1\to 2}(\rho_1,C_{1\to 2})$.
This family-independent analysis isolates the geometric structure behind the
order-indistinguishable region and explains why the later channel-specific
formulas take the form they do.

In this section, we employ the formulas
\begin{align}
I \circ \sigma_i=\sigma_i, \quad \sigma_j \circ \sigma_i=\delta_{i,j} I
\Label{CZER}
\end{align}
for $i,j=1,2,3$.
We parameterize a forward memoryless sequential strategy
$S=S_{1\to 2}(\rho_1, C_{1\to 2})\in {\cal S}_{N,1\to 2}$ as follows.
The initial state $\rho_1$ is written as
\begin{align}
\rho_1=\frac{1}{2}(I+ \sum_{i=1}^3 c^1_i \sigma_i)
\end{align}
with the condition $\sum_{i=1}^3 (c^1_i)^2\le 1$.
The Choi matrix $C_{1\to 2}$ of a qubit TP-CP map $\Lambda_{1\to 2}$ can be
written as~\cite{Choi1975}
\begin{align}
C_{1\to 2}^{T_1}= I \otimes \rho_2+ \sum_{j=1}^t \frac{\lambda_j}{2}
\alpha_j \otimes \beta_j,\Label{HJ8}
\end{align}
where $\alpha_j$ and $\beta_j$ are traceless matrices,
$\alpha_1,\alpha_2,\alpha_3$ are orthogonal to each other,
$\beta_1,\beta_2,\beta_3$ are orthogonal to each other,
$\|\alpha_j\|=\|\beta_j\|=1$, and $1\ge \lambda_j> 0$.
Here, the qubit state $\rho_2$ appearing in \eqref{HJ8} can be written in
Bloch form as
\begin{align}
\rho_2=\frac{1}{2}(I+ \sum_{i=1}^3 c^2_i \sigma_i).
\end{align}
The rank parameter $t\in\{0,1,2,3\}$ will determine the form of the feasible
region below.

We begin with condition (D1), which asks whether the observed distribution of
the forward strategy is compatible with the non-signaling structure required
from the reverse direction.
The following lemma shows that this already imposes strong restrictions on the
input Bloch vector $\rho_1$, and that the form of these restrictions depends
only on the rank parameter $t\in\{0,1,2,3\}$.

\begin{lemma}\Label{LBU1}
$S_{1\to 2}(\rho_1, C_{1\to 2})$ satisfies the condition (D1)
if and only if the pair satisfies one of the following four cases depending on $t$:
\begin{description}
\item[(1)] $t=3$: $c^1_i=0$ for $i=1,2,3$.
\item[(2)] $t=2$: at least two $i$ satisfy $c^1_i=0$; for instance, if $c^1_2=c^1_3=0$, then
$c^1_1\Tr \alpha_1\sigma_1 =c^1_1 \Tr \alpha_2\sigma_1=0$.
\item[(3)] $t=1$: at least one $i$ satisfies $c^1_i=0$; for instance, if $c^1_3=0$, then
$c^1_1 \Tr \alpha_1\sigma_1 =c^1_2 \Tr \alpha_1\sigma_2=0$.
\item[(4)] $t=0$: no further condition is required.
\end{description}
\end{lemma}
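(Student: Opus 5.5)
The plan is to write out the condition $Y_1-Y_2-X_2$ explicitly for the forward strategy, reduce it to a bilinear system in the Bloch components $c^1_i$ and the coefficients $\Tr\alpha_j\sigma_i$, and then sort out that system according to the rank $t$.

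\emph{Step 1: reduce (D1) to a condition on Bob's input state.} For a forward strategy $S_{1\to 2}(\rho_1,C_{1\to 2})$, Bob's outcome marginal for setting $y_2$ depends on $y_1$ only through Bob's input state. Averaged over $x_1$, that state is $\Lambda_{1\to 2}\bigl(\sum_{x_1}E_{x_1|y_1}\rho_1E_{x_1|y_1}\bigr)$. For $y_1=0$ it is simply $\Lambda_{1\to2}(\rho_1)$. The Pauli settings are tomographically complete on Bob's side, so (D1) holds iff
\begin{align}
\Lambda_{1\to2}\bigl(\mathcal D_{y_1}(\rho_1)\bigr)=\Lambda_{1\to2}(\rho_1)\quad\text{for } y_1=1,2,3,
\end{align}
where $\mathcal D_{y_1}$ denotes dephasing in the eigenbasis of $\sigma_{y_1}$. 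In Bloch form, $\mathcal D_{y_1}(\rho_1)=\frac12(I+c^1_{y_1}\sigma_{y_1})$.

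\emph{Step 2: use the form \eqref{HJ8} of the channel.} The convention \eqref{eq:choi-convention} and \eqref{HJ8} give $\Lambda_{1\to2}(\rho)=\Tr_1[(\rho\otimes I)C_{1\to2}^{T_1}]=\rho_2+\sum_{j\le t}\frac{\lambda_j}{2}\Tr(\alpha_j\rho)\,\beta_j$; I will double-check the transpose placement, but it only affects signs. The $\beta_j$ are orthonormal and the $\lambda_j>0$. Hence the condition in Step 1 is equivalent to
\begin{align}
\sum_{i\neq y_1}c^1_i\,a_{ji}=0\qquad\text{for all } j\le t,\ y_1\in\{1,2,3\},
\end{align}
where $a_{ji}:=\Tr\alpha_j\sigma_i$. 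Fix $j$. There are three equations, one for each $y_1$, each listing the pairs $\{2,3\},\{1,3\},\{1,2\}$. Adding two of them and subtracting the third isolates each term, so the system is equivalent to
\begin{align}
c^1_i\,a_{ji}=0\qquad\text{for all } i\in\{1,2,3\},\ j\le t.
\end{align}

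\emph{Step 3: case analysis on $t$ (the only real content).} For $t=0$ there are no constraints. For $t=3$, the $\alpha_j$ form an orthonormal basis of the traceless space, so for each $i$ some $a_{ji}\neq0$. This forces $c^1_i=0$ for all $i$. For $t=2$, the span of $\alpha_1,\alpha_2$ is a 2-dimensional subspace of $\mathrm{span}\{\sigma_1,\sigma_2,\sigma_3\}$. Such a subspace cannot lie inside a single coordinate axis, so at least two indices $i$ have $(a_{1i},a_{2i})\neq0$. This forces at least two $c^1_i=0$, and the surviving condition is exactly $c^1_1\Tr\alpha_1\sigma_1=c^1_1\Tr\alpha_2\sigma_1=0$ (up to relabeling). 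For $t=1$, $\alpha_1\neq0$ has some nonzero $a_{1i}$, so at least one $c^1_i=0$, and the remaining conditions are the two stated products. Conversely, in each case the stated conditions are precisely the residual equations $c^1_ia_{ji}=0$, which gives the "if" direction.

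I expect the main obstacle to be bookkeeping rather than substance. Two points need care: keeping the transpose convention consistent when passing from $C_{1\to2}^{T_1}$ to $\Lambda_{1\to2}$, and phrasing the $t=2$ dimension argument so that the "at least two vanishing components" claim and the residual condition match the statement exactly.
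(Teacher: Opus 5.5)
Your proposal is correct and follows essentially the same route as the paper. You reduce (D1) to $\Lambda_{1\to2}\bigl(\frac12(I+c^1_i\sigma_i)\bigr)=\Lambda_{1\to2}(\rho_1)$, expand via \eqref{HJ8} using $\lambda_j>0$ and the independence of the $\beta_j$ to reach $c^1_i\Tr(\alpha_j\sigma_i)=0$, and split by $t$; your add-two-subtract-one step and your $t=2$ rank argument (correctly \emph{at least} two nonzero rows, where the paper's text says ``at most'') just make explicit what the paper asserts without justification.
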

\noindent\emph{Proof deferred to Appendix~\ref{app:proof-LBU1}.}

Once (D1) is imposed, the reconstructed quantities simplify considerably.
In particular, the next lemma identifies the reconstructed marginal state
$\rho_2[{\cal M}[S]]$ and the reconstructed bipartite operator
${\cal R}[{\cal M}[S]]$ explicitly.
It also translates condition (D2) into the corresponding restrictions on the
output Bloch vector $\rho_2$.
Thus, Lemma~\ref{LBU2} should be read as the companion to
Lemma~\ref{LBU1}: the first lemma resolves the input-side restriction coming
from (D1), while the second resolves the output-side restriction coming from
(D2).

\begin{lemma}\Label{LBU2}
Assume that $S=S_{1\to 2}(\rho_1, C_{1\to 2})$ satisfies the condition (D1).
\begin{enumerate}[(i)]
\item We have
\begin{align}
\rho_{2}[{\cal M}[S]]&= \rho_2 \Label{DSJK2}\\
{\cal R}[{\cal M}[S]]&= \rho_1 \otimes \rho_2 + \sum_{j=1}^t \frac{\lambda_j}{4}
\alpha_j \otimes \beta_j . \Label{DSJK}
\end{align}
\item $S$ satisfies the condition (D2)
if and only if the pair $(\rho_1, C_{1\to 2})$ satisfies one of the following four cases:
\begin{description}
\item[(1)] $t=3$: $c^2_i=0$ for $i=1,2,3$.
\item[(2)] $t=2$: at least two $i$ satisfy $c^2_i=0$; for instance, if $c^2_2=c^2_3=0$, then
$c^2_1\Tr \alpha_1\sigma_1 =c^2_1 \Tr \alpha_2\sigma_1=0$.
\item[(3)] $t=1$: at least one $i$ satisfies $c^2_i=0$; for instance, if $c^2_3=0$, then
$c^2_1 \Tr \alpha_1\sigma_1 =c^2_2 \Tr \alpha_1\sigma_2=0$.
\item[(4)] $t=0$: no further condition is required.
\end{description}
\end{enumerate}
\end{lemma}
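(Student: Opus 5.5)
Part (i) is obtained from Lemma~\ref{LJG1}, which applies because the Pauli setting satisfies (A2). It gives ${\cal R}[{\cal M}[S]]=C_{1\to2}^{T_1}\circ(\rho_1\otimes I)$, which is the unique compatible PDM by (A1). Substituting \eqref{HJ8} and $\rho_1=\frac12(I+\sum_i c^1_i\sigma_i)$, I expand the Jordan product $(I\otimes\rho_2)\circ(\rho_1\otimes I)=\rho_1\otimes\rho_2$. For the correlation terms I use \eqref{CZER} on the first tensor factor. Since $\alpha_j$ is traceless, $\rho_1\circ\alpha_j=\frac12\alpha_j+\frac12(\sum_i c^1_i\Tr(\alpha_j\sigma_i)/2)\,I$, so
\[
{\cal R}[{\cal M}[S]]=\rho_1\otimes\rho_2+\sum_{j=1}^t\frac{\lambda_j}{4}\alpha_j\otimes\beta_j+\sum_{j=1}^t\frac{\lambda_j}{4}\Bigl(\sum_i c^1_i\tfrac{\Tr\alpha_j\sigma_i}{2}\Bigr) I\otimes\beta_j .
\]
The last sum is a term of the form $I\otimes(\text{traceless})$. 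It is exactly the $\rho_1$-dependent shift in Bob's marginal that (D1) forbids. Lemma~\ref{LBU1} is the case-by-case statement that $c^1_i\Tr\alpha_j\sigma_i$ vanishes in the relevant combinations, and I would check that its cases force $\sum_i c^1_i\Tr(\alpha_j\sigma_i)=0$ for each $j$. Equivalently, (D1) means that Bob's statistics with Alice idle coincide with the average over Alice's outcomes, and this gives $\Lambda(\rho_1)=\Lambda(I/2)$, i.e.\ the $\alpha_j$-components of $\rho_1$ vanish. The extra term therefore drops out, which gives \eqref{DSJK}. Tracing out system $1$ and using $\Tr\alpha_j=0$ gives \eqref{DSJK2}.

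For part (ii), since $\rho_2[{\cal M}[S]]=\rho_2$, I define $\tilde C_{2\to1}$ through \eqref{XSA2}. Assuming $\rho_2>0$ for invertibility, with boundary cases by continuity or support compression as in the remark after Theorem~\ref{TH10}, I would solve $\tilde C_{2\to1}^{T_2}\circ(I\otimes\rho_2)={\cal R}$. Then ${\cal M}[S_{2\to1}(\rho_2,\tilde C_{2\to1})]$ is a reverse sequential distribution. By Theorem~\ref{TH4I} with roles swapped, it satisfies $Y_1-Y_2-X_2$ automatically, and (D2) asks for $X_1-Y_1-Y_2$ in the reverse model. By the mirror image of the computation in Lemma~\ref{LBU1}, this reverse signalling condition is that $\tilde\Lambda_{2\to1}(\rho_2)=\tilde\Lambda_{2\to1}(I/2)$, i.e.\ the Bloch vector of $\rho_2$ has no component along the input directions of the reverse map. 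Because \eqref{DSJK} is symmetric in the two factors, apart from $\rho_1\otimes\rho_2$, the reverse map has correlation directions $\beta_j$ on Bob's side. Its contribution to Alice's marginal is $\sum_j\lambda_j(\sum_i c^2_i\Tr\beta_j\sigma_i)\alpha_j$, up to constants.

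I expect the main obstacle to be inverting the Jordan product with $I\otimes\rho_2$ explicitly when $c^2\neq0$. One has to verify that the resulting $\tilde C_{2\to1}^{T_2}$ has the form $\rho_1'\otimes I+\sum_j\mu_j\alpha_j\otimes\beta_j'$ plus correction terms, and that the condition for those corrections to vanish reduces exactly to the stated casework. I would avoid the full inversion and test only the quantity that matters: I would compute $\Tr_2[\tilde C_{2\to1}^{T_2}(I\otimes\rho_2)]$-type marginals directly from the moments of ${\cal R}$ via Lemma~\ref{LJG1} applied in reverse.

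Finally, the casework in $t$ follows exactly as in Lemma~\ref{LBU1}, with $(c^1,\alpha)$ replaced by $(c^2,\beta)$. For $t=3$ the $\beta_j$ span all traceless directions, so $c^2=0$. For $t=0$ there is nothing to check. For $t=1,2$ one obtains orthogonality of $c^2$ to the span of the $\beta_j$, which is phrased coordinatewise as in the statement, after noting the statement's $\alpha$ labels should correspond to the output-side directions in the relevant Pauli basis.
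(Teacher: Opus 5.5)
Your part (i) is correct and matches the paper's argument. Your reading of (ii) with $\beta_j$ in place of $\alpha_j$ is also the right one. The gap is in (ii), where you turn the implication (D1)$\Rightarrow\Lambda(\rho_1)=\Lambda(I/2)$ into an equivalence. You take the mirrored (D2) to mean $\tilde\Lambda_{2\to1}(\rho_2)=\tilde\Lambda_{2\to1}(I/2)$, i.e.\ $\vec c^{\,2}\perp\mathrm{span}\{\beta_j\}$. You then assert that this orthogonality is the coordinatewise condition of the statement. It is not.

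(D2) is the chain $X_1-Y_1-Y_2$ for the reverse model, so Alice's marginal must agree for each of Bob's settings separately. Bob's setting $y_2=i$ feeds the reverse map the dephased state $\tfrac12(I+c^2_i\sigma_i)$. Hence (D2) means $\tilde\Lambda_{2\to1}(\tfrac12(I+c^2_i\sigma_i))=\tilde\Lambda_{2\to1}(\rho_2)$ for each $i=1,2,3$. Summing these gives your condition, but the individual equations then force $c^2_i\,\tilde\Lambda_{2\to1}(\sigma_i)=0$ for every $i$. This is the mirror of the step \eqref{BF1}$\Rightarrow$\eqref{BF2} in Lemma~\ref{LBU1}, and it is what produces the `at least two/one $c^2_i=0$' structure.

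Orthogonality is strictly weaker. Take $t=1$, $\rho_1=I/2$, $\alpha_1\propto\sigma_3$, $\beta_1\propto\sigma_1+\sigma_2$, and $\rho_2=\frac12(I+c\sigma_1-c\sigma_2)$ with small $c\neq0$ and small $\lambda_1$. Then (D1) holds and $\vec c^{\,2}\perp\beta_1$. Yet when Bob measures $\sigma_1$, Alice's marginal is shifted by a nonzero multiple of $c\lambda_1\alpha_1$ relative to Bob idling, so (D2) fails, exactly as case (3) (read with $\beta$) predicts. With your criterion the `if' direction would be false, and the `only if' direction would deliver only the weaker conclusion.

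The repair is short, and it also removes the Jordan-product inversion you flagged. Write $\tilde C_{2\to1}^{T_2}=A\otimes I+\sum_k B_k\otimes\sigma_k$. Then $\tilde\Lambda_{2\to1}(I/2)=A$, while $\tilde\Lambda_{2\to1}(\rho_2)=\Tr_2{\cal R}[{\cal M}[S]]=\rho_1$. The summed equation therefore says $A=\rho_1$. Comparing the $\otimes\sigma_k$ components of $\tilde C_{2\to1}^{T_2}\circ(I\otimes\rho_2)$ with \eqref{DSJK} then gives $\tilde C_{2\to1}^{T_2}=\rho_1\otimes I+\sum_j\frac{\lambda_j}{2}\alpha_j\otimes\beta_j$, which is \eqref{VNI1}. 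The per-setting equations now read $c^2_i\Tr(\beta_j\sigma_i)=0$ for all $i,j$, which is the stated casework.

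For the converse, these conditions give $\beta_j\circ\rho_2=\frac12\beta_j$. So the candidate satisfies \eqref{XSA2}, equals $\tilde C_{2\to1}$ by uniqueness, and (D2) follows. This is essentially the paper's route: it applies Lemma~\ref{LBU1} to the reverse pair and identifies the reverse input directions with the $\beta_j$ by comparing the two expressions for ${\cal R}$. For the converse it verifies the explicit candidate \eqref{HU5}.
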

\noindent\emph{Proof deferred to Appendix~\ref{app:proof-LBU2}.}

After the directional constraints (D1) and (D2) have been reduced to explicit
conditions on $\rho_1$ and $\rho_2$, the remaining task is positivity.
We first examine the positivity of the original forward Choi matrix
$C_{1\to 2}$ itself.
The following lemma gives the corresponding positivity region under the
constraints already imposed by (D1) and (D2).

\begin{lemma}\Label{LBU0}
Assume that $S$ satisfies the conditions (D1) and (D2).
The relation $C_{1\to 2}\ge 0$ is equivalent to the following condition.
\begin{description}
\item[(1)] $t=3$.
There are two local-unitary normal forms.
\begin{description}
\item[(A)] The positivity condition is equivalent to
$1\ge \max(-\lambda_1 +\lambda_2+\lambda_3,
\lambda_1 -\lambda_2+\lambda_3,
\lambda_1 +\lambda_2-\lambda_3)$.
\item[(B)] The positivity condition is equivalent to
$1\ge \lambda_1+\lambda_2+\lambda_3$.
\end{description}
\item[(2)] $t=2$: $1\ge (\lambda_1+\lambda_2)^2+(c^2_1)^2$.
\item[(3)] $t=1$: $1 \ge \lambda_1^2+(c^2_1)^2+(c^2_2)^2$.
\item[(4)] $t=0$: no further condition is imposed.
\end{description}
\end{lemma}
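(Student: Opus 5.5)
The plan is to turn $C_{1\to2}\ge0$ into a statement about the explicit operator $C_{1\to2}^{T_1}$ in \eqref{HJ8}, after simplifying it by local unitaries and using the constraints on $\rho_2$ from Lemma~\ref{LBU2}(ii).

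\emph{Reduction to a normal form.} First note that $C_{1\to2}=(C_{1\to2}^{T_1})^{T_1}=I\otimes\rho_2+\sum_{j=1}^t\frac{\lambda_j}{2}\alpha_j^{T}\otimes\beta_j$. Transposition maps traceless Hermitian qubit operators to traceless Hermitian ones and preserves Hilbert--Schmidt inner products, so the $\alpha_j^T$ are still orthonormal. Positivity is invariant under $U\otimes V$ conjugation. Such a conjugation acts on the traceless parts by a pair of $SO(3)$ rotations of the Bloch frames. I will use it to bring $\alpha_j^T\mapsto \sigma_j/\sqrt2$ and $\beta_j\mapsto \pm\sigma_j/\sqrt2$, which turns the correlation part into a diagonal $\sum_j a_j\sigma_j\otimes\sigma_j$ with $|a_j|=\lambda_j/2$ up to the normalization fixed by \eqref{HJ8}. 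The constraints on $c^2$ from Lemma~\ref{LBU2}(ii) must be carried along; they say that the Bloch vector of $\rho_2$ is orthogonal to the relevant $\beta$-directions.

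For $t=3$ the signs of the $a_j$ can only be changed in pairs, since only proper rotations are available. This leaves exactly two classes, distinguished by the sign of the determinant of the correlation matrix. These are the two normal forms (A) and (B). For $t\le2$ there is a free orthogonal axis, so any reflection can be absorbed into a proper rotation and a single normal form suffices.

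\emph{Case computations.}
\begin{description}
\item[$t=3$.] Lemma~\ref{LBU2} forces $\rho_2=I/2$, so $C_{1\to2}$ is Bell-diagonal. Its four eigenvalues are, up to a positive factor, $1+a_1-a_2+a_3$, $1-a_1+a_2+a_3$, $1+a_1+a_2-a_3$ and $1-a_1-a_2-a_3$. In case (B) all $a_j=-\lambda_j$; the binding inequality is then $1\ge\lambda_1+\lambda_2+\lambda_3$, and the others hold automatically since $\lambda_j\le1$. In case (A) all $a_j=+\lambda_j$; this yields the three inequalities collected in the stated maximum, and the fourth eigenvalue $1+\sum_j\lambda_j$ is automatically positive.
\item[$t=2$.] Take the correlation axes to be $1,2$. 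The Bloch vector of $\rho_2$ then lies, after rotation, along the remaining axis, with length $|c^2_1|$. The operator $I\otimes I+c\,I\otimes\sigma_3+a_1\sigma_1\otimes\sigma_1+a_2\sigma_2\otimes\sigma_2$ is block diagonal on $\mathrm{span}\{|00\rangle,|11\rangle\}\oplus\mathrm{span}\{|01\rangle,|10\rangle\}$. Its eigenvalues are $1\pm\sqrt{c^2+(a_1-a_2)^2}$ and $1\pm\sqrt{c^2+(a_1+a_2)^2}$, and the worst case gives $1\ge(\lambda_1+\lambda_2)^2+(c^2_1)^2$.
\item[$t=1$.] The same block computation applies, with the Bloch vector of $\rho_2$ ranging over the plane orthogonal to $\beta_1$. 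It gives eigenvalues $1\pm\sqrt{\lambda_1^2+(c^2_1)^2+(c^2_2)^2}$, hence the stated condition.
\item[$t=0$.] Here $C_{1\to2}=I\otimes\rho_2\ge0$ trivially.
\end{description}

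\emph{Main obstacle.} I expect the main difficulty to be the bookkeeping in the reduction step, not the eigenvalue computations. Three things must be checked. First, the constraints on $c^1$ and $c^2$ from Lemmas~\ref{LBU1} and \ref{LBU2} must be compatible with the chosen rotations, i.e.\ $\rho_2$ really lies in the span orthogonal to the $\beta_j$ so that the block structure holds. Second, the normalization constant relating $\lambda_j/2\cdot\alpha_j\otimes\beta_j$ to the coefficient of $\sigma_j\otimes\sigma_j$ must be tracked exactly, so that the identity channel sits on the boundary $\lambda_j=1$. Third, the transpose on the first system changes the sign of the $\sigma_2$-component; this flips the orientation class of the correlation matrix and so decides which sign pattern is labelled (A) versus (B).
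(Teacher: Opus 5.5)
Your approach is the paper's: conjugate by local unitaries into a Pauli-diagonal normal form, using (D2) to make the Bloch vector of $\rho_2$ orthogonal to the $\beta_j$, and then read positivity off the two $2\times 2$ blocks. That block computation is exactly what Lemma~\ref{LL31} packages. Your $t=2$, $t=1$ and $t=0$ computations are correct, and they handle $c^2=0$ uniformly. The $t=3$ case, however, contains a sign slip that contradicts your own eigenvalue list.

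\textbf{The sign slip at $t=3$.}
\begin{itemize}
\item With $a_j=+\lambda_j$, the binding eigenvalue is $1-a_1-a_2-a_3=1-\lambda_1-\lambda_2-\lambda_3$. This gives the sum condition $1\ge\lambda_1+\lambda_2+\lambda_3$.
\item With $a_j=-\lambda_j$, the eigenvalues are $1-\lambda_1+\lambda_2-\lambda_3$, $1+\lambda_1-\lambda_2-\lambda_3$, $1-\lambda_1-\lambda_2+\lambda_3$ and $1+\sum_j\lambda_j$. The first three give the max condition, and the fourth is automatically positive.
\end{itemize}
You assigned these the other way round.

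The identity channel, with $\rho_1=\rho_2=I/2$, is a concrete check; (D1) and (D2) hold for it. Its Choi matrix is
$C_{1\to2}=2\ket{\Phi^+}\bra{\Phi^+}=\frac12\bigl(I\otimes I+\sigma_1\otimes\sigma_1-\sigma_2\otimes\sigma_2+\sigma_3\otimes\sigma_3\bigr)\ge 0$.
Its correlation matrix has negative determinant. It is therefore properly rotated into your all-minus class, for which your stated criterion would demand $1\ge 3$.

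The fix is to swap the two patterns. The paper's (A) is the form $\lambda_1\sigma_1\otimes\sigma_1-\lambda_2\sigma_2\otimes\sigma_2+\lambda_3\sigma_3\otimes\sigma_3$, which is equivalent to all minus. Its (B) is all plus. Since the lemma only asserts that there are two local-unitary normal forms, the conclusion stands after this correction. Your remark about the partial transpose flipping the orientation class is the right way to see this. The identity channel has correlation $\mathrm{diag}(1,1,1)$ in $C_{1\to2}^{T_1}$ but $\mathrm{diag}(1,-1,1)$ in $C_{1\to2}$. Hence (A) is the class where the $\alpha$- and $\beta$-frames in \eqref{HJ8} have the same orientation.

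\textbf{Normalization.} The same example settles the normalization point you raised. Here $C_{1\to2}^{T_1}$ is the swap operator $I\otimes\frac{I}{2}+\sum_j\frac12\sigma_j\otimes\sigma_j$, with $\lambda_j=1$. So in \eqref{HJ8} the $\alpha_j,\beta_j$ must satisfy $\alpha_j^2=\beta_j^2=I$. You should rotate them to $\pm\sigma_j$, not to $\sigma_j/\sqrt2$. This yields $C_{1\to2}\simeq\frac12\bigl(I\otimes I+(\text{local term})+\sum_j a_j\sigma_j\otimes\sigma_j\bigr)$ with $|a_j|=\lambda_j$, which is what your case computations implicitly use.
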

\noindent\emph{Proof deferred to Appendix~\ref{app:proof-LBU0-LBU3}.}

The final step is to translate condition (D3), namely the complete positivity
of the reconstructed reverse-direction channel, into explicit inequalities.
The next lemma first writes the reconstructed reverse-direction Choi matrix
$\tilde{C}_{2\to 1}[{\cal M}[S]]$ in closed form and then gives the positivity
criterion for this matrix.
This is the key lemma for the order-indistinguishability problem, because (D3)
is the only genuinely channel-level constraint among (D1)--(D3).

\begin{lemma}\Label{LBU3}
Assume that $S$ satisfies the conditions (D1) and (D2).
\begin{enumerate}[(i)]
\item The following relation holds:
\begin{align}
\tilde{C}_{2\to 1}[{\cal M}[S]]^{T_2}
= \rho_1 \otimes I + \sum_{j=1}^t \frac{\lambda_j}{2} \alpha_j \otimes \beta_j
\Label{VNI1}
\end{align}
\item The relation $\tilde{C}_{2\to 1}[{\cal M}[S]]\ge 0$ is equivalent to the following condition.
\begin{description}
\item[(1)] $t=3$.
There are two local-unitary normal forms.
\begin{description}
\item[(A)] The positivity condition is equivalent to
$1\ge \max(-\lambda_1 +\lambda_2+\lambda_3,
\lambda_1 -\lambda_2+\lambda_3,
\lambda_1 +\lambda_2-\lambda_3)$.
\item[(B)] The positivity condition is equivalent to
$1\ge \lambda_1+\lambda_2+\lambda_3$.
\end{description}
\item[(2)] $t=2$: $1\ge (\lambda_1+\lambda_2)^2+(c^1_1)^2$.
\item[(3)] $t=1$: $1 \ge \lambda_1^2+(c^1_1)^2+(c^1_2)^2$.
\item[(4)] $t=0$: no further condition is imposed.
\end{description}
\end{enumerate}
\end{lemma}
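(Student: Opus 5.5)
Part (i) follows from the definition \eqref{XSA2} together with Lemma~\ref{LBU2}(i). By \eqref{DSJK2} and \eqref{DSJK},
\[
{\cal R}[{\cal M}[S]]=\rho_1\otimes\rho_2+\sum_{j=1}^t\frac{\lambda_j}{4}\alpha_j\otimes\beta_j ,
\]
so we must find the unique Hermitian $X=\tilde C_{2\to1}^{T_2}$ with $X\circ(I\otimes\rho_2)={\cal R}[{\cal M}[S]]$. The candidate is $X=\rho_1\otimes I+\sum_j\frac{\lambda_j}{2}\alpha_j\otimes\beta_j$, and we verify it term by term.

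First, $(\rho_1\otimes I)\circ(I\otimes\rho_2)=\rho_1\otimes\rho_2$ trivially. Second, $(\alpha_j\otimes\beta_j)\circ(I\otimes\rho_2)=\alpha_j\otimes(\beta_j\circ\rho_2)$. Since $\beta_j$ is a traceless unit-norm combination of Pauli matrices, the relations \eqref{CZER} give
\[
\beta_j\circ\rho_2=\tfrac12\beta_j+\tfrac12\bigl(\textstyle\sum_i c^2_i\Tr(\beta_j\sigma_i)/2\bigr)I .
\]
The first part produces exactly $\frac{\lambda_j}{4}\alpha_j\otimes\beta_j$. The second part is an extra $\alpha_j\otimes I$ term, and we must show it vanishes using the constraints of Lemma~\ref{LBU2}(ii). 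Case by case: for $t=3$ we have $c^2=0$; for $t=2$ and $t=1$ the stated conditions (e.g.\ $c^2_1\Tr\alpha_1\sigma_1=0$) are, after the local-unitary normalization implicit in \eqref{HJ8}, the same as $\langle c^2,\beta_j\rangle=0$ for each $j\le t$. This identification between the $\alpha$ and $\beta$ conditions has to be checked against the proof in Appendix~\ref{app:proof-LBU2}; it is the one delicate point of (i). Uniqueness then follows because $\rho_2=\rho_2[{\cal M}[S]]>0$ (which needs $|c^2|<1$; boundary cases are handled by continuity or by the remark after Theorem~\ref{TH10}), so the map $X\mapsto X\circ(I\otimes\rho_2)$ is invertible.

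For part (ii), positivity of $\tilde C_{2\to1}$ is equivalent to positivity of the matrix in \eqref{VNI1} with its partial transpose undone. Structurally this is the same as $C_{1\to2}^{T_1}=I\otimes\rho_2+\sum_j\frac{\lambda_j}{2}\alpha_j\otimes\beta_j$ with the roles of the two systems exchanged: $\rho_1$ now plays the role of $\rho_2$, and $\alpha_j$ that of $\beta_j$. The constraints of Lemma~\ref{LBU1} on $c^1$ mirror those of Lemma~\ref{LBU2} on $c^2$. Hence Lemma~\ref{LBU0}, applied with systems $1$ and $2$ swapped and $c^2$ replaced by $c^1$, yields the stated conditions (2)--(4) directly.

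Care is needed with which system is transposed, since transposition flips the sign of the $\sigma_2$ component and can change the sign pattern of the $\lambda_j$. For $t=3$ this sign pattern is what distinguishes normal form (A) from normal form (B). Note that partial transpose on system 1 versus system 2 differ by a full transpose, which preserves the spectrum, so positivity of $C^{T_2}$ transposed on system 2 is equivalent to positivity of the same operator transposed on system 1. Because both lemmas are stated in the same two normal forms, the case split carries over, and this step should reduce to bookkeeping.

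The main obstacle is that for $t=1,2$ positivity depends jointly on $c^1$ and on the $\lambda_j$. The cleanest route is to rotate $\alpha_1$ to $\sigma_3$ (and $\alpha_2$ to $\sigma_1$) locally. The resulting $4\times4$ matrix then decomposes into $2\times2$ blocks, and the minimal eigenvalue condition reduces to $1\ge(\lambda_1+\lambda_2)^2+(c^1_1)^2$ or to $1\ge\lambda_1^2+|c^1|^2$ respectively. I would redo this block computation explicitly rather than rely solely on the symmetry argument.
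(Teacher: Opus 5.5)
Your overall route is the paper's. For (i) you verify, via $(\alpha_j\otimes\beta_j)\circ(I\otimes\rho_2)=\alpha_j\otimes(\beta_j\circ\rho_2)$, that $\rho_1\otimes I+\sum_j\frac{\lambda_j}{2}\alpha_j\otimes\beta_j$ solves \eqref{XSA2}; uniqueness holds because $\rho_2[{\cal M}[S]]>0$ is built into the definition of $\tilde C_{2\to1}$. For (ii) you transfer Lemma~\ref{LBU0} by exchanging the two systems and finish with Lemma~\ref{LL31}, which is exactly what the appendix does. Your handling of the partial transposes and of the two $t=3$ normal forms is fine.

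The step that does not work as written is how you close the point you flagged in (i).

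\emph{Why your justification fails.} No local-unitary normalization turns the conditions on $\alpha_j$ printed in Lemma~\ref{LBU2}(ii) into $\sum_i c^2_i\Tr(\beta_j\sigma_i)=0$, because the unitaries on the two systems are independent. Concretely, take $t=1$, $\rho_1=I/2$, $\alpha_1=\sigma_3$, $\beta_1=\sigma_1$ and $c^2=(c,0,0)$ with $0<|c|\le 1-\lambda_1$. This is a valid channel satisfying (D1) and the printed condition $c^2_1\Tr\alpha_1\sigma_1=0$. Yet $\beta_1\circ\rho_2=\frac12\beta_1+\frac{c}{2}I$, so your candidate misses ${\cal R}[{\cal M}[S]]$ by $\frac{\lambda_1 c}{4}\sigma_3\otimes I$.

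\emph{The correct reason.} (D2) is the condition of Lemma~\ref{LBU1} applied to the reverse pair $(\rho_2,\tilde C_{2\to1})$, whose input-side (system~2) operators are the $\beta_j$. Hence it gives $c^2_i\Tr(\beta_j\sigma_i)=0$ for all $i,j$. The $\alpha_j$ in the statement of Lemma~\ref{LBU2}(ii) should be read as $\beta_j$; this is the form the paper actually uses in the proofs of Lemmas~\ref{LBU2} and \ref{LBU0}.

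\emph{A direct check that avoids the misprint.} Write $\beta_j=\sum_i b_{j,i}\sigma_i$, $s_j=\sum_i b_{j,i}c^2_i$ and $q=\sum_k (c^2_k)^2<1$. The unique solution of \eqref{XSA2} is your candidate plus $\sum_j\frac{\lambda_j s_j}{2(1-q)}\,\alpha_j\otimes\bigl(\sum_k c^2_k\sigma_k-I\bigr)$. Imposing (D2) on the corresponding reverse map forces $s_j=0$, and in fact $c^2_i b_{j,i}=0$ for all $i,j$.

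The $\beta$-form is also what makes your symmetry argument in (ii) exact, since it mirrors the componentwise orthogonality $c^1_i\Tr(\alpha_j\sigma_i)=0$ from Lemma~\ref{LBU1}. Finally, no continuity argument is needed when $\sum_k (c^2_k)^2=1$. $\tilde C_{2\to1}[P]$ is defined only when $\rho_2[P]>0$, and the remark after Theorem~\ref{TH10} concerns (C0), not this.
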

\noindent\emph{Proof deferred to Appendix~\ref{app:proof-LBU0-LBU3}.}

Taken together, Lemmas~\ref{LBU1}--\ref{LBU3} reduce the reverse-compatibility
problem for a general forward qubit strategy
$S_{1\to 2}(\rho_1,C_{1\to 2})$
to explicit constraints on a small number of geometric parameters.
In this sense, the general qubit analysis isolates the common structure behind
the order-indistinguishable region before any channel family is specified.
We now specialize to representative families---Pauli, phase-damping, and
depolarizing channels---for which these abstract constraints become concrete
and easily interpretable inequalities.

\subsection{Specific channels}

\subsubsection{Pauli channel}\Label{BASS}

We first consider the full Pauli family before passing to the more transparent
subfamilies treated later.
The point of this subsection is to make the reverse-compatibility conditions
(D1)--(D3) explicit in a canonical and analytically tractable channel family.
In particular, this family already contains the depolarizing channel as a
special case.

We consider the Pauli channel
\begin{align}
\Lambda(\rho):=
\sum_{j=0}^3 p_j \sigma_j \rho \sigma_j .
\end{align}
In particular, the depolarizing channel
\[
\Lambda_\lambda(\rho):=
(1-\lambda)\rho+\lambda \frac{I}{2}
\]
is obtained as the special case
$p_0=1-\frac{3}{4}\lambda$ and
$p_j=\frac{1}{4}\lambda$ for $j=1,2,3$.
Its Choi matrix can be written as
\begin{align}
C_{1\to 2}^{T_1}
=
I \otimes \frac{I}{2}
+\sum_{j=1}^3 \frac{\lambda_j}{2}\sigma_j\otimes\sigma_j,
\Label{HJ51}
\end{align}
where
$\lambda_j:=2p_0+2p_j-1$ for $j=1,3$ and
$\lambda_2:=-2p_0-2p_2+1$.

We fix the initial state in the form
\[
\rho_1=\frac{1}{2}(I+\kappa\sigma_1).
\]

\begin{lemma}\Label{LPauli}
For the Pauli channel with Choi matrix \eqref{HJ51} and initial state
$\rho_1=\frac12(I+\kappa\sigma_1)$, the reverse-compatibility conditions
(D1)--(D3) take the following form:
\begin{enumerate}[(i)]
\item
(D1) holds if and only if
\[
\kappa\lambda_1=0,
\]
that is, $\lambda_1=0$ or $\kappa=0$.

\item
(D2) holds if and only if
\[
\lambda_1=0,
\qquad\text{or}\qquad
\kappa=0,
\qquad\text{or}\qquad
\kappa=\pm1 \ \text{ and }\ \lambda_1<1.
\]

\item
For $|\kappa|<1$, condition (D3) is equivalent to
\begin{align}
1+ \frac{\kappa(\lambda_1^2 -1)}{\kappa^2\lambda_1^2-1}
+\frac{ \lambda_1 (\kappa^2 -1)}{\kappa^2\lambda_1^2-1}
&\ge 0,\Label{CBQ1-main}\\
1- \frac{\kappa(\lambda_1^2 -1)}{\kappa^2\lambda_1^2-1}
-\frac{ \lambda_1 (\kappa^2 -1)}{\kappa^2\lambda_1^2-1}
&\ge 0,\Label{CBQ2-main}\\
\left(\frac{(\kappa^2  \lambda_1 -1)( \lambda_1 -1)}{\kappa^2\lambda_1^2-1}\right)^2
=
\left(1-
\frac{ \lambda_1 (\kappa^2 -1)}{\kappa^2\lambda_1^2-1}
\right)^2
& \ge
\left(\frac{\kappa(\lambda_1^2 -1)}{\kappa^2\lambda_1^2-1}\right)^2
+(\lambda_2+\lambda_3)^2 ,
\Label{CBV1-main}\\
\left(\frac{(\kappa^2  \lambda_1 -1)( \lambda_1 +1)}{\kappa^2\lambda_1^2-1}\right)^2
=
\left(1+
\frac{ \lambda_1 (\kappa^2 -1)}{\kappa^2\lambda_1^2-1}
\right)^2
& \ge
\left(\frac{\kappa(\lambda_1^2 -1)}{\kappa^2\lambda_1^2-1}\right)^2
+(\lambda_2-\lambda_3)^2 .
\Label{CBV2-main}
\end{align}

\item
For $|\kappa|=1$, condition (D3) is equivalent to
\[
\lambda_2=\lambda_3=0.
\]
\end{enumerate}
\end{lemma}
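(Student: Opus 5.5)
The plan is to compute everything explicitly from the forward strategy $S=S_{1\to2}(\rho_1,C_{1\to2})$, using the Pauli basis. By Lemma~\ref{LJG1} and Theorem~\ref{TH7} (condition (A2) holds here), the compatible PDM is $R={\cal R}[{\cal M}[S]]=C_{1\to2}^{T_1}\circ(\rho_1\otimes I)$. Using \eqref{CZER} together with $\rho_1=\frac12(I+\kappa\sigma_1)$, this gives
\[
R=\tfrac14\Bigl(I\otimes I+\kappa\,\sigma_1\otimes I+\kappa\lambda_1\, I\otimes\sigma_1+\textstyle\sum_j\lambda_j\sigma_j\otimes\sigma_j\Bigr).
\]
In particular $\rho_2[{\cal M}[S]]=\frac12(I+\mu\sigma_1)$ with $\mu:=\kappa\lambda_1$.

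For (i), I would compute Bob's marginal directly. When Alice does nothing, Bob's Bloch vector is $\Lambda(\rho_1)$, which equals $\mu\sigma_1$. When Alice measures $\sigma_a$, the dephased state $\sum_x E_{x|a}\rho_1E_{x|a}$ keeps the $\sigma_1$ component only if $a=1$. Hence Bob's marginal is $\mu\sigma_1$ for $a\in\{0,1\}$ and $0$ for $a\in\{2,3\}$. So $Y_1-Y_2-X_2$ holds if and only if $\kappa\lambda_1=0$.

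For (iii) and (ii), I will solve the reverse reconstruction \eqref{XSA2}, $R=\tilde C_{2\to1}^{T_2}\circ(I\otimes\rho_2)$, for $\tilde C_{2\to1}^{T_2}$. Expanding both sides in the basis $\sigma_i\otimes\sigma_j$, the map $X\mapsto X\circ(I\otimes\rho_2)$ acts only on the second factor. It couples $\sigma_i\otimes I$ with $\sigma_i\otimes\sigma_1$ through the $2\times2$ matrix $\frac12\begin{pmatrix}1&\mu\\ \mu&1\end{pmatrix}$, whose inverse has denominator $1-\mu^2=1-\kappa^2\lambda_1^2$. It acts diagonally, with factor $\frac12$, on $\sigma_i\otimes\sigma_{2},\sigma_i\otimes\sigma_3$. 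This yields
\[
\tilde C_{2\to1}^{T_2}=\tfrac12\Bigl(I\otimes I+a\,\sigma_1\otimes I+b\, I\otimes\sigma_1+\sigma_1\otimes\sigma_1\,c+\lambda_2\sigma_2\otimes\sigma_2+\lambda_3\sigma_3\otimes\sigma_3\Bigr),
\]
with $a=\kappa(\lambda_1^2-1)/(\kappa^2\lambda_1^2-1)$, $c=\lambda_1(\kappa^2-1)/(\kappa^2\lambda_1^2-1)$, up to the sign conventions that I must track, and $b$ of the matching form. These are exactly the quantities appearing in \eqref{CBQ1-main}--\eqref{CBV2-main}.

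For (D2), I substitute this $\tilde C_{2\to1}$ and $\rho_2$ into the reverse Born formula. Mirroring the computation in (i), Alice's marginal under the reverse model is $\rho_2$ fed through the reverse map, with Bob's dephasing removing the $\sigma_1$ component unless $y_2\in\{0,1\}$. The Markov condition $X_1-Y_1-Y_2$ then reduces to a single scalar equation in $\kappa$ and $\lambda_1$: essentially the vanishing of the coefficient linking $\sigma_1\otimes\sigma_1$ with the $\sigma_1$ marginal. I would factor this equation to obtain the three alternatives listed in (ii). The boundary $\kappa=\pm1$ with $\lambda_1<1$ appears where a factor $\kappa^2-1$ vanishes, and $\lambda_1=1$ is excluded there because $\mu^2=1$ would make $\rho_2$ singular.

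For the positivity statement (D3) with $|\kappa|<1$, I would exploit the symmetry: $\tilde C^{T_2}$ commutes with $\sigma_1\otimes\sigma_1$. After undoing the partial transpose, which only flips the sign of the $\sigma_2\otimes\sigma_2$ term, the matrix splits into two $2\times2$ blocks on the $\pm1$ eigenspaces of $\sigma_1\otimes\sigma_1$ (in the $\sigma_1$-eigenbasis, these are $\{|{++}\rangle,|{--}\rangle\}$ and $\{|{+-}\rangle,|{-+}\rangle\}$). Each block has diagonal entries of the form $1\pm c\pm(a\pm b)$ and off-diagonal entries $\lambda_2\pm\lambda_3$. Positivity of the blocks is equivalent to nonnegative diagonals together with a nonnegative determinant; the determinant condition rearranges to the form $(1\pm c)^2\ge a^2+(\lambda_2\mp\lambda_3)^2$. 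This gives \eqref{CBQ1-main}--\eqref{CBV2-main}, and the stated factorized expressions follow by simplifying $1\pm c$.

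For (iv), $|\kappa|=1$, the operator $\rho_1\otimes I$ has rank $2$, and the reconstruction degenerates to $\tilde C^{T_2}=\rho_1\otimes I+\frac12\sum_j\lambda_j\sigma_j\otimes\sigma_j$ (the $\lambda_1$ term survives). On the kernel of $\rho_1\otimes I$, the operators $\sigma_2\otimes\sigma_2$ and $\sigma_3\otimes\sigma_3$ have off-diagonal blocks linking the kernel to the support, while $\sigma_1\otimes\sigma_1$ is block-diagonal. A positive operator whose diagonal block vanishes must have vanishing off-diagonal blocks, which forces $\lambda_2=\lambda_3=0$; conversely, $\lambda_2=\lambda_3=0$ leaves a positive matrix whenever $|\lambda_1|\le1$.

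I expect the main obstacle to be the bookkeeping in inverting the Jordan product and in the (D2) Markov computation. Sign conventions ($T_1$ versus $T_2$, and the sign of $\lambda_2$ in \eqref{HJ51}) must be tracked carefully to land exactly on the displayed expressions and on the precise case list in (ii).
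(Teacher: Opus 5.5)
Your route for (i)--(iii) is the paper's. You compute $R={\cal R}[{\cal M}[S]]$ and invert $X\mapsto X\circ(I\otimes\rho_2)$ to obtain \eqref{HJ53A}. You read (D1) and (D2) off the one-sided marginals; your criterion $\mu c=0$ is exactly the paper's ``$\kappa\lambda_1=0$ or $c=0$''. You then test positivity by the $2\times2$ block splitting that Lemma~\ref{LL31} encodes.

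The genuine gap is in (iv). The matrix $\rho_1\otimes I+\frac12\sum_j\lambda_j\sigma_j\otimes\sigma_j$ is \eqref{VNI1}, which was derived under (D1) and (D2). At $|\kappa|=1$ with $\lambda_1\neq0$ (so (D1) fails) it is not the reconstruction. The reverse reconstruction \eqref{XSA2} divides by $I\otimes\rho_2$, not by $\rho_1\otimes I$, so the rank deficiency of $\rho_1$ is irrelevant there. Moreover, $\rho_2=\frac12(I+\kappa\lambda_1\sigma_1)$ is still invertible for $|\lambda_1|<1$. Your inversion from (iii) therefore applies verbatim with $a=\kappa$ and $c=0$, giving
\[
\tilde C_{2\to1}^{T_2}=\tfrac12\bigl(I\otimes I+\kappa\,\sigma_1\otimes I+\lambda_2\,\sigma_2\otimes\sigma_2+\lambda_3\,\sigma_3\otimes\sigma_3\bigr),
\]
which is \eqref{HJ5GA}: the $\lambda_1$ term does not survive. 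Your own argument for (ii) already uses $c=0$ at $\kappa^2=1$, through the factor $\kappa^2-1$, so your (ii) and (iv) are mutually inconsistent.

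With your matrix, take $\kappa=\pm1$. The diagonal block on the kernel $|{\mp}\rangle\otimes{\cal H}_{(I,2)}$ of $\rho_1\otimes I$ is $\mp\frac{\lambda_1}{2}\sigma_1$, not $0$, so the vanishing-diagonal-block argument does not apply. In fact that matrix has a negative eigenvalue whenever $\lambda_1\neq0$, even when $\lambda_2=\lambda_3=0$, which contradicts your claimed converse.

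With the corrected matrix, your kernel argument is valid and is a clean alternative to Lemma~\ref{LL31}:
\begin{itemize}
\item The kernel block is $0$, both before and after undoing $T_2$.
\item The off-diagonal block is a combination of $\lambda_2\sigma_2$ and $\lambda_3\sigma_3$ with nonzero coefficients, so positivity forces $\lambda_2=\lambda_3=0$.
\item Conversely, $\lambda_2=\lambda_3=0$ gives $\tilde C_{2\to1}=\rho_1\otimes I\ge0$. No condition on $\lambda_1$ is needed beyond $|\lambda_1|<1$, which is required for the reconstruction to be defined.
\end{itemize}

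There are two smaller points in (iii). First, $b=0$ exactly, because the $I\otimes(\cdot)$ part of $R$ is $\frac12 I\otimes\rho_2$. This is what produces $a^2$, rather than $(a\pm b)^2$, in the determinant conditions.

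Second, no amount of careful tracking will make the sign of $\lambda_2$ come out as displayed. Read \eqref{HJ51} literally as $C_{1\to2}^{T_1}$. Then undoing $T_2$ gives $\tilde C_{2\to1}$ with $-\frac{\lambda_2}{2}\sigma_2\otimes\sigma_2$, and your block splitting yields
\[
(1+c)^2\ge a^2+(\lambda_2+\lambda_3)^2,\qquad (1-c)^2\ge a^2+(\lambda_2-\lambda_3)^2 .
\]
These are \eqref{CBV1-main}--\eqref{CBV2-main} with $\lambda_2\pm\lambda_3$ interchanged. The identity channel at $\kappa=0$ (where $\lambda_j=1$ and $c=1$) satisfies your pair but violates \eqref{CBV1-main}.

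The displayed pair is the block criterion applied to $\tilde C_{2\to1}^{T_2}$ itself, which is what the paper computes when it applies Lemma~\ref{LL31} directly to \eqref{HJ53A}. It is the correct (D3) condition only under the paper's $p_j$-sign $\lambda_2=-2p_0-2p_2+1$, for which the true $C_{1\to2}^{T_1}$ carries $-\frac{\lambda_2}{2}\sigma_2\otimes\sigma_2$ rather than the $+\frac{\lambda_2}{2}$ written in \eqref{HJ51}. You should state which reading of $\lambda_2$ you adopt, rather than asserting the $\mp$.
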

\noindent\emph{Proof deferred to Appendix~\ref{app:pauli-family-calculations}.}

Lemma~\ref{LPauli} gives the generic Pauli-family form of the
reverse-compatibility conditions.
The first two conditions constrain the directional parameters
$(\kappa,\lambda_1)$, whereas the third is the complete-positivity condition
for the reconstructed reverse-direction channel.
Thus, the Pauli family already exhibits the basic geometry of the
order-indistinguishable region in a concrete form.
The next subsections extract especially transparent representative examples
from this family, namely the phase-damping and depolarizing channels.

\subsubsection{Phase damping channel with $\sigma_1$-basis}

We now consider a particularly simple representative of the Pauli family,
namely the phase-damping channel in the $\sigma_1$ basis.
This example is useful because, unlike the general Pauli case, the directional
conditions (D1) and (D2) collapse to the same simple restriction on the input
state, so that the remaining nontrivial issue is the reverse-direction
complete-positivity condition (D3).

We fix the initial state to be
$\rho_1=\frac{1}{2}(I+\kappa\sigma_1)$.
We consider the phase-damping channel with respect to the $\sigma_1$ basis,
whose Choi matrix $C_{1\to 2}$ satisfies the condition $\lambda_1=1$
under the form \eqref{HJ8}.
Then the directional conditions simplify as follows:
\begin{itemize}
\item (D1) holds if and only if $\kappa=0$.
\item (D2) holds if and only if $\kappa=0$.
\end{itemize}
Thus, in this channel family, the only remaining nontrivial constraint is the
CP-feasibility condition (D3), which is characterized by the following lemma.

\begin{lemma}\Label{LPhaseDampSigmaOne}
The condition (D3) holds if and only if
the relations
$\kappa=\pm 1$ and $\lambda_2=\lambda_3=0$
or
the relations $|\kappa|<1$, $\lambda_2=-\lambda_3$, and $|\lambda_3|<1$ hold.
\end{lemma}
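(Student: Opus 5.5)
The plan is to obtain the lemma as a direct specialization of Lemma~\ref{LPauli}. The phase-damping channel in the $\sigma_1$ basis is the Pauli channel with $p_2=p_3=0$, which gives $\lambda_1=2p_0+2p_1-1=1$. I first record the induced parameters: $\lambda_2=1-2p_0$ and $\lambda_3=2p_0-1$. So this family automatically satisfies $\lambda_2=-\lambda_3$, with $|\lambda_3|\le 1$. I then split into the two regimes $|\kappa|=1$ and $|\kappa|<1$, exactly as in parts (iii) and (iv) of Lemma~\ref{LPauli}.

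For $|\kappa|=1$, Lemma~\ref{LPauli}(iv) applies verbatim and gives (D3) $\iff \lambda_2=\lambda_3=0$. This is the first alternative of the statement.

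For $|\kappa|<1$, I substitute $\lambda_1=1$ into \eqref{CBQ1-main}--\eqref{CBV2-main}. The denominator $\kappa^2\lambda_1^2-1=\kappa^2-1$ is nonzero, so the substitution is legitimate. The two ratios become
\[
\frac{\kappa(\lambda_1^2-1)}{\kappa^2\lambda_1^2-1}=0,\qquad
\frac{\lambda_1(\kappa^2-1)}{\kappa^2\lambda_1^2-1}=1 .
\]
With these values the four conditions read as follows:
\begin{itemize}
\item \eqref{CBQ1-main} becomes $2\ge 0$, which always holds.
\item \eqref{CBQ2-main} becomes $0\ge 0$, which always holds.
\item \eqref{CBV1-main} becomes $0\ge(\lambda_2+\lambda_3)^2$, which forces $\lambda_2=-\lambda_3$.
\item \eqref{CBV2-main} becomes $4\ge(\lambda_2-\lambda_3)^2=4\lambda_3^2$, i.e.\ $|\lambda_3|\le1$.
\end{itemize}
This yields the second alternative, up to the treatment of the endpoint.

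The step I expect to be the main obstacle is the boundary $|\lambda_3|=1$, where the substitution gives the non-strict inequality $|\lambda_3|\le1$ while the statement asserts $|\lambda_3|<1$. Here $p_0\in\{0,1\}$, so the channel is the identity or conjugation by $\sigma_1$, i.e.\ a unitary rather than a genuine dephasing. I would examine this case directly. Using \eqref{VNI1}, I would write out $\tilde C_{2\to1}^{T_2}=\rho_1\otimes I+\frac12(\sigma_1\otimes\sigma_1+\lambda_2\sigma_2\otimes\sigma_2+\lambda_3\sigma_3\otimes\sigma_3)$ in the Bell basis and check its eigenvalues at $\lambda_3=\pm1$.

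That check may show one of two things. It may show that the endpoint is excluded because the phase-damping family is taken with $|\lambda_3|<1$; in that case the strict inequality simply restates the admissible range of the family. Or it may show that positivity persists at the endpoint; in that case the claim should read $|\lambda_3|\le1$, with the strict form reflecting the convention that the unitary endpoints are not counted as phase damping.

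Either way, the equivalence for $|\kappa|<1$ reduces to $\lambda_2=-\lambda_3$ together with the range condition on $\lambda_3$. Combining the two regimes gives the lemma.
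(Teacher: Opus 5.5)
Your route is the paper's own: Lemma~\ref{LPauli}(iv) for $|\kappa|=1$, and substitution of $\lambda_1=1$ into \eqref{CBV1-main}--\eqref{CBV2-main} for $|\kappa|<1$, which gives $\lambda_2=-\lambda_3$ and $4\ge(\lambda_2-\lambda_3)^2$.

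The endpoint discrepancy you flag is real, but you should settle it rather than leave two options open. No separate check is needed. Lemma~\ref{LPauli}(iii) is an equivalence, so $4\ge 4\lambda_3^2$ is exactly (D3), and at $|\lambda_3|=1$ it holds with equality. Concretely, at $\lambda_1=1$ the matrix \eqref{HJ53A} reduces to $\frac12(I\otimes I+\sigma_1\otimes\sigma_1+\lambda_2\sigma_2\otimes\sigma_2+\lambda_3\sigma_3\otimes\sigma_3)$. For $(\lambda_2,\lambda_3)=(-1,1)$ this equals $2|\Phi^+\rangle\langle\Phi^+|$, and for $(1,-1)$ it equals $2|\Psi^+\rangle\langle\Psi^+|$ with $|\Psi^+\rangle:=(|01\rangle+|10\rangle)/\sqrt2$. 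Both are positive. Hence (D3) holds at the unitary endpoints $p_0\in\{0,1\}$, which do satisfy $\lambda_1=1$. The strict inequality $|\lambda_3|<1$ in the statement therefore cannot be proved. It should read $|\lambda_3|\le 1$, unless those endpoints are excluded by hypothesis. Within the family, where $\lambda_2=-\lambda_3$ and $|\lambda_3|\le1$ hold automatically, (D3) then holds for every $|\kappa|<1$. The paper's own proof makes the same slip: it derives $4\ge(\lambda_2-\lambda_3)^2$ and then writes $|\lambda_3|<1$.

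There is also one concrete error in your planned check: do not use \eqref{VNI1}. That formula comes from Lemma~\ref{LBU3}, which assumes (D1) and (D2), and for $\lambda_1=1$ these force $\kappa=0$. For $\kappa\neq0$ it introduces a spurious $\frac{\kappa}{2}\sigma_1\otimes I$ term. Condition \eqref{BP1} of Lemma~\ref{LL31} then reads $0\ge\kappa^2+(\lambda_2+\lambda_3)^2$, so positivity would fail for every $\kappa\neq0$. That contradicts the $|\kappa|<1$ result you had just derived. The correct matrix is \eqref{HJ53A} with $\lambda_1=1$, whose $\sigma_1\otimes I$ coefficient $\kappa(\lambda_1^2-1)/(2(\kappa^2\lambda_1^2-1))$ vanishes.
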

\noindent\emph{Proof deferred to Appendix~\ref{app:phase-damping-sigma1}.}

Lemma~\ref{LPhaseDampSigmaOne} determines the reverse-direction positivity condition (D3) in the
present channel family.
Since, as noted above, the directional conditions (D1) and (D2) are both
equivalent to $\kappa=0$, the full reverse-compatibility conditions
(D1)--(D3) can hold simultaneously only when $\kappa=0$.
Thus, in the $\sigma_1$-basis phase-damping family, the order-indistinguishable
region is governed entirely by the directional constraints, while the above
lemma shows that the CP-feasibility condition (D3) is consistent with that
same restriction.

\subsubsection{Phase damping channel with $\sigma_3$-basis}

We next consider the phase-damping channel in the $\sigma_3$ basis, under the
same fixed initial state
$\rho_1=\frac{1}{2}(I+\kappa\sigma_1)$.
Compared with the previous subsection, the main point is that the directional
conditions (D1) and (D2) remain unchanged, whereas the reverse-direction
positivity condition (D3) takes a different form.

More precisely, we consider the phase-damping channel whose Choi matrix
$C_{1\to 2}$ satisfies the condition $\lambda_3=1$ under the form \eqref{HJ8}.
For simplicity, we also assume $\lambda_2=-\lambda_1$.
Since the conditions (D1) and (D2) do not depend on $\lambda_3$, they have
exactly the same form as in the previous subsection.
Thus, the only new issue in the present family is the CP-feasibility
condition (D3), which is characterized by the following lemma.

\begin{lemma}\Label{LPhaseDampSigmaThree}
The condition (D3) holds if and only if
the relation $\kappa=0$ or the relations $\lambda_1=\pm 1$ and $\kappa^2<1$
hold.
\end{lemma}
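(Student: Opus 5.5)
The plan is to derive the lemma from Lemma~\ref{LPauli}, parts (iii) and (iv), by substituting $\lambda_3=1$ and $\lambda_2=-\lambda_1$. Then $\lambda_2+\lambda_3=1-\lambda_1$ and $\lambda_2-\lambda_3=-(1+\lambda_1)$. The case $|\kappa|=1$ is immediate: Lemma~\ref{LPauli}(iv) requires $\lambda_2=\lambda_3=0$, which contradicts $\lambda_3=1$. So (D3) fails there, consistent with the requirement $\kappa^2<1$ in the statement.

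For $|\kappa|<1$, set $D:=\kappa^2\lambda_1^2-1$, $A:=\kappa(\lambda_1^2-1)/D$ and $B:=\lambda_1(\kappa^2-1)/D$. Since $|\lambda_1|\le1$ and $\kappa^2<1$, we have $D<0$, so multiplying by $D^2$ preserves all inequalities.

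\textbf{Inequality \eqref{CBV1-main}.} After clearing denominators it reads
\[
(\kappa^2\lambda_1-1)^2(1-\lambda_1)^2\ \ge\ \kappa^2(\lambda_1^2-1)^2+(1-\lambda_1)^2D^2 .
\]
Both sides carry the factor $(1-\lambda_1)^2$, so the inequality holds with equality when $\lambda_1=1$. For $\lambda_1\neq1$ I would divide by $(1-\lambda_1)^2$ and use the identity
\[
(\kappa^2\lambda_1-1)^2-D^2=\kappa^2\lambda_1(1-\lambda_1)\bigl(\kappa^2\lambda_1(1+\lambda_1)-2\bigr).
\]
This reduces \eqref{CBV1-main} to
\[
\kappa^2\Bigl[\lambda_1(1-\lambda_1)\bigl(\kappa^2\lambda_1(1+\lambda_1)-2\bigr)-(1+\lambda_1)^2\Bigr]\ \ge\ 0 .
\]
For $\kappa=0$ this is an equality. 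For $\kappa\ne0$, the bracket must be shown strictly negative for every $\lambda_1\in(-1,1)$. It is also negative (equal to $-2$) at $\lambda_1=-1$, but that endpoint is excluded differently, see below.

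\textbf{Inequality \eqref{CBV2-main}.} The symmetric computation, with $(\lambda_2-\lambda_3)^2=(1+\lambda_1)^2$, gives a common factor $(1+\lambda_1)^2$. This settles $\lambda_1=-1$ with equality, and reduces the remaining cases to a sign condition of the same kind. Consequently, at $\lambda_1=1$ inequality \eqref{CBV1-main} holds trivially but \eqref{CBV2-main} must still be checked, and symmetrically at $\lambda_1=-1$.

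\textbf{Inequalities \eqref{CBQ1-main} and \eqref{CBQ2-main}.} These read $1\pm(A+B)\ge0$. I would show $|A+B|\le1$ directly, using $A+B=(\kappa+\lambda_1)(\kappa\lambda_1-1)/D$ and hence $A+B=(\kappa+\lambda_1)/(\kappa\lambda_1+1)$. The map $x\mapsto (x+y)/(1+xy)$ preserves $[-1,1]$ for $|y|\le1$, so both inequalities always hold.

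\textbf{Assembling the cases.}
\begin{enumerate}
\item $\kappa=0$: $A=0$ and $B=-\lambda_1$, and \eqref{CBV1-main}, \eqref{CBV2-main} reduce to $(1-\lambda_1)^2\ge(1-\lambda_1)^2$ and $(1+\lambda_1)^2\ge(1+\lambda_1)^2$. So (D3) holds.
\item $\lambda_1=\pm1$ with $\kappa^2<1$: $A=0$. One of \eqref{CBV1-main}, \eqref{CBV2-main} is trivial by the factorization above, and the other reduces to an identity such as $4\ge4$ after simplifying $(\kappa^2\lambda_1-1)/(\kappa^2-1)=1$. So (D3) holds.
\item $\kappa\neq0$ and $|\lambda_1|<1$: at least one of \eqref{CBV1-main}, \eqref{CBV2-main} fails. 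This is the converse direction.
\end{enumerate}

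\textbf{Main obstacle.} The hardest step is the sign analysis in case 3. Write $f(\lambda_1):=\lambda_1(1-\lambda_1)(\kappa^2\lambda_1(1+\lambda_1)-2)-(1+\lambda_1)^2$. Since $\kappa^2\lambda_1(1+\lambda_1)-2<0$ on $(-1,1)$, we have $f<0$ immediately for $\lambda_1\in[0,1)$. For $\lambda_1\in(-1,0)$, the first term of $f$ becomes positive, and I would bound it by $|\lambda_1|(1-\lambda_1)\cdot 2$. This bound alone does not clearly beat $(1+\lambda_1)^2$, so there I would switch to \eqref{CBV2-main}. Its reduced form is the mirror image under $\lambda_1\to-\lambda_1$, which makes it strictly negative on $(-1,0]$. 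If the mirror symmetry turns out to be inexact, I would fall back to checking that the sum of the two reduced brackets is negative and that they cannot both be nonnegative.
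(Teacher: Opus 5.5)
Your skeleton matches the paper's. You dispose of $|\kappa|=1$ with Lemma~\ref{LPauli}(iv), note that \eqref{CBQ1-main}--\eqref{CBQ2-main} hold automatically, and reduce everything to \eqref{CBV1-main} and \eqref{CBV2-main}. Your observation $A+B=(\kappa+\lambda_1)/(1+\kappa\lambda_1)\in[-1,1]$ is correct and tidier than the paper's route.

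\textbf{The gap is in your cleared form of \eqref{CBV1-main}.} You used the first expression in that display, but it is a misprint and does not equal $(1-B)^2$. In fact
\[
1-\frac{\lambda_1(\kappa^2-1)}{\kappa^2\lambda_1^2-1}=\frac{(\lambda_1-1)(\kappa^2\lambda_1+1)}{\kappa^2\lambda_1^2-1},
\]
so the correct inequality is $(\kappa^2\lambda_1+1)^2(1-\lambda_1)^2\ge\kappa^2(\lambda_1^2-1)^2+(1-\lambda_1)^2D^2$. Your bracket $f$ therefore does not encode (D3), and several consequences follow.
\begin{itemize}
\item The positivity of $f$ on part of $(-1,0)$ is an artifact of the misprint.
\item Your argument that (D3) fails for $\lambda_1\in[0,1)$, $\kappa\neq0$, only shows that an inequality outside Lemma~\ref{LPauli}(iii) fails.
\item $f(-1)=4$, not $-2$. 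A negative value would have contradicted the ``if'' direction at $\lambda_1=-1$.
\item $f(-\lambda_1)$ and the bracket you get from \eqref{CBV2-main} differ by $4\lambda_1$, so there is no mirror symmetry.
\item Your fallback also fails: at $\lambda_1=-\tfrac12$ the sum of the two brackets is $\tfrac12+\tfrac38\kappa^2>0$.
\end{itemize}
So the converse direction, which you yourself flag as the main obstacle, is not established as written.

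\textbf{The missing idea is a complete factorization}, which makes the sign analysis trivial:
\[
(\kappa^2\lambda_1+1)^2-D^2-\kappa^2(1+\lambda_1)^2=(\kappa^2\lambda_1-1)^2-D^2-\kappa^2(1-\lambda_1)^2=\kappa^2(1-\lambda_1^2)(\kappa^2\lambda_1^2-1).
\]
Since $D<0$, this is $\le 0$, with equality iff $\kappa=0$ or $|\lambda_1|=1$. Hence, for $|\kappa|<1$, each of \eqref{CBV1-main} and \eqref{CBV2-main} is \emph{by itself} equivalent to $\kappa=0$ or $\lambda_1=\pm1$. The trivial endpoint where the common factor $(1\mp\lambda_1)^2$ vanishes is covered by this too.

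This is exactly the paper's computation $0\ge\kappa^2(\kappa^2\lambda_1^2-1)(\lambda_1^2-1)$; its expansion uses the correct factor $(\kappa^2\lambda_1+1)^2$ despite the misprinted display. In particular, the bracket you get from the correctly printed \eqref{CBV2-main} equals $(1-\lambda_1^2)(\kappa^2\lambda_1^2-1)<0$ on all of $(-1,1)$. So the converse follows from \eqref{CBV2-main} alone, with no case split.

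\textbf{Two small slips.}
\begin{itemize}
\item At $\kappa=0$ one has $B=\lambda_1$, not $-\lambda_1$, although the reductions you display are the right ones.
\item At $\lambda_1=-1$, the identity $4\ge4$ in \eqref{CBV1-main} appears only with the corrected numerator.
\end{itemize}
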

\noindent\emph{Proof deferred to Appendix~\ref{app:phase-damping-sigma3}.}

Lemma~\ref{LPhaseDampSigmaThree} determines the reverse-direction positivity condition (D3) in the
present channel family.
Since, as noted above, the directional conditions (D1) and (D2) are both
equivalent to $\kappa=0$, the three conditions (D1)--(D3) can hold
simultaneously only when $\kappa=0$.
Thus, in the $\sigma_3$-basis phase-damping family, the order-indistinguishable
region is characterized simply by the vanishing of the input Bloch parameter
$\kappa$.

\subsubsection{Depolarizing channel}

We next consider the depolarizing channel, which provides the cleanest
analytic description of the order-indistinguishable region in the main text.
Recall that, for a forward memoryless sequential strategy
$S=S_{1\to 2}(\rho_1,C_{1\to 2})$, the observed distribution ${\cal M}[S]$
is order-indistinguishable if and only if it satisfies all three
reverse-compatibility conditions (D1), (D2), and (D3).
In the depolarizing family, these conditions can be written explicitly,
and their combination yields a sharp boundary in the $(\lambda,\kappa)$-plane.

Assume that the initial state is
\[
\rho_1=\frac{1}{2}(I+\kappa \sigma_1),
\qquad
\kappa\in[-1,1].
\]
When the channel corresponds to the depolarizing Choi matrix \eqref{HJ51},
we have
\[
\lambda_j=1-\lambda
\quad \text{for } j=1,3,
\qquad
\lambda_2=-(1-\lambda).
\]
The directional conditions are then given by:
\begin{itemize}
\item (D1) holds if and only if $\lambda=1$ or $\kappa=0$.
\item (D2) holds if and only if $\lambda=1$, or $\kappa=0$, or
$\kappa=\pm 1$ and $\lambda>0$.
\end{itemize}
The remaining condition (D3) takes the following explicit form.

\begin{lemma}\Label{LL29}
The condition (D3) is equivalent to the following cases:
\begin{description}
\item[Case 1] $\lambda = 0$: $|\kappa|<1$.
\item[Case 2] $\lambda = 1$: $-1\le \kappa \le 1$.
\item[Case 3] $0<\lambda < 1$:
\begin{align}
|\kappa|
\le
\sqrt{\frac{4-3\lambda}{(3-2\lambda)(2\lambda^2-5\lambda+4)}} .
\Label{BNE}
\end{align}
\end{description}
\end{lemma}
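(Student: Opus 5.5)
The plan is to obtain Lemma~\ref{LL29} as a direct specialization of Lemma~\ref{LPauli}(iii)--(iv). For the depolarizing channel we have $\lambda_1=\lambda_3=\mu$ and $\lambda_2=-\mu$ with $\mu:=1-\lambda\in[0,1]$. Hence $\lambda_2+\lambda_3=0$ and $(\lambda_2-\lambda_3)^2=4\mu^2$.

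The boundary case $|\kappa|=1$ is settled first by Lemma~\ref{LPauli}(iv). There (D3) holds iff $\lambda_2=\lambda_3=0$, i.e.\ iff $\mu=0$, i.e.\ $\lambda=1$. This already explains why $\kappa=\pm1$ is allowed in Case~2 and excluded in Cases~1 and~3.

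For $|\kappa|<1$ I will use the abbreviations
\[
D:=\kappa^2\mu^2-1<0,\qquad a:=\frac{\kappa(\mu^2-1)}{D},\qquad b:=\frac{\mu(\kappa^2-1)}{D}.
\]
Then \eqref{CBQ1-main}--\eqref{CBV2-main} read
\[
1+a+b\ge0,\qquad 1-a-b\ge0,\qquad (1-b)^2\ge a^2,\qquad (1+b)^2\ge a^2+4\mu^2 .
\]
\begin{itemize}
\item For $\mu=0$ (Case~2) we get $a=\kappa$ and $b=0$, so every inequality reduces to $|\kappa|\le1$.
\item For $\mu=1$ (Case~1) we get $a=0$ and $b=1$. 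All four inequalities then hold, with equality in the last three, so every $|\kappa|<1$ is admissible.
\end{itemize}

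The substantive step is Case~3, $0<\mu<1$. I expect \eqref{CBV2-main} to be the binding constraint. Using $1+b=(\kappa^2\mu-1)(\mu+1)/D$, it becomes
\[
(\mu+1)^2\bigl[(\kappa^2\mu-1)^2-\kappa^2(\mu-1)^2\bigr]\ge 4\mu^2D^2 .
\]
The bracket factors as
\[
(\kappa-1)(\kappa\mu+1)\,(\kappa+1)(\kappa\mu-1)=(\kappa^2-1)D .
\]
Dividing by $-D>0$ therefore gives
\[
(1+\mu)^2(1-\kappa^2)\ge 4\mu^2(1-\kappa^2\mu^2),
\]
which is linear in $\kappa^2$:
\[
\kappa^2\le\frac{(1+\mu)^2-4\mu^2}{(1+\mu)^2-4\mu^4}.
\]
Both numerator and denominator contain the factor $(1-\mu)$. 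Cancelling it, the bound simplifies to
\[
\kappa^2\le\frac{1+3\mu}{(1+2\mu)(1+\mu+2\mu^2)}.
\]
Substituting $\mu=1-\lambda$ yields exactly \eqref{BNE}. This bound is strictly less than $1$ for $0<\mu<1$, which is consistent with the exclusion of $|\kappa|=1$ in this case.

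The remaining obstacle is to show that the other three inequalities are implied, so that they impose nothing further in Case~3. The plan is as follows.
\begin{itemize}
\item Since $|a|,|b|$ are rational in $\kappa$, I will check directly that $0\le b\le 1$ and $|a|\le 1-b$ on $|\kappa|<1$.
\item Because $-D\ge 1-\kappa^2$, the first of these gives $b\le\mu\le1$.
\item For the second, $|a|\le1-b$ is equivalent to $|\kappa|(1-\mu^2)\le(1-\kappa^2\mu^2)-\mu(1-\kappa^2)$. The right-hand side factors as $(1-\mu)(1+\mu\kappa^2)$, so the inequality reduces to $|\kappa|(1+\mu)\le 1+\mu\kappa^2$, i.e.\ $(1-|\kappa|)(1-\mu|\kappa|)\ge0$, which is true.
\end{itemize}
The inequality $|a|\le 1-b$ gives \eqref{CBQ1-main}, \eqref{CBQ2-main} and \eqref{CBV1-main} at once. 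This completes the equivalence.
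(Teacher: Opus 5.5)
Your proposal is correct and follows essentially the same route as the paper. You settle $|\kappa|=1$ by Lemma~\ref{LPauli}(iv) and show that \eqref{CBQ1-main}--\eqref{CBV1-main} hold automatically for $|\kappa|<1$; your difference-of-squares factorization of \eqref{CBV2-main} into $(\kappa^2-1)D$ is the paper's $A^2-B^2$ step, and your cancelled factor $1-\mu$ is the paper's factor $\lambda$ in $N$. The only differences are cosmetic: you work in $\mu=1-\lambda$, and you verify \eqref{CBQ1-main} and \eqref{CBQ2-main} directly via $0\le b\le 1$ and $|a|\le 1-b$ instead of citing the proof of Lemma~\ref{LPauli}.
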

\noindent\emph{Proof deferred to Appendix~\ref{app:proof-LL29}.}

Therefore, in the depolarizing family, the three conditions (D1), (D2), and
(D3) are satisfied simultaneously if and only if
$\lambda=1$ or
$\kappa=0$.
Indeed, when $\lambda=1$, all three conditions hold for every
$\kappa\in[-1,1]$.
For $0\le \lambda<1$, condition (D1) already forces $\kappa=0$, and then both
(D2) and (D3) are automatically satisfied.
Thus, in this family the order-indistinguishable region is described exactly
by the union of the completely depolarizing line $\lambda=1$ and the unbiased
input line $\kappa=0$.

\section{Membership problem of ${\cal M}[{\cal S}_{N,2\to 1}]$ under qubit-Pauli measurements}
\Label{S7-M}
This section removes the forward-memoryless assumption imposed in
Section~\ref{S7}.  Given an arbitrary observed distribution
$P\in{\cal M}[{\cal S}_G]$, we ask whether it belongs to
${\cal M}[{\cal S}_{N,2\to1}]$.  By Theorem~\ref{TH10} with the two parties
exchanged, the answer is determined by the reverse Markovian condition,
positivity of the reconstructed reverse Choi matrix, and condition~(C2).  We
first give the Pauli specialization of the algebraic condition in a symmetric
conditional model and then state the resulting membership criterion.

\subsection{Two-qubit Pauli form of condition (C1) in a symmetric conditional model}

We first recall the explicit two-qubit Pauli form of condition (C1).
Although the present section is devoted to the membership problem for
${\cal M}[{\cal S}_{N,2\to 1}]$, it is convenient to begin with the original
condition (C1), because the corresponding reverse-direction condition (C2) is
obtained simply by exchanging the roles of the two parties.
The following proposition is the two-qubit Pauli specialization of
condition (C1) under the symmetric conditional model introduced below.

\begin{proposition}\Label{PROP:C1-twoqubit}
In the two-qubit case, the relation \eqref{NM5} in condition \emph{(C1)}
is simplified as
\begin{align}
\sum_{x_1\in \{0,1\}}
P_{X_2|X_1,Y_1,Y_2}(x_2|x_1,y_1,y_2)
&=
2 P_{X_2|Y_1,Y_2}(x_2|0,y_2)
\notag\\
&\quad
-\sum_{y_1'\in \{1,2,3\}}
\sum_{x_1\in \{0,1\}} (-1)^{x_1}
P_{X_1|Y_1}(x_1|y_1')
\sum_{x_1'\in \{0,1\}} (-1)^{x_1'}
P_{X_2|X_1,Y_1,Y_2}(x_2|x_1',y_1',y_2)
\notag\\
&=
2 P_{X_2|Y_1,Y_2}(x_2|0,y_2)
-\sum_{y_1'\in \{1,2,3\}}
\langle \sigma_{y_1'}\otimes I\rangle
\sum_{x_1'\in \{0,1\}} (-1)^{x_1'}
P_{X_2|X_1,Y_1,Y_2}(x_2|x_1',y_1',y_2)
\Label{C1-2q}
\end{align}
for any $y_1,y_2 \in \{1,2,3\}$ and $x_2 \in \{0,1\}$.

Assume further that, for parameters
$\lambda_1,\lambda_2,\lambda_3 \in [0,1]$,
the conditional distribution $P_{X_2|X_1,Y_1,Y_2}$ has the symmetry
\begin{align}
P_{X_2|X_1,Y_1,Y_2}(x_2|x_1,y_1,y_2)
=
\left\{
\begin{array}{ll}
\frac{1}{2} & \hbox{ when } y_1 \neq y_2, \\[1mm]
\frac{1}{2}(1-\lambda_{y_1})+\lambda_{y_1} \delta_{x_2,x_1}
& \hbox{ when } y_1 = y_2,
\end{array}
\right.
\Label{C1-symmetry}
\end{align}
and define
$\eta_1,\eta_2,\eta_3,\zeta_1,\zeta_2,\zeta_3 \in [-1,1]$ by
\begin{align}
P_{X_1|Y_1}(x_1|y_1)&= \frac{1}{2}(1-\eta_{y_1})+\eta_{y_1} x_1,
\Label{eta-def}\\
P_{X_2|Y_1,Y_2}(x_2|0,y_2)&= \frac{1}{2}(1-\zeta_{y_2})+\zeta_{y_2} x_2.
\Label{zeta-def}
\end{align}
Then:
\begin{enumerate}[(i)]
\item condition \emph{(C1)} implies
\begin{align}
\zeta_{y_2}=0
\qquad\text{for all }y_2\in\{1,2,3\};
\Label{zeta-zero}
\end{align}
\item if, in addition, the non-signaling condition $Y_1-Y_2-X_2$ holds, then
\begin{align}
\eta_{y_1}\lambda_{y_1}=0
\qquad\text{for all }y_1\in\{1,2,3\}.
\Label{etalambda-zero}
\end{align}
\end{enumerate}
\end{proposition}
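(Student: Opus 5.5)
The plan is a direct substitution argument in three stages: specialize the general coefficients of \eqref{NM5} to the Pauli setting, evaluate both sides of \eqref{C1-2q} in the symmetric model \eqref{C1-symmetry}, and read off constraints by comparing the $x_2$-dependence. I should say up front that the computation below only delivers the relation $\zeta_{y_2}=\eta_{y_2}\lambda_{y_2}$ (up to a sign convention), and I do not yet have an argument that (C1) alone forces $\zeta_{y_2}=0$.

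\emph{Step 1: the Pauli form of \eqref{NM5}.} In the two-qubit Pauli setting we take $d=2$, ${\cal T}\setminus\{0\}=\{1,2,3\}$ with $y(t)=t$, $G_t=\sigma_t$, $g_{x|t}=(-1)^x$, and dual basis $H_t=\sigma_t/2$. Then $h_{t,t'}=\Tr H_tH_{t'}=\delta_{t,t'}/2$, so the product $d\,h_{t_1,t_1'}$ in \eqref{NM5} equals $\delta_{t_1,t_1'}$. The double sum over $t_1,t_1'$ therefore collapses to a single sum over $y_1'\in\{1,2,3\}$, which gives the first line of \eqref{C1-2q}. Since $\sum_{x_1}(-1)^{x_1}P_{X_1|Y_1}(x_1|y_1')=\langle\sigma_{y_1'}\otimes I\rangle$ by \eqref{BAS} together with the Markov chain $X_1-Y_1-Y_2$ used in \eqref{VBSA1}, the second line follows as well. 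This part is routine.

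\emph{Step 2: evaluating both sides in the symmetric model.} For the left-hand side of \eqref{C1-2q}, summing \eqref{C1-symmetry} over $x_1$ gives $1$ in both cases $y_1\neq y_2$ and $y_1=y_2$. Hence the left-hand side is identically $1$, independent of $y_1$ and $x_2$. For the right-hand side:
\begin{itemize}
\item By \eqref{zeta-def}, $2P_{X_2|Y_1,Y_2}(x_2|0,y_2)=1-\zeta_{y_2}(-1)^{x_2}$.
\item The inner sum $\sum_{x_1'}(-1)^{x_1'}P_{X_2|X_1,Y_1,Y_2}(x_2|x_1',y_1',y_2)$ vanishes for $y_1'\neq y_2$ and equals $\lambda_{y_2}(-1)^{x_2}$ for $y_1'=y_2$.
\item By \eqref{eta-def}, $\langle\sigma_{y_2}\rangle=\sum_{x}(-1)^xP_{X_1|Y_1}(x|y_2)=-\eta_{y_2}$.
\end{itemize}
So \eqref{C1-2q} reduces to $1=1-\zeta_{y_2}(-1)^{x_2}+\eta_{y_2}\lambda_{y_2}(-1)^{x_2}$ for both values of $x_2$.

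\emph{Step 3 and the main obstacle.} Comparing the coefficients of $(-1)^{x_2}$ gives only
\begin{align*}
\zeta_{y_2}=\eta_{y_2}\lambda_{y_2}
\end{align*}
(or $\zeta_{y_2}=-\eta_{y_2}\lambda_{y_2}$ if $\langle\sigma\rangle$ is read as $+\eta$). This does not by itself give \eqref{zeta-zero}. Isolating $\zeta_{y_2}=0$ from (C1) alone is therefore the hard part, and it is where I am unsure the statement holds as written.

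The first thing I would try is to extract a second, independent equation from (C1), using the fact that it must hold for every $y_1\in\{1,2,3\}$. Concretely, I would check whether the $y_1=y_2$ and $y_1\neq y_2$ instances, combined with the full-support hypothesis (C0) and normalization of $P$ as an element of ${\cal M}[{\cal S}_G]$, produce a second constraint that forces the product $\eta_{y_2}\lambda_{y_2}$ to vanish. However, the left-hand side of \eqref{C1-2q} is identically $1$ for every $y_1$, so the $y_1$-instances all yield the same equation, and I do not currently see where a second constraint would come from. Failing that, I would re-examine the sign and normalization conventions in \eqref{eta-def}, \eqref{zeta-def} and \eqref{C1-2q} to see whether the intended reading makes the $\eta\lambda$ term cancel; if no such reading exists, (C1) alone appears to give only the displayed relation.

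Given Step 3, part (ii) is immediate. The non-signaling condition $Y_1-Y_2-X_2$, written as \eqref{NVO}, in the symmetric model evaluates to $1-\zeta_{y_2}(-1)^{x_2}=1-\eta_{y_2}\lambda_{y_2}(-1)^{x_2}$, that is $\zeta_{y_2}=\eta_{y_2}\lambda_{y_2}$. Combined with \eqref{zeta-zero}, this yields \eqref{etalambda-zero}. Note, though, that this relation coincides with the Step 3 relation under the first sign reading. So without an independent proof of $\zeta_{y_2}=0$, it does not give (ii) either.
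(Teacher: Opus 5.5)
\paragraph{Part (i).} Your Steps 1 and 2 are correct. The relation $\zeta_{y_2}=\eta_{y_2}\lambda_{y_2}$ is exactly what (C1) reduces to in the symmetric model. The sign is not a matter of convention: \eqref{eta-def} gives $\langle\sigma_{y}\otimes I\rangle=P_{X_1|Y_1}(0|y)-P_{X_1|Y_1}(1|y)=-\eta_{y}$.

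You are right to distrust \eqref{zeta-zero}, and there is no missing idea on your side here. The paper reaches it in \eqref{C1-case-neq} by evaluating $\sum_{x_1'}(-1)^{x_1'}P_{X_2|X_1,Y_1,Y_2}(x_2|x_1',y_2,y_2)$ as $-\lambda_{y_2}$, independently of $x_2$. In fact this sum equals $(-1)^{x_2}\lambda_{y_2}$. Once that $x_2$-dependence is restored, comparing coefficients yields only your relation.

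Part (i) is in fact false. Take the identity channel with $\rho_1=\frac12(I+c\sigma_3)$ and $0<|c|<1$.
\begin{itemize}
\item This strategy lies in ${\cal S}_{N,1\to 2}$, so (C1) holds by Proposition~\ref{P1Auto}, and (C0) holds.
\item Bob receives Alice's post-measurement Pauli eigenstate, so \eqref{C1-symmetry} holds with $\lambda_1=\lambda_2=\lambda_3=1$.
\item When Alice does nothing, Bob receives $\rho_1$ itself, so $\zeta_3=\eta_3=-c\neq 0$.
\end{itemize}
As a direct check, for $y_2=3$ and $x_2=0$ the right-hand side of \eqref{C1-2q} is $(1+c)-c=1$, matching the left-hand side.

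\paragraph{Part (ii).} The genuine gap in your proposal is here. You evaluated \eqref{NVO} only at $y_1=y_2$. But \eqref{NVO} must hold for every $y_1$, and unlike (C1), its instances are not all the same.

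For $y_1\neq y_2$, the left-hand side of \eqref{NVO} is $\frac12\sum_{x_1}P_{X_1|Y_1,Y_2}(x_1|y_1,y_2)=\frac12$. The right-hand side is $\frac12-\frac12\zeta_{y_2}(-1)^{x_2}$. Hence the non-signaling condition alone forces $\zeta_{y_2}=0$. The $y_1=y_2$ instance, or equally your (C1) relation, then gives $\eta_{y_2}\lambda_{y_2}=\zeta_{y_2}=0$, which is \eqref{etalambda-zero}.

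So (ii) is true, even without (C1), and your closing claim that it does not follow is mistaken. The paper's own derivation of (ii) imports $\zeta_{y_2}=0$ from the flawed part (i). The repair is to take it instead from the $y_1\neq y_2$ instance of \eqref{NVO}.
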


\begin{proof}
The two-qubit simplification \eqref{C1-2q} follows directly from \eqref{NM5}
by setting $d=2$ and using the Pauli-measurement outcomes
$X_i\in\{0,1\}$, $Y_i\in\{1,2,3\}$.
The second equality in \eqref{C1-2q} is just the identity
\[
\langle \sigma_{y_1'}\otimes I\rangle
=
\sum_{x_1\in\{0,1\}}(-1)^{x_1}P_{X_1|Y_1}(x_1|y_1').
\]

We now analyze \eqref{C1-2q} under the symmetric model
\eqref{C1-symmetry}--\eqref{zeta-def}.

\smallskip
\noindent
{\bf Step 1: the case $y_1\neq y_2$.}
When $y_1\neq y_2$, the condition \eqref{C1-2q} becomes
\begin{align}
1
&=
(1-\zeta_{y_2})+2\zeta_{y_2} x_2
+
\sum_{y_1'\in \{1,2,3\}}
\eta_{y_1'}
\sum_{x_1'\in \{0,1\}} (-1)^{x_1'}
P_{X_2|X_1,Y_1,Y_2}(x_2|x_1',y_1',y_2)
\notag\\
&=
(1-\zeta_{y_2})+2\zeta_{y_2} x_2
+\eta_{y_2}
\sum_{x_1'\in \{0,1\}} (-1)^{x_1'}
P_{X_2|X_1,Y_1,Y_2}(x_2|x_1',y_2,y_2)
\notag\\
&=
(1-\zeta_{y_2})+2\zeta_{y_2} x_2-\eta_{y_2}\lambda_{y_2}.
\Label{C1-case-neq}
\end{align}
The left-hand side is independent of $x_2$.
Hence the coefficient of $x_2$ on the right-hand side must vanish, which
implies
\[
\zeta_{y_2}=0.
\]
Since $y_2$ is arbitrary, this proves \eqref{zeta-zero}.

\smallskip
\noindent
{\bf Step 2: the case $y_1=y_2$.}
Under the same model, we have
\begin{align}
P_{X_2,X_1|Y_1,Y_2}(x_2,x_1|y_1,y_2)
=\frac{1}{2}
\qquad\text{when }y_1\neq y_2,
\Label{PXY-neq}
\end{align}
whereas for $y_1=y_2$,
\begin{align}
P_{X_2,X_1|Y_1,Y_2}(x_2,x_1|y_1,y_2)
=
\Bigl(\frac{1}{2}(1-\eta_{y_1})+\eta_{y_1}x_1\Bigr)
\Bigl(\frac{1}{2}(1-\lambda_{y_1})+\lambda_{y_1}\delta_{x_2,x_1}\Bigr).
\Label{PXY-eq}
\end{align}
Therefore,
\begin{align}
P_{X_2|Y_1,Y_2}(x_2|y_1,y_2)
&=
\frac{1}{2}(1-\eta_{y_1})
+\eta_{y_1}
\Bigl(\frac{1}{2}(1-\lambda_{y_1})+\lambda_{y_1}\delta_{x_2,1}\Bigr)
\notag\\
&=
\frac{1}{2}
+\eta_{y_1}\lambda_{y_1}\Bigl(-\frac{1}{2}+\delta_{x_2,1}\Bigr).
\Label{PX2-cond}
\end{align}

Now impose the non-signaling condition $Y_1-Y_2-X_2$.
Since \eqref{zeta-zero} implies
\[
P_{X_2|Y_1,Y_2}(x_2|0,y_2)=\frac12,
\]
the non-signaling condition gives
\begin{align}
\frac{1}{2}(1-\zeta_{y_2})+\zeta_{y_2} x_2
=
\frac12
=
\frac12+\eta_{y_1}\lambda_{y_1}\Bigl(-\frac12+\delta_{x_2,1}\Bigr).
\end{align}
Hence
\[
\eta_{y_1}\lambda_{y_1}=0
\qquad\text{for all }y_1\in\{1,2,3\},
\]
which proves \eqref{etalambda-zero}.
\end{proof}

Proposition~\ref{PROP:C1-twoqubit} gives the explicit qubit-Pauli form of the
algebraic consistency condition in the symmetric conditional model
\eqref{C1-symmetry}.
By exchanging the roles of the two parties, one obtains the corresponding
reverse-direction condition required for the membership problem of
${\cal M}[{\cal S}_{N,2\to 1}]$.

\subsection{Membership criterion for ${\cal M}[{\cal S}_{N,2\to 1}]$}

We now combine the above algebraic characterization with the other structural
conditions.
In the qubit-Pauli setting, the assumptions (A1), (A2), and (A3) hold.
Hence, by Theorem~\ref{TH10} with the roles of the two directions exchanged,
membership in ${\cal M}[{\cal S}_{N,2\to 1}]$ is governed by three ingredients:
the reverse-direction Markovian condition, the positivity of the reconstructed
Choi matrix for the direction $2\to1$, and the reverse-direction analogue of
the algebraic consistency condition.

The proposition above provides a concrete qubit-Pauli expression for the
algebraic part.
Therefore, under the symmetric conditional model \eqref{C1-symmetry},
the membership problem for ${\cal M}[{\cal S}_{N,2\to 1}]$ is reduced to an
explicit check of these three conditions.

\begin{theorem}\Label{THM:SN21-membership-qubit}
Assume the qubit-Pauli setting of Section~\ref{S7-1} and the symmetric
conditional model \eqref{C1-symmetry}.
Let $P_{X_1,Y_1,X_2,Y_2}\in {\cal M}[{\cal S}_G]$ be an observed distribution.
If
\begin{enumerate}[(i)]
\item the reverse-direction Markovian condition
$Y_1-Y_2-X_2$
holds,
\item the reconstructed Choi matrix for the direction $2\to1$ satisfies
$\hat C_{2\to1}[P_{X_1,Y_1,X_2,Y_2}] \ge 0$,
\item and the reverse-direction analogue of
Proposition~\ref{PROP:C1-twoqubit} holds (equivalently, the qubit-Pauli form of
condition \emph{(C2)} obtained by exchanging the labels $1$ and $2$),
\end{enumerate}
then the distribution belongs to ${\cal M}[{\cal S}_{N,2\to 1}]$.
Conversely, every element of ${\cal M}[{\cal S}_{N,2\to 1}]$ satisfies these
three conditions.
\end{theorem}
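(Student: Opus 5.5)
The plan is to derive Theorem~\ref{THM:SN21-membership-qubit} as a direct specialization of Theorem~\ref{TH10} with the roles of the two parties exchanged. First, I will check that the qubit-Pauli setting of Section~\ref{S7-1} satisfies the hypotheses of Theorem~\ref{TH10}. The three Pauli projective measurements on each side have rank-one projections $E_{x|y}=\frac12(I+(-1)^x\sigma_y)$, so (A3) holds. Choosing $G_{y,1}=\sigma_y$ (that is, $g_{x|y,1}=(-1)^x$), the family $\{I,\sigma_1,\sigma_2,\sigma_3\}$ is a basis of ${\cal B}(\mathbb C^2)$, so (A1) holds. The dual basis is $H_t=\frac12\sigma_t$, which gives $h_{t,t'}=\frac12\delta_{t,t'}$. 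This is exactly the substitution that turns \eqref{NM5} into \eqref{C1-2q} in Proposition~\ref{PROP:C1-twoqubit}, and its mirror image turns (C2) into the qubit form required in item (iii).

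Second, I will state the $2\to1$ version of Theorem~\ref{TH10}. All objects in Theorem~\ref{TH10} are defined symmetrically, with $\rho_2[P]$, $\hat C_{2\to1}[P]$ and condition (C2) obtained by relabeling $1\leftrightarrow2$. The same proof therefore gives ${\cal M}[{\cal S}_{N,2\to1}]=\{P\in{\cal M}[{\cal S}_G]\mid Y_1-Y_2-X_2,\ (C2),\ \hat C_{2\to1}[P]\ge0\}$ under (A1), (A3) and the mirrored full-support condition (C0). Combined with the first step, this identifies items (i), (ii) and (iii) with the three conditions of the mirrored Theorem~\ref{TH10}. The sufficiency direction follows, and the converse follows from the mirrored Corollary~\ref{COR9} (Markovianity and positivity) together with the mirrored Proposition~\ref{P1Auto} (condition (C2)).

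The main obstacle is making precise what item (iii) means and keeping the full-support hypothesis in view. Proposition~\ref{PROP:C1-twoqubit} only derives consequences of (C1), such as $\zeta=0$ and $\eta\lambda=0$; it is not an equivalence. I will therefore read item (iii) as the literal qubit form of (C2), namely \eqref{C1-2q} with labels exchanged, which is equivalent to (C2) by the substitution in the first step. The consequences analogous to $\zeta_{y}=0$ are then only necessary corollaries under the symmetric model.

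I also need the mirrored (C0): $P_{X_2|Y_2}(x_2|y_2)>0$, so that $P_{X_1|X_2,Y_1,Y_2}$ is defined. For Pauli outcomes this fails exactly when Bob's reduced input is pure along some $\sigma_{y_2}$. I would either add (C0) as an implicit hypothesis, as Theorem~\ref{TH10} does, or handle the degenerate case by the support-compression argument of the remark following Theorem~\ref{TH10}. The symmetric conditional model \eqref{C1-symmetry} plays no role in the logic beyond making (iii) explicit, so the rest of the argument is routine bookkeeping.
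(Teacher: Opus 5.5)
Your proposal is correct and matches the paper's proof, which simply invokes Theorem~\ref{TH10} with the roles of the two parties exchanged and identifies (i), (ii), (iii) with the mirrored Markovian, positivity, and (C2) conditions. The paper leaves the following bookkeeping implicit, and what you supply is consistent with it: the check of (A1) and (A3), the dual basis $H_t=\frac12\sigma_t$ (so that $d\,h_{t,t'}=\delta_{t,t'}$ turns \eqref{NM5} into \eqref{C1-2q}), and your observation that the mirrored full-support condition (C0) must be carried along for the converse.
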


\begin{proof}
This is exactly Theorem~\ref{TH10} with the two directions exchanged,
specialized to the present qubit-Pauli setting.
Condition (i) is the directional Markovian condition for the direction
$2\to1$.
Condition (ii) is the corresponding positivity requirement for the
reconstructed Choi matrix.
Condition (iii) is the algebraic consistency condition needed in addition to
Markovianity and positivity; in the qubit-Pauli setting, its explicit form is
the reverse-direction counterpart of Proposition~\ref{PROP:C1-twoqubit}.
Hence the statement follows.
\end{proof}

Thus Section~\ref{S7} and the present section use the same reverse
compatibility question on different domains.  Conditions (D1)--(D3) exploit a
known forward representation, whereas
Theorem~\ref{THM:SN21-membership-qubit} applies the general distribution-level
criterion when no such representation is assumed.

\section{Discrimination of hierarchy
${\cal M}[{\cal S}_{P}]\subset {\cal M}[{\cal S}_{Q,1\to 2}]$ and
${\cal M}[{\cal S}_{N,1\to 2}]\subset {\cal M}[{\cal S}_{Q,1\to 2}]$
under qubit-Pauli matrices}
\Label{S7-C}
We finally compare the distribution-level signatures of the parallel and
memoryless sequential subclasses inside the larger quantum-memory class.  The
first subsection compares positivity of the reconstructed pseudo-density
matrix with the forward and reverse Choi positivity conditions.  The second
uses explicit memoryful strategies to show that reconstructed-Choi positivity
and condition~(C1) detect distinct failures of memoryless sequentiality.

\subsection{Hierarchy
${\cal M}[{\cal S}_{P}]\subset {\cal M}[{\cal S}_{Q,1\to 2}]$}
We begin with the hierarchy
${\cal M}[{\cal S}_{P}]\subset {\cal M}[{\cal S}_{Q,1\to 2}]$.
Due to \eqref{FG1VS},
within the class ${\cal M}[{\cal S}_{Q,1\to 2}]$, the parallel subclass ${\cal M}[{\cal S}_{P}]$ is distinguished at the distribution level by 
the positivity of the reconstructed pseudo-density matrix ${\cal R}[{\cal M}[S]]$,
which automatically implies the additional reverse-direction non-signaling condition $Y_1-Y_2-X_2$. 
Since the characterization of the non-signaling conditions 
(D1) and (D2) are simpler than 
the characterization of the positivity condition 
${\cal R}[{\cal M}[S]]\ge 0$,
we discuss this condition under the assumptions (D1) and (D2) as follows.

\begin{lemma}\Label{LL29-A}
Assume that $S \in {\cal S}_{Q,1\to 2}$ satisfies the conditions (D1) and (D2).
The relation ${\cal R}[{\cal M}[S]]\ge 0$ is equivalent to the following condition.
\begin{description}
\item[(1)] $t=3$.
There are two local-unitary normal forms.
\begin{description}
\item[(A)] The condition is equivalent to
$1\ge \lambda_1+\lambda_2+\lambda_3$.
\item[(B)] The condition is equivalent to
$1\ge \max(-\lambda_1 +\lambda_2+\lambda_3,
\lambda_1 -\lambda_2+\lambda_3,
\lambda_1 +\lambda_2-\lambda_3)$.
\end{description}
\item[(2)] $t=2$:
\begin{align}
(1-(c_1^1)^2)(1-(c_1^2)^2) \ge (\lambda_1+\lambda_2)^2.
\Label{BP1T}
\end{align}
\item[(3)] $t=1$:
\begin{align}
\bigl(1-((c_1^1)^2+(c_2^1)^2)\bigr)
\bigl(1-((c_1^2)^2+(c_2^2)^2)\bigr)
\ge (\lambda_1)^2.
\Label{BP2E}
\end{align}
\item[(4)] $t=0$: no further condition is imposed.
\end{description}
\end{lemma}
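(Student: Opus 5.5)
We reduce the lemma to an explicit positivity computation on the reconstructed operator of Lemma~\ref{LBU2}(i). Under (D1), \eqref{DSJK} gives
\begin{align}
{\cal R}[{\cal M}[S]]=\rho_1\otimes\rho_2+\sum_{j=1}^t\frac{\lambda_j}{4}\alpha_j\otimes\beta_j .
\end{align}
The constraints imposed by (D1) and (D2) on the Bloch vectors $c^1$ and $c^2$ are given by Lemmas~\ref{LBU1} and \ref{LBU2}(ii). Both are invariant under local unitaries, so we may apply local unitaries $U_1\otimes U_2$. These rotate $\alpha_j$ and $\beta_j$ into $\sigma_j$, up to signs, by a singular-value-type decomposition of the correlation part. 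After this rotation, ${\cal R}$ has the standard two-qubit form
\begin{align}
\frac14\Big(I\otimes I+\sum_i c^1_i\sigma_i\otimes I+\sum_i c^2_i I\otimes\sigma_i+\sum_j \pm\lambda_j\sigma_j\otimes\sigma_j\Big).
\end{align}
The signs are fixed by whether the orthogonal map taking $\{\alpha_j\}$ to $\{\beta_j\}$ has determinant $+1$ or $-1$; this is the origin of the two normal forms (A) and (B).

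We then treat the cases in order of $t$.

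For $t=0$, ${\cal R}=\rho_1\otimes\rho_2\ge0$ trivially.

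For $t=3$, (D1) and (D2) force $c^1=c^2=0$. The operator is then diagonal in the Bell basis, with eigenvalues $\frac14(1\pm\lambda_1\pm\lambda_2\pm\lambda_3)$ subject to a parity constraint on the signs. Listing the four eigenvalues for each sign pattern gives exactly $1\ge\lambda_1+\lambda_2+\lambda_3$ in one normal form and $1\ge\max(-\lambda_1+\lambda_2+\lambda_3,\dots)$ in the other. Note that relative to Lemma~\ref{LBU0} the labels (A) and (B) swap: ${\cal R}$ differs from $C^{T_1}$ by the partial transpose, which flips the sign of the $\sigma_2\otimes\sigma_2$ coefficient.

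For $t=2$, Lemmas~\ref{LBU1} and \ref{LBU2} leave only $c^1_1$ and $c^2_1$ nonzero. After the rotation, the orthogonality conditions force the correlation terms to be $\lambda_1\sigma_2\otimes\sigma_2$ and $\lambda_2\sigma_3\otimes\sigma_3$ up to signs, with $\sigma_1$ carrying the local terms. Then ${\cal R}$ commutes with $\sigma_1\otimes\sigma_1$ and splits into two $2\times2$ blocks on the $\pm1$ eigenspaces. Positivity of the relevant block reduces to a determinant inequality that factorizes as $(1-(c^1_1)^2)(1-(c^2_1)^2)\ge(\lambda_1+\lambda_2)^2$. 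The other block gives the condition with $\lambda_1-\lambda_2$, which is weaker since $\lambda_j>0$.

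For $t=1$, the local vectors lie in the plane orthogonal to the single correlation direction, say $\sigma_3\otimes\sigma_3$. By a rotation about the $3$-axis on each side we can align $c^1$ and $c^2$ each with one axis. Writing ${\cal R}$ in the $\sigma_3\otimes\sigma_3$ eigenbasis again gives $2\times2$ blocks. Their determinants should reduce to $(1-|c^1|^2)(1-|c^2|^2)\ge\lambda_1^2$, where $|c^i|^2=(c^i_1)^2+(c^i_2)^2$, since the cross terms cancel by orthogonality.

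The main obstacle is the bookkeeping in the $t=1$ and $t=2$ cases. We must check that the (D1)/(D2) constraints really leave the block structure intact, i.e. that local Bloch components along correlation directions vanish, as Lemmas~\ref{LBU1} and \ref{LBU2} assert through $c\,\Tr\alpha\sigma=0$. We must also check that the factorized determinant identity holds rather than a messier quartic. To settle this we would first compute the $4\times4$ matrix explicitly in the aligned basis and verify the factorization symbolically before writing the general argument.
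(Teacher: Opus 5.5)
Your route is the paper's: start from \eqref{DSJK}, bring $\alpha_j,\beta_j$ to signed Pauli matrices by local unitaries, and read positivity off a $2\times2$ block structure, which the paper packages as Lemma~\ref{LL31}. Your $t=0$ and $t=3$ cases are correct, including the (A)/(B) swap relative to Lemma~\ref{LBU0}.

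The $t=2$ computation as you set it up would not give \eqref{BP1T}. Your ``standard two-qubit form'' drops the product part $\frac14\sum_{i,k}c^1_ic^2_k\,\sigma_i\otimes\sigma_k$ of $\rho_1\otimes\rho_2$. In the $t=2$ normal form, write $a=c^1_1$ and $b=c^2_1$; the dropped part is then the term $\frac14 ab\,\sigma_1\otimes\sigma_1$. This term is exactly what produces the factorization you were unsure of. With it, the two $\sigma_1\otimes\sigma_1$-blocks of $4{\cal R}$ have diagonals $(1+a)(1+b),(1-a)(1-b)$ and $(1+a)(1-b),(1-a)(1+b)$. Lemma~\ref{LL31}, with $\tau_1=ab$, $\tau_2=\pm\lambda_1$, $\tau_3=\lambda_2$, then gives
\[
(1-ab)^2\ge(a-b)^2+(\tau_2+\tau_3)^2,\qquad (1+ab)^2\ge(a+b)^2+(\tau_2-\tau_3)^2 .
\]
Since $(1\mp ab)^2-(a\mp b)^2=(1-a^2)(1-b^2)$, these become $(1-a^2)(1-b^2)\ge(\tau_2\pm\tau_3)^2$, which is \eqref{BP1T}. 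Without the $ab$ term you would instead get $1\ge(a-b)^2+(\tau_2+\tau_3)^2$ and $1\ge(a+b)^2+(\tau_2-\tau_3)^2$, and that criterion is wrong. For $a=b=0.6$ and $\lambda_1=\lambda_2=0.1$ it rejects an ${\cal R}$ that is positive by \eqref{BP1T}.

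Your $t=1$ step also fails as stated. ${\cal R}$ does not commute with $\sigma_3\otimes\sigma_3$, because the local terms $\sigma_i\otimes I$ and $I\otimes\sigma_k$ ($i,k\in\{1,2\}$) anticommute with it. So the $\sigma_3\otimes\sigma_3$ eigenbasis gives no block decomposition at all.

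The fix is to use rotations about the $3$-axis, which fix $\sigma_3\otimes\sigma_3$, to place both local Bloch vectors along the $1$-axis. Then
\[
4{\cal R}=(I+r_1\sigma_1)\otimes(I+r_2\sigma_1)\pm\lambda_1\sigma_3\otimes\sigma_3,\qquad r_i:=\bigl((c^i_1)^2+(c^i_2)^2\bigr)^{1/2}.
\]
This operator commutes with $\sigma_1\otimes\sigma_1$. It is the $t=2$ computation with one correlation coefficient set to zero, again including the product term $r_1r_2\,\sigma_1\otimes\sigma_1$. Both blocks give $(1-r_1^2)(1-r_2^2)\ge\lambda_1^2$, which is \eqref{BP2E}. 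No ``cancellation of cross terms by orthogonality'' is involved: the orthogonality constraints from (D1)/(D2) are used only to reach the normal form.
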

\noindent\emph{Proof deferred to Appendix~\ref{app:proof-LL29A}.}

Lemma~\ref{LL29-A} should be compared with Lemmas~\ref{LBU0} and \ref{LBU3},
which describe the positivity regions of the forward and reverse reconstructed
Choi matrices.
Under the directional assumptions (D1) and (D2), the three positivity notions
\[
\tilde{C}_{1\to 2}[{\cal M}[S]]\ge 0,\qquad
\tilde{C}_{2\to 1}[{\cal M}[S]]\ge 0,\qquad
{\cal R}[{\cal M}[S]]\ge 0
\]
do not coincide.
In the cases $t=2$ and $t=1$, the condition
${\cal R}[{\cal M}[S]]\ge 0$ is stronger than reconstructed-Choi positivity,
whereas in the case $t=3$ the relative strength depends on the local-unitary
normal form.
Thus, positivity of the reconstructed pseudo-density matrix detects a
generally stricter observable hierarchy than reconstructed-Choi positivity
alone, but not in a uniform way across all channel geometries.

This comparison clarifies why positivity by itself does not close the
hierarchy problem.
Even after the directional conditions have been imposed, different positivity
tests lead to different observable boundaries.
To distinguish the memoryless sequential class from the full quantum-memory
class, one therefore needs an additional ingredient beyond positivity.

\subsection{Hierarchy
${\cal M}[{\cal S}_{N,1\to 2}] \subset {\cal M}[{\cal S}_{Q,1\to 2}]$}\Label{S9B}
We now turn to the hierarchy
${\cal M}[{\cal S}_{N,1\to 2}] \subset {\cal M}[{\cal S}_{Q,1\to 2}]$.
Every strategy in ${\cal S}_{Q,1\to 2}$ already satisfies the non-signaling
condition $X_1-Y_1-Y_2$, so this directional condition alone cannot separate
the memoryless sequential class from the larger quantum-memory class.
Proposition~\ref{P1Auto} shows that 
the positivity condition $\hat C_{1 \to 2}[P]\ge 0$ and
the additional algebraic condition (C1) are satisfied by all memoryless sequential strategies.
Here, both conditions are useful to distinguish 
an element of 
${\cal M}[{\cal S}_{N,1\to 2}]$ from elements in ${\cal M}[{\cal S}_{Q,1\to 2}]$.

First, we show an example illustrating the usefulness of
the positivity condition $\hat C_{1 \to 2}[P]\ge 0$.

\begin{example}\Label{ExB}
Charlie prepares a maximally entangled state
$|\Phi^+\rangle
:=
\frac{1}{\sqrt2}(|00\rangle+|11\rangle)$
on Alice's input system and an internal memory qubit $M$.
He sends Alice's half to Alice and keeps the memory qubit $M$.
After Alice performs her projective Pauli measurement,
Charlie discards Alice's output system and sends the memory qubit $M$
to Bob's input system.
\end{example}

\begin{lemma}\Label{LEM-ExB}
The strategy given by Example~\ref{ExB}
satisfies the condition \emph{(C1)} and
does not satisfy 
the positivity condition $\hat C_{1 \to 2}[P]\ge 0$.
\end{lemma}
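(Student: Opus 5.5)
The plan is to compute the observed distribution $P={\cal M}[S]$ of Example~\ref{ExB} in closed form. I would then check condition (C1) through its two-qubit form \eqref{C1-2q}, and evaluate $\hat C_{1\to2}[P]$ directly from the definition \eqref{VBTR} with $d=2$, $G_t=\sigma_t$, $H_t=\sigma_t/2$ and $h_{t,t'}=\delta_{t,t'}/2$.

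\emph{Step 1 (the distribution).} Alice's reduced input is $I/2$, so $P_{X_1|Y_1}(x_1|y_1)=\tfrac12$ for every $y_1\in\{1,2,3\}$. In particular (C0) holds and all $\eta_{y_1}=0$. Measuring Alice's half of $|\Phi^+\rangle$ in the eigenbasis $\{|x_1,y_1\rangle\}$ of $\sigma_{y_1}$ collapses $M$ to the complex conjugate vector $\overline{|x_1,y_1\rangle}$. For $\sigma_1,\sigma_3$ this is the same eigenvector; for $\sigma_2$ it is the opposite one. Since Bob then measures $M$, we get
\begin{align*}
P_{X_2|X_1,Y_1,Y_2}(x_2|x_1,y_1,y_2)=\tfrac12 \quad (y_1\neq y_2),
\end{align*}
while for $y_1=y_2=y$ the conditional is $\delta_{x_2,x_1}$ if $y\in\{1,3\}$ and $\delta_{x_2,1-x_1}$ if $y=2$. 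When Alice does nothing, $M$ is $I/2$, so $P_{X_2|Y_1,Y_2}(x_2|0,y_2)=\tfrac12$ and all $\zeta_{y_2}=0$. This is the symmetric form \eqref{C1-symmetry}, except that the correlation parameters are $(1,-1,1)$; the value $-1$ lies outside the parameter range assumed there, so I would not invoke Proposition~\ref{PROP:C1-twoqubit} but verify (C1) directly.

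\emph{Step 2 (condition (C1)).} In \eqref{C1-2q} the left-hand side is $\sum_{x_1}P_{X_2|X_1,Y_1,Y_2}(x_2|x_1,y_1,y_2)=1$ in every case, because for fixed $x_2$ the two values of $x_1$ give either $\tfrac12+\tfrac12$ or $1+0$. On the right-hand side the first term is $2\cdot\tfrac12=1$. The second term vanishes because $\langle\sigma_{y_1'}\otimes I\rangle=0$ for all $y_1'$. Hence \eqref{NM5} holds, and together with (C0) this gives (C1).

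\emph{Step 3 (reconstructed Choi matrix).} In the first sum of \eqref{VBTR}:
\begin{itemize}
\item for $t_2=0$, the coefficient is $\sum_{x_1}(-1)^{x_1}\sum_{x_2}P=0$;
\item for $t_1,t_2\neq0$, the coefficient $\sum_{x_1,x_2}(-1)^{x_1+x_2}P(x_2|x_1,t_1,t_2)$ equals $2s_{y}$ when $t_1=t_2=y$, with $s=(1,-1,1)$, and $0$ otherwise.
\end{itemize}
This sum therefore contributes $\tfrac12\sum_y s_y\,\sigma_y^{T}\otimes\sigma_y$. Because $\sigma_2^{T}=-\sigma_2$, this equals $\tfrac12\sum_{i=1}^3\sigma_i\otimes\sigma_i$. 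In the second sum, the correction involving $P_{X_1|Y_1}$ vanishes since $\eta=0$. The remaining part $\sum_{x_2}g_{x_2|t_2}P_{X_2|Y_1,Y_2}(x_2|0,y(t_2))$ equals $1$ for $t_2=0$ and $\zeta_{y_2}=0$ for $t_2\neq0$. This contributes $I\otimes I/2$, so
\begin{align*}
\hat C_{1\to2}[P]=\tfrac12\, I\otimes I+\tfrac12\sum_{i=1}^3\sigma_i\otimes\sigma_i .
\end{align*}
On the singlet, $\sum_i\sigma_i\otimes\sigma_i$ has eigenvalue $-3$, so $\hat C_{1\to2}[P]$ has eigenvalue $\tfrac12-\tfrac32=-1<0$. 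Thus $\hat C_{1\to2}[P]\ge0$ fails; this is the Choi matrix of the transpose map, as expected.

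The main obstacle is the sign bookkeeping for $\sigma_2$: the conjugation in the collapse of $|\Phi^+\rangle$ and the transpose $H_{t_1}^T$ in \eqref{VBTR}. A sign error there would turn the result into the positive identity-channel Choi matrix, so I would double-check $\sigma_2$ explicitly.
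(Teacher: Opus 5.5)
Your proof is correct. The (C1) part is the same as the paper's: both use the two-qubit form \eqref{C1-2q}, note that $\eta_{y_1'}=0$ kills the correction term, and compare $\sum_{x_1}P_{X_2|X_1,Y_1,Y_2}(x_2|x_1,y_1,y_2)=1$ with $2P_{X_2|Y_1,Y_2}(x_2|0,y_2)=1$. Your caution about $\lambda_2=-1$ lying outside the allowed range is harmless but unnecessary. The relation \eqref{C1-2q} is derived without the symmetric-model assumption; only parts (i)--(ii) of Proposition~\ref{PROP:C1-twoqubit} use that assumption.

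For positivity your route differs slightly. The paper argues structurally: Bob receives $(|x_1,y_1\rangle\langle x_1,y_1|)^T$, so the data coincide with those of a fictitious memoryless model with $\rho_1=I/2$ and the transpose map. By the reconstruction of Theorem~\ref{TH8}, $\hat C_{1\to2}[P]$ is then the Choi matrix of the transpose map, namely the swap with spectrum $\{1,1,1,-1\}$. This is quicker and explains conceptually why positivity fails. However, it tacitly applies Theorem~\ref{TH8} to a non-CP map. That is legitimate only because the proof of that theorem is purely linear, and because Bob's $Y_1=0$ marginal $I/2$ equals $\rho_1^T$. Your term-by-term evaluation of \eqref{VBTR} needs no such extension and makes the $\sigma_2^T=-\sigma_2$ bookkeeping explicit. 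It lands on the same operator, since $\tfrac12\bigl(I\otimes I+\sum_i\sigma_i\otimes\sigma_i\bigr)$ is exactly the swap.
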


\noindent\emph{Proof deferred to Appendix~\ref{app:proof-ExB}.}

Next, we present an example for the usefulness of
the condition (C1).

\begin{example}\Label{Ex1prime}
Charlie prepares the classical memory $\cX=\{0,1\}$ with the uniform
distribution.
He inputs the state $|(-1)^X,3\rangle$ to Alice's input system.
Later, Charlie applies the unitary $\sigma_X$ to Alice's output system,
then applies the depolarizing channel
$D_{1/2}(\rho):=\frac12\,\rho+\frac12 \rho_{mix}$,
where $\rho_{mix}:=\frac{I}{2}$,
and sends the result to Bob's input system.
\end{example}

\begin{lemma}\Label{LEM-Ex1prime}
The strategy given by Example~\ref{Ex1prime}
satisfies the positivity condition $\hat C_{1\to 2}[P]\ge 0$,
and does not satisfy the condition \emph{(C1)}.
\end{lemma}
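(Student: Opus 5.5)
The plan is to compute the observed distribution $P$ of Example~\ref{Ex1prime} in closed form, then read off $\rho_1[P]$, $\hat C_{1\to2}[P]$, and both sides of the two-qubit form \eqref{C1-2q} of (C1). In the Pauli setting we have $G_t=\sigma_t$, $H_t=\sigma_t/2$, $h_{t,t'}=\delta_{t,t'}/2$ and $d=2$, so everything reduces to Bloch-vector bookkeeping. I take $X=0$ to mean that $\sigma_X=\sigma_0=I$ and that Alice's input is the $+1$ eigenvector of $\sigma_3$; for $X=1$ the input is the $-1$ eigenvector and Charlie applies $\sigma_1$.

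\emph{Alice's marginal and (C0).} Averaging over the uniform $X$, Alice's input is $\frac12(|0,3\rangle\langle0,3|+|1,3\rangle\langle1,3|)=I/2$. Hence $P_{X_1|Y_1}(x_1|y_1)=1/2$ for every $y_1$, so (C0) holds, every $\eta_{y_1}$ vanishes, and $\rho_1[P]=I/2$. Since Charlie's memory is classical and used only after Alice acts, Alice's statistics do not depend on $Y_2$; thus $X_1-Y_1-Y_2$ holds and $P\in{\cal M}[{\cal S}_{C,1\to2}]\subset{\cal M}[{\cal S}_{Q,1\to2}]$.

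\emph{Conditional Bob statistics.} I would treat each $y_1$ separately, keeping track of Alice's post-measurement state, the posterior on $X$, and Bob's Bloch vector after $\sigma_X$ and $D_{1/2}$.
\begin{itemize}
\item $y_1=0$: both branches send $|0,3\rangle$ to the depolarizer, because $\sigma_1|1,3\rangle=|0,3\rangle$. Bob's Bloch vector is $(0,0,\tfrac12)$.
\item $y_1=3$: outcome $x_1$ reveals $X=x_1$, and again $\sigma_{x_1}|x_1,3\rangle=|0,3\rangle$. Bob gets $(0,0,\tfrac12)$ independently of $x_1$.
\item $y_1=1$: the outcome is independent of $X$, and both $I$ and $\sigma_1$ fix $|x_1,1\rangle$. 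Bob gets $(\tfrac12(-1)^{x_1},0,0)$.
\item $y_1=2$: $\sigma_1$ flips the $\sigma_2$ eigenvector, and with $X$ uniform Bob gets $(0,0,0)$.
\end{itemize}

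\emph{Positivity of $\hat C_{1\to2}[P]$.} Substituting into \eqref{VBTR} with $\eta=0$, the correction term involving $h_{t_1,t_1'}$ drops out. The correlation coefficients are $\sum_{x_1,x_2}(-1)^{x_1+x_2}P(x_2|x_1,y_1,y_2)=\delta_{y_1,1}\delta_{y_2,1}$. The $I\otimes H_{t_2}$ coefficients come from the $y_1=0$ data $(0,0,\tfrac12)$. This should give
\begin{align}
\hat C_{1\to2}[P]=\tfrac12\bigl(I\otimes I+\tfrac12\, I\otimes\sigma_3+\tfrac12\,\sigma_1\otimes\sigma_1\bigr),
\end{align}
where the transpose on $\sigma_1$ is immaterial. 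This is the affine qubit map $r\mapsto(r_1/2,0,1/2)$. Since $I\otimes\sigma_3$ and $\sigma_1\otimes\sigma_1$ anticommute and each squares to $I$, we get $(\tfrac12 I\otimes\sigma_3+\tfrac12\sigma_1\otimes\sigma_1)^2=\tfrac12 I$. The eigenvalues are therefore $\tfrac12(1\pm1/\sqrt2)>0$, which proves positivity. I would double-check the normalization of \eqref{VBTR} against Theorem~\ref{TH8}, but the sign structure will not change.

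\emph{Failure of (C1).} With $\eta=0$, \eqref{C1-2q} reduces to
\begin{align}
\sum_{x_1}P_{X_2|X_1,Y_1,Y_2}(x_2|x_1,y_1,y_2)=2P_{X_2|Y_1,Y_2}(x_2|0,y_2).
\end{align}
Take $y_1=2$, $y_2=3$ and $x_2=0$. The left side is $2\cdot\frac12=1$, since Bob's state is $I/2$. The right side is $2\cdot\frac34=\frac32$. So (C1) fails; the choice $y_1=1$, $y_2=3$ fails in the same way.

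The main obstacle I expect is the careful evaluation of $\hat C_{1\to2}[P]$, namely getting the normalizations and the $T_1$ convention in \eqref{VBTR} right. The (C1) violation, by contrast, is a one-line check once the conditional Bloch vectors are tabulated.
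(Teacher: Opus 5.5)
Your proposal is correct and follows essentially the same route as the paper. You tabulate Bob's conditional Bloch vectors and note that $\eta_{y_1}=0$ kills the correction terms in \eqref{VBTR} and \eqref{C1-2q}. You then obtain $\hat C_{1\to2}[P]=\frac12\bigl(I\otimes I+\frac12 I\otimes\sigma_3+\frac12\sigma_1\otimes\sigma_1\bigr)$ with eigenvalues $\frac12\pm\frac{\sqrt2}{4}$, and exhibit $1\neq\frac32$ in (C1). The only differences are cosmetic: you get the spectrum from the anticommutation of $I\otimes\sigma_3$ and $\sigma_1\otimes\sigma_1$ rather than from the explicit $4\times4$ matrix, and you use the witness $(y_1,y_2,x_2)=(2,3,0)$, whereas the paper uses $(1,3,0)$, which you also note works.
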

\noindent\emph{Proof deferred to Appendix~\ref{app:proof-Ex1prime}.}

Since a classical memory can be embedded into a quantum memory as in \eqref{classical-quantum-memory-inclusion}, Example~\ref{Ex1prime} also provides an element of ${\cal M}[{\cal S}_{Q,1\to2}]$ that violates condition~\emph{(C1)}, even though its reconstructed Choi matrix is positive.

These two examples play complementary roles within ${\cal M}[{\cal S}_{Q,1\to2}]$. Example~\ref{ExB} shows that condition~\emph{(C1)} does not imply positivity of the reconstructed Choi matrix. By contrast, Example~\ref{Ex1prime} shows that positivity of the reconstructed Choi matrix does not imply condition~\emph{(C1)}. Thus the two conditions detect distinct obstructions to a memoryless sequential representation, and neither can replace the other in the characterization of ${\cal M}[{\cal S}_{N,1\to2}]$.

Example~\ref{ExB} showed that, when Charlie's memory is quantum,
the positivity condition
$\hat C_{1\to 2}[P]\ge 0 $
may fail.
This failure of positivity is not restricted to the quantum-memory case.
It can already occur when Charlie's memory is purely classical.
The following example provides such a classical-memory strategy.
Unlike Example~\ref{ExB}, however, the present example does not isolate
positivity alone: in this case, the observed distribution violates not only
the positivity condition $\hat C_{1\to 2}[P]\ge 0$,
but also condition \emph{(C1)}.
Thus, the example should be understood only as showing that the positivity
condition can already fail in the classical-memory subclass.

\begin{example}\Label{Ex1}
Charlie prepares the classical memory $\cX=\{0,1\}$.
He chooses $X=0$ with probability $\lambda \in (0,1)$,
and chooses $X=1$ with probability $1-\lambda$.
He inputs the state $|(-1)^X,3\rangle$ to Alice's input system.
Later, Charlie applies the unitary $\sigma_{X}$
to Alice's output system and sends the result to Bob's input system.
Here, to satisfy the condition (C0), 
the condition $\lambda \neq 0,1$ is needed. 
\end{example}

\begin{lemma}\Label{LEM-Ex1}
The strategy given by Example~\ref{Ex1}
satisfies neither condition \emph{(C1)} nor the positivity condition
$\hat C_{1\to 2}[P]\ge 0$.
\end{lemma}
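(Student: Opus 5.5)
The plan is to compute the observed distribution of Example~\ref{Ex1} explicitly in the qubit-Pauli setting, then test condition (C1) through its two-qubit form \eqref{C1-2q} and test positivity through the closed form of $\hat C_{1\to2}[P]$ from \eqref{VBTR}. Each failure should be exhibited by a single explicit witness.

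\textbf{Step 1: the distribution.} Label outcomes so that $x=0$ corresponds to eigenvalue $+1$. Alice's input is then $|0\rangle$ when $X=0$ and $|1\rangle$ when $X=1$. Hence her reduced state is $\rho_1=\lambda|0\rangle\langle0|+(1-\lambda)|1\rangle\langle1|$, with $\langle\sigma_3\rangle=\mu:=2\lambda-1$ and $\langle\sigma_1\rangle=\langle\sigma_2\rangle=0$. Since $\lambda\in(0,1)$, condition (C0) holds.

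The conditional statistics $P_{X_2|X_1,Y_1,Y_2}$ split into the following cases.
\begin{enumerate}[(a)]
\item If $y_1=3$, the outcome $x_1$ reveals $X$. In both branches Bob receives $\sigma_X|X\rangle\propto|0\rangle$, so he always receives $|0\rangle$, independently of $x_1$.
\item If $y_1=1$, $X$ is independent of $x_1$. The post-measurement state $|x_1,1\rangle$ is invariant under $\sigma_1$, so the effective channel is the identity.
\item If $y_1=2$, $X$ is again independent of $x_1$. Now $\sigma_1$ flips the $\sigma_2$-eigenstates, so the effective channel has $\sigma_2$-correlation $\mu$ and is otherwise trivial.
\item If Alice does nothing, Bob always receives $|0\rangle$, so $P_{X_2|Y_1,Y_2}(0|0,3)=1$ and $P_{X_2|Y_1,Y_2}(\cdot|0,y_2)=\tfrac12$ for $y_2=1,2$.
\end{enumerate}
Because Alice's marginal is fixed by $\rho_1$ for every $y_2$, the Markov chain $X_1-Y_1-Y_2$ holds, so the reconstruction formula is applicable.

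\textbf{Step 2: violation of (C1).} I plan to use the witness $y_1=1$, $y_2=3$, $x_2=0$ in \eqref{C1-2q}.
\begin{itemize}
\item Left-hand side: Bob receives $|\pm\rangle$, so $P(0|x_1,1,3)=\tfrac12$ for both $x_1$, and the sum over $x_1$ equals $1$.
\item Right-hand side: the first term is $2P_{X_2|Y_1,Y_2}(0|0,3)=2$. In the correction sum only $y_1'=3$ contributes, since $\langle\sigma_1\rangle=\langle\sigma_2\rangle=0$. For $y_1'=3$ the factor $\sum_{x_1'}(-1)^{x_1'}P(0|x_1',3,3)=1-1=0$. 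So the right-hand side equals $2$.
\end{itemize}
Since $1\neq2$, condition (C1) fails.

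\textbf{Step 3: non-positivity.} With $H_t=\sigma_t/2$ I will evaluate the coefficients in \eqref{VBTR}; they are exactly the right-hand sides of \eqref{XK1}--\eqref{XK2}.
\begin{itemize}
\item From \eqref{XK1}: $\mathrm{Tr}(\sigma_1\otimes\sigma_1)\hat C^{T_1}=2$ and $\mathrm{Tr}(\sigma_2\otimes\sigma_2)\hat C^{T_1}=2\mu$. All other $\sigma_a\otimes\sigma_b$ coefficients vanish; in particular every $a=3$ coefficient is $0$, because Bob's state does not depend on $x_1$.
\item From \eqref{XK2}: the $I\otimes\sigma_3$ coefficient is $2\bigl(1-\mu\cdot 0\bigr)=2$. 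The other $I\otimes\sigma_b$ coefficients vanish, and the $I\otimes I$ coefficient is $2$ by trace preservation.
\end{itemize}
This gives $\hat C_{1\to2}^{T_1}=\tfrac12\bigl(I\otimes I+I\otimes\sigma_3+\sigma_1\otimes\sigma_1+\mu\,\sigma_2\otimes\sigma_2\bigr)$. Taking the partial transpose ($\sigma_2^T=-\sigma_2$) yields $\hat C_{1\to2}=\tfrac12\bigl(I\otimes I+I\otimes\sigma_3+\sigma_1\otimes\sigma_1-\mu\,\sigma_2\otimes\sigma_2\bigr)$.

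On the invariant block $\mathrm{span}\{|00\rangle,|11\rangle\}$ this operator is $\tfrac12\begin{pmatrix}2&1+\mu\\1+\mu&0\end{pmatrix}$. Its determinant $-(1+\mu)^2/4=-\lambda^2$ is negative for $\lambda\in(0,1)$, so $\hat C_{1\to2}[P]\not\ge0$.

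\textbf{Main obstacle.} The delicate part is Step 3, where sign conventions must be kept straight. These include the outcome labelling $x\leftrightarrow(-1)^x$, the Choi transpose convention \eqref{eq:choi-convention} with its explicit $T_1$, and the $d$ and $h_{t,t'}$ normalizations in \eqref{VBTR}. A slip in any of these could move the off-diagonal entry from $1+\mu$ to $1-\mu$. I would cross-check the result against Theorem~\ref{TH8} in the degenerate case $\lambda=1$, where the strategy is memoryless and the reconstruction must return a positive Choi matrix; there the off-diagonal entry should be compatible with positivity of the true Choi matrix.
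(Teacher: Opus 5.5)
Your proposal is correct and follows the paper's proof essentially step for step. You use the same (C1) witness $(y_1,y_2,x_2)=(1,3,0)$, with left-hand side $1$ and right-hand side $2$. You also obtain the same reconstructed matrix $\hat C_{1\to2}[P]=\tfrac12\bigl(I\otimes I+I\otimes\sigma_3+\sigma_1\otimes\sigma_1-(2\lambda-1)\,\sigma_2\otimes\sigma_2\bigr)$, whose $\{|00\rangle,|11\rangle\}$ block $\begin{pmatrix}1&\lambda\\ \lambda&0\end{pmatrix}$ is one of the two $2\times2$ blocks the paper diagonalizes; you take its determinant instead of listing eigenvalues.

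Two side remarks. Your worry about a sign slip is moot: the diagonal entry $\langle 11|\hat C_{1\to2}[P]|11\rangle=0$ comes only from the $I\otimes I$ and $I\otimes\sigma_3$ terms, and the off-diagonal entry $\tfrac12(1\pm\mu)$ is nonzero for either sign when $\lambda\in(0,1)$, so positivity fails regardless. Your proposed $\lambda=1$ cross-check is unreliable as stated, however. There (C0) fails, and the limiting $y_1=3$ conditionals are not those of the identity channel, so the limiting reconstruction is itself not positive; only its off-diagonal entries agree with the identity's Choi matrix.
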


\noindent\emph{Proof deferred to Appendix~\ref{app:proof-Ex1}.}

\section{Conclusion}
We studied the statistical identification of causal structure in a restricted
projective-observation regime, where Alice and Bob have access only to local
measurement data and full process tomography is impossible.
Our main focus was the memoryless sequential scenario, in which Charlie mediates
a definite order without retaining memory, and the central question was whether
this structure can be recognized directly from the observed distribution.

Our first main result is a complete distribution-level characterization of
${\cal M}[{\cal S}_{N,1\to 2}]$ under the assumptions (A1) and (A3).
We showed that the directional Markovian condition $X_1-Y_1-Y_2$ and the
positivity of the reconstructed Choi matrix $\hat{C}_{1\to2}[P]$ are not
sufficient by themselves.
The missing ingredient is the additional algebraic consistency condition (C1).
Theorem~\ref{TH10} shows that these three conditions are together necessary and sufficient. It also establishes that, within the class of distributions satisfying the directional Markovian condition, condition~(C1) is equivalent to exact algebraic reproduction of the observed distribution by the reconstructed state and candidate Choi matrix. Hence any alternative condition sufficient for such reproduction must imply condition~(C1). In this precise sense, condition~(C1) is the minimal additional algebraic consistency condition. Positivity then promotes the reconstructed candidate map to a physical channel and completes the memoryless-sequential membership criterion.
In this sense, the restricted projective problem can still be solved exactly at
the level of observed statistics, even though exact process-matrix
reconstruction is unavailable.

A central conceptual point of the paper is that condition (C1) is genuinely new.
It is not a reformulation of positivity, nor does it follow from the usual
non-signaling or directional Markovian constraints.
Rather, it expresses the consistency requirement that the observed joint
statistics arise from a single underlying channel acting on the post-measurement
states.
This clarifies why restricted projective observations require more than the
standard combination of Markovianity and complete positivity.
The example constructed in Section~\ref{S7-C} makes this point explicit by
showing that a strategy may satisfy the relevant positivity requirements and
still fail to belong to the memoryless sequential class because (C1) is
violated.

Our second main contribution is the explicit two-qubit Pauli analysis.
Although this setting is not tomographically complete, it is analytically
tractable and already captures the essential obstruction to order
identification.
For a forward memoryless sequential strategy
$S_{1\to2}(\rho_1,C_{1\to2})$, we translated the reverse-compatibility problem
into the concrete conditions (D1)--(D3), and we analyzed these conditions both
for general qubit channels and for representative channel families such as
Pauli, phase-damping, and depolarizing channels.
This yields a closed-form description of the order-indistinguishable region in
the minimal nontrivial dimension.

Our third contribution is the clarification of observable hierarchies.
We showed that positivity of reconstructed objects is not, by itself, enough to
distinguish the relevant operational classes.
In particular, the comparison between reconstructed-Choi positivity and
pseudo-density-matrix positivity reveals that different positivity tests lead to
different observable boundaries.
To separate the memoryless sequential class from the broader quantum-memory
class, an additional distribution-level condition is indispensable, and this is
precisely the role played by (C1).

Finally, we formulated the reverse-direction membership problem for ${\cal M}[{\cal S}_{N,2\to 1}]$ in the same qubit-Pauli regime. 
This should be viewed as the full version of the reverse-direction question: in Section~\ref{S7} we considered the same question only for distributions already known to belong to ${\cal M}[{\cal S}_{N,1\to 2}]$, which led to the order-indistinguishability problem, whereas here we treated arbitrary observed distributions. 
By applying the reverse-direction analogue of Theorem~\ref{TH10}, we obtained a concrete membership criterion in terms of the reverse Markovian condition, the positivity of the reconstructed reverse-direction Choi matrix, and the corresponding algebraic consistency condition.

Taken together, these results turn the qualitative limitation of causal-order
identification under restricted projective observations into an explicit
set-membership problem.
They show that even in a genuinely non-tomographic regime, one can still obtain
sharp and operationally checkable criteria for memorylessness, causal
compatibility, and order-indistinguishability.
This provides a concrete statistical framework for causal inference in quantum
network scenarios where experimental control is limited to fixed projective
measurements.

\subsection*{Acknowledgements}
The author thanks Dr Baichu Yu for valuable discussions.
The author was supported in part by
the General R\&D Projects of 1+1+1 CUHK-CUHK(SZ)-GDST Joint Collaboration Fund (Grant No. GRDP2025-022), 
and the Guangdong Provincial Quantum Science Strategic Initiative (Grant No. 
GDZX2505003).
Generative AI tools were used in the preparation of this manuscript to assist with drafting and revising the exposition, improving the overall presentation, and generating source code for figures designed by the author. Their mathematical use was limited to assistance with several calculations in the final section. The research questions, mathematical framework, main results, and proof strategy were developed by the author. All AI-assisted text, figure code, and calculations were reviewed and, where necessary, revised by the author, who takes full responsibility for the contents of the manuscript.

\appendix
\section{Additional materials for the two-qubit sections}\Label{app:qubit-patch}

\subsection{Basic positivity criterion for the qubit-Pauli analysis}\Label{app:proof-LL31}
The following lemma provides the basic positivity criterion 
used in the following appendices.

\begin{lemma}\Label{LL31}
The relation
$I \otimes I
+s_1 \sigma_1 \otimes I
+s_2 I \otimes \sigma_1
+\sum_{j=1}^3 \tau_j \sigma_j \otimes \sigma_j \ge 0$ holds
if and only if
the relations
\begin{align}
1+\tau_1+s_1+s_2 &\ge 0 \Label{AP1}\\
1-\tau_1-s_1+s_2 &\ge 0 \Label{AP2}\\
(1-\tau_1)^2 &\ge (s_1-s_2)^2+(\tau_2+\tau_3)^2 \Label{BP1}\\
(1+\tau_1)^2 &\ge (s_1+s_2)^2+(\tau_2-\tau_3)^2 .\Label{BP2}
\end{align}
hold.
\end{lemma}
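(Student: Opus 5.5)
The plan is to block-diagonalize the $4\times4$ matrix
\[
M:=I\otimes I+s_1\sigma_1\otimes I+s_2 I\otimes\sigma_1+\sum_{j=1}^3\tau_j\sigma_j\otimes\sigma_j
\]
into two $2\times 2$ blocks. Positivity of $M$ then reduces to the elementary criterion for a Hermitian $2\times2$ matrix, $\begin{pmatrix}a&c\\ \bar c&b\end{pmatrix}\ge0$. Since the diagonal entries will be real here, this criterion says that $ab\ge|c|^2$ together with $a\ge0$ (equivalently $b\ge0$) holds. Indeed, once the determinant is nonnegative, $a$ and $b$ cannot have opposite signs. So the sign of one diagonal entry decides the sign of the other.

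\textbf{Step 1: change of basis.} I first conjugate $M$ by $U\otimes U$, where $U$ is the Hadamard unitary. This is a local unitary, so it does not affect positivity. Under this conjugation $\sigma_1\mapsto\sigma_3$, $\sigma_3\mapsto\sigma_1$ and $\sigma_2\mapsto-\sigma_2$, hence $\sigma_2\otimes\sigma_2\mapsto\sigma_2\otimes\sigma_2$. The matrix becomes
\[
M'=I\otimes I+s_1\sigma_3\otimes I+s_2I\otimes\sigma_3+\tau_1\sigma_3\otimes\sigma_3+\tau_2\sigma_2\otimes\sigma_2+\tau_3\sigma_1\otimes\sigma_1 .
\]
Every term in $M'$ commutes with $\sigma_3\otimes\sigma_3$. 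Therefore $M'$ leaves invariant the parity subspaces $\mathrm{span}\{|00\rangle,|11\rangle\}$ and $\mathrm{span}\{|01\rangle,|10\rangle\}$.

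\textbf{Step 2: the two blocks.} The $\sigma_3$-type terms are diagonal in the computational basis, so they only contribute diagonal entries. The off-diagonal entries come only from $\sigma_1\otimes\sigma_1$ and $\sigma_2\otimes\sigma_2$, using $\langle00|\sigma_1\otimes\sigma_1|11\rangle=1$, $\langle00|\sigma_2\otimes\sigma_2|11\rangle=-1$, and $\langle01|\sigma_1\otimes\sigma_1|10\rangle=\langle01|\sigma_2\otimes\sigma_2|10\rangle=1$.

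On the even subspace the block is
\[
\begin{pmatrix}1+\tau_1+s_1+s_2 & \tau_3-\tau_2\\ \tau_3-\tau_2 & 1+\tau_1-s_1-s_2\end{pmatrix}.
\]
On the odd subspace the block is
\[
\begin{pmatrix}1-\tau_1+s_1-s_2 & \tau_3+\tau_2\\ \tau_3+\tau_2 & 1-\tau_1-s_1+s_2\end{pmatrix}.
\]

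\textbf{Step 3: apply the $2\times2$ criterion to each block.}
\begin{itemize}
\item The determinant of the even block is nonnegative exactly when $(1+\tau_1)^2-(s_1+s_2)^2\ge(\tau_2-\tau_3)^2$. This is \eqref{BP2}.
\item The corresponding diagonal condition for the even block is \eqref{AP1}.
\item The determinant of the odd block is nonnegative exactly when $(1-\tau_1)^2-(s_1-s_2)^2\ge(\tau_2+\tau_3)^2$. This is \eqref{BP1}.
\item The corresponding diagonal condition for the odd block is \eqref{AP2}.
\end{itemize}
Since $M\ge0$ if and only if both blocks are positive semidefinite, the lemma follows.

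\textbf{Main obstacle.} The only delicate point is bookkeeping: the signs of the off-diagonal entries and of $\sigma_2$ under the Hadamard conjugation must be tracked correctly, so that $\tau_2-\tau_3$ and $\tau_2+\tau_3$ land in the correct blocks. I would verify this by evaluating the matrix elements directly. I would also confirm that one diagonal inequality per block suffices, rather than both, which follows from the determinant argument above.
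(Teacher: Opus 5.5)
Your route is the same as the paper's: Hadamard conjugation, splitting into the parity blocks spanned by $|00\rangle,|11\rangle$ and $|01\rangle,|10\rangle$, then a $2\times2$ positivity test. Your block entries agree with the paper's matrix, and the ``only if'' direction is fine. The gap is the $2\times2$ criterion itself. For a Hermitian matrix $\begin{pmatrix}a&c\\ \bar c&b\end{pmatrix}$ with real diagonal, the conditions $ab\ge|c|^2$ and $a\ge0$ do \emph{not} imply $b\ge0$. If $a=0$, the determinant condition only forces $c=0$ and leaves $b$ unconstrained. So your claim that ``the sign of one diagonal entry decides the sign of the other'' fails exactly when the chosen entry vanishes. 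The check you deferred in your last paragraph, that one diagonal inequality per block suffices, is precisely the step that breaks.

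This is not a removable technicality of your write-up: the ``if'' direction of the statement is false as written. The paper's proof (``the positivity of the determinant \dots\ means \eqref{AP1} and \dots'') makes the same error. Take $s_1=s_2=\tfrac12$, $\tau_1=-2$, $\tau_2=\tau_3=0$. Then \eqref{AP1}, \eqref{AP2}, \eqref{BP1}, \eqref{BP2} read $0\ge0$, $3\ge0$, $9\ge0$, $1\ge1$. Yet the operator $I\otimes I+\tfrac12\sigma_1\otimes I+\tfrac12 I\otimes\sigma_1-2\sigma_1\otimes\sigma_1$ has eigenvalue $1-\tfrac12-\tfrac12-2=-2$ on the joint eigenvector of $\sigma_1\otimes I$ and $I\otimes\sigma_1$ with eigenvalues $(-1,-1)$. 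In your frame, the even block is $\mathrm{diag}(0,-2)$.

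The correct $2\times2$ test is trace $\ge0$ and determinant $\ge0$. Your two blocks have traces $2(1+\tau_1)$ and $2(1-\tau_1)$. So the operator is positive semidefinite if and only if $|\tau_1|\le1$, \eqref{BP1}, and \eqref{BP2} hold. Then \eqref{AP1} and \eqref{AP2} follow as consequences: \eqref{BP2} with $1+\tau_1\ge0$ gives $1+\tau_1\ge|s_1+s_2|$, and \eqref{BP1} with $1-\tau_1\ge0$ gives $1-\tau_1\ge|s_1-s_2|$. Add the hypothesis $|\tau_1|\le1$, which holds in the paper's later applications of this lemma. With the trace--determinant criterion in place of your one-entry criterion, your block computation then completes the proof.
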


\begin{proof}
Applying
Hadamard matrix, we exchange the bases of $\sigma_1$ and $\sigma_3$.
After this conversion, the matrix representation of
$I \otimes I
+s_1 \sigma_1 \otimes I
+s_2 I \otimes \sigma_1
+\sum_{j=1}^3 \tau_j \sigma_j \otimes \sigma_j$
in the basis
$|0,0\rangle, |1,0\rangle,|0,1\rangle,|1,1\rangle$ as
\begin{align}
\left(
\begin{array}{cccc}
1+\tau_1+s_1+s_2 & 0 & 0 &\tau_3-\tau_2 \\
0& 1-\tau_1-s_1+s_2 & \tau_3+\tau_2 &0 \\
0& \tau_3+\tau_2 & 1-\tau_1+s_1-s_2 & 0 \\
\tau_3-\tau_2 & 0& 0 & 1+\tau_1-s_1-s_2
\end{array}
\right)
\end{align}
The positivity of the determinant of the components with basis $|0,0\rangle, |1,1\rangle$
means \eqref{AP1} and
\begin{align}
(1+\tau_1+s_1+s_2)
(1+\tau_1-s_1-s_2) \ge (\tau_2-\tau_3)^2 ,
\end{align}
which is equivalent to \eqref{BP2}.
The positivity of the determinant of the components with basis $|1,0\rangle, |0,1\rangle$
means \eqref{AP2} and
\begin{align}
(1-\tau_1-s_1+s_2) (1-\tau_1+s_1-s_2)
\ge (\tau_2+\tau_3)^2 ,
\end{align}
which is equivalent to \eqref{BP1}.
\end{proof}

\subsection{Proof of Lemma~\ref{LBU1}}\Label{app:proof-LBU1}
\begin{proof}
Alice's averaged output state is
$\frac{1}{2}(I+ c^1_{Y_1} \sigma_{Y_1})$ for when $Y_1=1,2,3$,
and is
$\frac{1}{2}(I+ \sum_{j=1}^3 c^1_j \sigma_j)$ for $Y_1=0$.
The condition (D1) is equivalent to the condition
$\Lambda_{1\to 2}(\frac{1}{2}(I+ c^1_{i} \sigma_{i}))=
\Lambda_{1\to 2}(\frac{1}{2}(I+ \sum_{y=1}^3 c^1_y \sigma_y))$ for $i=1,2,3$,
which means that
\begin{align}
\rho_2+\sum_{j=1}^t \frac{c^1_i \lambda_j}{4}\Tr (\alpha_j \sigma_{i}) \beta_j =
\rho_2+\sum_{j=1}^t \sum_{y=1}^3 \frac{c^1_y \lambda_j}{4}
\Tr (\alpha_j \sigma_{y}) \beta_j \Label{BF0}
\end{align}
for $y_1=1,2,3$.
Since $\lambda_j>0$, the above condition is equivalent to
the condition
\begin{align}
c^1_i \Tr (\alpha_j \sigma_{i})  =
 \sum_{y=1}^3 c^1_y  \Tr (\alpha_j \sigma_{y}) \Label{BF1}
\end{align}
for $i=1,2,3$ and $j\le t$.
This condition is equivalent to
\begin{align}
c^1_i \Tr (\alpha_j \sigma_{i})  =0 \Label{BF2}
\end{align}
for $i=1,2,3$ and $j\le t$.

Case (1).
$t=3$.
Since $\alpha_1,\alpha_2,\alpha_3$ are linearly independent,
the matrix $(\Tr (\alpha_j \sigma_{i}))_{j,i}$ is invertible.
The condition (D1), i.e., the condition \eqref{BF1} is equivalent to the condition
$c^1_i=0$ for $i=1,2,3$.

Case (2).
$t=2$.
Since $\alpha_1,\alpha_2$ are linearly independent,
At most two indices $i$ satisfy
$(\Tr (\alpha_1 \sigma_{i}) ,\Tr (\alpha_2 \sigma_{i}) )
\neq (0,0)$.
Without loss of notation, suppose that
$(\Tr (\alpha_1 \sigma_{i}) ,\Tr (\alpha_2 \sigma_{i}) )
\neq (0,0)$ for $i=2,3$.
Then,
the condition (D1), i.e., the condition \eqref{BF1} is equivalent to the condition
that
$c^1_2=c^1_3=0$ and
$c^1_1\Tr \alpha_1\sigma_1 =c^1_1 \Tr \alpha_2\sigma_1=0$.
That is, at least two indices $i$ satisfy $c^1_i=0$.

Case (3).

$t=1$.
At most one index $i$ satisfies
$\Tr (\alpha_1 \sigma_{i}) \neq 0$.
Without loss of notation, suppose that $\Tr (\alpha_1 \sigma_{3}) \neq 0$.
The condition (D1), i.e., the condition \eqref{BF1} is equivalent to the condition
that
$c^1_3=0$ and
$c^1_1 \Tr \alpha_1\sigma_1 =c^1_2 \Tr \alpha_1\sigma_2=0$.
That is, at least one index $i$ satisfies $c^1_i=0$.

Case (4).

$t=0$.
The condition (D1) always holds
because the relation \eqref{BF0} always holds.
Thus no additional condition is required.
\end{proof}

\subsection{Proof of Lemma~\ref{LBU2}}\Label{app:proof-LBU2}
\begin{proof}
First, we show (i).
The conditions given in Lemma \ref{LBU1} and the formula \eqref{CZER}
yield that
$ (\lambda_j \beta_j)\circ \rho_1= \frac{\lambda_j}{2} \beta_j$, which implies \eqref{DSJK}.
Then, \eqref{DSJK} implies \eqref{DSJK2} by using \eqref{HJK1}.

Next, we proceed to (ii).
We denote $\tilde{C}_{2\to 1}[{\cal M}[S]]$ as
\begin{align}
\tilde{C}_{2\to 1}[{\cal M}[S]]^{T_2}
=& \tilde{\rho}_1 \otimes I + \sum_{j=1}^t \frac{\tilde{\lambda}_j}{2}
\tilde{x}_j \otimes \tilde{y}_j ,
\end{align}
where $I$, $\tilde{y}_1,\tilde{y}_2,\tilde{y}_3$ are orthogonal to each other,
and $\tilde{x}_1,\tilde{x}_2,\tilde{x}_3$ are orthogonal to each other.
We assume that $\|\tilde{x}_j\|=\|\tilde{y}_j\|=1$ for $j=1,2,3$.
Since $\tilde{C}_{2\to 1}[{\cal M}[S]]
\circ (I\otimes \rho_2)
={\cal R}[{\cal M}[S]]$,
we have
\begin{align}
{\cal R}[{\cal M}[S]]=
 \rho_1 \otimes \rho_2 + \sum_{j=1}^t \frac{\lambda_j}{4}
\alpha_j \otimes \beta_j
= \tilde{\rho}_1 \otimes \rho_2 + \sum_{j=1}^t \frac{\tilde{\lambda}_j}{2}
\tilde{x}_j \otimes (\tilde{y}_j \circ \rho_2).
\end{align}
Since $\Tr_2 {\cal R}[{\cal M}[S]]=\rho_1$,
we have $\rho_1=\tilde{\rho}_1 $.
Since $\Tr_1 {\cal R}[{\cal M}[S]]=\rho_2$,
we have
\begin{align}
\rho_2=
\rho_2 + \sum_{j=1}^t \frac{\tilde{\lambda}_j}{2} (\Tr \tilde{x}_j)
 (\tilde{y}_j \circ \rho_2).\Label{XHJ}
\end{align}
Since $\rho_2$, $\{\tilde{y}_j \circ \rho_2\}_{j=1}^3$ are linearly independent,
\eqref{XHJ} implies
$\frac{\tilde{\lambda}_j}{2} \Tr \tilde{x}_j =0$ for $j=1,2,3$.

Applying the conditions in Lemma \ref{LBU1} to the pair
$(\rho_2,\tilde{C}_{2\to 1}[{\cal M}[S]])$,
we find that
the condition (D2) implies
$\sum_{j=1}^t \frac{\tilde{\lambda}_j}{2} \tilde{x}_j\otimes
 (\tilde{y}_j \circ \rho_2)=
 \sum_{j=1}^t \frac{\tilde{\lambda}_j}{4} \tilde{x}_j \otimes\tilde{y}_j $.
When the condition (D2) holds,
changing the indexes of $\tilde{x}_j,\tilde{y}_j$, we have
$\lambda_j=\tilde{\lambda}_j$,
$\alpha_j=\tilde{x}_j$, $\beta_j=\tilde{y}_j$ for $j=1,2,3$.
Then, we obtain the conditions (1),(2),(3),(4).

Conversely, we assume that
the conditions (1),(2),(3),(4) hold.
The relation \eqref{DSJK} implies
$\lambda_j \beta_j\circ \rho_2=\lambda_j \beta_j$ for $j=1,2,3$, which yields the relation
\begin{align}
\tilde{C}_{2\to 1}[{\cal M}[S]]^{T_2}
=& \rho_1 \otimes I + \sum_{j=1}^t \frac{\lambda_j}{2}
\alpha_j \otimes \beta_j .
 \Label{HU5}
\end{align}
The relation \eqref{HU5} and Lemma \ref{LBU1} guarantees
the condition (D2).
\end{proof}

\subsection{Proofs of Lemmas~\ref{LBU0} and \ref{LBU3}}
\Label{app:proof-LBU0-LBU3}

\begin{proof}[Proof of Lemma~\ref{LBU0}]
Throughout the proof, we use the structural restrictions obtained in
Lemma~\ref{LBU2} under the assumptions (D1) and (D2), and we repeatedly reduce
the positivity problem to Lemma~\ref{LL31} by choosing convenient local-unitary
normal forms.

\subsubsection*{Case (1): $t=3$.}
Lemma~\ref{LBU2} guarantees that $\rho_1=I/2$.
Applying a unitary on the first system, we can transform
$\alpha_1,\alpha_2,\alpha_3$ into either
$\sigma_1,\sigma_2,\sigma_3$ or
$\sigma_1,-\sigma_2,\sigma_3$.
Similarly, applying a unitary on the second system, we can transform
$\beta_1,\beta_2,\beta_3$ into either
$\sigma_1,\sigma_2,\sigma_3$ or
$\sigma_1,-\sigma_2,\sigma_3$.
Hence there are two local-unitary normal forms.

\smallskip
\noindent
\emph{Case (A).}
Then $C_{1\to 2}$ is unitarily equivalent to
\[
\frac12\bigl(
I\otimes I
+\lambda_1\sigma_1\otimes\sigma_1
-\lambda_2\sigma_2\otimes\sigma_2
+\lambda_3\sigma_3\otimes\sigma_3
\bigr).
\]
Since \eqref{AP1} and \eqref{AP2} in Lemma~\ref{LL31} are automatic in this
case, Lemma~\ref{LL31} implies that $C_{1\to 2}\ge0$ is equivalent to
\[
(1-\lambda_1)^2\ge(-\lambda_2+\lambda_3)^2,
\qquad
(1+\lambda_1)^2\ge(-\lambda_2-\lambda_3)^2.
\]
Because $\lambda_j>0$, this is equivalent to
\[
1\ge
\max(-\lambda_1+\lambda_2+\lambda_3,\,
\lambda_1-\lambda_2+\lambda_3,\,
\lambda_1+\lambda_2-\lambda_3).
\]

\smallskip
\noindent
\emph{Case (B).}
Then $C_{1\to 2}$ is unitarily equivalent to
\[
\frac12\bigl(
I\otimes I
+\lambda_1\sigma_1\otimes\sigma_1
+\lambda_2\sigma_2\otimes\sigma_2
+\lambda_3\sigma_3\otimes\sigma_3
\bigr).
\]
Again \eqref{AP1} and \eqref{AP2} are automatic, and Lemma~\ref{LL31} yields
\[
(1-\lambda_1)^2\ge(\lambda_2+\lambda_3)^2,
\qquad
(1+\lambda_1)^2\ge(\lambda_2-\lambda_3)^2.
\]
Since $\lambda_j>0$, this is equivalent to
\[
1\ge \lambda_1+\lambda_2+\lambda_3.
\]

\subsubsection*{Case (2): $t=2$.}
Lemma~\ref{LBU2} guarantees that
\[
C_{1\to 2}
=
\frac12 I\otimes (I+c_1^2\sigma_1)^T
+\sum_{j=1}^2\frac{\lambda_j}{2}\alpha_j\otimes\beta_j^T.
\]
Assume first that $c_1^2\neq0$.
Since
$c_1^2\Tr \beta_1\sigma_1=c_1^2\Tr \beta_2\sigma_1=0$,
we can, by a unitary on the second system fixing $\sigma_1$, transform
$\beta_1,\beta_2$ into $\pm\sigma_2,\sigma_3$.
By a unitary on the first system, we can simultaneously transform
$\alpha_1,\alpha_2$ into $\pm\sigma_2,\sigma_3$ with the same sign choice.
Thus $C_{1\to 2}$ is unitarily equivalent to
\[
\frac12\bigl(
I\otimes I
+c_1^2 I\otimes \sigma_1
-\lambda_1\sigma_2\otimes\sigma_2
+\lambda_2\sigma_3\otimes\sigma_3
\bigr).
\]
Applying Lemma~\ref{LL31}, we obtain
\[
1\ge (\lambda_1+\lambda_2)^2+(c_1^2)^2.
\]
If $c_1^2=0$, the $t=3$ discussion shows that
$1\ge\lambda_1+\lambda_2$, which is equivalent to the same inequality.
This proves Case (2).

\subsubsection*{Case (3): $t=1$.}
Lemma~\ref{LBU2} yields
\[
C_{1\to 2}
=
\frac12 I\otimes (I+c_1^2\sigma_1+c_2^2\sigma_2)
+\frac{\lambda_1}{2}\alpha_1\otimes\beta_1^T.
\]
If $c_1^2\neq0$ and $c_2^2\neq0$, then
$c_1^2\Tr\beta_1\sigma_1=c_2^2\Tr\beta_1\sigma_2=0$,
so $\beta_1=\pm\sigma_3$.
A unitary on the first system transforms $\alpha_1$ to $\sigma_3$.
Hence $C_{1\to 2}$ is unitarily equivalent to
\[
\frac12\bigl(
I\otimes I
+c_1^2 I\otimes \sigma_1
+c_2^2 I\otimes \sigma_2
\pm \lambda_1 \sigma_3\otimes\sigma_3
\bigr),
\]
and Lemma~\ref{LL31} implies
\[
1\ge \lambda_1^2+(c_1^2)^2+(c_2^2)^2.
\]
If one of $c_1^2,c_2^2$ vanishes, the $t=2$ discussion reduces again to the
same inequality. This proves Case (3).

\subsubsection*{Case (4): $t=0$.}
In this case
\[
C_{1\to 2}=I\otimes \rho_2\ge0,
\]
so no further condition is required.
\end{proof}

\begin{proof}[Proof of Lemma~\ref{LBU3}]
Statement (i) was already obtained in the proof of Lemma~\ref{LBU2}, namely
\[
\tilde{C}_{2\to1}[{\cal M}[S]]^{T_2}
=
\rho_1\otimes I+\sum_{j=1}^t\frac{\lambda_j}{2}\alpha_j\otimes\beta_j.
\tag{\ref{VNI1}}
\]

We now prove (ii).
The point is that the positivity analysis for
$\tilde{C}_{2\to1}[{\cal M}[S]]$ is exactly parallel to the positivity analysis
for $C_{1\to2}$ in Lemma~\ref{LBU0}.
Indeed, once \eqref{VNI1} is available, the only difference is that the Bloch
parameters entering the local state now come from $\rho_1$ rather than $\rho_2$.
Therefore the same local-unitary reductions and the same applications of
Lemma~\ref{LL31} go through verbatim, with $c_i^2$ replaced by $c_i^1$.

More explicitly:

\subsubsection*{Case (1): $t=3$.}
Under (D1) and (D2), Lemma~\ref{LBU1} and Lemma~\ref{LBU2} imply
$\rho_1=I/2$ and $\rho_2=I/2$.
Hence \eqref{VNI1} has exactly the same two local-unitary normal forms as in
Lemma~\ref{LBU0}, namely
\[
\frac12\bigl(
I\otimes I
+\lambda_1\sigma_1\otimes\sigma_1
-\lambda_2\sigma_2\otimes\sigma_2
+\lambda_3\sigma_3\otimes\sigma_3
\bigr)
\]
and
\[
\frac12\bigl(
I\otimes I
+\lambda_1\sigma_1\otimes\sigma_1
+\lambda_2\sigma_2\otimes\sigma_2
+\lambda_3\sigma_3\otimes\sigma_3
\bigr).
\]
Applying Lemma~\ref{LL31} exactly as in the proof of Lemma~\ref{LBU0}, we
obtain the two conditions
\[
1\ge
\max(-\lambda_1+\lambda_2+\lambda_3,\,
\lambda_1-\lambda_2+\lambda_3,\,
\lambda_1+\lambda_2-\lambda_3)
\]
and
\[
1\ge \lambda_1+\lambda_2+\lambda_3.
\]

\subsubsection*{Case (2): $t=2$.}
Now \eqref{VNI1} has the same form as the $t=2$ case of Lemma~\ref{LBU0},
except that the local Bloch parameter comes from $\rho_1$.
Thus the same unitary reduction gives a matrix unitarily equivalent to
\[
\frac12\bigl(
I\otimes I
+c_1^1 \sigma_1\otimes I
-\lambda_1\sigma_2\otimes\sigma_2
+\lambda_2\sigma_3\otimes\sigma_3
\bigr),
\]
and Lemma~\ref{LL31} yields
\[
1\ge (\lambda_1+\lambda_2)^2+(c_1^1)^2.
\]

\subsubsection*{Case (3): $t=1$.}
Similarly, \eqref{VNI1} is reduced to a matrix unitarily equivalent to
\[
\frac12\bigl(
I\otimes I
+c_1^1 \sigma_1\otimes I
+c_2^1 \sigma_2\otimes I
\pm \lambda_1\sigma_3\otimes\sigma_3
\bigr),
\]
and Lemma~\ref{LL31} yields
\[
1\ge \lambda_1^2+(c_1^1)^2+(c_2^1)^2.
\]

\subsubsection*{Case (4): $t=0$.}
In this case \eqref{VNI1} becomes
\[
\tilde{C}_{2\to1}[{\cal M}[S]]^{T_2}
=
\rho_1\otimes I,
\]
which is always positive.

This proves statement (ii).
\end{proof}

\subsection{Proof of Lemma~\ref{LPauli}}\Label{app:pauli-family-calculations}
We consider the Pauli channel
\begin{align}
\Lambda(\rho):=
\sum_{j=0}^3 p_j \sigma_j \rho \sigma_j .
\end{align}
Its Choi matrix $C_{1\to 2}$ is given as
\begin{align}
 C_{1\to 2}^{T_1}
= I \otimes \frac{I}{2}
+ \sum_{j=1}^3
\frac{\lambda_j}{2} \sigma_j \otimes \sigma_j,
\Label{HJ51A}
\end{align}
where $\lambda_j:=2p_0+2p_j-1$ for $j=1,3$ and
$\lambda_2:=-2p_0-2p_2+1$.
Assume that the initial state is
\[
\rho_1 =\frac{1}{2}(I+ \kappa \sigma_1).
\]

We first compute ${\cal R}[{\cal M}[S]]$.
A direct calculation gives
\begin{align}
&{\cal R}[{\cal M}[S]]
= \rho_1 \otimes \frac{I}{2}
+\frac{\kappa}{2}\frac{\lambda_1}{2} I \otimes \sigma_1
+ \frac{1}{2}\sum_{j=1}^3
\frac{\lambda_j}{2}\sigma_j \otimes \sigma_j \notag\\
=&
\frac{I}{2} \otimes
\Bigl(\frac{I}{2}
+\kappa \frac{\lambda_1}{2}  \sigma_1\Bigr)
+\frac{\kappa}{2} \sigma_1 \otimes \frac{ I}{2}
+ \frac{1}{2}\sum_{j=1}^3
\frac{\lambda_j}{2}\sigma_j \otimes \sigma_j \notag\\
=&
\frac{I}{2} \otimes
\Bigl(\frac{I}{2}
+\kappa \frac{\lambda_1}{2}  \sigma_1\Bigr)
+\frac{\kappa}{2} \sigma_1 \otimes \frac{ I}{2}
+
\frac{\lambda_1}{4}\sigma_1 \otimes \sigma_1
+ \frac{1}{2}\sum_{j=2}^3
\frac{\lambda_j}{2}\sigma_j \otimes \sigma_j \notag\\
=&
\frac{I}{2} \otimes
\Bigl(\frac{I}{2}
+\kappa \frac{\lambda_1}{2} \sigma_1\Bigr)
+\frac{1}{2}\sigma_1 \otimes
\Bigl(\frac{\kappa}{2}  I
+\frac{\lambda_1}{2}
\sigma_1 \Bigr)
+ \frac{1}{2}\sum_{j=2}^3
\frac{\lambda_j}{2}\sigma_j \otimes \sigma_j .
\Label{HJ58A}
\end{align}
Hence,
\begin{align}
\rho_2[{\cal M}[S]]
=
\frac{I}{2}
+\kappa \frac{\lambda_1}{2}\sigma_1 .
\Label{HJ58B}
\end{align}

Since
\[
(a I+b\sigma_1)\circ (c I+d\sigma_1)
= (ac+bd) I+(ad+bc)\sigma_1,
\]
for $|\kappa|<1$ we have
\begin{align}
\tilde{C}_{2\to 1}[{\cal M}[S]]^{T_2}
=&
 \frac{I}{2} \otimes I
+\frac{1}{2}\sigma_1\otimes \Big(
\frac{\kappa(\lambda_1^2 -1)}{\kappa^2\lambda_1^2-1}I
+\frac{ \lambda_1 (\kappa^2 -1)}{\kappa^2\lambda_1^2-1}
\sigma_1\Big) \notag\\
&
+ \sum_{j=2}^3
\frac{\lambda_j}{2}\sigma_j \otimes \sigma_j \notag\\
=&
 \frac{I}{2}\otimes I
+\frac{\kappa(\lambda_1^2 -1)}{2(\kappa^2\lambda_1^2-1)}
\sigma_1 \otimes I
+
\frac{ \lambda_1 (\kappa^2 -1)}{2(\kappa^2\lambda_1^2-1)}
\sigma_1\otimes \sigma_1 \notag\\
&
+ \sum_{j=2}^3
\frac{\lambda_j}{2}
\sigma_j \otimes \sigma_j .
\Label{HJ53A}
\end{align}
When $|\kappa|=1$, we have
\begin{align}
\tilde{C}_{2\to 1}[{\cal M}[S]]^{T_2}
=&
 \frac{I}{2} \otimes I
+\frac{\kappa}{2}\sigma_1\otimes I
+ \sum_{j=2}^3
\frac{\lambda_j}{2}\sigma_j \otimes \sigma_j \notag\\
=&
 \frac{I}{2}\otimes I
+\frac{\kappa}{2}\sigma_1 \otimes I
+ \sum_{j=2}^3
\frac{\lambda_j}{2}
\sigma_j \otimes \sigma_j .
\Label{HJ5GA}
\end{align}

We can now read off the directional conditions.
The condition (D1) holds if and only if
$\kappa \lambda_1=0$,
which is equivalent to the condition
$\lambda_1=0$ or $\kappa=0$.
This proves (i).

The condition (D2) holds if and only if
$\kappa \lambda_1=0$ or
\[
\frac{ \lambda_1 (\kappa^2 -1)}{2(\kappa^2\lambda_1^2-1)}
=0,
\]
which is equivalent to the condition
$\lambda_1=0$ or $\kappa=0$, or $\kappa=\pm 1$ and
$\lambda_1<1$.
This proves (ii).

We next prove (iii), namely the characterization of (D3) in the interior case
$|\kappa|<1$.
Applying Lemma~\ref{LL31} to \eqref{HJ53A}, we find that
condition (D3) holds if and only if
\begin{align}
1+ \frac{\kappa(\lambda_1^2 -1)}{\kappa^2\lambda_1^2-1}
+\frac{ \lambda_1 (\kappa^2 -1)}{\kappa^2\lambda_1^2-1}
&\ge 0,\Label{CBQ1}\\
1- \frac{\kappa(\lambda_1^2 -1)}{\kappa^2\lambda_1^2-1}
-\frac{ \lambda_1 (\kappa^2 -1)}{\kappa^2\lambda_1^2-1}
&\ge 0,
\Label{CBQ2}\\
\left(\frac{(\kappa^2  \lambda_1 -1)( \lambda_1 -1)}{\kappa^2\lambda_1^2-1}\right)^2
=
\left(1-
\frac{ \lambda_1 (\kappa^2 -1)}{\kappa^2\lambda_1^2-1}
\right)^2
& \ge
\left(\frac{\kappa(\lambda_1^2 -1)}{\kappa^2\lambda_1^2-1}\right)^2
+(\lambda_2+\lambda_3)^2 ,
\Label{CBV1}\\
\left(\frac{(\kappa^2  \lambda_1 -1)( \lambda_1 +1)}{\kappa^2\lambda_1^2-1}\right)^2
=
\left(1+
\frac{ \lambda_1 (\kappa^2 -1)}{\kappa^2\lambda_1^2-1}
\right)^2
& \ge
\left(\frac{\kappa(\lambda_1^2 -1)}{\kappa^2\lambda_1^2-1}\right)^2
+(\lambda_2-\lambda_3)^2 .
\Label{CBV2}
\end{align}

Since $\kappa^2\lambda_1^2-1 <0$,
the conditions \eqref{CBQ1} and \eqref{CBQ2} are simplified as
\begin{align}
\kappa^2\lambda_1^2-1+ \kappa(\lambda_1^2 -1)
+ \lambda_1 (\kappa^2 -1)
&\le 0,\notag\\
\kappa^2\lambda_1^2-1- \kappa(\lambda_1^2 -1)
- \lambda_1 (\kappa^2 -1)
&\le 0,
\end{align}
which are equivalent to
\begin{align}
(1+\kappa)(1+\lambda_1)(\kappa\lambda_1 - 1)&\le 0,\Label{ABQ1}\\
(1-\kappa)(1-\lambda_1)(\kappa\lambda_1 - 1)&\le 0.\Label{ABQ2}
\end{align}
Since $1+\kappa, 1+\lambda_1,1-\kappa, 1-\lambda_1\ge 0$,
the above conditions are equivalent to
\begin{align}
\kappa \le 1/\lambda_1 .
\Label{ABQC}
\end{align}
But this condition always holds.
Hence, for $|\kappa|<1$, condition (D3) is equivalent exactly to
\eqref{CBQ1}--\eqref{CBV2}.
This proves (iii).

Finally, we treat the boundary case $|\kappa|=1$ and prove (iv).
In this case, since the conditions
\eqref{AP1} and \eqref{AP2} in Lemma~\ref{LL31}
hold automatically for \eqref{HJ5GA},
Lemma~\ref{LL31} guarantees that
condition (D3) holds if and only if
\begin{align}
1\ge 1 +
\max
\bigl((\lambda_2+\lambda_3)^2,(\lambda_2-\lambda_3)^2\bigr),
\Label{CBV2Y}
\end{align}
which is equivalent to $\lambda_2=\lambda_3=0$.
This proves (iv).

\subsection{Phase damping channel with $\sigma_1$-basis}\Label{app:phase-damping-sigma1}
We fix the initial state to be $\rho_1 =\frac{1}{2}(I+ \kappa \sigma_1)$.
We consider the phase damping channel with respect to the $\sigma_1$-basis, whose Choi matrix
$C_{1\to 2}$ satisfies the condition $\lambda_1=1$
under the form \eqref{HJ8}.
Then,
the condition (D1) holds if and only if $\kappa=0$.
The condition (D2) holds if and only if $\kappa=0$.
It remains to analyze condition (D3).
\begin{lemma}
The condition (D3) holds if and only if
the relations
$\kappa=\pm 1$ and $\lambda_2=\lambda_3=0$
or
the relations $|\kappa|<1$, $\lambda_2=-\lambda_3$, and $|\lambda_3|<1$ hold.
\end{lemma}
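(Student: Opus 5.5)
The plan is to specialize Lemma~\ref{LPauli} to $\lambda_1=1$, treating the interior case $|\kappa|<1$ and the boundary case $|\kappa|=1$ separately. In the present family the forward Choi matrix has the Pauli form \eqref{HJ51} with $\lambda_1=1$, and the initial state is $\rho_1=\frac12(I+\kappa\sigma_1)$. So the reconstructed reverse Choi matrix $\tilde C_{2\to1}[{\cal M}[S]]^{T_2}$ is given by \eqref{HJ53A} or \eqref{HJ5GA}, and (D3) is decided by Lemma~\ref{LL31}.

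\emph{Boundary case $|\kappa|=1$.} I will invoke part (iv) of Lemma~\ref{LPauli} directly. It states that (D3) holds if and only if $\lambda_2=\lambda_3=0$, independently of $\lambda_1$. This gives the first alternative in the statement.

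\emph{Interior case $|\kappa|<1$.} Here the denominator $\kappa^2\lambda_1^2-1=\kappa^2-1$ is nonzero. Substituting $\lambda_1=1$ into \eqref{HJ53A}, the two coefficients become
\[
\frac{\kappa(\lambda_1^2-1)}{\kappa^2\lambda_1^2-1}=0,
\qquad
\frac{\lambda_1(\kappa^2-1)}{\kappa^2\lambda_1^2-1}=1 .
\]
Hence
\[
\tilde C_{2\to1}[{\cal M}[S]]^{T_2}=\tfrac12\Bigl(I\otimes I+\sigma_1\otimes\sigma_1+\lambda_2\sigma_2\otimes\sigma_2+\lambda_3\sigma_3\otimes\sigma_3\Bigr).
\]
The dependence on $\kappa$ disappears entirely. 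Next I check the four inequalities \eqref{CBQ1-main}--\eqref{CBV2-main} of Lemma~\ref{LPauli}(iii), equivalently \eqref{AP1}--\eqref{BP2} with $s_1=s_2=0$, $\tau_1=1$, $\tau_2=\lambda_2$, $\tau_3=\lambda_3$:
\begin{itemize}
\item the linear conditions reduce to $2\ge0$ and $0\ge0$, so they hold automatically;
\item \eqref{CBV1-main} reads $0\ge(\lambda_2+\lambda_3)^2$, which forces $\lambda_2=-\lambda_3$;
\item \eqref{CBV2-main} then reads $4\ge(\lambda_2-\lambda_3)^2=4\lambda_3^2$, i.e.\ $|\lambda_3|\le1$.
\end{itemize}
This yields the second alternative. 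As a consistency check, for the phase-damping channel $p_2=p_3=0$ and $p_0+p_1=1$. The definitions below \eqref{HJ51} then give $\lambda_3=2p_0-1=-\lambda_2$, so $\lambda_2=-\lambda_3$ holds automatically for this family.

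\emph{Expected obstacle.} The main point needing care is the boundary of the range of $\lambda_3$. The computation above gives the closed condition $|\lambda_3|\le1$, while the statement writes $|\lambda_3|<1$. I would first re-check the endpoint $|\lambda_3|=1$ directly in the $4\times4$ matrix of Lemma~\ref{LL31}. That matrix is block diagonal with $2\times2$ blocks
\[
\begin{pmatrix}2&\lambda_3-\lambda_2\\ \lambda_3-\lambda_2&0\end{pmatrix}
\quad\text{and}\quad
\begin{pmatrix}0&\lambda_2+\lambda_3\\ \lambda_2+\lambda_3&2\end{pmatrix}.
\]
With $\lambda_2=-\lambda_3$ the off-diagonal entry of the first block is $2\lambda_3$, so that block is positive semidefinite if and only if $2\cdot0\ge4\lambda_3^2$, i.e.\ $\lambda_3=0$. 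I will recompute this block structure carefully against the Hadamard-rotated form of Lemma~\ref{LL31}, in which $\tau_1$ sits on the diagonal. The entries carrying $1\pm\tau_1$ give $2$ and $0$; which off-diagonal pair, $\tau_3\mp\tau_2$, meets the zero diagonal entry determines whether the surviving condition is $\lambda_2=-\lambda_3$ with $|\lambda_3|\le1$, or something stronger. Resolving this bookkeeping, rather than any conceptual step, is where I expect the real work. If the endpoint survives, I would report the region as $|\lambda_3|\le1$, noting that the strict inequality in the statement excludes a boundary that is still positive semidefinite.
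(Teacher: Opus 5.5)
Your route is the paper's. The case $|\kappa|=1$ is Lemma~\ref{LPauli}(iv); the paper's proof cites \eqref{ABQC} at this point, but (iv) is what is actually needed. For $|\kappa|<1$, substituting $\lambda_1=1$ into \eqref{HJ53A} and applying Lemma~\ref{LL31} leaves exactly $0\ge(\lambda_2+\lambda_3)^2$ and $4\ge(\lambda_2-\lambda_3)^2$, as in the paper.

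Your reading of the endpoint is right, and you should state it outright rather than hedge. The paper obtains the same non-strict inequality and then writes $|\lambda_3|<1$, which is a slip. Take $\kappa=0$ with the identity channel: $p_0=1$, hence $\lambda_1=\lambda_3=1$ and $\lambda_2=-1$. The matrix is then $\frac12(I\otimes I+\sigma_1\otimes\sigma_1-\sigma_2\otimes\sigma_2+\sigma_3\otimes\sigma_3)=2|\Phi^+\rangle\langle\Phi^+|\ge0$. So the ``only if'' direction fails as written at $|\lambda_3|=1$, and what the argument proves is the statement with $|\lambda_3|\le1$.

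The apparent obstruction $\lambda_3=0$ in your last paragraph is a bookkeeping error: you paired a $2$ with a $0$ in each block. In Lemma~\ref{LL31}, the two diagonal entries containing $+\tau_1$, namely $1+\tau_1\pm(s_1+s_2)$, share the $\{|0,0\rangle,|1,1\rangle\}$ block, whose off-diagonal entry is $\tau_3-\tau_2$. The two entries containing $-\tau_1$ share the $\{|1,0\rangle,|0,1\rangle\}$ block, whose off-diagonal entry is $\tau_3+\tau_2$. With $s_1=s_2=0$ and $\tau_1=1$ the blocks are
\[
\begin{pmatrix}2&\lambda_3-\lambda_2\\ \lambda_3-\lambda_2&2\end{pmatrix}
\quad\text{and}\quad
\begin{pmatrix}0&\lambda_2+\lambda_3\\ \lambda_2+\lambda_3&0\end{pmatrix}.
\]
Their positivity is exactly $\lambda_2=-\lambda_3$ and $|\lambda_3|\le1$, with no further constraint.

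There is one caveat you inherit from Lemma~\ref{LPauli}, and the paper's proof shares it. (D3) is positivity of $\tilde C_{2\to1}[{\cal M}[S]]$, whereas \eqref{HJ53A} is stated as its partial transpose. For operators of this form, $T_2$ flips the sign of the $\sigma_2\otimes\sigma_2$ coefficient.

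The test is nevertheless the right one for the physical Pauli channel. With the paper's $\lambda_2=-2p_0-2p_2+1$, the right-hand side of \eqref{HJ51} is actually $C_{1\to2}$ rather than $C_{1\to2}^{T_1}$. The two sign flips cancel, so \eqref{HJ53A} as written equals $\tilde C_{2\to1}$ itself. Had $\lambda_2$ been read off \eqref{HJ51} literally, positivity of $\tilde C_{2\to1}$ would instead give $\lambda_2=\lambda_3$.

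Your consistency check, that $\lambda_2=-\lambda_3$ holds automatically for phase damping, confirms that the $p$-convention is the intended one. In that convention (D3) holds on the whole family whenever $|\kappa|<1$.
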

\begin{proof}
When the condition (D3) holds, $\kappa$ can take the value $\pm 1$ when $\lambda_2=\lambda_3=0$ due to \eqref{ABQC}.
Hence, we consider the condition (D3) under the condition
$\kappa^2 <1$.
In this case, the condition (D3) is equivalent to
\eqref{CBV1} and \eqref{CBV2}, which are simplified as
\begin{align}
0 & \ge(\lambda_2+\lambda_3)^2 \\
4=(\frac{2(\kappa^2  -1)}{\kappa^2-1 })^2
& \ge (\lambda_2-\lambda_3)^2 .\Label{CBV2-phase1}
\end{align}
Thus, $\lambda_2=-\lambda_3$ and $|\lambda_3|<1$.
\end{proof}

\subsection{Phase damping channel with $\sigma_3$-basis}\Label{app:phase-damping-sigma3}
Next, with the same fixed initial state $\rho_1 =\frac{1}{2}(I+ \kappa \sigma_1)$,
we consider the phase damping channel with respect to the $\sigma_3$-basis, whose Choi matrix
$C_{1\to 2}$ satisfies the condition $\lambda_3=1$
under the form \eqref{HJ8}.
Additionally, we assume
$\lambda_2=-\lambda_1$ for simplicity.
The conditions (D1) and (D2) have the same form as the previous section because
these two conditions do not depend on $\lambda_3$.
It remains to analyze condition (D3).

\begin{lemma}
The condition (D3) holds if and only if
the relation $\kappa=0$ or the relations $\lambda_1=\pm 1$ and $\kappa^2<1$ hold.
\end{lemma}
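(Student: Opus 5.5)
The plan is to read off (D3) from Lemma~\ref{LPauli}, specialized to $\lambda_3=1$ and $\lambda_2=-\lambda_1$. I will split into the boundary case $|\kappa|=1$ and the interior case $|\kappa|<1$.

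\emph{Boundary case.} For $|\kappa|=1$, Lemma~\ref{LPauli}(iv) says (D3) holds only if $\lambda_2=\lambda_3=0$. This is impossible because $\lambda_3=1$. Hence (D3) forces $\kappa^2<1$, which accounts for the strict inequality in the statement.

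\emph{Interior case: reduction to two inequalities.} For $|\kappa|<1$, Lemma~\ref{LPauli}(iii) applies. The linear conditions \eqref{CBQ1-main} and \eqref{CBQ2-main} hold automatically, by the argument leading to \eqref{ABQC} in Appendix~\ref{app:pauli-family-calculations}. So (D3) reduces to \eqref{CBV1-main} and \eqref{CBV2-main}, where now $\lambda_2+\lambda_3=1-\lambda_1$ and $\lambda_2-\lambda_3=-(1+\lambda_1)$. Put $D:=\kappa^2\lambda_1^2-1<0$ and multiply through by $D^2$. Using $\lambda_1^2-1=-(1-\lambda_1)(1+\lambda_1)$, each inequality factors:
\begin{align}
\text{\eqref{CBV1-main}}&\iff (1-\lambda_1)^2\,F_-(\kappa,\lambda_1)\ge 0, \notag\\
\text{\eqref{CBV2-main}}&\iff (1+\lambda_1)^2\,F_+(\kappa,\lambda_1)\ge 0, \notag
\end{align}
with
\begin{align}
F_-:=(\kappa^2\lambda_1-1)^2-\kappa^2(1+\lambda_1)^2-D^2, \notag\\
F_+:=(\kappa^2\lambda_1-1)^2-\kappa^2(1-\lambda_1)^2-D^2. \notag
\end{align}
Writing $(\kappa^2\lambda_1-1)^2-D^2$ as a difference of squares gives
\[
F_-=\kappa^2\bigl[\lambda_1(1-\lambda_1)\bigl(\kappa^2\lambda_1(1+\lambda_1)-2\bigr)-(1+\lambda_1)^2\bigr],
\]
and the analogous formula holds for $F_+$ with $\lambda_1\mapsto-\lambda_1$.

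\emph{Main obstacle: sign of the brackets.} The key step is to show that the bracket in $F_\mp$ is strictly negative for all $|\kappa|<1$ and all $\lambda_1\in(-1,1)$, and is $\le 0$ on the closed range. My first approach is to bound $\kappa^2\lambda_1(1\pm\lambda_1)\le 2$ so that the middle factor is $\le 0$. I then check the sign of $\lambda_1(1\mp\lambda_1)$ separately for $\lambda_1\ge0$ and $\lambda_1<0$. In the sign-unfavorable subcase, I will instead verify directly that the resulting quadratic in $\lambda_1$ stays below $(1\pm\lambda_1)^2$. If this holds, then $F_\mp\le 0$, with equality only when $\kappa=0$. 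Consequently:
\begin{itemize}
\item \eqref{CBV1-main} holds iff $\kappa=0$ or $\lambda_1=1$;
\item \eqref{CBV2-main} holds iff $\kappa=0$ or $\lambda_1=-1$.
\end{itemize}

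\emph{Endpoints and conclusion.} Read literally, the two conditions above are jointly satisfied only for $\kappa=0$. The statement's claim that $\lambda_1=\pm1$ alone also suffices therefore needs a separate check at the endpoints, where the prefactor $(1\pm\lambda_1)^2$ is nonzero. Take $\lambda_1=1$, so $D=\kappa^2-1$. Then \eqref{CBV1-main} holds with both sides zero, and \eqref{CBV2-main} becomes $\bigl(2(\kappa^2-1)/(\kappa^2-1)\bigr)^2=4\ge 0+4$, which holds with equality. The case $\lambda_1=-1$ is symmetric. So at $\lambda_1=\pm1$ the bracket analysis degenerates and the inequalities hold as equalities; this is the delicate point I would double-check most carefully. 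Combining the boundary case, the interior case and the endpoint checks gives (D3) $\iff$ $\kappa=0$ or ($\lambda_1=\pm1$ and $\kappa^2<1$).
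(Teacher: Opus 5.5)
The overall route matches the paper's: specialize Lemma~\ref{LPauli}, exclude $|\kappa|=1$ via part (iv), and clear denominators in \eqref{CBV1-main}--\eqref{CBV2-main}. Your reason for excluding $|\kappa|=1$, namely $\lambda_3=1\neq0$, is the correct one. The gap is in the definition of $F_-$.

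You built $F_-$ from the left-hand expression of \eqref{CBV1-main}, but the two sides displayed there are not equal; that display contains a typo. A direct computation gives
\[
1-\frac{\lambda_1(\kappa^2-1)}{D}=\frac{(\lambda_1-1)(\kappa^2\lambda_1+1)}{D},
\]
so the correct factor is $\kappa^2\lambda_1+1$, not $\kappa^2\lambda_1-1$. With your $F_-$ the bracket equals $\kappa^2\lambda_1^2(1-\lambda_1^2)+\lambda_1^2-4\lambda_1-1$. This is strictly positive for every $\lambda_1\in(-1,2-\sqrt5)$; at $\lambda_1=-\tfrac12$ it equals $\tfrac54+\tfrac{3}{16}\kappa^2$.

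So your key step, strict negativity of the bracket on $(-1,1)$, is false, and no check of a ``sign-unfavorable subcase'' can rescue it. At $(\kappa,\lambda_1)=(\tfrac12,-\tfrac12)$ your version of \eqref{CBV1-main} holds ($3.24\ge2.41$), while the true one fails ($1.96<2.41$). For the same reason, your claim that $F_+$ is the $\lambda_1\mapsto-\lambda_1$ image of $F_-$ is also false. Your $F_+$ itself is defined correctly; only the claimed formula for its bracket is wrong.

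With the correct factor the computation closes at once:
\[
(\kappa^2\lambda_1+1)^2-\kappa^2(1+\lambda_1)^2-D^2
=(\kappa^2\lambda_1-1)^2-\kappa^2(1-\lambda_1)^2-D^2
=-\kappa^2(1-\lambda_1^2)(1-\kappa^2\lambda_1^2).
\]
This is $\le 0$ for $|\kappa|<1$ and $|\lambda_1|\le 1$, with equality iff $\kappa=0$ or $|\lambda_1|=1$. No sign case analysis is needed. This is exactly what the paper does: it multiplies \eqref{CCV1} by $D^2/(1-\lambda_1)^2$ and obtains $(\kappa^2\lambda_1+1)^2$ on the left.

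The same factorization shows your two bullet points are inaccurate. Each of \eqref{CBV1-main} and \eqref{CBV2-main} holds iff $\kappa=0$ or $|\lambda_1|=1$, because the correct $F_-$ also vanishes at $\lambda_1=-1$ and $F_+$ at $\lambda_1=1$. Your statement ``equality only when $\kappa=0$'' fails there, which is why your endpoint check seemed to contradict the bullets.

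Your final conclusion is correct. Necessity already follows from \eqref{CBV2-main} alone, for which your $F_+$ is right. But the argument you propose for \eqref{CBV1-main} does not go through as written.
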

\begin{proof}
When the condition (D3) holds, $\kappa$ cannot take the value $\pm 1$ due to \eqref{ABQC}.
Hence, we consider the condition (D3) under the condition
$\kappa^2 <1$.
In this case, the condition (D3) is equivalent to
\eqref{CBV1} and \eqref{CBV2}, which are simplified as
\begin{align}
(\frac{(\kappa^2  \lambda_1 -1)( \lambda_1 -1)}{\kappa^2\lambda_1^2-1 })^2
=(1-
\frac{ \lambda_1 (\kappa^2 -1)}{\kappa^2\lambda_1^2-1 }
)^2
& \ge (\frac{\kappa(\lambda_1^2 -1)}{\kappa^2\lambda_1^2-1  })^2
+(1-\lambda_1)^2 \Label{CCV1}\\
(\frac{(\kappa^2  \lambda_1 -1)( \lambda_1 +1)}{\kappa^2\lambda_1^2-1 })^2
=(1+
\frac{ \lambda_1 (\kappa^2 -1)}{\kappa^2\lambda_1^2-1 }
)^2
& \ge (\frac{\kappa(\lambda_1^2 -1)}{\kappa^2\lambda_1^2-1  })^2
+(1+\lambda_1)^2 .\Label{CCV2}
\end{align}

First, we discuss the condition \eqref{CCV1}
by considering two cases.

Case 1: $\lambda_1 = 1$.
If $\lambda_1 = 1$, then $1-\lambda_1 = 0$. The inequality \eqref{CCV1} becomes:
\[\left(\frac{0}{\kappa^2-1}\right)^2 \ge \left(\frac{0}{\kappa^2-1}\right)^2 + 0^2,\]
which is true for all $\kappa$ as long as
$\kappa^2 <1$.

Case 2: $\lambda_1 \ne 1$.
Multiplying by $(\kappa^2\lambda_1^2-1)^2/(1-\lambda_1)^2$ in \eqref{CCV1}, we have
\[(\kappa^2\lambda_1+1)^2 \ge \kappa^2(\lambda_1+1)^2 + (\kappa^2\lambda_1^2-1)^2\]
Expanding both sides gives
\[0 \ge \kappa^4\lambda_1^4 - \kappa^4\lambda_1^2 - \kappa^2\lambda_1^2 + \kappa^2.\]
Factoring gives
\[0 \ge \kappa^2(\kappa^2\lambda_1^2 - 1)(\lambda_1^2 - 1),\]
which implies $\kappa=0$ or $\lambda_1=-1$
due to the conditions $\lambda_1 \neq 1$ and $\kappa^2 <1$.

Next, we discuss the condition \eqref{CCV2}
by considering two cases.

Case 1: $\lambda_1 = -1$.
If $\lambda_1 = -1$, then $1+\lambda_1 = 0$. The inequality \eqref{CCV2} becomes:
\[\left(\frac{0}{\kappa^2-1}\right)^2 \ge \left(\frac{0}{\kappa^2-1}\right)^2 + 0^2,\]
which is true for all $\kappa$ as long as
$\kappa^2 <1$.

Case 2: $\lambda_1 \ne -1$.
Multiplying by $(\kappa^2\lambda_1^2-1)^2/(1+\lambda_1)^2$ in \eqref{CCV2}, we have
\[(\kappa^2\lambda_1-1)^2 \ge \kappa^2(\lambda_1-1)^2 + (\kappa^2\lambda_1^2-1)^2\]
Expanding both sides gives
\[0 \ge \kappa^4\lambda_1^4 - \kappa^4\lambda_1^2 - \kappa^2\lambda_1^2 + \kappa^2.\]
Factoring out common terms:
\[0 \ge \kappa^2(\lambda_1^2 - 1)(\kappa^2\lambda_1^2 - 1),\]
which implies $\kappa=0$ or $\lambda_1=1$
due to the conditions $\lambda_1 \neq -1$ and $\kappa^2 <1$.

In summary, under $\kappa^2 <1$,
the conditions \eqref{CCV1} and \eqref{CCV2} are equivalent to
the relation $\kappa=0$ or $\lambda_1=\pm 1$.
\end{proof}

\subsection{Proof of Lemma~\ref{LL29}}\Label{app:proof-LL29}
\begin{proof}
When $\kappa^2=1$
the condition (D3) is equivalent to the condition
$\lambda_2=\lambda_3=0$, which implies $\lambda =1$.
The statement of this lemma
allows $\kappa^2$ to take the value $1$ only when $\lambda=1$.
Hence, this proves the statement for $|\kappa|=1$.
In the following, we discuss the case when $\kappa^2<1$, which converts the condition (D3) to
the conditions \eqref{CBV1} and \eqref{CBV2}.

{\bf Analysis for \eqref{CBV1}:\quad}
We show that the condition \eqref{CBV1} always holds.
This condition is simplified as
\begin{align}
(1-
\frac{ \lambda_1 (\kappa^2 -1)}{\kappa^2\lambda_1^2-1 }
)^2
& \ge (\frac{\kappa(\lambda_1^2 -1)}{\kappa^2\lambda_1^2-1  })^2
\end{align}
Since $\kappa^2\lambda_1^2-1 <0$, it is converted to
\begin{align}
|(\lambda_1 - 1)(\kappa^2\lambda_1 + 1)|
=|\kappa^2\lambda_1^2-1 - \lambda_1 (\kappa^2 -1)|
 & \ge |\kappa(\lambda_1^2 -1)|.
\end{align}
When $\kappa\ge 0$, it is converted to
\begin{align}
(1-\lambda_1)(\kappa^2\lambda_1 + 1)
 \ge -\kappa(\lambda_1^2 -1)= \kappa(1-\lambda_1)(1+\lambda_1).
\end{align}
That is,
\begin{align}
(1-\kappa )(1-\lambda_1 )(1-\kappa\lambda_1)\ge 0.\Label{GH1}
\end{align}
When $\kappa\le 0$, it is converted to
\begin{align}
(1-\lambda_1)(\kappa^2\lambda_1 + 1)
 \ge \kappa(\lambda_1^2 -1)=-\kappa(1-\lambda_1)(1+\lambda_1).
\end{align}
That is,
\begin{align}
(1+\kappa )(1-\lambda_1)(1+\kappa\lambda_1 )\ge 0. \Label{GH2}
\end{align}
However, the conditions \eqref{GH1} and \eqref{GH2} always hold.

{\bf Analysis for \eqref{CBV2}:\quad}
The condition \eqref{CBV2} is simplified as
\begin{align}
(\frac{(\kappa^2  (1-\lambda)-1)(2-\lambda)}{\kappa^2(1-\lambda)^2-1 })^2
& \ge (\frac{\kappa((1-\lambda)^2 -1)}{\kappa^2(1-\lambda)^2-1  })^2
+4(1-\lambda)^2 ,\Label{CBF2}
\end{align}
which is equivalent to
\begin{align}
\left(\frac{(\kappa^2 (1-\lambda)-1)(2-\lambda)}{\kappa^2(1-\lambda)^2-1}\right)^2 - \left(\frac{\kappa((1-\lambda)^2 -1)}{\kappa^2(1-\lambda)^2-1}\right)^2 \ge 4(1-\lambda)^2.\Label{CBF8}
\end{align}
Hence, we solve the inequality \eqref{CBF8} with respect to $\kappa$
where $0 \le \lambda \le 1$.
To solve the inequality with respect to $\kappa$, we first simplify the expression.
This inequality is defined for all $\kappa$ such that the denominator is not zero, i.e., $\kappa^2(1-\lambda)^2-1 \neq 0$.

We apply the formula
$A^2-B^2 = (A-B)(A+B)$
when
$A = \frac{(\kappa^2 (1-\lambda)-1)(2-\lambda)}{\kappa^2(1-\lambda)^2-1}$,
$B = \frac{\kappa((1-\lambda)^2 -1)}{\kappa^2(1-\lambda)^2-1} = \frac{\kappa(\lambda^2-2\lambda)}{\kappa^2(1-\lambda)^2-1} = \frac{-\kappa\lambda(2-\lambda)}{\kappa^2(1-\lambda)^2-1}$.
Since $A-B = \frac{(2-\lambda)}{\kappa^2(1-\lambda)^2-1} \left( \kappa^2(1-\lambda)-1 + \kappa\lambda \right)$,
$A+B = \frac{(2-\lambda)}{\kappa^2(1-\lambda)^2-1} \left( \kappa^2(1-\lambda)-1 - \kappa\lambda \right)$,
the LHS of \eqref{CBF8} equals
\[\frac{(2-\lambda)^2}{(\kappa^2(1-\lambda)^2-1)^2} \left( \kappa^2(1-\lambda)-1 + \kappa\lambda \right) \left( \kappa^2(1-\lambda)-1 - \kappa\lambda \right).\]
The product of the two parenthetical factors simplifies as follows:
$(\kappa^2(1-\lambda)-1)^2 - (\kappa\lambda)^2 = ((\kappa-1)(\kappa(1-\lambda)+1))((\kappa+1)(\kappa(1-\lambda)-1)) = (\kappa^2-1)(\kappa^2(1-\lambda)^2-1)$.
Consequently, the product becomes
\[\frac{(2-\lambda)^2 (\kappa^2-1)(\kappa^2(1-\lambda)^2-1)}{(\kappa^2(1-\lambda)^2-1)^2} = \frac{(2-\lambda)^2(\kappa^2-1)}{\kappa^2(1-\lambda)^2-1}.\]
Thus, the inequality \eqref{CBF8} simplifies to:
\[\frac{(2-\lambda)^2(\kappa^2-1)}{\kappa^2(1-\lambda)^2-1} \ge 4(1-\lambda)^2.\]
We can rearrange this to:
\[\frac{(2-\lambda)^2(\kappa^2-1) - 4(1-\lambda)^2(\kappa^2(1-\lambda)^2-1)}{\kappa^2(1-\lambda)^2-1} \ge 0\]

The numerator is
\begin{align*}
N =& (\kappa^2-1)(2-\lambda)^2 - 4(1-\lambda)^2(\kappa^2(1-\lambda)^2-1) \\
=& \kappa^2((2-\lambda)^2 - 4(1-\lambda)^4) - ((2-\lambda)^2 - 4(1-\lambda)^2) \\
=& \kappa^2\lambda(3-2\lambda)(2\lambda^2-5\lambda+4) - \lambda(4-3\lambda) \\
=& \lambda \left[ \kappa^2(3-2\lambda)(2\lambda^2-5\lambda+4) - (4-3\lambda) \right].
\end{align*}

So the inequality is:
\[\frac{\lambda \left[ \kappa^2(3-2\lambda)(2\lambda^2-5\lambda+4) - (4-3\lambda) \right]}{\kappa^2(1-\lambda)^2-1} \ge 0.\]
We now analyze this inequality for different values of $\lambda \in [0, 1]$.

Case 1: $\lambda=0$.
The inequality becomes $\frac{0}{\kappa^2-1} \ge 0$, which is $0 \ge 0$. This is true for all $\kappa$ for which the expression is defined, which is $\kappa^2-1 \neq 0$. Thus, for $\lambda=0$,
the condition \eqref{CBV2} holds for every $\kappa\in (-1,1)$.

Case 2: $\lambda=1$.
The numerator is $1[\kappa^2(1)(1) - 1] = \kappa^2-1$. The denominator is $\kappa^2(0)^2-1 = -1$. The inequality is $\frac{\kappa^2-1}{-1} \ge 0$, which implies $\kappa^2-1 \le 0$, or $\kappa^2 \le 1$.
the condition \eqref{CBV2} holds for every $\kappa\in (-1,1)$.

Case 3: $0 < \lambda < 1$.
Since $\lambda>0$, we can divide by it. The inequality becomes:
\[\frac{\kappa^2(3-2\lambda)(2\lambda^2-5\lambda+4) - (4-3\lambda)}{\kappa^2(1-\lambda)^2-1} \ge 0\]
For $0 < \lambda < 1$, the terms $(3-2\lambda)$, $(2\lambda^2-5\lambda+4)$, and $(4-3\lambda)$ are all positive.
Since $\kappa^2<1$, we have $\kappa^2(1-\lambda)^2-1<0$.
Thus, \eqref{CBF8} is equivalent to
\begin{align}
|\kappa| \le \sqrt{\frac{4-3\lambda}{(3-2\lambda)(2\lambda^2-5\lambda+4)}}. \Label{BNF}
\end{align}

\end{proof}

\subsection{Proof of Lemma~\ref{LL29-A}}\Label{app:proof-LL29A}
\begin{proof}
Since $S$ satisfies the conditions (D1) and (D2),
the conditions in Lemmas \ref{LBU1} and \ref{LBU2} hold.
We employ the form \eqref{DSJK} of ${\cal R}[{\cal M}[S]]$.

Case (1).
Lemmas \ref{LBU1} and \ref{LBU2} guarantee that
$\rho_1=\rho_2=I/2$.
Applying a unitary in the first system
we can convert
$\alpha_1$, $\alpha_2$, and $\alpha_3$
to
$\sigma_1$, $\sigma_2$, and $\sigma_3$
or
$\sigma_1$, $-\sigma_2$, and $\sigma_3$.
Also, applying a unitary in the second system
we can convert
$\beta_1$, $\beta_2$, and $\beta_3$
to
$\sigma_1$, $\sigma_2$, and $\sigma_3$
or
$\sigma_1$, $-\sigma_2$, and $\sigma_3$.
Therefore, we have two cases (A) and (B) in Case (1).
In case (A),
${\cal R}[{\cal M}[S]]$ is unitary equivalent to
$\frac{1}{2} (I\otimes I
+ \lambda_1 \sigma_1 \otimes \sigma_1
+ \lambda_2 \sigma_2 \otimes \sigma_2
+ \lambda_3 \sigma_3 \otimes \sigma_3$.
Thus, Lemma \ref{LL31} guarantees that
the relation $\tilde{C}_{2\to 1}[{\cal M}[S]]\ge 0$ is equivalent to
\begin{align}
(1-\lambda_1)^2 \ge (\lambda_2 +\lambda_3)^2,\quad
(1+\lambda_1)^2 \ge (\lambda_2 -\lambda_3)^2.
\end{align}
Since $\lambda_j>0$, the above condition is equivalent to
$1\ge \lambda_1+\lambda_2+\lambda_3$.

In case (B), ${\cal R}[{\cal M}[S]]$ is unitary equivalent to
$\frac{1}{2} (I\otimes I
+ \lambda_1 \sigma_1 \otimes \sigma_1
- \lambda_2 \sigma_2 \otimes \sigma_2
+ \lambda_3 \sigma_3 \otimes \sigma_3$.
Thus, Lemma \ref{LL31} guarantees that
the relation $C_{1\to 2}\ge 0$ is equivalent to
\begin{align}
(1-\lambda_1)^2 \ge (-\lambda_2 +\lambda_3)^2,\quad
(1+\lambda_1)^2 \ge (-\lambda_2 -\lambda_3)^2.
\end{align}
Since $\lambda_j>0$, the above condition is equivalent to
the condition
$1\ge \max(-\lambda_1 +\lambda_2+\lambda_3,
\lambda_1 -\lambda_2+\lambda_3,
\lambda_1 +\lambda_2-\lambda_3)$.

Case (2)
Lemmas \ref{LBU1} and \ref{LBU2} guarantee that
\begin{align}
{\cal R}[{\cal M}[S]]
= \frac{1}{4}(I+c_1^1 \sigma_1 ) \otimes (I+c_1^2 \sigma_1 ) + \sum_{j=1}^2 \frac{\lambda_j}{4} \alpha_j \otimes \beta_j .
\end{align}

When $c^2_1 \neq 0$,
since $c^2_1\Tr \beta_1\sigma_1 =c^2_1 \Tr \beta_2\sigma_1=0$,
applying a unitary in the second system
we can convert
$\beta_1$, $\beta_2$
to $\pm \sigma_2$, $\sigma_3$ while $\sigma_1$ is fixed.
When $c^2_1 = 0$,
we can convert
$\beta_1$, $\beta_2$
to $\pm \sigma_2$, $\sigma_3$ without fixing $\sigma_1$.
Also, when $c^1_1 \neq 0$,
since $c^1_1\Tr \alpha_1\sigma_1 =c^1_1 \Tr \alpha_2\sigma_1=0$,
applying a unitary in the first system
we can convert $\alpha_1$, $\alpha_2$ to
$\pm \sigma_2$, $\sigma_3$, where the choice of $\pm$ is not
necessarily the same as the above.
When $c^1_1 = 0$, we can convert
$\alpha_1$, $\alpha_2$
to $\pm \sigma_2$, $\sigma_3$ without fixing $\sigma_1$.

Then, $4{\cal R}[{\cal M}[S]]$ is unitary equivalent to
$I\otimes I
+c_1^1 \sigma_1 \otimes I
+c_1^2 I \otimes \sigma_1
+ c_1^1 c_1^2 \sigma_1 \otimes \sigma_1
\pm \lambda_1 \sigma_2 \otimes \sigma_2
+ \lambda_2 \sigma_3 \otimes \sigma_3$.
Lemma \ref{LL31} guarantees that
the relation ${\cal R}[{\cal M}[S]]\ge 0$ is equivalent to
\begin{align}
(1-c_1^1 c_1^2)^2 &\ge (c_1^1-c_1^2)^2+(\pm\lambda_1+\lambda_2)^2 \Label{BP1TC}\\
(1+c_1^1 c_1^2)^2 &\ge (c_1^1+c_1^2)^2+(\pm\lambda_1-\lambda_2)^2 .\Label{BP2TC}
\end{align}
These two conditions are equivalent to
\begin{align}
(1-(c_1^1)^2)(1-(c_1^2)^2) &\ge (\pm\lambda_1+\lambda_2)^2 \Label{BP1TD}\\
(1-(c_1^1)^2)(1-(c_1^2)^2) &\ge (\pm\lambda_1-\lambda_2)^2 .\Label{BP2TD}
\end{align}
The pair of these two conditions are equivalent to \eqref{BP1T}.

Case (3).
Lemma \ref{LBU2} guarantees that
\begin{align}
4{\cal R}[{\cal M}[S]]
=
(I+c_1^1 \sigma_1+c_2^1 \sigma_2 ) \otimes
(I+c_1^2 \sigma_1+c_2^2 \sigma_2 ) +
 \lambda_1 \alpha_1 \otimes \beta_1
\end{align}
Since
$\Tr (c_1^1 \sigma_1+c_2^1 \sigma_2)\alpha_1=
\Tr (c_1^2 \sigma_1+c_2^2 \sigma_2)\alpha_2=0$,
$4{\cal R}[{\cal M}[S]]$ is unitary equivalent to
\begin{align}
(I+\sqrt{(c_1^1)^2+(c_2^1)^2} \sigma_1 ) \otimes
(I+\sqrt{(c_1^2)^2+(c_2^2)^2} \sigma_2 ) +
 \lambda_1 \sigma_3 \otimes \sigma_3.
\end{align}
Lemma \ref{LL31} guarantees that
the relation $4{\cal R}[{\cal M}[S]]\ge 0$ is equivalent to the pair of
\begin{align}
\big(1-\sqrt{(c_1^1)^2+(c_2^1)^2} \sqrt{(c_1^2)^2+(c_2^2)^2}\big)^2
&\ge \big(\sqrt{(c_1^1)^2+(c_2^1)^2}-\sqrt{(c_1^2)^2+(c_2^2)^2}\big)^2+
(\lambda_1)^2 \Label{BP1EC}\\
\big(1+\sqrt{(c_1^1)^2+(c_2^1)^2} \sqrt{(c_1^2)^2+(c_2^2)^2}\big)^2
&\ge \big(\sqrt{(c_1^1)^2+(c_2^1)^2}+\sqrt{(c_1^2)^2+(c_2^2)^2}\big)^2
+(\lambda_1)^2 .\Label{BP2EC}
\end{align}
These two conditions are equivalent to
\begin{align}
\big(1-((c_1^1)^2+(c_2^1)^2)\big)
\big(1- ((c_1^2)^2+(c_2^2)^2)\big)
\ge
(\lambda_1)^2 \Label{BP1ED}.
\end{align}

Case (4).
We have
${\cal R}[{\cal M}[S]]= \rho_1 \otimes \rho_2 \ge 0$.
\end{proof}

\subsection{Proof of Lemma~\ref{LEM-ExB}}\Label{app:proof-ExB}
\subsubsection{Failure of the positivity condition $\hat C_{1\to 2}[P]\ge 0$}
\begin{proof}
Conditioned on Alice's measurement choice $Y_1=y_1$ and outcome $X_1=x_1$,
the memory qubit is projected onto the complex-conjugate state
$|x_1,y_1\rangle^*$, where $^*$ expresses the complex conjugate.
Hence Bob receives the state
$|x_1,y_1\rangle^*\langle x_1,y_1|^*
=(|x_1,y_1\rangle\langle x_1,y_1|)^T$
for the same conditioning.
In the qubit-Pauli setting, this coincides with the action of the transpose map
$\rho\mapsto \rho^{\mathsf T}$ on the post-measurement state
$|x_1,y_1\rangle\langle x_1,y_1|$.
Therefore, under the reconstruction formula of Theorem~\ref{TH8},
the matrix $\hat C_{1\to2}[P]$ is exactly the Choi matrix of the transpose map.
With our Choi convention, this matrix is the swap operator
$\sum_{m,n=0}^1 |m\rangle\langle n| \otimes |n\rangle\langle m|$,
whose eigenvalues are $1,1,1,-1$.
Hence $\hat C_{1\to2}[P]$ is not positive.
\end{proof}

\subsubsection{Verification of Condition \emph{(C1)}}
\begin{proof}
We verify condition \emph{(C1)} by using the two-qubit Pauli form given in
Proposition~\ref{PROP:C1-twoqubit}.  
Conditioned on Alice's measurement choice $Y_1=y_1$ and outcome $X_1=x_1$,
the memory qubit is projected onto the complex-conjugate state
$\ket{x_1,y_1}^{*}$.
Hence, when Bob measures the Pauli observable $\sigma_{y_2}$, the conditional
distribution is
\begin{align}
P_{X_2\mid X_1,Y_1,Y_2}(x_2\mid x_1,y_1,y_2)
=
\bra{x_2,y_2}
\left(\ket{x_1,y_1}^{*}\!\bra{x_1,y_1}^{*}\right)
\ket{x_2,y_2}.
\Label{ExB-cond}
\end{align}
Alice's reduced state is maximally mixed. Therefore, her local statistics
are unbiased for every Pauli measurement:
\begin{align}
P_{X_1\mid Y_1}(x_1\mid y_1)=\frac12,
\qquad
\eta_{y_1}
:=
\sum_{x_1\in\{0,1\}}(-1)^{x_1}
P_{X_1\mid Y_1}(x_1\mid y_1)
=0
\Label{ExB-eta}
\end{align}
for all $y_1\in\{1,2,3\}$. In particular, all of Alice's Pauli-measurement
outcomes have strictly positive probability, and hence
condition~\emph{(C0)} holds. Therefore, the correction term in
Proposition~\ref{PROP:C1-twoqubit},
\begin{align}
\sum_{y_1'\in\{1,2,3\}}
\eta_{y_1'}
\sum_{x_1'\in\{0,1\}}(-1)^{x_1'}
P_{X_2\mid X_1,Y_1,Y_2}(x_2\mid x_1',y_1',y_2),
\end{align}
vanishes identically. Therefore, the right-hand side of \eqref{C1-2q}
reduces to
$2P_{X_2\mid Y_1,Y_2}(x_2\mid 0,y_2)$.

For each fixed $y_1$, the two states conditionally received by Bob are the
complex conjugates of the two eigenstates of $\sigma_{y_1}$. The sum of
their density matrices is therefore the identity operator. It follows from
\eqref{ExB-cond} that
\begin{align}
\sum_{x_1\in\{0,1\}}
P_{X_2\mid X_1,Y_1,Y_2}(x_2\mid x_1,y_1,y_2)
=1.
\Label{ExB-C1-LHS}
\end{align}
When Alice does nothing, Bob's reduced state is maximally mixed. Hence
\begin{align}
P_{X_2\mid Y_1,Y_2}(x_2\mid 0,y_2)=\frac12,
\end{align}
and consequently
\begin{align}
2P_{X_2\mid Y_1,Y_2}(x_2\mid 0,y_2)=1.
\Label{ExB-C1-RHS}
\end{align}
Thus, \eqref{ExB-C1-LHS} and \eqref{ExB-C1-RHS} show that the two sides of
\eqref{C1-2q} coincide for all $y_1,y_2\in\{1,2,3\}$ and
$x_2\in\{0,1\}$. Hence condition \emph{(C1)} holds.
\end{proof}

\subsection{Proof of Lemma~\ref{LEM-Ex1prime}}\Label{app:proof-Ex1prime}
\subsubsection{Verification of the positivity condition $\hat C_{1\to 2}[P]\ge 0$}
\begin{proof}
We first rewrite the reconstruction formula \eqref{VBTR} in the present
two-qubit Pauli setting.
Here we have
\[
G_0=I,\qquad G_j=\sigma_j,\qquad
H_0=\frac{I}{2},\qquad H_j=\frac{\sigma_j}{2}\qquad (j=1,2,3),
\]
and the coefficients are $g_{x,j}=(-1)^x$ for $x\in\{0,1\}$.
Since in Example~\ref{Ex1prime} the classical memory $X$ is chosen uniformly,
Alice's local distribution is unbiased for every Pauli choice, namely
\[
P_{X_1\mid Y_1}(0\mid y_1)=P_{X_1\mid Y_1}(1\mid y_1)=\frac12
\qquad (y_1=1,2,3).
\]
Hence the correction term in \eqref{VBTR} vanishes, and the reconstructed
matrix reduces to
\begin{align}
\hat C_{1\to2}[P]
=
\frac12\,I\otimes I
+\frac12\sum_{j=1}^3 b_j\,I\otimes \sigma_j
+\frac14\sum_{i,j=1}^3 c_{ij}\,\sigma_i^{T}\otimes \sigma_j,
\Label{Ex1prime-recon}
\end{align}
where
\begin{align}
b_j
&:=
\sum_{x_2\in\{0,1\}}(-1)^{x_2}
P_{X_2\mid Y_1,Y_2}(x_2\mid 0,j),
\Label{Ex1prime-bj}
\\
c_{ij}
&:=
\sum_{x_1,x_2\in\{0,1\}}(-1)^{x_1+x_2}
P_{X_2\mid X_1,Y_1,Y_2}(x_2\mid x_1,i,j).
\Label{Ex1prime-cij}
\end{align}

We now compute these coefficients.
When Alice does nothing, Bob receives the state
\[
D_{1/2}(|0,3\rangle\langle 0,3|)
=
\frac12|0,3\rangle\langle 0,3|+\frac14 I,
\]
so
\[
b_1=b_2=0,\qquad b_3=\frac12.
\]
Next, conditioned on Alice's measurement, the relevant conditional states at
Bob are as follows.

\begin{itemize}
\item If $Y_1=1$, then the two values of the memory variable $X$ lead to the
same post-unitary state, so conditioned on $X_1=x_1$ Bob receives
$D_{1/2}(|x_1,1\rangle\langle x_1,1|)$.
Hence
\[
P_{X_2\mid X_1,Y_1,Y_2}(x_2\mid x_1,1,1)
=
\frac34\,\delta_{x_2,x_1}+\frac14\,(1-\delta_{x_2,x_1}),
\]
while for $Y_2=2,3$ the distribution is unbiased.
Therefore
\[
c_{11}=1,\qquad c_{12}=c_{13}=0.
\]

\item If $Y_1=2$, then averaging over $X\in\{0,1\}$ yields the maximally mixed
state at Bob, so
\[
c_{21}=c_{22}=c_{23}=0.
\]

\item If $Y_1=3$, then conditioned on Alice's outcome one has $X=X_1$, and the
subsequent unitary $\sigma_X$ always maps Alice's output to $|0,3\rangle$.
Thus Bob receives the fixed state
$D_{1/2}(|0,3\rangle\langle 0,3|)$, independently of $X_1$, and hence
\[
c_{31}=c_{32}=c_{33}=0.
\]
\end{itemize}

Substituting these coefficients into \eqref{Ex1prime-recon}, and noting that
$\sigma_1^{T}=\sigma_1$, we obtain
\begin{align}
\hat C_{1\to2}[P]
=
\frac12\Bigl(
I\otimes I
+\frac12\,I\otimes \sigma_3
+\frac12\,\sigma_1\otimes \sigma_1
\Bigr).
\Label{Ex1prime-C}
\end{align}

In the computational basis
$\{|00\rangle,|01\rangle,|10\rangle,|11\rangle\}$,
the matrix \eqref{Ex1prime-C} is written as
\begin{align}
\hat C_{1\to2}[P]
=
\begin{pmatrix}
\frac34 & 0 & 0 & \frac14\\
0 & \frac14 & \frac14 & 0\\
0 & \frac14 & \frac34 & 0\\
\frac14 & 0 & 0 & \frac14
\end{pmatrix}.
\Label{Ex1prime-C-matrix}
\end{align}
Its eigenvalues are
\begin{align}
\frac12-\frac{\sqrt2}{4},\quad
\frac12-\frac{\sqrt2}{4},\quad
\frac12+\frac{\sqrt2}{4},\quad
\frac12+\frac{\sqrt2}{4},
\end{align}
all of which are strictly positive.
Hence, $\hat C_{1\to2}[P]\ge 0$.
\end{proof}

\subsubsection{Failure of Condition \emph{(C1)}}
\begin{proof}
By Proposition~\ref{PROP:C1-twoqubit}, in the two-qubit Pauli setting,
condition \emph{(C1)} is equivalent to
\begin{align}
\sum_{x_1\in\{0,1\}}
P_{X_2\mid X_1,Y_1,Y_2}(x_2\mid x_1,y_1,y_2)
&=
2P_{X_2\mid Y_1,Y_2}(x_2\mid0,y_2)
\notag\\
&\quad
-\sum_{y_1'\in\{1,2,3\}}\eta_{y_1'}
\sum_{x_1'\in\{0,1\}}(-1)^{x_1'}
P_{X_2\mid X_1,Y_1,Y_2}(x_2\mid x_1',y_1',y_2),
\Label{eq:C1-2q-Ex1prime}
\end{align}
for all $y_1,y_2\in\{1,2,3\}$ and $x_2\in\{0,1\}$, where
$\eta_{y_1'}
:=
\sum_{x_1\in\{0,1\}}(-1)^{x_1}P_{X_1\mid Y_1}(x_1\mid y_1')$.

We show that \eqref{eq:C1-2q-Ex1prime} fails for
$(y_1,y_2,x_2)=(1,3,0)$.
Since the classical memory $X$ is chosen uniformly,
Alice's local statistics are unbiased for every Pauli choice, and hence
\begin{align}
\eta_{1}=\eta_{2}=\eta_{3}=0.
\Label{Ex1prime-eta}
\end{align}
Therefore the correction term in \eqref{eq:C1-2q-Ex1prime} vanishes identically.

Next, when Alice measures $Y_1=1$ and Bob measures $Y_2=3$,
Bob receives a depolarized $\sigma_1$-eigenstate.
Hence Bob's $\sigma_3$-measurement is unbiased, so
\[
P_{X_2\mid X_1,Y_1,Y_2}(0\mid x_1,1,3)=\frac12
\qquad (x_1=0,1).
\]
Therefore the left-hand side of \eqref{eq:C1-2q-Ex1prime} equals
\begin{align}
\sum_{x_1\in\{0,1\}}
P_{X_2\mid X_1,Y_1,Y_2}(0\mid x_1,1,3)
=
1.
\Label{Ex1prime-LHS}
\end{align}

On the other hand, when Alice does nothing ($Y_1=0$),
Bob receives the depolarized state
\[
D_{1/2}(|0,3\rangle\langle 0,3|)
=
\frac12|0,3\rangle\langle 0,3|+\frac14 I.
\]
Hence
\begin{align}
P_{X_2\mid Y_1,Y_2}(0\mid 0,3)=\frac34.
\Label{Ex1prime-Bobmarginal}
\end{align}
Since the correction term vanishes by \eqref{Ex1prime-eta},
the right-hand side of \eqref{eq:C1-2q-Ex1prime} becomes
\begin{align}
2P_{X_2\mid Y_1,Y_2}(0\mid 0,3)=\frac32.
\Label{Ex1prime-RHS}
\end{align}
Comparing \eqref{Ex1prime-LHS} and \eqref{Ex1prime-RHS}, we obtain
$1\neq \frac32$.
Thus \eqref{eq:C1-2q-Ex1prime} is violated, and therefore
Example~\ref{Ex1prime} does not satisfy \emph{(C1)}.
\end{proof}

\subsection{Proof of Lemma~\ref{LEM-Ex1}}\Label{app:proof-Ex1}
\subsubsection{Failure of Condition \emph{(C1)}}
\begin{proof}
We employ the notation in Subsection \ref{app:proof-Ex1prime}.
That is, we use \eqref{eq:C1-2q-Ex1prime} as a condition 
equivalent to the condition \emph{(C1)}.
We show that the relation \eqref{eq:C1-2q-Ex1prime} fails for $
(y_1,y_2,x_2)=(1,3,0)$.
The important point is that, for this choice of $(y_1,y_2,x_2)$,
the relevant probabilities are independent of the mixing parameter $\lambda$.
Indeed, when $Y_1=1$, Alice's post-measurement state is a $\sigma_1$-eigenstate,
and after Charlie applies $\sigma_X$, Bob still receives a $\sigma_1$-eigenstate.
Hence Bob's $\sigma_3$-measurement is unbiased, regardless of the value of $X$
and therefore regardless of $\lambda$.

Thus,
\[
P_{X_2\mid X_1,Y_1,Y_2}(0\mid x_1,1,3)=\frac12
\qquad (x_1=0,1),
\]
and the left-hand side is
\[
\sum_{x_1\in\{0,1\}}
P_{X_2\mid X_1,Y_1,Y_2}(0\mid x_1,1,3)=1.
\]

Next, when Alice does nothing ($Y_1=0$), Charlie sends
$\sigma_X|(-1)^X,3\rangle$ to Bob.
But
\[
\sigma_X|(-1)^X,3\rangle = |0,3\rangle
\qquad (X=0,1),
\]
so Bob always receives the state $|0,3\rangle$, again independently of $\lambda$.
Hence
\[
P_{X_2\mid Y_1,Y_2}(0\mid 0,3)=1,
\]
and therefore the first term on the right-hand side equals
\[
2P_{X_2\mid Y_1,Y_2}(0\mid 0,3)=2.
\]

Finally, for every $y_1'\in\{1,2,3\}$, the correction term vanishes:
\[
\sum_{x_1'\in\{0,1\}}(-1)^{x_1'}
P_{X_2\mid X_1,Y_1,Y_2}(0\mid x_1',y_1',3)=0.
\]
Indeed, for $y_1'=1,2$, Bob's $\sigma_3$-measurement is unbiased, so the two
probabilities are both $1/2$.
For $y_1'=3$, Bob always obtains $x_2=0$, independently of $x_1'$, so the two terms are both $1$ and their alternating sum is
$1-1=0$.
Therefore the right-hand side is $2$, whereas the left-hand side is $1$.
This contradiction shows that condition \emph{(C1)} fails for the strategy of
Example~\ref{Ex1}.
\end{proof}

\subsubsection{Failure of the positivity condition $\hat C_{1\to 2}[P]\ge 0$}
\begin{proof}
Recall that, in Example~\ref{Ex1}, Charlie uses a hidden classical variable
$X\in\{0,1\}$, chooses $X=0$ with probability $\lambda\in(0,1)$ and
$X=1$ with probability $1-\lambda$, inputs the state
$|(-1)^X,3\rangle$ to Alice, and then applies the unitary $\sigma_X$ to
Alice's output before sending the system to Bob.

We first compute the local statistics on Alice's side.
For $Y_1=1,2$, Alice's outcome is unbiased:
\[
P_{X_1\mid Y_1}(0\mid y_1)=P_{X_1\mid Y_1}(1\mid y_1)=\frac12
\qquad (y_1=1,2).
\]
For $Y_1=3$, we have
\[
P_{X_1\mid Y_1}(0\mid 3)=\lambda,
\qquad
P_{X_1\mid Y_1}(1\mid 3)=1-\lambda.
\]
Hence the reconstructed initial state is
\[
\rho_1[P]
=
\frac12\bigl(I+(2\lambda-1)\sigma_3\bigr).
\]

Next, when Alice does nothing ($Y_1=0$), Bob always receives the state
$|0,3\rangle$.
Therefore,
\[
\rho_2[P]
=
|0,3\rangle\langle 0,3|
=
\frac12(I+\sigma_3).
\]

We now determine the conditional distribution
$P_{X_2\mid X_1,Y_1,Y_2}$.
From the definition of Example~\ref{Ex1}, Bob's conditional statistics are
as follows:

\begin{itemize}
\item If $Y_1=1$, then
\[
Y_2=1:\ x_2=x_1 \ \text{with probability }1,
\qquad
Y_2=2,3:\ \text{uniform}.
\]

\item If $Y_1=2$, then
\[
Y_2=2:\ x_2=x_1 \ \text{with probability }\lambda,
\quad
x_2=1-x_1 \ \text{with probability }1-\lambda,
\]
and for $Y_2=1,3$ the distribution is uniform.

\item If $Y_1=3$, then
\[
Y_2=3:\ x_2=0 \ \text{with probability }1,
\qquad
Y_2=1,2:\ \text{uniform}.
\]
\end{itemize}

Substituting these statistics into the qubit-Pauli reconstruction formula
for $\hat C_{1\to2}[P]$, with
\[
H_0=\frac{I}{2},
\qquad
H_j=\frac{\sigma_j}{2}
\quad (j=1,2,3),
\]
we obtain
\[
\hat C_{1\to2}[P]
=
I\otimes \frac{I+\sigma_3}{2}
+\frac12\,\sigma_1\otimes\sigma_1
-\frac{2\lambda-1}{2}\,\sigma_2\otimes\sigma_2.
\]

In the computational basis
$\{|00\rangle,|01\rangle,|10\rangle,|11\rangle\}$, this matrix is
\[
\hat C_{1\to2}[P]
=
\begin{pmatrix}
1 & 0 & 0 & \lambda\\
0 & 0 & 1-\lambda & 0\\
0 & 1-\lambda & 1 & 0\\
\lambda & 0 & 0 & 0
\end{pmatrix}.
\]
This matrix decomposes into two $2\times2$ blocks, so its eigenvalues are
\[
\frac{1\pm \sqrt{1+4\lambda^2}}{2},
\qquad
\frac{1\pm \sqrt{1+4(1-\lambda)^2}}{2}.
\]
Since $\lambda\in(0,1)$, we have
\[
\sqrt{1+4\lambda^2}>1,
\qquad
\sqrt{1+4(1-\lambda)^2}>1,
\]
and therefore
\[
\frac{1-\sqrt{1+4\lambda^2}}{2}<0,
\qquad
\frac{1-\sqrt{1+4(1-\lambda)^2}}{2}<0.
\]
Thus $\hat C_{1\to2}[P]$ has negative eigenvalues, and hence
\[
\hat C_{1\to2}[P]\not\ge 0.
\]

We conclude that the strategy of Example~\ref{Ex1} satisfies neither
condition \emph{(C1)} nor the positivity condition
$\hat C_{1\to2}[P]\ge 0$.
\end{proof}

\section{Dimension counting for process-matrix reconstruction in the qubit-Pauli setting}
\Label{app:qubit-counting}

In this appendix, we justify the counting argument used in
Section~\ref{S7-1}.
The point is that, even before imposing any special structure on Charlie's
strategy beyond physical validity, the generalized Born rule depends on the
process matrix only through its class in a certain quotient space.
Thus, exact process-matrix reconstruction requires the accessible tensors
$C[\Gamma_{1,z_1}]\otimes C[\Gamma_{2,z_2}]$ to span that quotient space.
We explain this reduction and evaluate the corresponding dimension in the
two-qubit case.

Let ${\cal B}$ denote the real vector space of Hermitian matrices on
\[
{\cal H}_{I,1}\otimes{\cal H}_{O,1}\otimes
{\cal H}_{I,2}\otimes{\cal H}_{O,2}.
\]
For Alice's and Bob's local operations $\Gamma_{1,z_1}$ and $\Gamma_{2,z_2}$,
the observed joint distribution is given by the generalized Born rule
\begin{equation}\Label{APP-GBR}
P_{Z_1,Z_2}(z_1,z_2)
=
\Tr\!\bigl(
W_S\,
C[\Gamma_{1,z_1}]\otimes C[\Gamma_{2,z_2}]
\bigr),
\end{equation}
where $W_S$ is Charlie's process matrix and
$C[\Gamma]$ is the (unnormalized) Choi matrix of $\Gamma$.

When a strategy $S$ belongs to the most general physical class
${\cal S}_{G}$,
its process matrix $W_S$ satisfies the standard linear
no-signalling-in-time constraints~\cite[(4),(5),(6)]{BMQ}
\begin{align}
(\Tr_{(I,1),(O,1),(O,2)}W_S)\otimes \rho_{mix,(O,2)}
&= \Tr_{(I,1),(O,1)}W_S,
\Label{APP-AB1}\\
(\Tr_{(O,1),(I,2),(O,2)}W_S)\otimes \rho_{mix,(O,1)}
&= \Tr_{(I,2),(O,2)}W_S,
\Label{APP-AB2}\\
W_S
=
(\Tr_{(O,1)}W_S)\otimes \rho_{mix,(O,1)}
+(\Tr_{(O,2)}W_S)\otimes \rho_{mix,(O,2)}
&\notag\\
\qquad
-(\Tr_{(O,1),(O,2)}W_S)\otimes \rho_{mix,(O,1),(O,2)}.
\Label{APP-AB3}
\end{align}
Here, $\rho_{mix}$ denotes the maximally mixed state on the indicated space.

These conditions can be rewritten in the dual form
\begin{align}
\Tr W_S \Bigl[
(\Tr_{(I,1),(O,1),(O,2)}A)\otimes \rho_{mix,(I,1),(O,1),(O,2)}
- \Tr_{(I,1),(O,1)}A \otimes \rho_{mix,(I,1),(O,1)}
\Bigr]
&=0,
\Label{APP-AB1A}\\
\Tr W_S \Bigl[
(\Tr_{(O,1),(I,2),(O,2)}A)\otimes \rho_{mix,(O,1),(I,2),(O,2)}
- \Tr_{(I,2),(O,2)}A \otimes \rho_{mix,(I,2),(O,2)}
\Bigr]
&=0,
\Label{APP-AB2A}\\
\Tr W_S \Bigl[
A
+(\Tr_{(O,1),(O,2)}A)\otimes \rho_{mix,(O,1)}\otimes \rho_{mix,(O,2)}
-(\Tr_{(O,1)}A)\otimes \rho_{mix,(O,1)}
-(\Tr_{(O,2)}A)\otimes \rho_{mix,(O,2)}
\Bigr]
&=0
\Label{APP-AB3A}
\end{align}
for every Hermitian matrix
$A\in{\cal B}$.

Motivated by these identities, define the following subspaces of ${\cal B}$:
\begin{align}
{\cal V}_1
:=
\Bigl\{
\sum_j
\rho_{mix,(I,1),(O,1)}\otimes A_{j,(I,2)}\otimes B_{j,(O,2)}
\;\Big|\;
\Tr B_{j,(O,2)}=0
\Bigr\},
\\
{\cal V}_2
:=
\Bigl\{
\sum_j
A_{j,(I,1)}\otimes B_{j,(O,1)}\otimes \rho_{mix,(I,2),(O,2)}
\;\Big|\;
\Tr B_{j,(O,1)}=0
\Bigr\},
\\
{\cal V}_3
:=
\Bigl\{
\sum_j
A_{j,(I,1)}\otimes B_{j,(O,1)}\otimes
A_{j,(I,2)}\otimes B_{j,(O,2)}
\;\Big|\;
\Tr B_{j,(O,1)}=\Tr B_{j,(O,2)}=0
\Bigr\}.
\end{align}
By construction, \eqref{APP-AB1A}--\eqref{APP-AB3A} imply that
\[
\Tr W_S A=0
\qquad
\forall A\in {\cal V}_1\oplus{\cal V}_2\oplus{\cal V}_3.
\]
Hence, in the generalized Born rule \eqref{APP-GBR},
only the equivalence class of
$C[\Gamma_{1,z_1}]\otimes C[\Gamma_{2,z_2}]$
modulo
${\cal V}_1\oplus{\cal V}_2\oplus{\cal V}_3$
can influence the observed statistics.
Therefore, exact reconstruction of the process matrix requires the accessible
family
\[
\bigl\{
C[\Gamma_{1,z_1}]\otimes C[\Gamma_{2,z_2}]
\bigr\}_{z_1,z_2}
\]
to span the quotient space
\[
{\cal B}/({\cal V}_1\oplus{\cal V}_2\oplus{\cal V}_3).
\]

A direct dimension count gives
\begin{align}
\dim {\cal V}_1
&=
d_{I,2}^2(d_{O,2}^2-1),\\
\dim {\cal V}_2
&=
d_{I,1}^2(d_{O,1}^2-1),\\
\dim {\cal V}_3
&=
d_{I,1}^2 d_{I,2}^2 (d_{O,1}^2-1)(d_{O,2}^2-1),
\end{align}
while
\[
\dim {\cal B}
=
d_{I,1}^2 d_{O,1}^2 d_{I,2}^2 d_{O,2}^2.
\]
Consequently,
\begin{align}
\dim {\cal B}/({\cal V}_1\oplus{\cal V}_2\oplus{\cal V}_3)
&=
d_{I,1}^2 d_{I,2}^2 (d_{O,1}^2+d_{O,2}^2-1)
-d_{I,2}^2(d_{O,2}^2-1)
-d_{I,1}^2(d_{O,1}^2-1).
\Label{APP-quotient-dim}
\end{align}

We now specialize to the two-qubit case
\[
d_{I,1}=d_{O,1}=d_{I,2}=d_{O,2}=2.
\]
Then \eqref{APP-quotient-dim} becomes
\begin{align}
\dim {\cal B}/({\cal V}_1\oplus{\cal V}_2\oplus{\cal V}_3)
=
4\cdot 4\cdot(4+4-1)-4\cdot(4-1)-4\cdot(4-1)
=88.
\Label{APP-quotient-dim-qubit}
\end{align}

This is the minimum number of linearly independent quotient-space directions
that any tomographically complete reconstruction scheme must access in the
two-qubit setting.

In Section~\ref{S7-1}, however, we consider the restricted qubit-Pauli
measurement scheme.
There, for each player $i=1,2$, the pair $(X_i,Y_i)$ takes only seven possible
values.
Hence the family
$\bigl\{
C[\Gamma_{1,z_1}]\otimes C[\Gamma_{2,z_2}]
\bigr\}_{z_1,z_2}
$ contains at most
$7\times 7 = 49$
elements.
Since $49<88$,
these tensors cannot span the quotient space
${\cal B}/({\cal V}_1\oplus{\cal V}_2\oplus{\cal V}_3)$.
Therefore, exact reconstruction of the underlying process matrix is impossible
in the qubit-Pauli setting considered in Section~\ref{S7-1}.

\section*{References}
\bibliography{ref-only}

\begin{thebibliography}{27}%
\makeatletter
\providecommand \@ifxundefined [1]{%
 \@ifx{#1\undefined}
}%
\providecommand \@ifnum [1]{%
 \ifnum #1\expandafter \@firstoftwo
 \else \expandafter \@secondoftwo
 \fi
}%
\providecommand \@ifx [1]{%
 \ifx #1\expandafter \@firstoftwo
 \else \expandafter \@secondoftwo
 \fi
}%
\providecommand \natexlab [1]{#1}%
\providecommand \enquote  [1]{``#1''}%
\providecommand \bibnamefont  [1]{#1}%
\providecommand \bibfnamefont [1]{#1}%
\providecommand \citenamefont [1]{#1}%
\providecommand \href@noop [0]{\@secondoftwo}%
\providecommand \href [0]{\begingroup \@sanitize@url \@href}%
\providecommand \@href[1]{\@@startlink{#1}\@@href}%
\providecommand \@@href[1]{\endgroup#1\@@endlink}%
\providecommand \@sanitize@url [0]{\catcode `\\12\catcode `\$12\catcode
  `\&12\catcode `\#12\catcode `\^12\catcode `\_12\catcode `\%12\relax}%
\providecommand \@@startlink[1]{}%
\providecommand \@@endlink[0]{}%
\providecommand \url  [0]{\begingroup\@sanitize@url \@url }%
\providecommand \@url [1]{\endgroup\@href {#1}{\urlprefix }}%
\providecommand \urlprefix  [0]{URL }%
\providecommand \Eprint [0]{\href }%
\providecommand \doibase [0]{https://doi.org/}%
\providecommand \selectlanguage [0]{\@gobble}%
\providecommand \bibinfo  [0]{\@secondoftwo}%
\providecommand \bibfield  [0]{\@secondoftwo}%
\providecommand \translation [1]{[#1]}%
\providecommand \BibitemOpen [0]{}%
\providecommand \bibitemStop [0]{}%
\providecommand \bibitemNoStop [0]{.\EOS\space}%
\providecommand \EOS [0]{\spacefactor3000\relax}%
\providecommand \BibitemShut  [1]{\csname bibitem#1\endcsname}%
\let\auto@bib@innerbib\@empty
\bibitem [{\citenamefont {Pearl}(2009)}]{Pearl2009}%
  \BibitemOpen
  \bibfield  {author} {\bibinfo {author} {\bibfnamefont {J.}~\bibnamefont
  {Pearl}},\ }\href {https://doi.org/10.1017/CBO9780511803161} {\emph {\bibinfo
  {title} {Causality: Models, Reasoning, and Inference}}},\ \bibinfo {edition}
  {second edition}\ ed.\ (\bibinfo  {publisher} {Cambridge University Press},\
  \bibinfo {year} {2009})\BibitemShut {NoStop}%
\bibitem [{\citenamefont {Peters}\ \emph {et~al.}(2017)\citenamefont {Peters},
  \citenamefont {Janzing},\ and\ \citenamefont {Sch{\"o}lkopf}}]{Peters2017}%
  \BibitemOpen
  \bibfield  {author} {\bibinfo {author} {\bibfnamefont {J.}~\bibnamefont
  {Peters}}, \bibinfo {author} {\bibfnamefont {D.}~\bibnamefont {Janzing}},\
  and\ \bibinfo {author} {\bibfnamefont {B.}~\bibnamefont {Sch{\"o}lkopf}},\
  }\href {https://mitpress.mit.edu/9780262037310/elements-of-causal-inference/}
  {\emph {\bibinfo {title} {Elements of Causal Inference: Foundations and
  Learning Algorithms}}}\ (\bibinfo  {publisher} {MIT Press},\ \bibinfo {year}
  {2017})\BibitemShut {NoStop}%
\bibitem [{\citenamefont {Chiribella}\ \emph {et~al.}(2009)\citenamefont
  {Chiribella}, \citenamefont {D'Ariano},\ and\ \citenamefont
  {Perinotti}}]{Chiribella2009}%
  \BibitemOpen
  \bibfield  {author} {\bibinfo {author} {\bibfnamefont {G.}~\bibnamefont
  {Chiribella}}, \bibinfo {author} {\bibfnamefont {G.~M.}\ \bibnamefont
  {D'Ariano}},\ and\ \bibinfo {author} {\bibfnamefont {P.}~\bibnamefont
  {Perinotti}},\ }\bibfield  {title} {\bibinfo {title} {Theoretical framework
  for quantum networks},\ }\href {https://doi.org/10.1103/PhysRevA.80.022339}
  {\bibfield  {journal} {\bibinfo  {journal} {Physical Review A}\ }\textbf
  {\bibinfo {volume} {80}},\ \bibinfo {pages} {022339} (\bibinfo {year}
  {2009})}\BibitemShut {NoStop}%
\bibitem [{\citenamefont {Oreshkov}\ \emph {et~al.}(2012)\citenamefont
  {Oreshkov}, \citenamefont {Costa},\ and\ \citenamefont {Brukner}}]{OCB}%
  \BibitemOpen
  \bibfield  {author} {\bibinfo {author} {\bibfnamefont {O.}~\bibnamefont
  {Oreshkov}}, \bibinfo {author} {\bibfnamefont {F.}~\bibnamefont {Costa}},\
  and\ \bibinfo {author} {\bibfnamefont {C.}~\bibnamefont {Brukner}},\
  }\bibfield  {title} {\bibinfo {title} {Quantum correlations with no causal
  order},\ }\href {https://doi.org/10.1038/ncomms2076} {\bibfield  {journal}
  {\bibinfo  {journal} {Nature Communications}\ }\textbf {\bibinfo {volume}
  {3}},\ \bibinfo {pages} {1092} (\bibinfo {year} {2012})},\ \Eprint
  {https://arxiv.org/abs/1105.4464} {arXiv:1105.4464 [quant-ph]} \BibitemShut
  {NoStop}%
\bibitem [{\citenamefont {Ara{\'u}jo}\ \emph {et~al.}(2015)\citenamefont
  {Ara{\'u}jo}, \citenamefont {Branciard}, \citenamefont {Costa}, \citenamefont
  {Feix}, \citenamefont {Giarmatzi},\ and\ \citenamefont
  {Brukner}}]{Araujo2015}%
  \BibitemOpen
  \bibfield  {author} {\bibinfo {author} {\bibfnamefont {M.}~\bibnamefont
  {Ara{\'u}jo}}, \bibinfo {author} {\bibfnamefont {C.}~\bibnamefont
  {Branciard}}, \bibinfo {author} {\bibfnamefont {F.}~\bibnamefont {Costa}},
  \bibinfo {author} {\bibfnamefont {A.}~\bibnamefont {Feix}}, \bibinfo {author}
  {\bibfnamefont {C.}~\bibnamefont {Giarmatzi}},\ and\ \bibinfo {author}
  {\bibfnamefont {C.}~\bibnamefont {Brukner}},\ }\bibfield  {title} {\bibinfo
  {title} {Witnessing causal nonseparability},\ }\href
  {https://doi.org/10.1088/1367-2630/17/10/102001} {\bibfield  {journal}
  {\bibinfo  {journal} {New Journal of Physics}\ }\textbf {\bibinfo {volume}
  {17}},\ \bibinfo {pages} {102001} (\bibinfo {year} {2015})}\BibitemShut
  {NoStop}%
\bibitem [{\citenamefont {Bavaresco}\ \emph {et~al.}(2021)\citenamefont
  {Bavaresco}, \citenamefont {Murao},\ and\ \citenamefont {Quintino}}]{BMQ}%
  \BibitemOpen
  \bibfield  {author} {\bibinfo {author} {\bibfnamefont {J.}~\bibnamefont
  {Bavaresco}}, \bibinfo {author} {\bibfnamefont {M.}~\bibnamefont {Murao}},\
  and\ \bibinfo {author} {\bibfnamefont {M.~T.}\ \bibnamefont {Quintino}},\
  }\bibfield  {title} {\bibinfo {title} {Strict hierarchy between parallel,
  sequential, and indefinite-causal-order strategies for channel
  discrimination},\ }\href {https://doi.org/10.1103/PhysRevLett.127.200504}
  {\bibfield  {journal} {\bibinfo  {journal} {Physical Review Letters}\
  }\textbf {\bibinfo {volume} {127}},\ \bibinfo {pages} {200504} (\bibinfo
  {year} {2021})},\ \Eprint {https://arxiv.org/abs/2011.08300}
  {arXiv:2011.08300 [quant-ph]} \BibitemShut {NoStop}%
\bibitem [{\citenamefont {Ebler}\ \emph {et~al.}(2018)\citenamefont {Ebler},
  \citenamefont {Salek},\ and\ \citenamefont {Chiribella}}]{ebler2018enhanced}%
  \BibitemOpen
  \bibfield  {author} {\bibinfo {author} {\bibfnamefont {D.}~\bibnamefont
  {Ebler}}, \bibinfo {author} {\bibfnamefont {S.}~\bibnamefont {Salek}},\ and\
  \bibinfo {author} {\bibfnamefont {G.}~\bibnamefont {Chiribella}},\ }\bibfield
   {title} {\bibinfo {title} {Enhanced communication with the assistance of
  indefinite causal order},\ }\href
  {https://doi.org/10.1103/PhysRevLett.120.120502} {\bibfield  {journal}
  {\bibinfo  {journal} {Physical Review Letters}\ }\textbf {\bibinfo {volume}
  {120}},\ \bibinfo {pages} {120502} (\bibinfo {year} {2018})},\ \Eprint
  {https://arxiv.org/abs/1711.10165} {arXiv:1711.10165 [quant-ph]} \BibitemShut
  {NoStop}%
\bibitem [{\citenamefont {Bavaresco}\ \emph {et~al.}(2022)\citenamefont
  {Bavaresco}, \citenamefont {Murao},\ and\ \citenamefont
  {Quintino}}]{bavaresco2022unitary}%
  \BibitemOpen
  \bibfield  {author} {\bibinfo {author} {\bibfnamefont {J.}~\bibnamefont
  {Bavaresco}}, \bibinfo {author} {\bibfnamefont {M.}~\bibnamefont {Murao}},\
  and\ \bibinfo {author} {\bibfnamefont {M.~T.}\ \bibnamefont {Quintino}},\
  }\bibfield  {title} {\bibinfo {title} {Unitary channel discrimination beyond
  group structures: Advantages of sequential and indefinite-causal-order
  strategies},\ }\href {https://doi.org/10.1063/5.0075919} {\bibfield
  {journal} {\bibinfo  {journal} {Journal of Mathematical Physics}\ }\textbf
  {\bibinfo {volume} {63}},\ \bibinfo {pages} {042203} (\bibinfo {year}
  {2022})},\ \Eprint {https://arxiv.org/abs/2105.13369} {arXiv:2105.13369
  [quant-ph]} \BibitemShut {NoStop}%
\bibitem [{\citenamefont {Zhao}\ \emph {et~al.}(2020)\citenamefont {Zhao},
  \citenamefont {Yang},\ and\ \citenamefont {Chiribella}}]{zhao2020quantum}%
  \BibitemOpen
  \bibfield  {author} {\bibinfo {author} {\bibfnamefont {X.}~\bibnamefont
  {Zhao}}, \bibinfo {author} {\bibfnamefont {Y.}~\bibnamefont {Yang}},\ and\
  \bibinfo {author} {\bibfnamefont {G.}~\bibnamefont {Chiribella}},\ }\bibfield
   {title} {\bibinfo {title} {Quantum metrology with indefinite causal order},\
  }\href {https://doi.org/10.1103/PhysRevLett.124.190503} {\bibfield  {journal}
  {\bibinfo  {journal} {Physical Review Letters}\ }\textbf {\bibinfo {volume}
  {124}},\ \bibinfo {pages} {190503} (\bibinfo {year} {2020})},\ \Eprint
  {https://arxiv.org/abs/1912.02449} {arXiv:1912.02449 [quant-ph]} \BibitemShut
  {NoStop}%
\bibitem [{\citenamefont {Felce}\ and\ \citenamefont
  {Vedral}(2020)}]{felce2020quantum}%
  \BibitemOpen
  \bibfield  {author} {\bibinfo {author} {\bibfnamefont {D.}~\bibnamefont
  {Felce}}\ and\ \bibinfo {author} {\bibfnamefont {V.}~\bibnamefont {Vedral}},\
  }\bibfield  {title} {\bibinfo {title} {Quantum refrigeration with indefinite
  causal order},\ }\href {https://doi.org/10.1103/PhysRevLett.125.070603}
  {\bibfield  {journal} {\bibinfo  {journal} {Physical Review Letters}\
  }\textbf {\bibinfo {volume} {125}},\ \bibinfo {pages} {070603} (\bibinfo
  {year} {2020})},\ \Eprint {https://arxiv.org/abs/2003.00794}
  {arXiv:2003.00794 [quant-ph]} \BibitemShut {NoStop}%
\bibitem [{\citenamefont {Hayashi}\ \emph {et~al.}(2026)\citenamefont
  {Hayashi}, \citenamefont {Cao}, \citenamefont {Yu},\ and\ \citenamefont
  {Zhao}}]{CharFreeIdentification}%
  \BibitemOpen
  \bibfield  {author} {\bibinfo {author} {\bibfnamefont {M.}~\bibnamefont
  {Hayashi}}, \bibinfo {author} {\bibfnamefont {L.}~\bibnamefont {Cao}},
  \bibinfo {author} {\bibfnamefont {B.}~\bibnamefont {Yu}},\ and\ \bibinfo
  {author} {\bibfnamefont {Y.-Y.}\ \bibnamefont {Zhao}},\ }\href
  {https://doi.org/10.48550/arXiv.2602.20997} {\bibinfo {title}
  {Characterization-free classification and identification of the environment
  between two quantum players}} (\bibinfo {year} {2026}),\ \bibinfo {note}
  {arXiv:2602.20997v1, posted 24 Feb 2026},\ \Eprint
  {https://arxiv.org/abs/2602.20997} {arXiv:2602.20997 [quant-ph]} \BibitemShut
  {NoStop}%
\bibitem [{\citenamefont {Fitzsimons}\ \emph {et~al.}(2016)\citenamefont
  {Fitzsimons}, \citenamefont {Jones},\ and\ \citenamefont {Vedral}}]{FJV}%
  \BibitemOpen
  \bibfield  {author} {\bibinfo {author} {\bibfnamefont {J.~F.}\ \bibnamefont
  {Fitzsimons}}, \bibinfo {author} {\bibfnamefont {J.~A.}\ \bibnamefont
  {Jones}},\ and\ \bibinfo {author} {\bibfnamefont {V.}~\bibnamefont
  {Vedral}},\ }\bibfield  {title} {\bibinfo {title} {Quantum correlations which
  imply causation},\ }\href {https://doi.org/10.1038/srep18281} {\bibfield
  {journal} {\bibinfo  {journal} {Scientific Reports}\ }\textbf {\bibinfo
  {volume} {5}},\ \bibinfo {pages} {18281} (\bibinfo {year}
  {2016})}\BibitemShut {NoStop}%
\bibitem [{\citenamefont {Horsman}\ \emph {et~al.}(2017)\citenamefont
  {Horsman}, \citenamefont {Heunen}, \citenamefont {Pusey}, \citenamefont
  {Barrett},\ and\ \citenamefont {Spekkens}}]{Horsman2017}%
  \BibitemOpen
  \bibfield  {author} {\bibinfo {author} {\bibfnamefont {D.}~\bibnamefont
  {Horsman}}, \bibinfo {author} {\bibfnamefont {C.}~\bibnamefont {Heunen}},
  \bibinfo {author} {\bibfnamefont {M.~F.}\ \bibnamefont {Pusey}}, \bibinfo
  {author} {\bibfnamefont {J.}~\bibnamefont {Barrett}},\ and\ \bibinfo {author}
  {\bibfnamefont {R.~W.}\ \bibnamefont {Spekkens}},\ }\bibfield  {title}
  {\bibinfo {title} {Can a quantum state over time resemble a quantum state at
  a single time?},\ }\href {https://doi.org/10.1098/rspa.2017.0395} {\bibfield
  {journal} {\bibinfo  {journal} {Proceedings of the Royal Society A:
  Mathematical, Physical and Engineering Sciences}\ }\textbf {\bibinfo {volume}
  {473}},\ \bibinfo {pages} {20170395} (\bibinfo {year} {2017})}\BibitemShut
  {NoStop}%
\bibitem [{\citenamefont {Cotler}\ \emph {et~al.}(2018)\citenamefont {Cotler},
  \citenamefont {Jian}, \citenamefont {Qi},\ and\ \citenamefont
  {Wilczek}}]{Cotler2018}%
  \BibitemOpen
  \bibfield  {author} {\bibinfo {author} {\bibfnamefont {J.}~\bibnamefont
  {Cotler}}, \bibinfo {author} {\bibfnamefont {C.-M.}\ \bibnamefont {Jian}},
  \bibinfo {author} {\bibfnamefont {X.-L.}\ \bibnamefont {Qi}},\ and\ \bibinfo
  {author} {\bibfnamefont {F.}~\bibnamefont {Wilczek}},\ }\bibfield  {title}
  {\bibinfo {title} {Superdensity operators for spacetime quantum mechanics},\
  }\href {https://doi.org/10.1007/JHEP09(2018)093} {\bibfield  {journal}
  {\bibinfo  {journal} {Journal of High Energy Physics}\ }\textbf {\bibinfo
  {volume} {2018}},\ \bibinfo {pages} {093} (\bibinfo {year}
  {2018})}\BibitemShut {NoStop}%
\bibitem [{\citenamefont {Silva}\ \emph {et~al.}(2017)\citenamefont {Silva},
  \citenamefont {Guryanova}, \citenamefont {Short}, \citenamefont {Skrzypczyk},
  \citenamefont {Brunner},\ and\ \citenamefont {Popescu}}]{SGS}%
  \BibitemOpen
  \bibfield  {author} {\bibinfo {author} {\bibfnamefont {R.}~\bibnamefont
  {Silva}}, \bibinfo {author} {\bibfnamefont {Y.}~\bibnamefont {Guryanova}},
  \bibinfo {author} {\bibfnamefont {A.~J.}\ \bibnamefont {Short}}, \bibinfo
  {author} {\bibfnamefont {P.}~\bibnamefont {Skrzypczyk}}, \bibinfo {author}
  {\bibfnamefont {N.}~\bibnamefont {Brunner}},\ and\ \bibinfo {author}
  {\bibfnamefont {S.}~\bibnamefont {Popescu}},\ }\bibfield  {title} {\bibinfo
  {title} {Connecting processes with indefinite causal order and multi-time
  quantum states},\ }\href {https://doi.org/10.1088/1367-2630/aa84fe}
  {\bibfield  {journal} {\bibinfo  {journal} {New Journal of Physics}\ }\textbf
  {\bibinfo {volume} {19}},\ \bibinfo {pages} {103022} (\bibinfo {year}
  {2017})},\ \Eprint {https://arxiv.org/abs/1701.08638} {arXiv:1701.08638
  [quant-ph]} \BibitemShut {NoStop}%
\bibitem [{\citenamefont {Liu}\ \emph {et~al.}(2024{\natexlab{a}})\citenamefont
  {Liu}, \citenamefont {Jia}, \citenamefont {Qiu}, \citenamefont {Li},\ and\
  \citenamefont {Dahlsten}}]{Liu}%
  \BibitemOpen
  \bibfield  {author} {\bibinfo {author} {\bibfnamefont {X.}~\bibnamefont
  {Liu}}, \bibinfo {author} {\bibfnamefont {Z.}~\bibnamefont {Jia}}, \bibinfo
  {author} {\bibfnamefont {Y.}~\bibnamefont {Qiu}}, \bibinfo {author}
  {\bibfnamefont {F.}~\bibnamefont {Li}},\ and\ \bibinfo {author}
  {\bibfnamefont {O.}~\bibnamefont {Dahlsten}},\ }\bibfield  {title} {\bibinfo
  {title} {Unification of spatiotemporal quantum formalisms: mapping between
  process and pseudo-density matrices via multiple-time states},\ }\href
  {https://doi.org/10.1088/1367-2630/ad264c} {\bibfield  {journal} {\bibinfo
  {journal} {New Journal of Physics}\ }\textbf {\bibinfo {volume} {26}},\
  \bibinfo {pages} {033008} (\bibinfo {year} {2024}{\natexlab{a}})}\BibitemShut
  {NoStop}%
\bibitem [{\citenamefont {Liu}\ \emph {et~al.}(2025)\citenamefont {Liu},
  \citenamefont {Qiu}, \citenamefont {Dahlsten},\ and\ \citenamefont
  {Vedral}}]{Liu3}%
  \BibitemOpen
  \bibfield  {author} {\bibinfo {author} {\bibfnamefont {X.}~\bibnamefont
  {Liu}}, \bibinfo {author} {\bibfnamefont {Y.}~\bibnamefont {Qiu}}, \bibinfo
  {author} {\bibfnamefont {O.}~\bibnamefont {Dahlsten}},\ and\ \bibinfo
  {author} {\bibfnamefont {V.}~\bibnamefont {Vedral}},\ }\bibfield  {title}
  {\bibinfo {title} {Quantum causal inference with extremely light touch},\
  }\href {https://doi.org/10.1038/s41534-024-00956-0} {\bibfield  {journal}
  {\bibinfo  {journal} {npj Quantum Information}\ }\textbf {\bibinfo {volume}
  {11}},\ \bibinfo {pages} {1} (\bibinfo {year} {2025})}\BibitemShut {NoStop}%
\bibitem [{\citenamefont {Liu}\ \emph {et~al.}(2024{\natexlab{b}})\citenamefont
  {Liu}, \citenamefont {Chen},\ and\ \citenamefont {Dahlsten}}]{Liu2}%
  \BibitemOpen
  \bibfield  {author} {\bibinfo {author} {\bibfnamefont {X.}~\bibnamefont
  {Liu}}, \bibinfo {author} {\bibfnamefont {Q.}~\bibnamefont {Chen}},\ and\
  \bibinfo {author} {\bibfnamefont {O.}~\bibnamefont {Dahlsten}},\ }\bibfield
  {title} {\bibinfo {title} {Inferring the arrow of time in quantum
  spatiotemporal correlations},\ }\href
  {https://doi.org/10.1103/PhysRevA.109.032219} {\bibfield  {journal} {\bibinfo
   {journal} {Physical Review A}\ }\textbf {\bibinfo {volume} {109}},\ \bibinfo
  {pages} {032219} (\bibinfo {year} {2024}{\natexlab{b}})}\BibitemShut
  {NoStop}%
\bibitem [{\citenamefont {Liu}(2024)}]{PhD}%
  \BibitemOpen
  \bibfield  {author} {\bibinfo {author} {\bibfnamefont {X.}~\bibnamefont
  {Liu}},\ }\emph {\bibinfo {title} {Temporal Correlations in Quantum
  Information and Thermodynamics}},\ \href
  {https://orcid.org/0000-0002-2528-2679} {\bibinfo {type} {Ph.d.
  dissertation}},\ \bibinfo  {school} {Southern University of Science and
  Technology} (\bibinfo {year} {2024})\BibitemShut {NoStop}%
\bibitem [{\citenamefont {Fullwood}\ and\ \citenamefont
  {Parzygnat}(2022)}]{FullwoodParzygnat2022StatesOverTime}%
  \BibitemOpen
  \bibfield  {author} {\bibinfo {author} {\bibfnamefont {J.}~\bibnamefont
  {Fullwood}}\ and\ \bibinfo {author} {\bibfnamefont {A.~J.}\ \bibnamefont
  {Parzygnat}},\ }\bibfield  {title} {\bibinfo {title} {On quantum states over
  time},\ }\href {https://doi.org/10.1098/rspa.2022.0104} {\bibfield  {journal}
  {\bibinfo  {journal} {Proceedings of the Royal Society A: Mathematical,
  Physical and Engineering Sciences}\ }\textbf {\bibinfo {volume} {478}},\
  \bibinfo {pages} {20220104} (\bibinfo {year} {2022})},\ \Eprint
  {https://arxiv.org/abs/2202.03607} {arXiv:2202.03607 [quant-ph]} \BibitemShut
  {NoStop}%
\bibitem [{\citenamefont {Parzygnat}\ and\ \citenamefont
  {Fullwood}(2023)}]{ParzygnatFullwood2023QuantumBayes}%
  \BibitemOpen
  \bibfield  {author} {\bibinfo {author} {\bibfnamefont {A.~J.}\ \bibnamefont
  {Parzygnat}}\ and\ \bibinfo {author} {\bibfnamefont {J.}~\bibnamefont
  {Fullwood}},\ }\bibfield  {title} {\bibinfo {title} {From time-reversal
  symmetry to quantum bayes' rules},\ }\href
  {https://doi.org/10.1103/PRXQuantum.4.020334} {\bibfield  {journal} {\bibinfo
   {journal} {PRX Quantum}\ }\textbf {\bibinfo {volume} {4}},\ \bibinfo {pages}
  {020334} (\bibinfo {year} {2023})},\ \Eprint
  {https://arxiv.org/abs/2212.08088} {arXiv:2212.08088 [quant-ph]} \BibitemShut
  {NoStop}%
\bibitem [{\citenamefont {Fullwood}\ and\ \citenamefont
  {Parzygnat}(2024)}]{FullwoodParzygnat2024OperatorRepresentation}%
  \BibitemOpen
  \bibfield  {author} {\bibinfo {author} {\bibfnamefont {J.}~\bibnamefont
  {Fullwood}}\ and\ \bibinfo {author} {\bibfnamefont {A.~J.}\ \bibnamefont
  {Parzygnat}},\ }\href {https://doi.org/10.48550/arXiv.2405.17555} {\bibinfo
  {title} {Operator representation of spatiotemporal quantum correlations}}
  (\bibinfo {year} {2024}),\ \bibinfo {note} {arXiv:2405.17555v2, revised 11
  Apr 2025},\ \Eprint {https://arxiv.org/abs/2405.17555} {arXiv:2405.17555
  [quant-ph]} \BibitemShut {NoStop}%
\bibitem [{\citenamefont {Song}\ \emph {et~al.}(2024)\citenamefont {Song},
  \citenamefont {Narasimhachar}, \citenamefont {Regula}, \citenamefont
  {Elliott},\ and\ \citenamefont
  {Gu}}]{SongNarasimhacharRegulaElliottGu2024CausalClassification}%
  \BibitemOpen
  \bibfield  {author} {\bibinfo {author} {\bibfnamefont {M.}~\bibnamefont
  {Song}}, \bibinfo {author} {\bibfnamefont {V.}~\bibnamefont {Narasimhachar}},
  \bibinfo {author} {\bibfnamefont {B.}~\bibnamefont {Regula}}, \bibinfo
  {author} {\bibfnamefont {T.~J.}\ \bibnamefont {Elliott}},\ and\ \bibinfo
  {author} {\bibfnamefont {M.}~\bibnamefont {Gu}},\ }\bibfield  {title}
  {\bibinfo {title} {Causal classification of spatiotemporal quantum
  correlations},\ }\href {https://doi.org/10.1103/PhysRevLett.133.110202}
  {\bibfield  {journal} {\bibinfo  {journal} {Physical Review Letters}\
  }\textbf {\bibinfo {volume} {133}},\ \bibinfo {pages} {110202} (\bibinfo
  {year} {2024})}\BibitemShut {NoStop}%
\bibitem [{\citenamefont {Song}\ and\ \citenamefont
  {Parzygnat}(2025)}]{SongParzygnat2025CausalExplanation}%
  \BibitemOpen
  \bibfield  {author} {\bibinfo {author} {\bibfnamefont {M.}~\bibnamefont
  {Song}}\ and\ \bibinfo {author} {\bibfnamefont {A.~J.}\ \bibnamefont
  {Parzygnat}},\ }\bibfield  {title} {\bibinfo {title} {Bipartite quantum
  states admitting a causal explanation},\ }\href
  {https://doi.org/10.1116/5.0303300} {\bibfield  {journal} {\bibinfo
  {journal} {AVS Quantum Science}\ }\textbf {\bibinfo {volume} {7}},\ \bibinfo
  {pages} {045002} (\bibinfo {year} {2025})},\ \Eprint
  {https://arxiv.org/abs/2507.14278} {arXiv:2507.14278 [quant-ph]} \BibitemShut
  {NoStop}%
\bibitem [{\citenamefont {Choi}(1975)}]{Choi1975}%
  \BibitemOpen
  \bibfield  {author} {\bibinfo {author} {\bibfnamefont {M.-D.}\ \bibnamefont
  {Choi}},\ }\bibfield  {title} {\bibinfo {title} {Completely positive linear
  maps on complex matrices},\ }\href
  {https://doi.org/10.1016/0024-3795(75)90075-0} {\bibfield  {journal}
  {\bibinfo  {journal} {Linear Algebra and its Applications}\ }\textbf
  {\bibinfo {volume} {10}},\ \bibinfo {pages} {285} (\bibinfo {year}
  {1975})}\BibitemShut {NoStop}%
\bibitem [{\citenamefont {Jamio{\l}kowski}(1972)}]{Jamiolkowski1972}%
  \BibitemOpen
  \bibfield  {author} {\bibinfo {author} {\bibfnamefont {A.}~\bibnamefont
  {Jamio{\l}kowski}},\ }\bibfield  {title} {\bibinfo {title} {Linear
  transformations which preserve trace and positive semidefiniteness of
  operators},\ }\href {https://doi.org/10.1016/0034-4877(72)90011-0} {\bibfield
   {journal} {\bibinfo  {journal} {Reports on Mathematical Physics}\ }\textbf
  {\bibinfo {volume} {3}},\ \bibinfo {pages} {275} (\bibinfo {year}
  {1972})}\BibitemShut {NoStop}%
\bibitem [{\citenamefont {Hayashi}(2017)}]{HayashiQIT2017}%
  \BibitemOpen
  \bibfield  {author} {\bibinfo {author} {\bibfnamefont {M.}~\bibnamefont
  {Hayashi}},\ }\href {https://doi.org/10.1007/978-3-662-49725-8} {\emph
  {\bibinfo {title} {Quantum Information Theory: Mathematical Foundation}}},\
  \bibinfo {edition} {second edition}\ ed.,\ Graduate Texts in Physics\
  (\bibinfo  {publisher} {Springer},\ \bibinfo {year} {2017})\ \bibinfo {note}
  {eBook ISBN: 978-3-662-49725-8}\BibitemShut {NoStop}%
\end{thebibliography}%

\end{document}